\definecolor{CiteRed}{HTML}{C00000} % custom deep red
\definecolor{EqRefBlue}{HTML}{1F4B99}
\definecolor{CiteGreen}{HTML}{0B7A0B}
\newcolumntype{C}{>{\centering\arraybackslash}X}  % flexible width centered
\newcolumntype{P}[1]{>{\centering\arraybackslash}p{#1}} % fixed width centered
\newcolumntype{M}[1]{>{\centering\arraybackslash}m{#1}}
\newcommand{\otoc}{{\rm OTOC}}
\numberwithin{equation}{section}
\def\Re{\mathop{\rm Re} }
\newcommand{\sU}{\mathcal{U}}
\newcommand{\OTOC}{{\rm OTOC}}
\newcommand{\be}{\begin{equation}}
\newcommand{\ee}{\end{equation}}
\newcommand{\bea}{\begin{eqnarray}}
\newcommand{\eea}{\end{eqnarray}}
\newcommand{\bega}{\begin{gather}}
\newcommand{\eega}{\end{gather}}
\newcommand{\nn}{\nonumber\\}
\newcommand{\bi}{\begin{itemize}}
\newcommand{\ei}{\end{itemize}}
\newcommand{\ben}{\begin{enumerate}}
\newcommand{\een}{\end{enumerate}}
\newcommand{\bca}{\begin{cases}}
\newcommand{\eca}{\end{cases}}
\newcommand{\bln}{\begin{align}}
\newcommand{\eln}{\end{align}}
\newcommand{\bst}{\begin{split}}
\newcommand{\est}{\end{split}}
\def\ie{\begin{equation}\begin{aligned}}
\def\fe{\end{aligned}\end{equation}}
\newcommand{\bma}{\le(\begin{matrix}}
\newcommand{\ema}{\end{matrix}\ri)}
\newcommand{\bwt}{\begin{widetext}}
\newcommand{\ewt}{\end{widetext}}
\newcommand\dd{{\rm d}}
\newcommand\vep{\varepsilon}
\def\le{\left}
\def\ri{\right}
\newcommand\sB{{\ensuremath{{\mathcal B}}}}
\newcommand\sC{{\ensuremath{{\mathcal C}}}}
\newcommand\sE{{\ensuremath{{\mathcal E}}}}
\newcommand\sI{{\ensuremath{{\mathcal I}}}}
\newcommand\sH{{\ensuremath{{\mathcal H}}}}
\newcommand\sM{{\ensuremath{{\mathcal M}}}}
\newcommand\sN{{\ensuremath{{\mathcal N}}}}
\newcommand\sR{{\mathcal R}}
\newcommand\sS{{\mathcal S}}
\newcommand\sX{{\mathcal X}}
\newcommand\sY{{\mathcal Y}}
\newcommand\sZ{{\mathcal Z}}
\newcommand{\Tr}{\text{Tr}}
\declaretheorem[name=Lemma, numberwithin=section]{lemma}
\newtheorem{theorem}{Theorem}[section]
\newtheorem{definition}{Definition}[section]
\newtheorem{remark}[definition]{Remark}
\begin{document}

\title{\fontsize{17}{20}\selectfont  \textbf{Free mutual information and higher-point OTOCs}}

\author[1]{\normalsize Shreya Vardhan\thanks{svardhan@caltech.edu}}
\author[2]{\normalsize Jinzhao Wang\thanks{jinzhao@stanford.edu}}
\affil[1]{\normalsize Institute for Quantum Information and Matter, California Institute of Technology, Pasadena, CA 91125}
\affil[2]{\normalsize Leinweber Institute for Theoretical Physics, Stanford University, Stanford, CA 94305}
\date{}

\maketitle
\begin{abstract}
We introduce a quantity called the free mutual information (FMI), adapted from concepts in free probability theory, as a new physical measure of quantum chaos. This quantity captures the spreading of a time-evolved operator in the space of all possible operators on the Hilbert space, which is doubly exponential in the number of degrees of freedom. It thus provides a finer notion of operator spreading than the well-understood phenomenon of operator growth in physical space. We derive two central results which apply in any physical system: first, an explicit ``Coulomb gas’’ formula for the FMI of two observables $A(t)$ and $B$ in terms of the eigenvalues of the product operator $A(t)B$; and second, a general relation expressing the FMI as a weighted sum of all higher-point out-of-time-ordered correlators (OTOCs). This second result provides a precise information-theoretic  interpretation for the higher-point OTOCs as collectively quantifying operator ergodicity and the approach to  freeness. This physical  interpretation is particularly useful in light of recent progress in experimentally measuring higher-point OTOCs. We identify universal  behaviours of the FMI and higher-point OTOCs across a variety of chaotic systems, including random unitary circuits and chaotic spin chains, which indicate that spreading in the doubly exponential  operator space is a generic feature of quantum many-body chaos. At the same time, the non-generic  behavior of the FMI in various non-chaotic systems, including certain unitary designs, shows that there are cases where an operator spreads in physical space but remains localized in operator space. The FMI is thus a sharper diagnostic of chaos than the standard 4-point OTOC.

\end{abstract}

\newpage

\tableofcontents

\section{Introduction and summary of results}

Ergodicity is a universal feature of classically chaotic systems. Consider an initially localized distribution in the classical phase space of a  particle. If the particle is governed by a chaotic evolution, this localized distribution spreads out over a large portion of the phase space at late times. More precisely, while the exact  phase space volume of the distribution remains constant, a coarse-grained version of the volume grows with time. In contrast, in integrable systems, the coarse-grained  distribution continues to be localized %occupy a small fraction of the total phase space volume 
even at late times. As a concrete example, one could have in mind the standard map~\cite{standard_map_1, standard_map_2} in its chaotic and integrable regimes.

The idea of spreading of phase space distributions from classical chaos does not generalize in a straightforward way to quantum many-body chaos. In particular, there is no canonical and universal way to define the analog of phase space distributions in general quantum many-body systems. A more useful way of characterizing quantum many-body chaos, which has developed over the last decade, is in terms of universal properties of the Heisenberg evolution of operators. 
One important universal property is the growth of {\it operator size} in chaotic systems with local and $k$-local interactions:  an initial operator supported on a small number of degrees of freedom generically gains support on a large number of degrees of freedom at late times. The decay of the four-point out-of-time-ordered correlator (OTOC)~\cite{larkin} in quantum many-body systems is equivalent to a  precise notion of growth of operator size (see for instance~\cite{size_syk, localized_shocks, swingle_tutorial}). OTOCs have been widely studied in a variety of chaotic quantum many-body systems, starting with initial studies in holographic CFTs and fermionic models in~\cite{butterfly,Shenker_2014,firewalls, kitaev2014fundamental,Maldacena_2016,Maldacena_2016_syk, stringy}. They show simple  behaviors that fall into a small number of universality classes across different chaotic systems~\cite{vdle}(also see~\cite{swingle_tutorial} for a review). 

Our goal in this paper is to introduce a distinct notion of operator spreading in quantum many-body systems, which can be seen as a new quantum many-body analog of ergodicity in classical chaos. Instead of considering the volume occupied by the time-evolved operator $A(t) = e^{-iHt} A e^{iHt}$ in the {\it physical space} of degrees of freedom, we will instead quantify the volume occupied by it in {\it operator space}, or the abstract space of all possible time-evolved operators $U A U^{\dagger}$ for any unitary $U$ that can act on the Hilbert space. 

Suppose we have a system with $n$ local degrees of freedom. While the volume of physical space is $O(n)$, the volume of the operator space parameterized by all possible $U$ acting on these $n$ degrees of freedom is {\it doubly exponential} in $n$. This distinction helps emphasize the fact that spreading in operator space is a different, and more fine-grained, notion than  spreading in physical  space. See Fig.~\ref{fig:physical_vs_op}. In particular, we will discuss examples below of certain systems where an operator is  fully delocalized in physical space while remaining confined to a small, highly structured corner of   operator space. Nevertheless, we will show that spreading in operator space, i.e., the growth of the \emph{volume fraction}
\be 
f_{A(t)} = \frac{\text{Vol}(e^{-iHt}A e^{iHt})}{\text{Vol}(UA U^{\dagger} \text{ for all possible } U)} \, , \label{fa_intro} 
\ee
is a universal property of generic chaotic quantum many-body systems.

\begin{figure}[t]
\centering
\includegraphics[width=\textwidth]{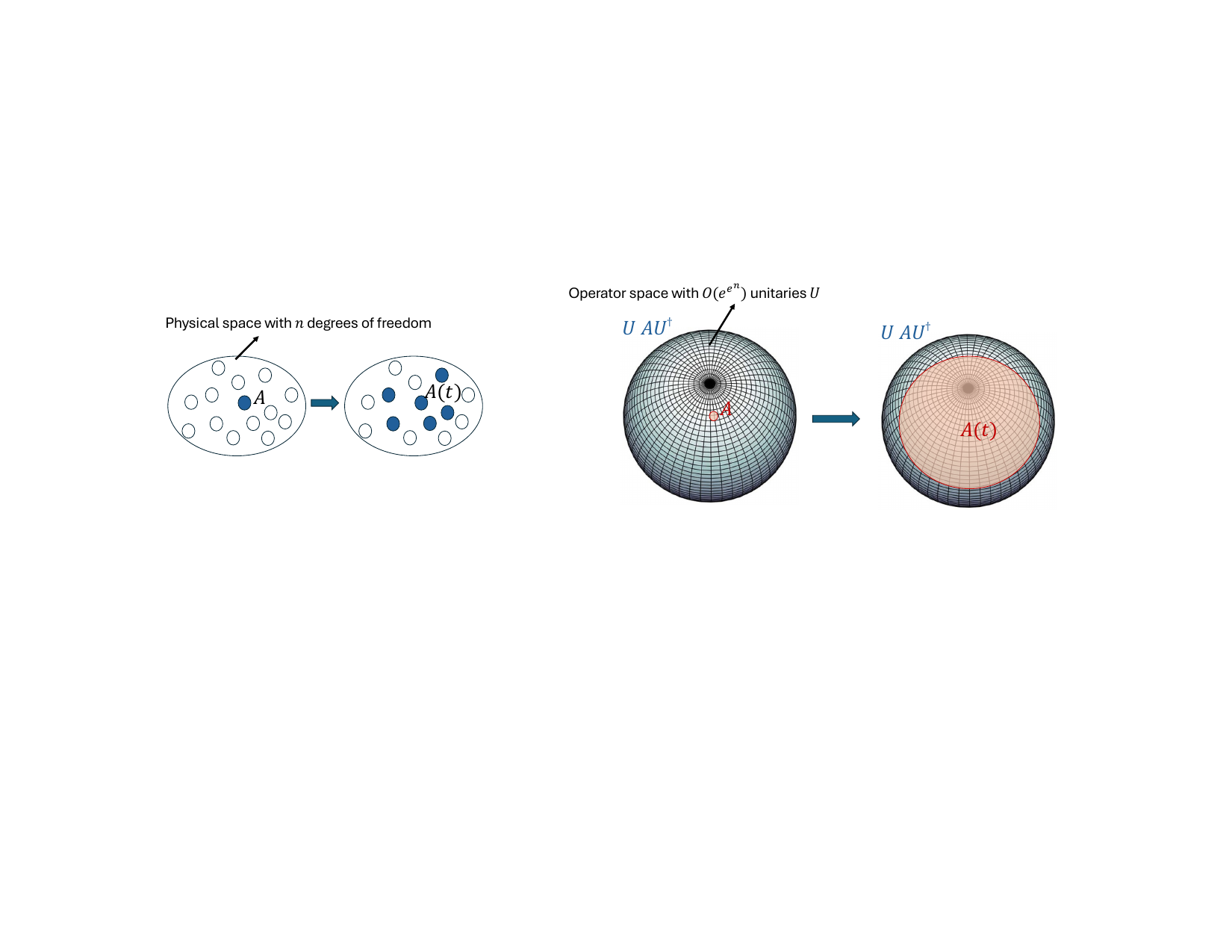}
\caption{We contrast spreading of the time-evolved operator $A(t) =e^{-iHt}A_0e^{iHt}$ in a chaotic quantum many-body system in the physical space of $n$ degrees of freedom, shown on the left, with spreading in the much larger space of all possible time-evolved operators $UA_0U^{\dagger}$, shown on the right. The sphere in the right figure should be seen as living in $O(e^n)$ dimensions, corresponding to the number of parameters in a general unitary acting on the Hilbert space, and a discretization of this sphere consists of $O(e^{e^n})$ operators.} 
\label{fig:physical_vs_op} 
\end{figure}

We have not yet defined precisely what we mean by the volumes in \eqref{fa_intro}. One ingredient needed in order to do so is a formal definition of volumes on the manifold of $d$-dimensional unitaries, which we will introduce in Sec.~\ref{sec:definition}. In addition to this, a further conceptual ingredient is needed in order to associate a non-trivial volume with the fixed operator $A(t)=e^{-iHt}A e^{e^{iHt}}$. Since $A(t)$ is a single operator, it can clearly only occupy a measure zero volume within the set of all possible $UAU^{\dagger}$, unless we coarse-grain it to allow for a larger set of operators. 

We use a coarse-graining prescription inspired by a quantity called the ``free entropy'' in the mathematical literature on noncommutative probability theory~\cite{voiculescu1993analogues,voiculescu1994analogues,voiculescu1996analogues,voiculescu1997analogues,voiculescu1998analogues,voiculescu1999analogues,Voiculescu_2002,hiai2005large,HIAI_2009,Collins_2014}. This coarse-graining procedure requires a choice of some simple reference operator $B$  at $t=0$. We define the coarse-grained set of operators associated with $A(t)$ as the set of operators $\tilde A$ which approximately match all ``joint moments'' between $A(t)$ and $B$, i.e. 
\be 
\Tr[\rho {\tilde A}^{m_1} B^{n_1} {\tilde A}^{m_2} B^{n_2}...  {\tilde A}^{m_r} B^{n_r}] \approx \Tr[\rho {A(t)}^{m_1} B^{n_1} A(t)^{m_2} B^{n_2}...  A(t)^{m_r} B^{n_r}]\label{all_moments_intro}
\ee
for all $m_i$, $n_i$, and some reference state $\rho$. We will typically take $\rho$ to be the maximally mixed state. $B$ can be seen as a reference point in the past  to keep track of the spreading of $A(t)$.  While specifying such joint moments with a complete basis of operators at $t=0$ would restrict $\tilde A$ to be precisely $A(t)$, the condition \eqref{all_moments} with a single operator $B$ generically allows for many choices of $\tilde A$. The volume in the numerator of \eqref{fa_intro} should be seen as the volume of this coarse-grained set of $\tilde A$. We should therefore  introduce a $B$ label in the volume fraction \eqref{fa_intro}, denoting it as $f_{A(t)|B}$. We will study the behaviour of a quantity called the free mutual information (FMI), defined as 
\be 
I_{\rm free}(A(t):B) \equiv - \log f_{A(t)|B}\ . \label{ifree_informal}
\ee
We will provide a more formal version of the definition of $I_{\rm free}$, including more details on the approximation used in \eqref{all_moments}, in Sec.~\ref{sec:definition}. We will also discuss a few different ways of viewing the physical interpretation of this quantity in more detail at the end of Sec.~\ref{sec:definition}.  

The quantity $I_{\rm free}$ defined above is similar to (and motivated by) the  free mutual information in the mathematical literature on free probability~\cite{voiculescu1999analogues,Voiculescu_2002,hiai2005large,HIAI_2009,Collins_2014}.~\footnote{\label{ft:FMI}In fact, there are two definitions of free mutual information in the math literature: one based on matricial microstates known as the orbital free entropy~\cite{HIAI_2009}, and the other based on the free Fisher information of a free liberation process~\cite{voiculescu1999analogues}. They are expected but have not yet been shown to coincide~\cite{Voiculescu_2002}. Here, we are referring to the microstate definition.} The mathematical quantity is defined in terms of abstract non-commuting variables instead of fixed operators $A(t)$ and $B$ living in some Hilbert space, and its definition involves certain asymptotic limits which are not appropriate in a physical setting with a fixed Hilbert space dimension. Since our physical quantity in \eqref{ifree_informal} has a similar form to the mathematical FMI apart from these subtleties about asymptotic limits, we will also refer to it as the FMI (see Sec.~\ref{sec:conclusions} for more discussion on the comparison with the mathematical quantity).

If the coarse-graining procedure defined in \eqref{all_moments_intro} is relevant for characterizing quantum many-body chaos, then we should find that in physical systems that we ordinarily consider highly chaotic, the volume fraction $f_{A(t),B}$ grows with time (or equivalently, that $I_{\rm free}(A(t):B)$ decays with time). If the initial operator $A$ commutes with $B$, then it is intuitive that the initial value of $I_{\rm free}(A(t):B)$ is infinite.  This can be understood from the fact $\tilde A$ is forced to be block-diagonal in the eigenbasis of $B$, which makes  $\tilde A$ a measure-zero  set within the full set of all possible $UAU^{\dagger}$.  On the other hand, if we take the time-evolution operator to be a Haar-random unitary $U_H$, then in the limit of large Hilbert space dimension $A(t)=U_H A U_H^{\dagger}$ and $B$ approach ``freely independent'' noncommuting variables. For free random variables, the FMI is zero.~\footnote{This is well-known to be true for the standard definition of the FMI from~\cite{voiculescu1999analogues,Voiculescu_2002,hiai2005large,HIAI_2009,Collins_2014}.  We will show in Sec.~\ref{sec:haar} that our physical  version of the FMI also also vanishes in this case.} The basic reason is that  when $A(t)$ and $B$ are free, the joint moments in~\eqref{all_moments_intro} are fully determined by the   ``marginal moments'' $\Tr[A^n]$ and $\Tr[B^n]$~\cite{nica2006lectures}. Hence,~\eqref{all_moments_intro} does not impose any  non-trivial constraints in this case, and the volume fraction is  $1$.

Hence, when two operators are freely independent, they can be viewed as maximally scrambled relative to each other. Based on this observation, the relevance of freeness to quantum chaos and scrambling has been emphasized in a number of recent works~\cite{Fava_2025,pap,pap_new,jonah,Cipolloni_2022,Chandrasekaran_2023,penington2025,camargo2025quantum,jahnke2025free,fritzsch2025,fritzsch2025free}. So far,  a useful  figure of merit to quantify the approach to freeness under quantum dynamics has been lacking. The free mutual information   provides precisely such a tool.   

%\JW{Therefore, free mutual information measures how close the two operators are from being free. Physically, free independence means that operators are fully scrambled with respect to each other. Its relevance to quantum chaos and quantum scrambling has been emphasized in many recent works~\cite{Fava_2025,pap,pap_new,jonah,Cipolloni_2022,Chandrasekaran_2023,penington2025,camargo2025quantum,jahnke2025free,fritzsch2025}. However, there lacks a useful figure of merit to quantify the approach to free independence in a quantum dynamic process. Our work exactly provides such a tool.}

Haar-random unitaries lead to instantaneous scrambling, and thus lack the physical constraints of continuous time-evolution and locality or $k$-locality of interactions. To confirm that $I_{\rm free}(A(t):B)$ decays with time in more realistic chaotic systems from its initial infinite value to a small late-time value, we need an explicit formula for this quantity which can be computed in various physical systems. We derive such an explicit formula for a Hilbert space with some large finite dimension $d$,  in the case where $A$ and $B$ both have the spectrum of Pauli matrices ($d/2$ eigenvalues equal to $-1$ and $d/2$ eigenvalues equal to 1):  
\begin{align}
I_{\rm free}(A(t):B) =&- \frac{1}{d^2} \sum_{\substack{1 \leq i, j \leq d\\
{\rm Re}(z_i) \neq {\rm Re}(z_j) }} \log  |{\rm Re}(z_i) - {\rm Re}(z_j)| ~ -\log 2 \, ,  \label{ifree_formula_intro}
\end{align}
where $\{z_i\}_{i=1}^d$ are the eigenvalues of $A(t)B$. We will sometimes refer to the RHS as the ``Coulomb gas formula.'' See Theorem~\ref{theorem_coulomb} for a more formal version of this statement, which clarifies its regime of validity.

%There are certain edge cases where the Coulomb gas formula cannot be identified with $I_{\rm free}$ from the definition \eqref{ifree_informal}, and we will clarify these Sec.~\ref{sec:explicit_formula}. 

The formula \eqref{ifree_formula_intro}  allows us not only to compute the time-evolution of $I_{\rm free}(A(t):B)$ in a variety of physical examples, but also to derive a fully general relation between the free mutual information and higher-point out-of-time-ordered correlators, defined as 
\be 
{\rm OTOC}_n(A(t):B) \equiv  \Tr[\rho (A(t) B)^n],\,  \quad n=1, 2, ...
\ee
in some state $\rho$, which we will again take to be the maximally mixed state in the rest of the paper. $\otoc_1$ is the two-point function rather than an out-of-time-ordered correlator, but we denote it as above to use a common notation for all $n$. $\otoc_2$ is the standard four-point OTOC discussed at the beginning of the introduction, which can be related to growth of operator size in the physical space of degrees of freedom. 

$\otoc_n$ for $n>2$ have previously been discussed in the literature as natural generalizations of $\OTOC_2$, and have intuitively been understood to capture more fine-grained aspects of chaos and scrambling~\cite{roberts_yoshida, cotler_hunterjones,Leone_2021, jonah, pap,Fava_2025,jahnke2025free,fritzsch2025,haehl1,Haehl_2018}. In particular, some of these references have noted that there is a natural relation between $\otoc_n$ and ideas from free probability, as in the limit where 
$A(t)$ and $B$ are freely independent, all $\otoc_n(A(t):B)$ are precisely zero for traceless $A$ and $B$. Based on this observation, the decay of $\otoc_n$ was  previously proposed as a  measure of approach to freeness in chaotic systems and studied in a few different models in the above references. %, and a natural set of questions was raised about whether the decay of $\otoc_n$  for higher $n$ can be associated  increasingly later time scales~\cite{pap}. 

Even with the existing studies of $\otoc_n$ in various models, a few important questions have not been conclusively addressed: 

\vspace{-0.2cm}

\begin{enumerate} 
\item[(i)] What is the precise physical interpretation of $\otoc_n$ for $n>2$?

\vspace{-0.2cm}

\item[(ii)] How should the data from the different $\otoc_n$ be combined together to understand the approach to freeness? For example, are smaller $n$ more relevant than higher $n$ in defining an approximate notation of freeness?

\vspace{-0.2cm}

\item[(iii)] Are there universal patterns in the $n$-dependence of $\otoc_n(A(t):B)$ in generic chaotic quantum many-body systems? 
\end{enumerate}

One main result of this paper is that for $A$ and $B$ which have the spectrum of Pauli matrices, in any system, we have the following general relation:
\be 
I_{\rm free}(A(t):B) = \sum_{n=1}^{\infty} \frac{2}{n} \otoc_n(A(t):B)^2 \label{otoc_intro}
\ee
See Theorems~\ref{theorem:otoc_1} and \ref{theorem:otoc_2} for more precise versions of this statement in cases with and without ensemble averaging.
This general relation has a number of conceptual as well as practical uses, which we summarize below: 
\begin{enumerate}
\item Since the quantity $I_{\rm free}$ has a clear physical interpretation in terms of ergodicity, discussed in particular around points $1-3$ above, the relation~\eqref{otoc_intro} also gives a physical interpretation to higher $\otoc_n$ in the cumulative form on the RHS of \eqref{otoc_intro}. This helps address questions (i) and (ii) above.~\footnote{Cumulative partial sums of OTOCs such as the RHS of \eqref{424}, with various choices of $n$-dependent weights, were recently introduced  in~\cite{jahnke2025free} as ad-hoc prescriptions  to combine the higher-point OTOCs into a single quantity. Our result shows that the cumulative OTOC with the harmonic weight has a clear physical interpretation as the FMI.}   
\item If an operator in a chaotic system spreads not only in physical space but also in the abstract operator space, then we expect that $I_{\rm free}(A(t):B)$ should become finite and small for late enough times. This in turn leads to the expectation that at late enough times in chaotic systems, the magnitude of $\otoc_n$ must be upper-bounded by a decreasing function of $n$, at least for large enough $n$, so that the sum in \eqref{otoc_intro} converges. This suggests a potential universal pattern of $n$-dependence that addresses question (iii) above.  
\item In practice, $\otoc_n$ are often easier to compute than the spectrum of $A(t)B$ with existing techniques in various physical models, and are likely to also be  easier to measure in experiments. Indeed, these $\otoc_n$ up to $n=4$ have already been measured in a recent experiment in random unitary circuits~\cite{google_paper}. See also~\cite{yao2016,Swingle_2016,Garttner_2017,Vermersch_2019} for earlier experimental progress in measuring $\otoc_2$. Hence, the relation~\eqref{otoc_intro} gives us a way to infer the value of the physically meaningful quantity $I_{\rm free}$ using measurements or computations of $\otoc_n$. 
\end{enumerate}

To understand the typical time-scales associated with convergence and decay of $I_{\rm free}(A(t):B)$, and the manner in which the sum in \eqref{otoc_intro} converges, we then turn to the task of evaluating $I_{\rm free}$ and $\OTOC_n$ in a number of examples of quantum many-body systems. We consider three examples of chaotic systems: (a) random GUE Hamiltonians, which have no local structure, (b) random unitary circuits, which have locality but no energy conservation, and (c) chaotic spin chain systems with both locality and energy conservation. 
In each of these examples, we find that the initial decay of $\otoc_n$ is  increasingly fast for higher $n$, so that the OTOC sum formula for the FMI in \eqref{otoc_intro} starts to converge and decay at some characteristic time  $t^{\ast}$. In the non-local GUE model, $t^{\ast}=0$, and in the local cases $t^{\ast}$ is proportional to the initial distance between $A$ and $B$. $t^{\ast}$ is close to, but somewhat earlier than, the time scale associated with decay of the standard four-point $\otoc_2$.  $I_{\rm free}(t)$ decays monotonically with time for $t>t^{\ast}$ and reaches a vanishing saturation value in the thermodynamic limit in each of these models. This  confirms our expectation that in general chaotic quantum many-body systems, any operator increasingly spreads out to fill the operator space volume. 

We further observe in each of these models
that while the {\it initial} decay of $\otoc_n$ is always faster for higher $n$, 
 $\otoc_n(t)$ show oscillations as a function of $t$ at intermediate times. For larger $n$, the magnitude of the oscillations grows and the time of their onset becomes earlier.  Such oscillations might naively suggest a departure from freeness at intermediate times in chaotic systems. However, we find that these oscillations do not lead to a lack of convergence of $I_{\rm free}$ or to oscillations in its value. This is a consequence of the way in which the $\otoc_n$ are organized in the sum \eqref{otoc_intro}, together with the fact that the overall envelope of the magnitude of $\otoc_n$ decays with $n$. Hence, by using $I_{\rm free}$ as a figure of merit, we obtain the physically reasonable conclusion that generic chaotic systems monotonically approach freeness as a function of time. The partial sums of \eqref{otoc_intro}, cut off at some finite $n_{\rm max}$, also show monotonic decay with time.

It is natural to ask whether there are systems where an initial operator spreads in physical space, but not in operator space, in line with the intuition that spreading in operator space is a more fine-grained notion of chaos. We identify three examples of systems where $\otoc_2$ decays, indicating delocalization in physical space, but $I_{\rm free}$ remains infinite for all times: Clifford unitaries, PFC unitaries~\cite{Metger_2024}, and the transverse field Ising model (TFIM), which is a free fermion integrable system.   The mechanism underlying lack of decay of $I_{\rm free}$ is somewhat different in each of these models, and can be understood from various atypical features of the spectrum of $A(t)B$ which are not visible at the level of $\otoc_2$. The case of the interacting integrable Heisenberg XXZ model appears to resemble that of chaotic spin chains.

In the free-fermion integrable TFIM, we find that not only do the individual $\otoc_n(t)$ show oscillations as a function of time, but the partial sums of \eqref{otoc_intro} also show non-monotonic behavior with time. This is in contrast to the monotonic decay with time of the OTOC partial sums in the chaotic spin chain. This observation suggests a  practical way of using the data of the higher $\otoc_n(t)$ in some physical system, obtained either from calculation or from measurements in the lab, to decide whether or not the system is chaotic. Instead of looking for monotonic decay with time in the individual $\otoc_n$, one should check whether the partial sums of \eqref{otoc_intro} decay monotonically with time.

%The case of the interacting integrable Heisenberg XXZ model appears to resemble that of chaotic spin chains.

%is particularly notable:  despite being integrable, this model  qualitatively resembles chaotic spin chains in terms of most dynamical phenomena (see for instance~\cite{interacting_integrable}), so that finding measures to distinguish its dynamical properties from chaotic systems has been an open problem.

%Clifford unitaries and PFC unitaries, but somewhat different in free fermion systems and the Heisenberg model. %These models can all be seen as integrable models in different ways. 

 It is worth noting that based on proposed relations between $\otoc_n$ and the formation of approximate unitary $n$-designs~\cite{roberts_yoshida}, it appears natural  that $\otoc_n$ for larger  $n$  may start to decay at later times  when the chaotic time-evolution is closer to forming an $n$-design. However, this turns out not to be the case for the generic chaotic models which we study in this paper, where  $\otoc_n$ for higher $n$ decay {\it faster}. In fact, the formation of approximate $n$-designs has more recently been understood not to be related to the decay of $\otoc_n$~\cite{schuster2025random,pap_new}. In Sec.~\ref{sec:designs}, we discuss an explicit example of an approximate $n$-design (the recently introduced PFC ensemble~\cite{Metger_2024}) where for certain operators, many of the $\otoc_n$
 do not decay at all.

The plan of the paper is as follows. In Sec.~\ref{sec:definition}, we present a formal version of the definition of $I_{\rm free}(A(t):B)$ for the setting of physical time-evolution. In Sec.~\ref{sec:explicit_formula}, we derive the Coulomb gas formula~\eqref{ifree_formula_intro}. In Sec.~\ref{sec:otocs}, we derive the OTOC sum formula~\eqref{otoc_intro}. Sections~\ref{sec:definition}-\ref{sec:otocs} should be seen as building the mathematical framework needed to study the free mutual information in any physical system. In Sec.~\ref{sec:physical_systems}, we study the behavior of $I_{\rm free}$ and $\otoc_n$ in a variety of concrete examples of chaotic  systems: the random GUE model, random unitary circuits, and the mixed-field Ising spin chain. On a technical level, we develop new tools for analyzing the behavior of $\otoc_n$ in both the random GUE model and random unitary circuits in Sec.~\ref{sec:physical_systems} and the associated appendices, which may also be useful in other contexts. We discuss various non-chaotic systems (Clifford unitaries, PFC unitaries, the transverse field Ising model, and the Heisenberg model) in Sec.~\ref{sec:nonchaotic}.  We comment on various future directions in Sec.~\ref{sec:conclusions}. Many technical details are presented in the Appendices.

\section{Free mutual information}

\label{sec:definition}

In this section, we will present a more formal version of the definition of the free mutual information which was summarized in the introduction. We will start by explicitly defining the set $\sS_A$, and volumes of subsets of $\sS_A$.

Consider an initial Hermitian operator $A$ in a quantum many-body system of Hilbert space dimension $d$, and its time-evolution $A(t)= U(t)^{\dagger} A U(t)$ under the evolution operator $U(t)$ of the system. For example, for an energy-conserving system, $U(t) = e^{-iHt}$ for some Hamiltonian $H$. We are interested in characterizing the spreading of $A(t)$ with time in the set $\sS_A$ of all possible operators that can be obtained from $A$ by any unitary evolution:
\be 
\sS_A = \{ U A U^{\dagger}~|~U \in \mathbf{U}(d) \}  \label{sadef}
\ee
where $\mathbf{U}(d)$ is the group of $d$-dimensional unitary matrices. $\sS_A$ can be viewed as a manifold  which is diffeomorphic to $\mathbf{U}(d)/{\rm Stab}_{\mathbf{U}(d)}(A)$, where $\text{Stab}_{\mathbf{U}(d)}(A)$ is the stabilizer of $A$, or  the subgroup of $\mathbf{U}(d)$ which commutes with $A$. Explicitly, we have the following bijection between the two manifolds: %Under a bijective diffeomorphism, 
\begin{align}
    \phi_A:\mathbf{U}(d)/{\rm Stab}_{\mathbf{U}(d)}(A)&\to \sS_A \nn
  U&\mapsto UAU^\dagger\ \, . 
\end{align}
%we have $\sS_A\simeq\mathbf{U}(d)/{\rm Stab}_{\mathbf{U}(d)}(A)$.  

%We will refer to the dimension of this manifold as $\gamma$. 

Let us first consider the dimension $\gamma_A$ of the manifold $\sS_A$. The manifold $\mathbf{U}(d)$  has dimension $d^2$. %Consider the coset $\mathbf{U}(d)/{\rm Stab}_{\mathbf{U}(d)}(A)$ obtained from quotient out $\text{Stab}_{\mathbf{U}(d)}(A)$, which is the subgroup of $\mathbf{U}(d)$ which commutes with $A$. 
The subgroup $\text{Stab}_{\mathbf{U}(d)}(A)$ depends on the degeneracies in the spectrum of $A$. For example, if the spectrum is non-degenerate, then the stabilizer group consists of diagonal unitary matrices in the eigenbasis of $A$, i.e.  $\text{Stab}_{\mathbf{U}(d)}(A) = \mathbf{U}(1)^d$.  Hence, for a non-degenerate $A$,  $\gamma_A$ is $d^2 - d$. More generally, if $A$ has $k$ distinct eigenvalues with multiplicities $g_1, ..., g_k$, then $\text{Stab}_{\mathbf{U}(d)}(A) = \mathbf{U}(g_1) \times \mathbf{U}(g_2) \times \cdots \times \mathbf{U}(g_k)$, and  $\gamma_A = d^2 - \sum_{i=1}^k g_i^2$. Note that in all cases, as long as $A$ is not the identity operator, $\gamma_A$ is $O(d^2)$.

To formalize the notion of spreading of operators in $\sS_A$, it will be useful to introduce a volume measure on $\sS_A$. The measure we will use throughout this paper is the one induced by the Haar measure $\dd \nu_{\mathbf{U}(d)}(U)$  on $\mathbf{U}(d)$, which we henceforth abbreviate as $\dd U$. The Haar measure is the unique (up to normalization) measure on $\mathbf{U}(d)$  which has the property of left- and right- unitary invariance: $\dd U = \dd(UV) = \dd(VU)$ for any $V \in \mathbf{U}(d)$. Since we will mostly be interested in ratios of volumes below,  the normalization of $\dd U$ can be chosen arbitrarily. To use the Haar measure to define volumes in $\sS_A$, we will use  the quotient map from $\mathbf{U}(d)$ onto the coset $\mathbf{U}(d)/ \text{Stab}_{\mathbf{U}(d)}(A)$, denoted by $q$. $q^{-1}(U)$ denotes the preimage of $U$ under this map. Then the induced volume measure $\nu_{\sS_A}$ is the pushforward of the Haar measure through $q$ and then through $\phi_A$:
\begin{equation}
    \nu_{\sS_A}\equiv%\nu_{\mathbf{U}(d)/ \text{Stab}_{\mathbf{U}(d)}(A)}\circ\phi^{-1}\equiv
    \nu_{\mathbf{U}(d)}\circ q^{-1}\circ\phi_A^{-1}
\end{equation}
The integral of a function $f(X)$ on $\sS_A$ is thus given by 
\be \label{volumedef}
 \int_{\sS_A} f(X) \dd\nu_{\sS_A} (X) %= \int_{\phi_A^{-1}(\sS)} f(U) \dd\nu_{\mathbf{U}(d)/ \text{Stab}_{\mathbf{U}(d)}(A)} (U)
 =\int_{\mathbf U(d)}f(U A U^\dagger)\,\dd U\ . %= \frac{ \int_{\mathbf{U}(d)} \dd  U}{\int_{\text{Stab}_{\mathbf{U}(d)}(A)} \dd  U} \,   
\ee
The total volume $\text{Vol}(\sS_A)$ is given by the above integral for $f(X)=1$, and the volume of some subset $\sS$ of $\sS_A$ is given by taking $f(X)$ to be the indicator function $\mathbf{1}_{\sS}(X)$, which evaluates to 1 for $X\in \sS$ and zero otherwise. 

A more explicit way to visualize the size of $\sS_A$ is to discretize it, by constructing a finite  ``epsilon-net'' of unitaries $\sS_{A, \epsilon}$ such that any element of $\sS_A$ is at most $\epsilon$-far from some element of $\sS_{A, \epsilon}$ according to some chosen metric.  The  number of elements in $\sS_{A, \epsilon}$ is 
\be 
n_{\sS_{A, \epsilon}} = c_1\left( \frac{c_2}{\epsilon}\right)^{\gamma_A} \label{nsa}
\ee 
for some $O(1)$ constants $c_1$, $c_2$. Recall that 
$\gamma_A$ is $O(d^2)$ -- hence, $n_{\sS_A, \epsilon}$ is {\it doubly exponential} in the number of degrees of freedom $n$ in a quantum many-body system. %This counting emphasizes the distinction between the spreading of $A(t)$ operator in the physical space of $n$ degrees of freedom, and the present discussion of the spreading of $A(t)$ in the set $\sS_A$. 

We would now like to formalize questions such as: what fraction of the volume in $\sS_A$ does $A(t)$ occupy, or what fraction of the elements of the $\epsilon$-net $\sS_{A, \epsilon}$ is $A(t)$ close to? As stated, the  answer to the question is trivial: since $A(t)$ is a single fixed operator, it occupies zero volume in $\sS_A$, and is $\epsilon$-close to an $O(1)$ number of elements of $\sS_{A, \epsilon}$. To come up with a meaningful version of the question, we need to provide a prescription for replacing $A(t)$ with some larger set of operators $\sS_{A(t), {\rm coarse}}\subset \sS_A$, whose volume can grow with time. 

As discussed in the introduction, we consider a coarse-graining procedure in terms of correlation functions of $A(t)$ with some simple reference operator $B$.  We define the  coarse-grained set  of $A(t)$ conditioned on $\rho$ and $B$, $S_{A(t)|B, \rho, \delta, N}$,  as the set of all operators $\tilde A \in \sS_A$ such that 
\be 
|\Tr[\rho {\tilde A}^{m_1} B^{n_1} {\tilde A}^{m_2} B^{n_2}...  {\tilde A}^{m_r} B^{n_r}] - \Tr[\rho {A(t)}^{m_1} B^{n_1} A(t)^{m_2} B^{n_2}...  A(t)^{m_r} B^{n_r}]|< \delta  \label{all_moments}
\ee
for all possible sequences of positive integers $m_i$, $n_i$ for all  lengths $r\leq N$ for some maximum length $N$. 
 In the rest of this paper, we will set $\rho$ to be the maximally mixed state or infinite temperature density matrix, $\rho= \frac{\mathbf{1}}{d}$, and drop the $\rho$ label, referring to the above set simply as $\sS_{A(t)|B, \delta, N}$. We restrict to this choice of $\rho$ for technical convenience, and it will allow us to probe the rich operator dynamics manifested in the infinite-temperature correlators in physical systems with large but finite Hilbert space dimensions.\footnote{ In more general physical setups, especially in continuum systems such as quantum field theories, considering other choices of $\rho$ such as the finite temperature Gibbs state $e^{-\beta H}/Z_{\beta}$ is important and should be developed further in future work.} We can also define a discretization $\sS_{A(t)|B,  \delta, N,  \epsilon}$ of $\sS_{A(t)|B, \delta, N}$, which is a subset of $\sS_{A, \epsilon}$ constrained by the same condition as in \eqref{all_moments}.

With the above definition of $\sS_{A(t)|B, \delta, N}$, we can now characterize the spreading of $A(t)$ in the space of all operators by the ratio 
\be 
f_{A(t)|B, \delta, N} \equiv \frac{\text{Vol}(\sS_{A(t)|B, \delta, N})}{\text{Vol}(\sS_{A})}\, . 
\ee
Equivalently, we can consider the following quantity, which we will refer to as the {\it free mutual information} of $A(t)$ and $B$: 
\begin{align}
I_{{\rm free}}(A(t):B)_{\delta, N} &\equiv - \frac{4}{d^2} \log f_{A(t)|B, \delta, N} \label{ifree1} \\ 
& = \frac{4}{d^2} \le(\log \text{Vol}(\sS_A) - \log \text{Vol}(\sS_{A(t)|B, \delta, N})\right) \label{ifree2} \\
& = \frac{4}{d^2} \lim_{\epsilon \to 0} (\chi(A(t);\epsilon) - \chi(A(t)|B;\delta, N,  \epsilon)) \label{ifree3}
\end{align} 
%In most of the discussion below, $d$ will be clear from the context, and $\delta$ will have a negligible effect as long as it is small in a sense we will make precise \SV{[again, see if $r_{\rm max}$ needs to be included in this discussion]}, so we will refer to the above quantity as simply $I_{\rm free}(A(t):B)$. 
In the final version of the above expression, we have written $I_{\rm free}$ in terms of 
\begin{equation}\label{free_entropy}
     \chi(A(t);\epsilon)\equiv\log n_{S_{A(t),\epsilon}},\quad \chi(A(t)|B;\delta, N,\epsilon)\equiv\log n_{S_{A(t)|B, \delta, N,\epsilon}}
\end{equation}
$\chi(A(t);\epsilon)$ can be intuitively interpreted as the amount of information needed to specify $A(t)$ up to $\epsilon$ precision with no knowledge except that it is obtained from $A$ by unitary evolution, and can be seen as a physical version of the ``free entropy'' of~\cite{voiculescu1993analogues,voiculescu1994analogues,voiculescu1996analogues,voiculescu1997analogues,voiculescu1998analogues,voiculescu1999analogues,Voiculescu_2002}. Similarly, $\chi(A(t)|B;\delta, N,\epsilon)$ is a version of the conditional free entropy of $A(t)$ conditioned on $B$, 
and captures the remaining uncertainty in our knowledge of $A(t)$ if we know its correlation functions with $B$. Definition~\eqref{ifree3} therefore justifies referring to $I_{\rm free}$ as a mutual information. 

%Hence, their difference quantifies how much \SV{[}knowing $B$ in the past could help us describe $A(t)$ in the future \{] the knowlege of the relation .  %\eqref{ifree2} or \eqref{ifree3} justifies the interpretation of $I_{\rm free}(A(t):B)$ as a mutual information. The expression in \eqref{ifree3} makes the ``counting'' interpretation that one usually associates with entropies clearer. 

Note that each free entropy term in~\eqref{ifree3} is divergent in the $\epsilon\to 0$ limit, but their difference  is finite from \eqref{ifree2} unless $\text{Vol}(\sS_{A(t)|B, \delta, N})$ is zero. The normalization factor of $\frac{4}{d^2}$ is included so that the quantity has a $d$-independent large $d$ limit (recall the leading $O(d^2)$ behavior of the exponent in \eqref{nsa}). %The specific choice to subtract $2d$ in the denominator will be motivated later, and for now should be seen as an arbitrary convention.

In the discussion so far, $I_{\rm free}$ defined in \eqref{ifree1}-\eqref{ifree3} depends on the parameters $\delta$ and $N$. To capture intrinsic properties of the physical system, we would like to avoid the dependence on these parameters. In the standard mathematical definition  from~\cite{voiculescu1999analogues,Voiculescu_2002,hiai2005large,HIAI_2009,Collins_2014}, one simultaneously takes the limits $d \to \infty, \delta \to 0, N\to \infty$. In the present  setup, where we want to consider a large but fixed value of $d$ to define a physically meaningful quantity, the limits  $\delta \to 0, N\to \infty$ turn out to be too strong, and always lead to an infinite value of $I_{\rm free}$. We will explain in the next section how we can choose $N$ and $\delta$ in such a way that $I_{\rm free}$ is finite, but at the same time the leading contribution to it does not depend on $N$ and $\delta$. This will justify our omission of the $\delta, N$ label for $I_{\rm free}$  in most of the paper.

Before further discussing the behavior of the free mutual information defined above, let us make a few comments on the motivation for the coarse-graining procedure based on the relations \eqref{all_moments}, and a few different ways of viewing the physical interpretation of this quantity:
\begin{enumerate}
\item Recall that the operator $A(t)$ can be fully specified by its correlation functions in the maximally mixed state with a complete basis of operators for $\sH_d$. 
Hence, a natural way of coarse-graining $A(t)$ is to only specify correlation functions with a smaller subset of simple operators, and we consider above the case with just one reference operator $B$. 
This idea is similar to the coarse-graining procedure used in the Jaynes entropy or ``simple entropy'' prescription for coarse-graining a quantum state $\rho$%, which is defined by considering a larger set of states $\sigma$ in which the correlation functions of simple operators are approximately equal to those in $\rho$~
\cite{jaynes_1, jaynes_2, engelhardt_wall, vsafranek2019quantum,vsafranek2021brief}.

\item The quantity $I_{\rm free}(A(t):B)$ can be viewed as a new attempt to characterize ``mutual information across time,'' which has been approached from a variety of perspectives in ~\cite{Hosur_2016,Glorioso_2024,wu2025quantum,milekhin2025observable}.~\footnote{See also time-like entanglement entropies~\cite{Doi_2023,Nakata_2021}.} Like in these previous approaches, one of our goals is to find an information-theoretic  quantity that captures the fact that chaotic dynamics lead to a loss of correlations over time.  One intrinsic difficulty in defining a mutual information of this kind is that one cannot define a joint density matrix  for multiple times  and its reduced density matrices for each time~\cite{Fitzsimons_2015,Horsman_2017,Fullwood_2022}. A joint density matrix is defined for two commuting subalgebras of observables, but we cannot have commuting subalgebras for different times.  
For two non-commuting variables, the joint distribution is defined by the set of joint moments appearing in \eqref{all_moments} without resorting to a joint density matrix. The natural entropy associated with two such variables is the free entropy of the joint distribution~\cite{voiculescu1999analogues,Voiculescu_2002,hiai2005large,HIAI_2009,Collins_2014}, which is not the von Neumann entropy of any (pseudo-)density matrix.

\item An alternative way to view $I_{\rm free}$ is in terms of quantum programming, where one asks for the minimal quantum description length~\cite{Berthiaume_2001,Gacs_2001,Yang_2020} of the future observable $A(t)$, conditioned on some observations $B$ made in the past. Suppose that $I_{\rm free}$ is non-zero: this tells us that the volume of the set $\tilde A$ determined by the conditions \eqref{all_moments} is smaller than the full volume of the set of all possible $UAU^{\dagger}$. The relations between $A(t)$ and~$B$, expressed through their joint moments, thus allow us to give a more compressed description of $A(t)$ than we would have without any knowledge of the relation between $A(t)$ and $B$ (we only need to specify $A(t)$ within the smaller set of $\tilde A$, instead of the full set of $UAU^{\dagger}$). A more precise version of this programming interpretation is discussed in~\cite{jw_talk}.
\end{enumerate}

\section{Explicit formulas for the free mutual information}
\label{sec:explicit_formula}

In this section, we first derive an explicit formula for the free mutual information in terms of the eigenvalues of the operator $U(t)AU(t)^{\dagger}B$ for a fixed time-evolution operator $U(t)$ in Sec.~\ref{sec:formula}. We then consider the case where $U(t)$ is drawn from an ensemble of time-evolution operators in Sec.~\ref{sec:ensemble}. We discuss how these formulas can be used to confirm that $I_{\rm free}$ vanishes for large $d$ when the time-evolution operator is a Haar-random unitary in Sec.~\ref{sec:haar}. 

\subsection{A Coulomb gas formula}
\label{sec:formula}

In this section, we will derive a closed expression for the free mutual information defined in \eqref{ifree2} in the case where $A$ and $B$ are both Hermitian and unitary operators with the spectrum of Pauli matrices on $\sH_d$, i.e. they have  $d/2$ eigenvalues equal to +1 and $d/2$ eigenvalues $-1$.  We will assume throughout that $d$ is even. Since this property of the spectrum is equivalent to the combination of the properties $\Tr[A]=\Tr[B]=0$ and  $A^2 = B^2 = \mathbf{1}$, we will sometimes refer to $A$ and $B$ as ``traceless involutions'' below. 
 In local systems made up of $n$ qubits, a natural choice is to consider $A$ and $B$ to be strings of Pauli matrices, i.e.  operators of the form $\otimes_{i=1}^n {\sigma^{(\alpha_i)}}_i$ where $\alpha = 0, ...,3$ labels the 4 possible Pauli matrices $I, X, Y, Z$ on each site. More generally, the formula below applies when $A=U P U^{\dagger}$ and $B=VQV^{\dagger}$ for any Pauli strings $P$ and $Q$ and any unitaries $U$ and $V$. 
 
Due to the involution property of $A$, $A(t)$ and $B$, if we take $\rho= \frac{\mathbf{1}}{d}$, then the only non-trivial conditions in the set \eqref{all_moments} are 
 \be 
 \frac{1}{d}|\Tr[(\tilde A B)^n] - \Tr[(A(t) B)^n]| < \delta  \text{ for all integers } 0< n \leq  N \, .  \label{pauli_condition}
 \ee
The remaining conditions in~\eqref{all_moments} are automatically satisfied by the fact that since $\tilde A \in \sS_A$, we have  $\tilde A = U A U^{\dagger}$ for some $U$, so that $\tilde A^2 = \mathbf{1}$. This simplification is the key reason for choosing $A$ and $B$ to be traceless involutions.

Note that for any pair of operators $O$ and $B$ that are both Hermitian and unitary, the eigenvalues of the product $OB$ come in pairs of complex conjugate phases. Let us define the following vector in $\vec{y}\in[-1,1]^{d/2}$ associated with $O$ (we will always take $B$ to be the fixed reference operator): 
\be 
\vec{y}(OB)_j \equiv \cos\phi_j=\Re e^{\pm i\phi_j}, \quad \{e^{\pm i\phi_j}\}_{j=1}^{d/2} \text{ are the eigenvalues of } OB \label{ydef}
\ee
The moments $\frac{1}{d}\Tr[(OB)^n]$ depend on $O$ only through $\vec{y}(OB)$. We will use a special notation $\vec{x}$ for the case $O = A(t)$: 
\be 
\vec{x} \equiv \vec{y}\le(A(t)B\ri) \label{xvecdef}
\ee

\iffalse 
Note that since $A(t)$ and $B$ are both Hermitian and unitary, the eigenvalues of the product $A(t)B$ come in pairs of complex conjugate phases, and can be labelled $\{e^{\pm i\theta_j}\}_{j=1}^{d/2}$. Similarly, the eigenvalues of $\tilde A B$ for any $\tilde A \in \sS_A$ can be labelled $\{e^{\pm i\phi_j}\}_{j=1}^{d/2}$. The moments appearing in \eqref{pauli_condition} only depend on the real parts of these eigenvalues, i.e. on the $\frac{d}{2}$-dimensional vectors $\vec{y}(\tilde AB)\in [-1, 1]^{d/2}$,  $\vec{x}\in [-1, 1]^{d/2}$ defined by 
\be 
\alpha_i(\tilde A) \equiv \cos \phi_i, \quad \beta_i \equiv \cos \theta_i \, \, . \label{alphabetadef}
\ee
We will choose to label both $\alpha_i$ and $\beta_i$ such that their values are nondecreasing as a function of $i$. It will be convenient to write $\vec{y}$ explicitly as a function of $\tilde A$, as in the discussion below  $\tilde A$ will be allowed to vary as long as it satisfies \eqref{pauli_condition}. On the other hand, $\vec{x}$ is always defined as above in terms of the eigenvalues of the {\it fixed} operator $A(t)B$. 
\fi

Our goal is to obtain an explicit formula for $I_{\rm free}(A(t):B)_{\delta, N}$ defined in \eqref{ifree1} that does not depend  strongly on the parameters $\delta$ and $N$, and is instead determined by intrinsic properties of $A(t)$ and $B$. A first natural choice for this purpose is to consider the limit of $I_{\rm free}(A(t):B)_{\delta, N}$ where take $N \to \infty$ and $\delta \to 0$. In  this limit,  \eqref{pauli_condition} 
is equivalent to the condition 
\be 
\vec{y}(\tilde AB)=\vec{x} \, . 
\ee
Using the definitions~\eqref{ifree1} and~\eqref{volumedef}, we have\footnote{The limits of $\delta \to 0$ and $N \to \infty$ commute because they can be replaced by $\inf_{\delta>0,N}$ as $I_{\rm free}(A(t):B)_{\delta, N}$ monotonically decreases as $N$ grows and $\delta$ shrinks.}
\begin{align}
\lim_{\delta \to 0, N \to \infty}I_{\rm free}(A(t):B)_{\delta, N} &= -\frac{4}{d^2} \log\le(\frac{\text{Vol}(\{\tilde A \in \sS_A  \text{ such that } \vec{y}(\tilde AB) = \vec{x} \})}{\text{Vol}\le(\sS_A\ri)}\ri) \\
&= -\frac{4}{d^2} \log\frac{\int_{\sS_A} \mathbf1_{\vec{y}(\tilde AB) = \vec{x}}\,\dd\nu_{\sS_A}(\tilde A)}{\text{Vol}\le(\sS_A\ri)} \\
&= -\frac{4}{d^2} \log\frac{\int_{\mathbf{U}(d)} \mathbf1_{\vec{y}(U\tilde A U^\dagger B) = \vec{x}}\,\dd U }{\int_{\mathbf{U}(d)} \dd (U) }\\
&\equiv -\frac{4}{d^2} \log P_U\le(\vec{y}(UAU^{\dagger}B) = \vec{x}\ri) \label{ifree_prob}
\end{align}
where $\mathbf1_{\vec{y}(\tilde AB) = \vec{x}}$ is the indicator function that equals $1$ if the condition is met, and $0$ otherwise. In the last line, we define $P_U(X)$ as the probability of event $X$ under the distribution of  Haar-random unitaries $U$. %The interpretation of the volume ratio as a probability in  \eqref{ifree_prob} comes from the fact that we are normalizing by the volume for $\sS_A$ induced by the Haar measure, as discussed around \eqref{volumedef}. 

Now \eqref{ifree_prob} reveals an immediate problem with taking the 
limits $N \to \infty$ and $\delta\to 0$. To see this, note the following lemma: 
\begin{restatable}{lemma}{lemmazero}\label{lem:lemma_0} 
The probability under Haar-random $U$ that $\vec{y}(UAU^{\dagger}B)$ lies in some subset $S \in [-1, 1]^{d/2}$ is  
\be 
P_U(\vec y(UAU^{\dagger}B) \in S) = \int_S \,  \prod_{i=1}^{d/2} \dd y_i \,\,   p\le(\vec{y}\ri) \label{prob_S}
\ee
where the probability density $p(\vec{y})$ is given by 
 \be
p(\vec{y})   \equiv \sN  \prod_{1\leq i<j\leq d/2}(y_i-y_j)^2, \quad \quad \sN = 2^{-\frac{d^2}{4}}
\prod_{k=0}^{\frac{d}{2}-1}
\frac{(\tfrac{d}{2}+k)!}{(k+1)!\,(k!)^2}  \, .   \label{prob}
\ee
\end{restatable}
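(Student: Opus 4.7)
The plan is to reduce the Haar integral to the symmetric complex Jacobi Unitary Ensemble by exploiting the product-of-reflections structure. First, I would write $A = 2P_A - I$ and $B = 2P_B - I$, where $P_A$ and $P_B$ are rank-$d/2$ orthogonal projectors. Left/right invariance of the Haar measure implies that the joint distribution of the eigenvalues of $UAU^\dagger B$ depends on $A$ and $B$ only through their shared spectrum, so after absorbing the diagonalizing unitaries of $A$ and $B$ into $U$ I may assume $A = B = \mathrm{diag}(I_{d/2}, -I_{d/2})$ without loss of generality.

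Second, I would invoke Jordan's theorem on principal angles between subspaces: for two rank-$d/2$ projectors in $\mathbb{C}^d$, the ambient space decomposes almost surely into $d/2$ mutually orthogonal two-dimensional invariant subspaces of the product $(2P_A - I)(2P_B - I)$, and on each such 2-plane the product of reflections acts as a planar rotation by $2\theta_j$, where $\theta_j \in [0,\pi/2]$ are the principal angles between $\mathrm{range}(UP_A U^\dagger)$ and $\mathrm{range}(P_B)$. Consequently the eigenvalues of $UAU^\dagger B$ come in conjugate pairs $e^{\pm 2i\theta_j}$, and the entries of $\vec y$ are $y_j = \cos(2\theta_j) = 2\cos^2\theta_j - 1$.

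The third and key step is to obtain the joint density of $\lambda_j \equiv \cos^2\theta_j$ when $U$ is Haar-distributed. Equivalently these are the squared singular values of the upper-left $(d/2)\times(d/2)$ block of $U \in \mathbf{U}(d)$, which follow the complex ($\beta = 2$) Jacobi ensemble
$$p_\lambda(\lambda_1,\dots,\lambda_{d/2}) \;\propto\; \prod_{i<j}(\lambda_i - \lambda_j)^2 \prod_i \lambda_i^{m-n}(1-\lambda_i)^{N-m-n};$$
in the balanced case $m = n = d/2$, $N = d$, both Jacobi weight exponents vanish and only the Vandermonde-squared factor survives. A self-contained derivation would use the Cosine--Sine decomposition of the Haar unitary to factor out the independent Haar-distributed unitaries acting on the $\pm 1$ eigenspaces of $A_0$ and to read off the induced measure on the CS angles directly; the balanced case is the cleanest precisely because no nontrivial Jacobi weight appears.

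Finally, the substitution $y_j = 2\lambda_j - 1$ carries $[0,1]^{d/2}$ to $[-1,1]^{d/2}$ with Jacobian $2^{-d/2}$ and sends each $(\lambda_i - \lambda_j)$ to $(y_i - y_j)/2$, producing a density on $[-1,1]^{d/2}$ proportional to $\prod_{i<j}(y_i - y_j)^2$ as claimed. The normalization $\sN$ is then pinned down by the Selberg integral with parameters $\alpha = \beta = \gamma = 1$ over the unit cube, which when combined with the overall factor $2^{d^2/4} = 2^{(d/2) + 2\binom{d/2}{2}}$ arising from the change of variables collapses to the stated product of factorials. I expect the only delicate step to be the Jacobi-ensemble identification in step three; the invariance reduction, Jordan's theorem, and the Selberg evaluation are essentially mechanical.
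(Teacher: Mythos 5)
Your proposal is correct and follows essentially the same route as the paper: both reduce to the principal angles between the ranges of two rank-$d/2$ projectors via the CS/Jordan decomposition, identify the density of $\cos^2\theta_j$ with the balanced complex Jacobi ensemble (Vandermonde-squared with trivial weight), and then change variables $y_j = 2\cos^2\theta_j - 1$. The only cosmetic difference is that the paper cites the Collins--Nechita result for the Jacobi density and pins down $\sN$ by brute normalization, whereas you self-derive the density from the CS decomposition and invoke the Selberg integral; both are sound.
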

\begin{proof}
We provide the proof in Appendix~\ref{app:prob}.
\end{proof}
Putting this result into \eqref{ifree_prob}, where the allowed volume of $\vec{y}$ around the fixed value $\vec{x}$ is zero, we see that 
\be 
\lim_{\delta \to 0, N \to \infty} I_{\rm free}(A(t):B) = - \frac{4}{d^2}(\,\log p(\vec{x}) + \sum_{i=1}^{d/2} \lim_{\dd y_i \to 0} \log \dd y_i \, ) = \infty\ . \label{ifree_infinite}
\ee
The second term involving the log of the zero volume around $\vec{x}$  dominates over the first term, giving an infinite result. 
This tells us that for a finite Hilbert space dimension, the conditions \eqref{pauli_condition} for $N=\infty$ and $\delta=0$ are too constraining, and do not allow $\sS_{A(t)|B, \delta, N}$ to have a non-zero volume fraction. 

We therefore need to regulate this infinite answer by considering some finite $N$ and $\delta$, while still aiming to get a result that is universal in the sense that it does not strongly depend on the regulators. In order to get a universal  formula, we will have to make an assumption for the minimum spacing among the $x_i$, which are the real parts of the eigenvalues of $A(t)B$,
\be 
\Delta_{\rm min} \equiv \min_{i\neq j}|x_i-x_j|\, . \label{deltamin_def}  
\ee
At $t=0$, we will typically take $A(t=0)$ and $B$ to be commuting operators. Since $A$ and $B$ both have have $d/2$ eigenvalues equal to  $+1$ and  $d/2$ eigenvalues $-1$, their product also has at least $d/2$ degenerate eigenvalues if they commute, so that we have $\Delta_{\rm min} =0$.  However, we will assume that in typical situations of interest, at all but very early times, all degeneracies among the~$x_i$  get sufficiently lifted so that we have 
\be
\Delta_{\rm min} =\exp(-d^{o(1)}) \label{327}
\ee
which explicitly means that $\Delta_{\rm min} \geq \exp(-\kappa \,  d^\beta)$ for all $0< \beta < 1$ and all $\kappa>0$ at a sufficiently large $d$.  (Note in particular that this assumption allows $\Delta_{\rm min}$ to have some arbitrary polynomial decay with $d$.) We justify this assumption with explicit computation in one model in Appendix~\ref{app:gue_checks}. With this assumption, if we take 
\be 
\delta = \Theta(\exp({- d^\alpha})),\quad N = C/\delta,\  
\label{deltaN_choice} 
\ee
for some $0< \alpha <1$
and for some constant $C> 9\pi$. Here the big-$\Theta$ notation means that $\delta$ is both upper and lower bounded by $\exp({- d^\alpha})$ up to a prefactor for large $d$. For the derivation in this section, it is necessary to distinguish it from the big $O$ notation that only strictly means an asymptotic upper bound. 

Then we will be able to obtain a formula for $I_{\rm free}(A(t):B)$ which is both {\it finite} and     {\it independent of $N$ and $\delta$ at leading order in $1/d$}. For example, it does not depend on the specific choice of  $\alpha$ or $C$ in \eqref{deltaN_choice}.  Intuitively, this regulated result for $I_{\rm free}$ should end up being the first term of \eqref{ifree_infinite}. The purpose of the rest of this section is to show that  this is indeed the case. Readers who are mainly interested in the result can skip to Theorem~\ref{theorem_coulomb} and the comments below it.

 It will be useful to first provide upper and lower bounds on the volume of  $\sS_{A(t)|B, \delta, N}$, the set of $\tilde A$
satisfying the condition \eqref{pauli_condition} for some given $\delta, N$. For this purpose, we will first find upper and lower bounds of the volume  in $[-1,1]^{d/2}$ of the vectors $\vec{y}(\tilde AB)$ for $\tilde A \in \sS_{A(t)|B, \delta, N}$, and then use  the formula~\eqref{prob}. The following two lemmas, both proved in Appendix~\ref{app:volume_lemmas}, will help provide such bounds. In these statements, an $\epsilon$-box of radius $r$ around $\vec{x}$ refers to the set of all vectors $\vec{x}'\in [-1, 1]^\frac{d}{2}$ such that $|x'_i -x_i|\leq r$ for each $i = 1, ..., d/2$.

%\JW{Perhaps we should state our main result more prominently (Coulomb gas formula) as a Theorem here and put the derivation inside a proof, so readers can immediately find it. (Now they can only see a couple of Lemmas.) What do you think?} \SV{Sounds good.}

\begin{restatable}{lemma}{L}\label{lem:lemma_1}
%\begin{lemma}
%\label{lemma_1}
Define $\sB_1\subset [-1, 1]^\frac{d}{2}$ as an $\epsilon$-box around $\vec{x}$ of radius 
\be 
\epsilon_1 = c d\,(- \delta\log\delta) \, \label{e1def}   
\ee
for a constant $c$ (for concreteness, we can take $c=\frac{26}{\pi}$). 
 If $N$ and $\delta$ satisfy the condition  
\be 
N > \frac{9 \pi}{\delta} \, ,   \label{310}
\ee
then 
for all $\tilde A \in  \sS_{A(t)|B, \delta, N}$, the corresponding vectors $\vec{y}(\tilde AB)$ are contained within $\sB_1$. 
\end{restatable}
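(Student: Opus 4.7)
The plan is to argue by contrapositive: if $\vec{y}(\tilde{A}B)$ lies outside the $\epsilon_1$-box centered at $\vec{x}$, then some moment condition in \eqref{pauli_condition} must be violated. Since $\tilde A \in \sS_A$ shares the spectrum of $A$ (a traceless involution) and $B$ is likewise a traceless involution, both $\tilde A B$ and $A(t) B$ are unitary with eigenvalues in conjugate pairs $\{e^{\pm i\phi_j}\}_{j=1}^{d/2}$ and $\{e^{\pm i\theta_j}\}_{j=1}^{d/2}$, respectively. Using $T_n(\cos\phi) = \cos(n\phi)$ for the Chebyshev polynomial of the first kind, one finds $\tfrac{1}{d}\Tr[(\tilde A B)^n] = \tfrac{2}{d}\sum_j T_n\bigl(y_j(\tilde A B)\bigr)$, and similarly for $A(t)B$. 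Hence \eqref{pauli_condition} becomes $\bigl|\sum_{j}[T_n(y_j(\tilde A B)) - T_n(x_j)]\bigr| < d\delta/2$ for every $1 \leq n \leq N$.

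Fix a sorted index $k$ and suppose for contradiction $y_{(k)}(\tilde A B) > x_{(k)} + \epsilon_1$ (the opposite inequality is handled by a mirror-image ramp). Introduce the $1$-Lipschitz test function $h(y) = \min\!\bigl(\epsilon_1,\, \max(0,\, y - x_{(k)})\bigr)$, which vanishes on $(-\infty, x_{(k)}]$, rises linearly on $[x_{(k)}, x_{(k)} + \epsilon_1]$, and saturates at $\epsilon_1$ thereafter. Counting sorted eigenvalues yields $\sum_j h(y_j) \geq (d/2 - k + 1)\epsilon_1$ and $\sum_j h(x_j) \leq (d/2 - k)\epsilon_1$, so the difference is at least $\epsilon_1$. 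To obtain the competing upper bound, expand $h = \sum_{n\geq 0} c_n T_n$; writing $c_n = \tfrac{2}{\pi}\int_0^\pi h(\cos\theta)\cos(n\theta)\,d\theta$ and integrating by parts once, using that $\bigl|\tfrac{d}{d\theta}h(\cos\theta)\bigr| \leq |\sin\theta|$, gives $|c_n| \leq 4/(\pi n)$ for $n \geq 1$; the $n=0$ mode cancels between the two sums. Truncating at degree $N$ costs a sup-norm error of order $\log N / N$ (via the Chebyshev partial-sum or Jackson bound for Lipschitz functions), so term-by-term use of the moment hypothesis produces
\begin{equation}
\Bigl|\sum_j h(y_j(\tilde A B)) - \sum_j h(x_j)\Bigr| \leq \frac{d\delta}{2}\sum_{n=1}^{N}|c_n| + O\!\le(\tfrac{d\log N}{N}\ri) \leq \frac{2 d\delta}{\pi}\bigl(1 + \log N\bigr) + O\!\le(\tfrac{d\log N}{N}\ri).
\end{equation}

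Inserting the hypothesis $N > 9\pi/\delta$ from \eqref{310}, one has $\log N \leq \log(9\pi) + \log(1/\delta)$, and the tail is dominated by $d\delta\log(1/\delta) = d(-\delta\log\delta)$; the right-hand side is therefore bounded by $c \,d(-\delta\log\delta)$ for a numerical $c$. Comparing with the lower bound $\epsilon_1$ yields the desired contradiction, so every sorted coordinate of $\vec{y}(\tilde A B)$ must lie within $\epsilon_1$ of the corresponding coordinate of $\vec{x}$. The main obstacle is purely quantitative: the leading coefficient from the main term is $4/\pi$, so reaching the stated $c = 26/\pi$ is a matter of careful bookkeeping through the harmonic sum $\sum_{n\le N} 1/n \leq 1 + \log N$, the additive $\log(9\pi)$ slack from the lower bound on $N$, and the Jackson-type truncation constant; the generous choice $c = 26/\pi$ leaves ample room, so no sharp estimate on these constants is required.
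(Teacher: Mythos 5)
Your argument is a genuine alternative to the paper's: the paper quotes Lemma~3.1 of~\cite{sublinear}, which already packages the Chebyshev-moment comparison into an $L_1$ bound $\|\vec y - \vec x\|_1 \le \tfrac{d}{2}\bigl(\tfrac{36}{m} + \tfrac{4}{\pi}\sum_{k\le m}\tfrac{1}{k}|\overline{T_k(\vec y)}-\overline{T_k(\vec x)}|\bigr)$ with a free truncation order $m$, and then minimizes over $m$; an $L_1$ bound trivially implies the per-coordinate $\ell_\infty$ bound that defines the box. You instead prove the $\ell_\infty$ bound directly by testing against a per-coordinate ramp function and expanding it in Chebyshev polynomials, which is more self-contained and arguably more transparent. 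The Chebyshev coefficient estimate $|c_n|\le 4/(\pi n)$ and the counting step $\sum_j h(y_j) - \sum_j h(x_j) \ge \epsilon_1$ are both correct.

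However, there is a real gap in the quantitative step. You truncate the Chebyshev expansion at degree $N$ and then write ``$N > 9\pi/\delta$ gives $\log N \le \log(9\pi) + \log(1/\delta)$.'' That inequality runs the wrong way: \eqref{310} is a \emph{lower} bound on $N$, so it gives $\log N \ge \log(9\pi) + \log(1/\delta)$, not $\le$. If $N$ is taken very large (say $N = e^{1/\delta^2}$, which is allowed), your main term $\tfrac{2d\delta}{\pi}(1+\log N)$ blows up and the contradiction evaporates. The fix is exactly the degree of freedom the paper's cited lemma makes explicit: truncate the Chebyshev series at a \emph{free} order $m$ satisfying $m \le N$ (so the moment hypothesis covers every term retained) and then choose $m$ to balance the two error terms, e.g.\ $m = \lceil 9\pi/\delta\rceil$. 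The hypothesis $N > 9\pi/\delta$ is precisely what guarantees this optimal truncation order is accessible; once you do this, $\log m \approx \log(9\pi) + \log(1/\delta)$ and both the main term and the $O(d\log m/m)$ tail are of order $d(-\delta\log\delta)$, giving the contradiction with $\epsilon_1 = c\,d(-\delta\log\delta)$ as intended. With this change the argument closes, and as you note the constant $c = 26/\pi$ leaves plenty of slack.
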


\begin{restatable}{lemma}{secondlemma}\label{lem:lemma_2}
Define $\sB_2\subset[-1,1]^\frac{d}{2}$ as an $\epsilon$-box around $\vec{x}$ of radius 
\be
\epsilon_2=\delta/N^2\, . 
\ee
$\sB_2$ is contained within the set of $\vec{y}(\tilde AB)$ corresponding to $\tilde A$ in $\sS_{A(t)|B, \delta, N}$. 
\end{restatable}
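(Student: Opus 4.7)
The plan is, for any fixed $\vec y\in\sB_2$, to construct some $\tilde A\in\sS_A$ with $\vec y(\tilde A B)=\vec y$ and then verify that this $\tilde A$ automatically satisfies the moment-matching condition~\eqref{pauli_condition}, so that it lies in $\sS_{A(t)|B,\delta,N}$ and witnesses $\vec y$ as an element of the target set. The construction step relies on Lemma~\ref{lem:lemma_0}, while the moment-matching step is a direct Chebyshev/Markov estimate in which the tolerance $\epsilon_2=\delta/N^2$ has been tuned exactly for this purpose.

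For realizability, I would consider the continuous map $F:\mathbf U(d)\to[-1,1]^{d/2}$, $F(U)=\vec y(UAU^\dagger B)$. Since $\mathbf U(d)$ is compact, $\mathrm{Image}(F)$ is closed. By Lemma~\ref{lem:lemma_0} the Haar pushforward under $F$ has density $p(\vec y)=\sN\prod_{i<j}(y_i-y_j)^2$, which is strictly positive on the dense open set where all entries are distinct. If some vector $\vec y^\ast$ with distinct entries were missing from $\mathrm{Image}(F)$, a small open ball around $\vec y^\ast$ lying in the complement of the image would carry zero pushforward mass but positive density integral, a contradiction. Hence $\mathrm{Image}(F)$ contains every vector with distinct entries, and by closedness equals all of $[-1,1]^{d/2}$. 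In particular every $\vec y\in\sB_2$ is attained, so we may pick any $U$ in the preimage and set $\tilde A:=UAU^\dagger\in\sS_A$.

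For moment matching, since $\tilde A$ and $B$ are traceless involutions, the eigenvalues of $\tilde AB$ come in conjugate pairs $\{e^{\pm i\phi_j}\}_{j=1}^{d/2}$ with $y_j=\cos\phi_j$, and likewise for $A(t)B$ with $x_j=\cos\theta_j$. The normalized trace therefore has the Chebyshev representation
\begin{equation}
\tfrac{1}{d}\Tr[(\tilde AB)^n]=\tfrac{2}{d}\sum_{j=1}^{d/2}T_n(y_j),\qquad \tfrac{1}{d}\Tr[(A(t)B)^n]=\tfrac{2}{d}\sum_{j=1}^{d/2}T_n(x_j),
\end{equation}
where $T_n$ is the Chebyshev polynomial of the first kind. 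Invoking the classical Markov bound $|T_n'(y)|\le n^2$ on $[-1,1]$ together with $|y_j-x_j|\le\epsilon_2$, the mean value theorem yields
\begin{equation}
\left|\tfrac{1}{d}\Tr[(\tilde AB)^n]-\tfrac{1}{d}\Tr[(A(t)B)^n]\right|\le n^2\epsilon_2\le N^2\cdot\tfrac{\delta}{N^2}=\delta
\end{equation}
for every $1\le n\le N$, which is exactly~\eqref{pauli_condition}. The main obstacle is the realizability step: one must make sure $F$ surjects onto the \emph{entire} $\epsilon_2$-box, not merely onto its distinct-entry locus. This is handled cleanly by combining density positivity on a dense open set with the closedness of $\mathrm{Image}(F)$ as the continuous image of a compact group. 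Everything else is a one-line estimate, and the choice $\epsilon_2=\delta/N^2$ is sharp, calibrated precisely so that the worst-case $n^2$ growth of $T_n'$ at $n=N$ is absorbed into the tolerance $\delta$.
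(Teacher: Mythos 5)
Your moment‑matching estimate is exactly the paper's argument: the paper invokes the Kantorovich–Rubinstein inequality with Lipschitz constant $n^2$, which for the empirical averages $\overline{T_n(\vec{y})}$ reduces precisely to the mean‑value/Markov estimate you use. So on that front your proof and the paper's are identical in content.

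Where you diverge from the paper is the realizability step. The paper's proof contains \emph{only} the moment estimate — it establishes the implication: if $\tilde A\in\sS_A$ has $\vec{y}(\tilde AB)\in\sB_2$, then the moments match and so $\tilde A\in\sS_{A(t)|B,\delta,N}$. That weaker implication is all the paper ever uses from this lemma: in \eqref{317} one only needs $\{\tilde A\in\sS_A:\vec{y}(\tilde AB)\in\sB_2\}\subseteq\sS_{A(t)|B,\delta,N}$ to conclude $f_2\le f_{A(t)|B,\delta,N}$, since $f_2$ is computed as the Haar probability of that preimage event. You read the lemma's phrasing literally as asserting $\sB_2\subseteq\mathrm{Image}$ and therefore supply a surjectivity argument. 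It is a correct and clean argument (closed image of a compact group, positivity of the Jacobi‑ensemble density on the distinct‑entry locus, contradiction from the complement ball), and it does no harm, but it is not needed to make the volume bound go through.

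Two minor remarks. First, your surjectivity argument should be phrased on the sorted simplex (or equivalently on multisets): $\vec{y}(\cdot)$ is a set of $d/2$ real numbers, so the map $F$ is only a well‑defined function to $[-1,1]^{d/2}$ after a choice of ordering, and the density you invoke from Lemma~\ref{lem:lemma_0} is the symmetric ``labeled'' density; the claim ``$\mathrm{Image}(F)$ equals all of $[-1,1]^{d/2}$'' only holds up to permutations. This does not affect the conclusion for $\sB_2$, which sits entirely inside the sorted simplex once $\epsilon_2<\Delta_{\min}/2$ (the condition \eqref{3242}). Second, your final estimate gives $|\overline{T_n(\vec y)}-\overline{T_n(\vec x)}|\le n^2\epsilon_2\le\delta$, which at $n=N$ is a non‑strict inequality, whereas \eqref{pauli_condition} asks for strict $<\delta$; the $L_1$ version in the paper picks up a spare factor of $\tfrac12$ (from $\|\vec y-\vec x\|_1\le\tfrac{d}{2}\epsilon_2$) that keeps it strict. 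This is cosmetic — one could shrink $\epsilon_2$ by any factor — but you should be aware the margin is tight as you wrote it.
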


From Lemmas~\ref{lem:lemma_1} and~\ref{lem:lemma_2} (cf. Fig.~\ref{fig:boxes} for an illustration), we have the following upper and lower bounds on the volume fraction  $f_{A(t)|B, \delta, N}$:  
\be
f_2 \leq f_{A(t)|B, \delta, N} \leq f_1 \label{317} 
\ee
where 
\begin{align}
& f_{1,2} \equiv \frac{\text{Vol}(\tilde{A} \in \sS_A\ \text{s.t.} \ \vec{y}(\tilde AB) \subseteq\sB_{1,2})}{\text{Vol}(\sS_A)}  = P_U \le( \vec y(U A U^{\dagger}B)\subseteq \sB_{1,2} \ri) \, . 
\end{align} 
and the last equality follows from~\eqref{ifree_prob}. 
\begin{figure}
    \centering
    \includegraphics[width=0.45\linewidth]{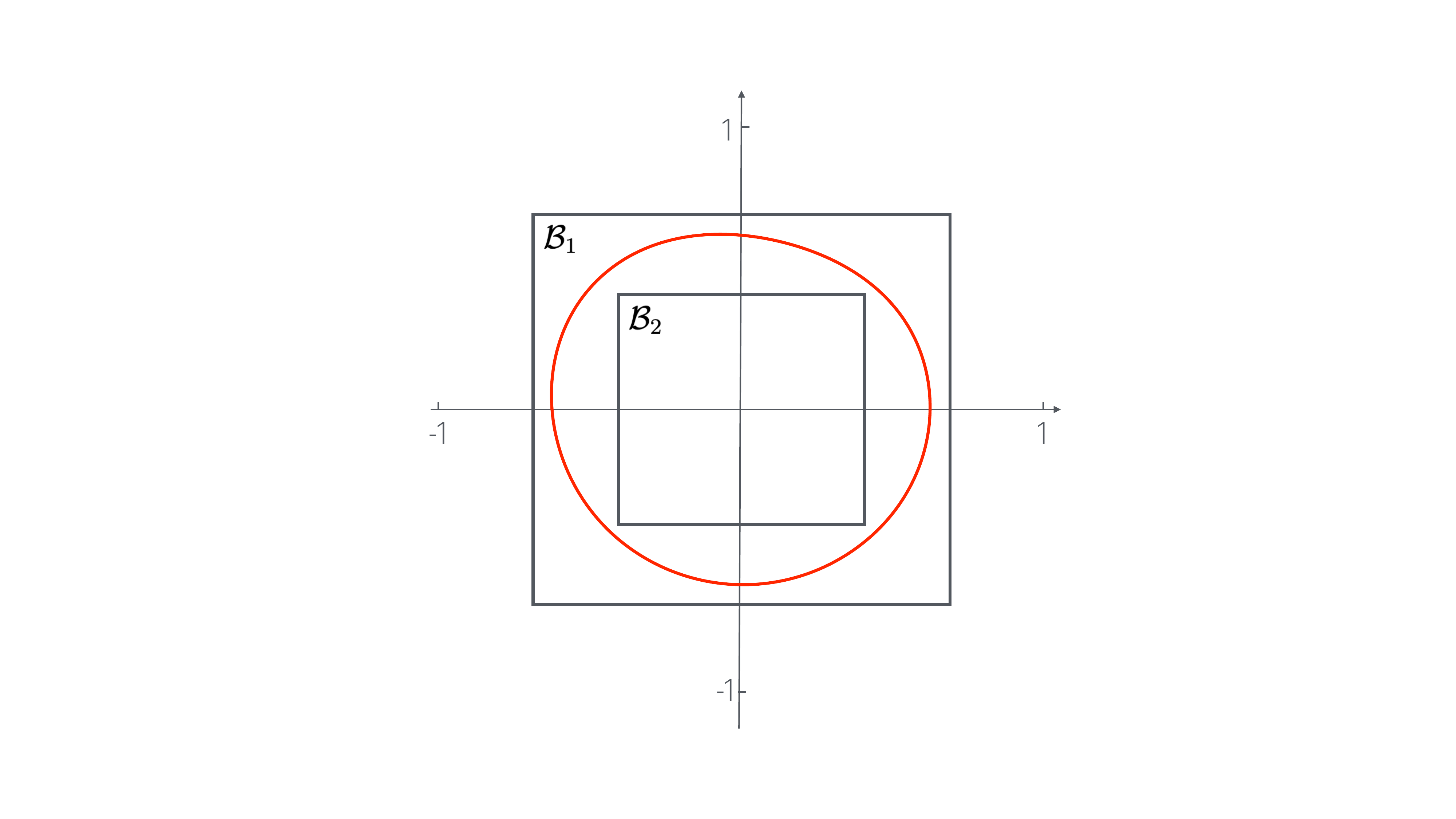}
    \caption{We use two boxes $\sB_1$ and $\sB_2$ to sandwich the set of $\vec{y}(\tilde AB)$ corresponding to $\tilde A$ in $\sS_{A(t)|B, \delta, N}$, depicted as the region contained within the red circle, so as to estimate its volume. This figure is an illustration for $d=4$.}
    \label{fig:boxes}
\end{figure}

Our strategy to derive a formula for $I_{\rm free}$ will be to show that $-\frac{4}{d^2}\log f_1$ and $-\frac{4}{d^2}\log f_2$ are close in the large $d$ limit, so that  from \eqref{317} either one gives a good approximation to $I_{\rm free}$. Let us use the probability density~\eqref{prob} to write $f_a$ ($a=1,2$) more explicitly: 
\begin{align}
    f_a &= \sN \int_{x_1-\epsilon_a}^{x_1+\epsilon_a} \dd y_1\cdots\int_{x_{d/2}-\epsilon_a}^{x_{d/2}+\epsilon_a}\dd y_{d/2}  \,   \prod_{1\leq i<j\leq d/2}(y_i-y_j)^2 \prod_{i=1}^{d/2} \dd y_i  \,.  \label{fi_def} 
\end{align} 
Recall that the eigenvalues $x_i$ are sorted in nondecreasing order. So if $i<j$, for $\{y_i\}$ appearing in the integrand of \eqref{fi_def} for $i=1$, 
\be 
|y_i -y_j| \leq (x_j - x_i + 2\epsilon_1)
\ee
and therefore, 
\be 
f_1 \leq \sN (2\epsilon_1)^{d/2} \prod_{1\leq i<j\leq d/2}(x_j-x_i+2\epsilon_1)^2   \ ,  
\ee
which implies that 
\begin{align}
I_{\rm free}(A(t):B)_{\delta, N} \geq & - \frac{4}{d^2}\log \sN -  \frac{8}{d^2} \sum_{1 \leq i< j \leq d/2} \log(x_j -x_i) \label{ifree_line1} \\
&-  \frac{8}{d^2} \sum_{1 \leq i < j \leq d/2} \log\le( 1 + \frac{2\epsilon_1}{x_j-x_i} \ri) -\frac{2}{d} \log (2 \epsilon_1) \ .  \label{ifree_lb}
\end{align}

Recall from \eqref{e1def} that $\epsilon_1=-c d \delta \log \delta$. So, to get a $\delta$-independent expression at leading order in $1/d$, \eqref{ifree_lb} should be subleading compared to \eqref{ifree_line1}. Both terms in \eqref{ifree_line1} have leading $\Theta(1)$ contributions. This gives us the following requirement for the allowed range of the parameter $\delta$: 
\begin{enumerate}
\item[(i)]  $\epsilon_1=-cd \delta \log \delta$ should be such that \eqref{ifree_lb}  goes to 0 as $d \to \infty$, so that \eqref{ifree_lb} is negligible compared to \eqref{ifree_line1} for large $d$. 
\end{enumerate}

Next, in order to give an upper bound for $I_{\rm free}$, we need a lower bound on $f_2$. 
To provide such a lower bound, we need the condition: 
\begin{align}
\epsilon_2 = \frac{\delta}{N^2} < \frac{\Delta_{\rm min}}{2} \, , \label{3242}
\end{align}
under which we have 
\begin{align}
    f_2  
    &\ge \sN  \prod_{1\leq i<j\leq d/2}(x_j-x_i-2\epsilon_2)^2  (2\epsilon_2)^{d/2}   \label{326}
    %&\ge \sN \, 2^{-d} \prod_{1\leq i<j\leq d/2}(\beta_i-\beta_j)^2  \epsilon_2^{d/2} \, . 
\end{align}
which leads to 
\begin{align}
I_{\rm free}(A(t):B)_{\delta, N} \leq & - \frac{4}{d^2}\log \sN -  \frac{8}{d^2} \sum_{1 \leq i< j \leq d/2} \log(x_j -x_i) \label{ifree2_line1} \\
&-  \frac{8}{d^2} \sum_{1 \leq i < j \leq d/2} \log\le( 1 - \frac{2\epsilon_2}{x_j-x_i} \ri) -\frac{2}{d} \log (2\epsilon_2)   \label{ifree2_lb}
\end{align}
$\epsilon_2$ depends on both $\delta$ and $N$, so again in order to obtain a regulator-independent expression (which coincides with the lower-bound from \eqref{ifree_line1}) we want to choose the parameters $\delta$ and $N$ within a range such that 
\begin{enumerate}
\item[(ii)]  $\epsilon_2=\delta/N^2$ should be such that \eqref{ifree2_lb}  goes to 0 as $d \to \infty$, so that \eqref{ifree2_lb} is negligible compared to \eqref{ifree2_line1} for large $d$. 
\end{enumerate}

Overall, to get a result that is both finite and regulator-independent in the large $d$ limit, we want to choose $N$ and $\delta$ in a range such that we can simultaneously satisfy (i) and (ii) above, as well as the inequalities \eqref{310} and \eqref{3242} which were needed for intermediate steps of the above discussion. In a case where $A(t)B$ has degeneracies, $\Delta_{\rm min}=0$. In this case, we cannot simultaneously satisfy all four requirements:  \eqref{3242}, (i) and (ii) can never be satisfied for non-zero $\delta$, whereas \eqref{310} requires non-zero $\delta$ for finite $N$.  Such degeneracies in $A(t)B$ exist in particular at $t=0$. However, it is natural to expect that after a short initial time, there are no degeneracies in the spectrum of $A(t)B$. More quantitatively, we can assume that $\Delta_{\rm min}$ is lower-bounded as in \eqref{327} after a short initial time.  With this assumption, we will show that  taking $\delta$ and $N$ to be as in \eqref{deltaN_choice} allows us to satisfy all four requirements (i), (ii), \eqref{310} and \eqref{3242}. 

We have $\epsilon_1=\Theta(cd^{1+\alpha}e^{-d^\alpha})$ and $\epsilon_2=\Theta(e^{-3d^\alpha})$.  It is easy to check that \eqref{310} and \eqref{3242} are satisfied by these choices. To see that (i) is satisfied, note that the magnitude of the first line of \eqref{ifree_lb} can be upper-bounded as follows: 
\begin{align}
\frac{8}{d^2} \sum_{1 \leq i < j \leq d/2} \log\le( 1 + \frac{2\epsilon_1}{x_j-x_i} \ri) %\leq &\frac{8}{d^2} \sum_{1 \leq i < j \leq d/2} \le(\frac{2\epsilon_1}{\beta_j-\beta_i} \ri) \nn 
%\leq & \frac{8}{d^2} \sum_{1 \leq i < j \leq d/2} \le(\frac{2\epsilon_1}{\Delta_{\rm min}} \ri) 
%\nn 
  \leq  & 4 \frac{\epsilon_1}{\Delta_{\rm min}} \leq  4 \,c\, d^{1+\alpha}e^{-d^\alpha/2} 
\end{align}
where we have used $\log(1+x)\leq x$ and the definition of $\Delta_{\rm min}$ in the first inequality, and the fact that $\Delta_{\rm min} \geq e^{-d^\alpha/2}$ from the assumption \eqref{327} in the second. For the magnitude of the second term of \eqref{ifree_lb}, note that  
\be 
-\frac{2}{d} \log (2\epsilon_1) =- \frac{2}{d} (\log 2c  + (1+\alpha)\log d -d^\alpha)  = O(d^{\alpha-1})\ .
\ee
Similarly, to see that (ii) is satisfied, note that magnitude of the first term of \eqref{ifree2_lb} has magnitude can be upper-bounded by $4 \epsilon_2/\Delta_{\rm min}$, which is $O(e^{-3d^{\alpha}}$), using the fact that $-\log(1-x)\leq \frac{x}{1-x}$. The second term of \eqref{ifree2_lb} is  again $O(d^{\alpha-1})$.

Now, combining the inequalities \eqref{ifree_line1} and \eqref{ifree2_line1}, and using the fact that (shown in Appendix~\ref{app:Stirling})
\be 
-\frac{4}{d^2} \log \sN = -\log 2 + O(\log d/d)\ , \label{eq:vol_constant} 
\ee
we have 
\be 
I_{\rm free}(A(t):B) = -\frac{4}{d^2}\sum_{1\leq i\neq j\leq d/2}\log|x_i-x_j|-\log 2  +O(d^{\alpha-1})\ . \label{ifree_dz}
\ee
The normalization $\frac{4}{d^2}$ we previously chose in the definition~\eqref{ifree1} is approximately equal to the inverse of the number of terms in the sum, $\frac{d^2-2d}{4}$, at large $d$, so that the resulting quantity is $\Theta(1)$ as long as the differences $|x_i-x_j|$ are $\Theta(1)$ for most of the pairs $(i, j)$. 

We summarize the result as the following theorem for the free mutual information $I_{\rm free}(A(t):B)_{\delta, N}$
defined in \eqref{ifree2}:
\begin{theorem}
\label{theorem_coulomb}
Let  $A$ and $B$ be two traceless involutions, and  $\{x_i\}$ be the real parts of the eigenvalues of the unitary $A(t)B$ lying on or above the real axis. Let us choose the parameters $\delta=\Theta(\exp(-d^\alpha))$ for some $0<\alpha<1$, and $N=C/\delta$ for $C>9\pi$.  When $\Delta_{\rm min} \equiv {\rm min}_{i\neq j}|x_i-x_j| =\exp(-d^{o(1)})$, we have 
\be 
I_{\rm free}(A(t):B)_{\delta, N} = -\frac{4}{d^2} \sum_{1\leq i\neq j \leq d/2} \log|x_i-x_j| - \log 2 + O(d^{-z})\, , \label{fmiformula}
\ee  
which is independent of the details of $\delta$ and $N$ at the leading order in $1/d$. Here $z$ is some positive real number. 
\end{theorem}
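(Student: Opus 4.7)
The plan is to convert the abstract volume ratio defining $I_{\rm free}(A(t):B)_{\delta,N}$ into a concrete eigenvalue integral, then sandwich that integral between two box integrals whose leading behavior coincides.

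First I would simplify the moment-matching condition~\eqref{all_moments}. Because both $\tilde A$ and $B$ square to the identity and are traceless, every joint moment reduces to a power $\Tr[(\tilde A B)^n]/d$. Hence $\tilde A\in \sS_{A(t)|B,\delta,N}$ if and only if $|\Tr[(\tilde A B)^n]-\Tr[(A(t)B)^n]|/d<\delta$ for $n\leq N$. Writing $\tilde A = UAU^\dagger$, this gives an event on $\mathbf{U}(d)$, and by definition of the pushforward measure,
\begin{equation}
I_{\rm free}(A(t):B)_{\delta,N} = -\tfrac{4}{d^2}\log P_U(\tilde A\in \sS_{A(t)|B,\delta,N}).
\end{equation}

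Second, I would sandwich this probability using the two $\epsilon$-boxes provided by Lemmas~\ref{lem:lemma_1} and~\ref{lem:lemma_2}, valid as long as $N>9\pi/\delta$ and $\epsilon_2<\Delta_{\min}/2$. This gives $f_2\leq f_{A(t)|B,\delta,N}\leq f_1$ where $f_a$ is the Haar probability that $\vec y(UAU^\dagger B)$ lies in the box $\sB_a$ centered at $\vec x$ of radius $\epsilon_a$. Lemma~\ref{lem:lemma_0} then rewrites each $f_a$ as the explicit Coulomb-gas integral~\eqref{fi_def}. The crucial algebraic step is to pull the Vandermonde factor outside the integration by the elementary bounds
\begin{equation}
(x_j-x_i-2\epsilon_a)^2 \leq (y_j-y_i)^2 \leq (x_j-x_i+2\epsilon_a)^2,
\end{equation}
which converts each $f_a$ into the product of a trivial box volume $(2\epsilon_a)^{d/2}$ and the target Vandermonde product, up to multiplicative distortion factors $\prod_{i<j}(1\pm 2\epsilon_a/(x_j-x_i))^{\pm 2}$.

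Third, I would take logs and verify that, under the scaling $\delta=\Theta(e^{-d^\alpha})$, $N=C/\delta$, every error term is subleading. The distortion factors contribute at most $\tfrac{8}{d^2}\sum_{i<j}|\log(1\pm 2\epsilon_a/(x_j-x_i))|\leq 4\epsilon_a/\Delta_{\min}$, so the assumption $\Delta_{\min}=\exp(-d^{o(1)})$ is exactly what is needed to beat $\epsilon_1\sim d^{1+\alpha}e^{-d^\alpha}$ and $\epsilon_2\sim e^{-3d^\alpha}$; both error terms then decay faster than any polynomial in $1/d$. The remaining boundary terms $-\tfrac{2}{d}\log(2\epsilon_a)=O(d^{\alpha-1})$ are subleading as well, and the conditions~\eqref{310} and $\epsilon_2<\Delta_{\min}/2$ are trivially met.

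Fourth, I would evaluate the normalization constant: using Stirling on~\eqref{prob}, $-\tfrac{4}{d^2}\log\sN = -\log 2 + O(\log d/d)$, which gives the $-\log 2$ constant in the final formula (this Stirling computation is deferred to Appendix~\ref{app:Stirling}). Combining the upper and lower bounds yields~\eqref{fmiformula} with an additive $O(d^{\alpha-1})$ error, so setting $z$ to be any value in $(0,1-\alpha)$ gives the stated bound. The main obstacle is delicate rather than deep: one must choose the regulators $\delta$ and $N$ within the narrow window where the four competing conditions (upper box large enough to contain the target set, lower box small enough to fit inside, distortion factors negligible, box-volume logs negligible) can simultaneously be satisfied. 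The tension resolves precisely because the only input-dependent obstruction is the minimum gap $\Delta_{\min}$, and the stretched-exponential assumption~\eqref{327} buys enough room for the choices in~\eqref{deltaN_choice} to work.
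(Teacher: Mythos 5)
Your proposal is correct and follows essentially the same route as the paper's proof: reduce the moment conditions to powers of $\Tr[(\tilde A B)^n]$ using the involution property, express the volume fraction as a Haar probability, sandwich it between two $\epsilon$-box integrals via Lemmas~\ref{lem:lemma_1}--\ref{lem:lemma_2}, evaluate them with the Coulomb-gas density from Lemma~\ref{lem:lemma_0}, bound the Vandermonde distortion by $4\epsilon_a/\Delta_{\min}$ using the assumption~\eqref{327}, and extract the $-\log 2$ via Stirling's approximation on $\sN$. The error bookkeeping (stretched-exponential decay of the distortion terms, $O(d^{\alpha-1})$ from the box-volume logs, $z\in(0,1-\alpha)$) matches the paper's.
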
 

It is clear from the above statement that there are certain restrictions on when $I_{\rm free}$ can be identified with the formula on the RHS. We will therefore sometimes refer to the RHS as the ``Coulomb gas formula'' or $\sC(A(t):B)$ in parts of the later text where we want to distinguish it from $I_{\rm free}$.  
Let us summarize two  important points about the regime of validity of this formula: 
\begin{enumerate}
\item In cases where $\Delta_{\rm min}$ does not satisfy condition \eqref{327}, some of the above arguments relating $I_{\rm free}$ to the Coulomb gas formula do not go through. In all such cases, we will formally set $I_{\rm free}(A(t):B)=\infty$. In particular, this means that in the case where $A(t)$ and $B$ commute, we set $I_{\rm free}=\infty$. 
This choice is consistent with the intuition mentioned in the introduction that when $A(t)$ and $B$ commute, $\tilde A$ must be block-diagonal in the eigenbasis of $B$, so that $\sS_{A(t)|B}$ is a lower-dimensional sub-manifold of $\sS_A$ and the volume fraction $f_{A(t)|B}$ is zero.~\footnote{It would require more work to formalize this intuition for finite $\delta$ and $N$, where $\tilde A$ is not forced to be exactly block-diagonal in the eigenbasis of $B$ by the conditions~\eqref{pauli_condition}. Instead of trying to formalize this edge case, we will simply choose to set $I_{\rm free}=\infty$ as a convention as discussed above.}

%Instead of trying to mathematically justify  this for $f_{A(t)|B,\delta, N}$, we will simply set the quantity to be infinite in this case.    
\item If the value of the Coulomb gas formula is proportional to some power of $1/d$, then the terms we have neglected in \eqref{ifree_dz} can give competing contributions to $I_{\rm free}$. We will formally set $I_{\rm free}$ equal to zero in such cases. 
\end{enumerate}

\iffalse 
The formula appears to diverge in the case where the spectrum of $A(t)B$ is degenerate. However, recall from the above analysis that  this formula only applies for $\Delta_{\rm min}$ satisfying \eqref{327}, and in particular does not apply in cases with exact degeneracies. Hence, within the regime of validity where we can relate it to $I_{\rm free}$ up to negligible corrections, the RHS of \eqref{fmiformula} is finite.  However, it is useful to understand when this formula can grow with $d$,  indicating a divergence in the thermodynamic limit. From \eqref{fmiformula}, we see that this can happen if $O(d^2)$ pairs of $i, j$ satisfy $|\beta_i-\beta_j| = O(1/d^m)$ for some power $m>1$. In this scenario, $I_{\rm free} \sim \log d$. Recall $A(t=0)B$ has $|\beta_i-\beta_j|=0$ for either $d(d-1)$ or $d/2(d/2-1)$ pairs $i, j$.  Hence, assuming that the spectrum of $A(t)B$ evolves in a continuous way with $t$, there is generically some range of times that is late enough such that   $\Delta_{\rm min}$ satisfies \eqref{327} (so that \eqref{fmiformula} is valid), and early enough such that $|\beta_i-\beta_j|=1/d^m$ for $O(d^2)$ pairs of $i, j$. \SV{See Appendix~\ref{app:prob_2} for a discussion of the relevant time scales in one physical model.}  
 Note also that \eqref{fmiformula} is no longer valid in a regime where it evaluates to a value that is suppressed as some power of $d$, as in this regime the $O(1/d^z)$ contributions that we have ignored from the above discussion compete with \eqref{fmiformula}. 
\fi

\subsection{An integral formula for ensembles of unitary evolutions}
\label{sec:ensemble}

In some of the examples we will consider later in the paper, the time-evolution operator $U(t)$ will be drawn from an ensemble of time-evolutions instead of a single fixed evolution. For example, we will consider random GUE Hamiltonians in Sec.~\ref{sec:gue}, and random unitary circuits in Sec.~\ref{sec:random_circuit}. Such ensembles are often used as toy models for chaotic systems. While it may be possible to numerically probe single realizations from the random ensemble in such cases, for analytic calculations it is often convenient to average the quantity of interest over the ensemble. In this section, we will present a useful formula for the ensemble average of the free mutual information in such cases, and clarify the assumptions and approximations under which it is valid.

Let the real parts of the eigenvalues of $U(t)A U(t)^{\dagger}B$ be $(x_1,\ldots,x_{d/2})$. We define the corresponding empirical spectral density as
\be 
\rho(x) \equiv \frac{2}{d} \sum_{i=1}^{d/2} \delta(x- x_i )\ .
\ee
Say $U(t)$ is distributed according to some ensemble $\nu$. Consider the ensemble averages
\be 
\overline{\rho}(x) \equiv \mathbb{E}_{U(t)\sim \nu}\, \rho(x)\ ,\quad \overline{\rho(x)\rho(y)} \equiv \mathbb{E}_{U(t)\sim \nu}\, \rho(x)\rho(y)\ ,
\ee 
In general, $\overline{\rho(x)\rho(y)}\neq\overline{\rho}(x)\overline{\rho}(y)$. However, we will assume that for the systems and time scales we are interested in, for large $d$ we can assume factorization 
$\overline{\rho(x)\rho(y)}\approx\overline{\rho}(x)\overline{\rho}(y)$ while evaluating $I_{\rm free}$.  We expect, based on evidence from the random GUE model which will later be discussed in Sec.~\ref{sec:gue}, that this factorization  generally gives a good approximation in a regime where $I_{\rm free}$ is 
not proportional to a power of $1/d$.~\footnote{Haar-random unitaries provide an interesting example where $I_{\rm free}$ is $O(1/d)$ and the factorization does not give a good approximation, as we will discuss in the next subsection.} Recall that in cases where $I_{\rm free}$ is proportional to a power of $1/d$, the Coulomb gas formula for $I_{\rm free}$  does not apply, and we simply choose to set the quantity to zero. 

Note that while $\rho(x)$ has $\delta$-function peaks, as long as the measure of the  ensemble $\nu$ is ``sufficiently smooth'', the averaged density $\bar \rho(x)$ does not have such peaks. The absence of $\delta$ function peaks can be formally stated as   %the property that the spectral density $\bar \rho$ is \SV{integrable}, i.e.,
\begin{equation}\label{eq:no_atoms_mt}
    \int\dd x\,\overline{\rho}(x)\mathbf{1}_{x=x^{\ast}} =0, \quad \text{for all }\,  x^{\ast} \in [-1, 1]\ .
\end{equation}
We will explain the smoothness condition under which \eqref{eq:no_atoms_mt} holds in Appendix~\ref{sec:smoothness}, and show that it is satisfied for the non-local GUE Hamiltonian evolution for any $t>0$, and  for local random unitary circuits after a time scale determined by the initial separation of $A$ and $B$.

Under the above assumptions and conditions, the average of the free mutual information \eqref{fmiformula} can be approximated as follows:

\begin{align}
    \mathbb E_{U(t)\sim\nu}I_{\rm free}(A(t):B)=&\mathbb -\, \mathbb E_{U(t)\sim\nu}\frac{4}{d^2}\sum_{1\leq i\neq j\leq d/2}\log|x_i-y_j| -\log 2 \label{stepzero}\\
    =&-\int_{-1}^1\int_{-1}^1\log|x-y|\bar\rho(x)\bar\rho(y)\,\dd x\,\dd y \nn 
&+\int_{-1}^1\int_{-1}^1\log|x-y|\bar\rho(x)\bar\rho(y)\mathbf 1_{x=y}\,\dd x\,\dd y 
     -\log 2\label{step1}\\
    =& \, \,  \sI(A(t):B)  \label{step2}
\end{align}
where we have defined 
\be 
\sI(A(t):B) \equiv -\int_{-1}^1\int_{-1}^1\log|x-y|\bar\rho(x)\bar\rho(y)\,\dd x\,\dd y -\log 2 \, . \label{eq:integralFMI}
\ee
In \eqref{step1}, we
subtract the diagonal contribution (enforced by the indicator function $\mathbf{1}_{x=y}(x,y)=1$ if $x=y$ and $0$ otherwise) from the double integral to to match the exact definition of the off-diagonal sum in \eqref{stepzero}. In going from \eqref{step1} to \eqref{step2}, we notice that this diagonal term vanishes because the indicator function $\mathbf{1}_{x=y}(x,y)$ is supported on the line $x=y$ that has Lebesgue measure zero.\footnote{One might worry that $\log|x-y|$ diverges on the line $x=y$, so the function $g(x,y)=\log|x-y|1_{x=y}(x,y)\bar \rho(x)\bar \rho(y)$ is ill-defined on the line. However, note that $g$ is zero almost everywhere and the line is a set of measure zero. We can define its value on the diagonal arbitrarily (e.g., set it to $0$ there) without affecting the integral. Here we are also using the property \eqref{eq:no_atoms_mt} of $\bar\rho$ that it contains no delta functions.} Note that the condition \eqref{eq:no_atoms_mt} implies that $\sI(A(t):B)$ is always finite. We will sometimes refer to $\sI$ as the ``ensemble FMI'' below.

\subsection{Haar-random unitaries} \label{sec:haar}

For $A(t) \equiv U_H B U_H^{\dagger}$ where $U_H$ is a Haar-random unitary, $A(t)$ and $B$ are ``freely independent'' non-commuting variables in the large $d$ limit. The standard mathematical definition of the free mutual information from~\cite{voiculescu1999analogues,Voiculescu_2002,hiai2005large,HIAI_2009,Collins_2014} vanishes in this case. This observation provides part of our motivation for studying the free mutual information in the context of more general chaotic time-evolutions. Since our definition for the free mutual information is somewhat different from that in the mathematical references, we should explicitly  check how our  formulas behave in this case. By using an analytic formula for  $\overline{\rho(x)\rho(y)}$ for this case (which can be obtained by integrating out all but two of the $y_i$ in  \eqref{prob_S}), we find that the average of the Coulomb gas formula 
\be 
\overline{\sC}=-\log 2 -\int_{-1}^{1}\log|x-y|\overline{\rho(x)\rho(y)}\dd x\dd y
\ee
has a negative value with magnitude $\sim d^{-0.8}$.  
Hence, based on the discussion at the end of Sec.~\ref{sec:formula}, the precise value of $\overline{\sC}$ cannot be  identified with $\overline{I_{\rm free}}$ in this case. However, $|\overline{\sC}|=O(1/d^2)$ tells us that $\overline{I_{\rm free}}$ is $O(1/d^z)$ for some $z>0$. This is sufficient for confirming that 
$\overline{I_{\rm free}}$ vanishes for large $d$. 

We can also evaluate the ensemble FMI $\sI$ defined in \eqref{eq:integralFMI} in this case. We find that it has a positive $O(1/d^2)$ value, and thus does  not agree with that of $\overline{\sC}$. Hence, in this case, $\overline{\rho(x)\rho(y)}$ cannot be approximated as $\bar\rho(x) \bar\rho(y)$ while evaluating $\overline{\sC}$. Based on our study of other examples later in the paper, we believe this lack of factorization is a peculiar feature of cases where the values of $\overline{\sC}$ and $\sI$ are $O(1/d)$. Since the precise values of $\overline{\sC}$ and $\sI$ in this case do not have a clear physical meaning, we will not go into more detail on them in this paper, but will discuss them in a companion paper on mathematical aspects of the FMI~\cite{vw2025}.

\begin{comment}
Under the assumption that $\overline{\rho(x)\rho(y)}\approx\overline{\rho}(x)\overline{\rho}(y)$ at large $d$, which shall be verified in examples, we have
\begin{align}
    \mathbb E_{U\sim\nu}I_{\rm free}(A(t):B)&\approx \mathcal{I}(A(t):B)\,\\
   \mathcal{I}(A(t):B)&\equiv -\int_{-1}^1\int_{-1}^1\log|x-y|\overline{\rho}(x)\overline{\rho}(y)\,\dd x\,\dd y -\log 2\ .\label{eq:integralFMI}
\end{align}

The formula~\eqref{eq:integralFMI} is actually the exact formula for free mutual information $\mathcal I(A(t):B)$ of two Paulis according to the mathematical definition of free mutual information used in the free probability literature. We will provide a proper derivation of this formula in a companion paper.
\end{comment} 

\section{Free mutual information and OTOCs} 
\label{sec:otocs}

In this section, we will relate the Coulomb gas  formula \eqref{fmiformula} for the free mutual information from the previous subsection (up to an equivalent reorganization of the sum), 
\begin{align}
\sC(A(t):B) \equiv&- \frac{1}{d^2} \sum_{\substack{1 \leq i, j \leq d\\
{\rm Re}(z_i) \neq {\rm Re}(z_j) }} \log  |{\rm Re}(z_i) - {\rm Re}(z_j)| ~ -\log 2, \nn
&\text{where } \{z_i\}_{i=1}^d\in\mathbb T \text{ are the eigenvalues of } A(t)B \, ,  \label{ifree_formula}
\end{align}
to a sum over the infinite temperature higher-point OTOCs,  
\be 
{\rm OTOC}_n(t) \equiv \frac{1}{d} \Tr[(A(t) B)^n]\, . 
\ee

We summarize our main results in the two theorems below. 

\begin{theorem}\label{theorem:otoc_1}
   Let $A$ and $B$ be two traceless involutions and $A(t)=U(t) A U(t)^\dagger$ for some unitary time-evolution operator $U(t)$. Then Coulomb gas formula $\sC(A(t):B)$, defined as the RHS of \eqref{fmiformula}, can be equated with a series over $OTOC_n$: 
\be 
\sC(A(t):B) = \sum_{n=1}^{\infty} \frac2n \le(\otoc_n(t)^2-\frac{1}{d}\ri)  - \frac{2\log 2}{d} - \frac{1}{d}\sum_{n=1}^{\infty} \frac2n \otoc_{2n} \ .  \label{coulomb_series}
\ee
\end{theorem}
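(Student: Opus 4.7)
\textbf{Proof plan for Theorem~\ref{theorem:otoc_1}.} The plan is to expand $\log|\mathrm{Re}(z_i)-\mathrm{Re}(z_j)|$ into a Fourier series, interchange the eigenvalue sum with the Fourier sum, and collect the resulting power sums into $\otoc_n$'s. I would first parameterize the eigenvalues. Since $A(t)$ and $B$ are Hermitian involutions, $A(t)B$ is unitary, and the map $|\psi\rangle\mapsto A(t)|\psi\rangle$ sends the eigenspace of eigenvalue $e^{i\phi}$ to that of $e^{-i\phi}$. Hence the spectrum consists of conjugate pairs $\{e^{\pm i\phi_k}\}_{k=1}^{d/2}$. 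Under the non-degeneracy of the $\cos\phi_k$ (consistent with the Coulomb-gas regime of Theorem~\ref{theorem_coulomb}), the restriction $\mathrm{Re}(z_i)\neq\mathrm{Re}(z_j)$ in $\sC(A(t):B)$ picks out pairs $(i,j)$ lying in distinct $\pm$-orbits, each of which is counted four times, so
\begin{equation*}
\sC(A(t):B) \;=\; -\frac{4}{d^2}\sum_{\substack{k,\ell=1\\ k\neq \ell}}^{d/2}\log\bigl|\cos\phi_k-\cos\phi_\ell\bigr|\;-\;\log 2.
\end{equation*}

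Next, I would apply the trigonometric factorization
$|\cos\phi_k-\cos\phi_\ell| = \tfrac{1}{2}|e^{i\phi_k}-e^{i\phi_\ell}|\,|e^{i\phi_k}-e^{-i\phi_\ell}|$
and expand each factor with the classical Fourier series $\log|1-e^{i\theta}| = -\sum_{n\geq 1}\cos(n\theta)/n$, valid for $\theta\not\equiv 0 \pmod{2\pi}$. After swapping the \emph{finite} double sum over $k,\ell$ with the Fourier sum (partial sums first, then pass to the limit, which is legitimate termwise since each argument of the logarithm is bounded away from zero by the non-degeneracy assumption), the $(k,\ell)$ sums reduce to quadratic functionals of the character sum $S_n:=\sum_{k=1}^{d/2}e^{in\phi_k}$. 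Specifically,
\begin{equation*}
\sum_{k\neq \ell}\cos\bigl(n(\phi_k-\phi_\ell)\bigr) = |S_n|^2 - d/2, \qquad \sum_{k\neq \ell}\cos\bigl(n(\phi_k+\phi_\ell)\bigr) = \mathrm{Re}(S_n^2) - \mathrm{Re}(S_{2n}).
\end{equation*}
The key algebraic identity is $|S_n|^2 + \mathrm{Re}(S_n^2) = \tfrac{1}{2}(S_n+\bar S_n)^2 = \tfrac{d^2}{2}\otoc_n(t)^2$, using $S_n+\bar S_n = d\,\otoc_n(t)$ (which holds because the spectrum is conjugation-symmetric, so $\otoc_n$ is real); combined with $\mathrm{Re}(S_{2n}) = \tfrac{d}{2}\otoc_{2n}(t)$, this rewrites everything in terms of $\otoc_n$.

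Finally, I would collect the constants. The $(d/2)(d/2{-}1)$ copies of $\log 2$ from the factorization combine with the $-\log 2$ in the definition of $\sC$ to produce the constant $-2\log 2/d$. The diagonal $-d/2$ contribution generates a formally divergent $-\tfrac{2}{d}\sum 1/n$, which must be absorbed into the $\otoc_n^2$ series to yield the conditionally convergent combination $\sum_{n\geq 1}\tfrac{2}{n}(\otoc_n(t)^2 - 1/d)$; the remaining piece becomes $-\tfrac{1}{d}\sum_{n\geq 1}\tfrac{2}{n}\otoc_{2n}(t)$. The main obstacle is the manipulation of these conditionally convergent series and the termwise interchange of summations. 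I would make this rigorous by Abel regularization: replace $e^{i\phi_k}$ by $re^{i\phi_k}$ with $0<r<1$ so that all Fourier expansions become absolutely convergent, freely interchange sums, perform the regrouping into the three series of the statement (each of which is a convergent power series in $r$), and then invoke Abel's theorem to take $r\to 1^-$. Convergence at $r=1$ of $\sum\tfrac{1}{n}(\otoc_n^2-1/d)$ follows from a finite-$d$ Parseval identity giving $\lim_{N\to\infty}\tfrac{1}{N}\sum_{n\leq N}\otoc_n^2 = 1/d$ when the $\phi_k$ are distinct mod $2\pi$, combined with Dirichlet's test for $\sum\otoc_{2n}/n$.
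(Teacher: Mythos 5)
Your proposal is correct and arrives at precisely \eqref{coulomb_series}. I checked the computation in detail: with $S_n:=\sum_{k=1}^{d/2}e^{in\phi_k}$, the diagonal subtraction gives $\sum_{k\neq\ell}\cos(n(\phi_k-\phi_\ell)) = |S_n|^2 - d/2$, the $\Re(S_n^2)-\Re(S_{2n})$ term is right, the identity $|S_n|^2 + \Re(S_n^2) = \tfrac12(S_n+\bar S_n)^2 = \tfrac{d^2}{2}\otoc_n^2$ holds, the $\log 2$ bookkeeping gives exactly $-\tfrac{2\log 2}{d}$, and the final regrouping matches \eqref{coulomb_series}.

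Your route is a genuine repackaging of the paper's proof rather than a different analytical idea, but it is cleaner in places and adds things the paper leaves tacit. The paper works with the empirical measure $\mu(z)$ on the unit circle defined via Sokhotski--Plemelj, expands the resolvent from inside and outside $\mathbb T$ to produce a Fourier expansion of $\mu$ in terms of $\otoc_n$, and then splits the double sum into a ``full'' part $\sX$ and a divergent diagonal subtraction $\sY$ that is evaluated separately in Appendix~D. You bypass the resolvent/distribution machinery entirely by parameterizing the eigenvalues as conjugate pairs $\{e^{\pm i\phi_k}\}$, factoring $|\cos\phi_k-\cos\phi_\ell| = \tfrac12|e^{i\phi_k}-e^{i\phi_\ell}|\,|e^{i\phi_k}-e^{-i\phi_\ell}|$ (which is algebraically the same trigonometric split the paper uses via $\log|\sin\tfrac{\theta\pm\phi}{2}|$), and summing the resulting character sums explicitly. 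This is more elementary and makes the origin of each of the three series in \eqref{coulomb_series} transparent; the paper's $\sX$/$\sY$ split is exactly your inclusion/exclusion of the $k=\ell$ diagonal. What the paper's resolvent formulation buys is a smoother passage to the ensemble version (Theorem~\ref{theorem:otoc_2}), where one replaces $\mu$ by its average. Two things you do that the paper does not: (i) you explicitly flag that one needs the non-degeneracy of the $\cos\phi_k$ (and hence the absence of eigenvalues at $\pm1$) for the term-by-term Fourier expansion to be legitimate --- the paper states the theorem without this hypothesis, but it is in fact needed (e.g.\ at $t=0$ with $[A,B]=0$, $\sC$ is finite while the series diverges); and (ii) you propose Abel regularization with $r\to1^-$ to justify the interchange of the finite $(k,\ell)$ sum with the conditionally convergent Fourier series, which the paper treats formally. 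Both are genuine improvements in rigor, though your sketched convergence argument for $\sum \tfrac1n(\otoc_n^2-1/d)$ via a Ces\`aro/Parseval statement is weaker than what you need; it is cleaner to note $\otoc_n^2 - \tfrac1d = \tfrac{2}{d^2}\sum_{k\neq\ell}[\cos(n(\phi_k-\phi_\ell))+\cos(n(\phi_k+\phi_\ell))] + \tfrac1d\otoc_{2n}$ and then each term converges by the standard $\sum_n \cos(n\alpha)/n$ convergence for $\alpha\not\equiv 0\ (\mathrm{mod}\ 2\pi)$.
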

This will in turn allow us to relate $I_{\rm free}$ to a sum over higher-point OTOCs in the regime where its value is $O(1)$, i.e., not as small as some power of $1/d$. 

\begin{remark}\label{rem:otoc}
Suppose we are interested in evaluating $I_{\rm free}(A(t):B)$ up to some $O(1)$ error $\epsilon$.  We can then drop the $O(1/d)$  tail in \eqref{coulomb_series},  truncate the infinite series at some $n_{\max}\sim O(\vep^{-1})$, and get the approximation~\footnote{We assume that we are considering systems at times such that ${\rm OTOC}_n(t)$ is $O(1)$ for $n$ of $O(1)$. If this were not the case, then the value of $I_{\rm free}$ itself would be $O(1/d)$, and the Coulomb gas formula~\eqref{fmiformula} would not be valid. }
\begin{equation}
I_{\rm free}(A(t):B) = \sum_{n=1}^{n_{\rm max}} \frac2n {\rm OTOC}_n(t)^2 + O(\vep)\ . \label{424}
\end{equation}
\end{remark}

Similarly, for dynamics described by a unitary ensemble, the ensemble FMI~\eqref{eq:integralFMI} also admits a sum-of-OTOCs formula.
\begin{theorem}\label{theorem:otoc_2}
    Let $A$ and $B$ be two traceless involutions and $A(t)=U(t) A U(t)^\dagger$, where the unitary dynamics $U(t)$ is drawn from a unitary ensemble $(\mathcal U,\nu)$. Let $\overline{\rho}$ be the average spectral density of $A(t)B$, and $\overline{\otoc_n(t)}\equiv d^{-1}\mathbb E_{U(t)\sim\nu}\mathrm{Tr}[(A(t)B)^n]$. The ensemble FMI~\eqref{eq:integralFMI} admits the following OTOC series
\begin{equation}\label{425}
    \mathcal I(A(t):B)= \sum_{n=1}^\infty\frac2n \overline{\otoc_n(t)}^2\ .
\end{equation}
\end{theorem}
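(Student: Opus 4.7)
The plan is to prove the identity by expanding the logarithmic kernel $\log|x-y|$ in the Chebyshev basis, which is natural here because the variables $x,y\in[-1,1]$ arise as real parts of unimodular eigenvalues of $A(t)B$. Setting $x=\cos\alpha$, $y=\cos\beta$, the product-to-sum identity $\cos\alpha-\cos\beta=-2\sin\tfrac{\alpha+\beta}{2}\sin\tfrac{\alpha-\beta}{2}$ combined with the classical Fourier series $\log|2\sin(\theta/2)|=-\sum_{n\ge 1}\cos(n\theta)/n$ yields, in terms of the Chebyshev polynomials $T_n(\cos\phi)=\cos(n\phi)$,
\begin{equation}
\log|x-y|\;=\;-\log 2\;-\;\sum_{n=1}^{\infty}\frac{2}{n}\,T_n(x)\,T_n(y).
\end{equation}
This is the key analytic input of the proof.

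Next I would identify the Chebyshev moments of $\bar\rho$ with the averaged higher-point OTOCs. Using the definition $\rho(x)=\frac{2}{d}\sum_{j=1}^{d/2}\delta(x-x_j)$ with $x_j=\cos\phi_j$, and the fact that the eigenvalues of the unitary $A(t)B$ come in conjugate pairs $\{e^{\pm i\phi_j}\}$, one computes directly
\begin{equation}
\int_{-1}^{1}T_n(x)\,\rho(x)\,dx\;=\;\frac{2}{d}\sum_{j=1}^{d/2}\cos(n\phi_j)\;=\;\frac{1}{d}\mathrm{Tr}[(A(t)B)^n]\;=\;\mathrm{OTOC}_n(t),
\end{equation}
so taking the ensemble average gives $\int T_n(x)\bar\rho(x)\,dx=\overline{\mathrm{OTOC}_n(t)}$. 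Substituting the Chebyshev expansion into the definition~\eqref{eq:integralFMI}, the constant term integrates against the total mass $\int\bar\rho=1$ on each factor to give $+\log 2$, exactly canceling the explicit $-\log 2$ in $\mathcal{I}$; the remaining double integral factorizes into a product of Chebyshev moments, yielding $\mathcal{I}(A(t):B)=\sum_{n\ge 1}\tfrac{2}{n}\,\overline{\mathrm{OTOC}_n(t)}^{\,2}$, which is the claim.

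The main technical obstacle is justifying the interchange of the infinite sum and the double integral, since the Chebyshev series of $\log|x-y|$ fails to converge uniformly on the diagonal $x=y$, where both sides diverge logarithmically. The cleanest route uses the no-atom condition~\eqref{eq:no_atoms_mt} on $\bar\rho$, which ensures that the logarithmic energy $\iint\log|x-y|\,\bar\rho(x)\bar\rho(y)\,dx\,dy$ is finite and that $\bar\rho$ assigns zero mass to the diagonal. One then approximates $\bar\rho$ by a mollified density $\bar\rho_\varepsilon$, for which the Chebyshev expansion converges uniformly on the relevant compact set and the swap is trivially valid, and passes to the limit $\varepsilon\to 0$ by dominated convergence, using integrability of $\log|x-y|$ against the (absolutely continuous) product measure. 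An equivalent argument truncates the series at $N$, applies a Parseval-type bound together with the uniform bound $|\overline{\mathrm{OTOC}_n}|\le 1$ to control the tail, and takes $N\to\infty$. Either route reduces the theorem to the two lines of algebra above.
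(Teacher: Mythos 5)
Your proposal is correct and is, at its core, the paper's own argument. The identity
\begin{equation}
\log|x-y|\;=\;-\log 2\;-\;\sum_{n\ge 1}\frac{2}{n}\,T_n(x)\,T_n(y)
\end{equation}
is exactly the Fourier series the paper uses for $\log|\cos\theta-\cos\phi|$, since $T_n(\cos\theta)=\cos(n\theta)$; pairing it against $\bar\rho\otimes\bar\rho$ and invoking orthogonality of the $\cos(n\theta)$ modes is the identical computation. The only difference is presentational: the paper routes through the resolvent $\sR(\zeta)$ and the Sokhotski--Plemelj jump to express $\bar\mu(z)\,\dd z$ as a cosine series with coefficients $\overline{\otoc_n}$, whereas you read off the Chebyshev moments $\int T_n\,\bar\rho=\overline{\otoc_n}$ directly from the definition of $\rho$, which is a slightly shorter path to the same intermediate fact. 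Your remarks on justifying the sum--integral interchange (using the no-atom condition \eqref{eq:no_atoms_mt} plus a mollification or truncation argument) are correct and supply the kind of regularity justification the paper leaves implicit.
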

This formula~\eqref{425} can be a useful tool for calculating $\sI$ in cases where we only have access to the 
$\overline{\otoc}_n$ instead of $\overline \rho(x)$. 

The proofs of the two theorems are provided in the next two subsections.

\subsection{Derivation of the OTOC sum formula}

Let us first write $\sC$ as follows:
\begin{align} 
&\sC(A(t):B) = \mathcal{X} + \mathcal{Y} \label{xy} - \log 2\ , \\
&\mathcal{X} = -\frac{1}{d^2}\sum_{1 \leq i , j \leq d} \log  |{\rm Re}(z_i) - {\rm Re}(z_j)| \label{45}\ ,\\ 
&\mathcal{Y} =
\frac{2}{d^2}\sum_{1 \leq i \leq d} \log  |{\rm Re}(z_i) - {\rm Re}(z_i)|\ . \label{46}
\end{align}

Let us first focus on the term $\sX$. Instead of working with the density of states $\rho:[-1,1]\to \mathbb{R}$ introduced in the previous section, in the discussion below it will be convenient to work with an analog  of the density of states for complex eigenvalues. Recall that the eigenvalues $\{z_i\}_{i=1}^d$ of $A(t)B$ lie on the unit circle $\mathbb{T}$ in the complex plane. We define a distribution $\mu$ on $\mathbb{T}$, which has the property that for any function $f:\mathbb{T} \to \mathbb{C}$,  
\be 
\oint_{\mathbb{T}} \dd z\, \mu(z) f(z)  = \frac{1}{d}\sum_{i=1}^d f(z_i)\, .  \label{48}
\ee
Using the Sokhotski–Plemelj formula in complex analysis,  $\mu(z)$ with the property \eqref{48} is given by 
\begin{align}
\mu(z) = \frac{1}{2\pi i} \le( \lim_{\zeta \to z^+} R(\zeta) - \lim_{\zeta \to z^-} R(\zeta)  \ri), \label{mudef}  \\
\quad \sR(\zeta) \equiv \frac{1}{d}\sum_{j=1}^{d} \frac{1}{\zeta - z_i} = \frac{1}{d}\Tr\le[\frac{1}{\zeta- A(t)B}\ri] \,. \label{49}
\end{align}
where $\zeta \to z^{+}$ ($\zeta \to z^{-}$) refers to approching $z$ from outside (inside) the unit circle $\mathbb{T}$. Here we have introduced the resolvent function $\sR(\zeta)$ for $A(t)B$, which has the same form as the more standard resolvent for a Hermitian operator. Such resolvents are also used in the study of OTOCs in~\cite{google_paper}.  

We can then express $\sX$ in terms of $\mu(z)$ as follows: 
\begin{align}
\sX =-\oint_{\mathbb{T}} \dd z_1 \oint_{\mathbb{T}} \dd z_2 \mu(z_1) \mu(z_2) \log |{\rm Re}(z_1) - {\rm Re}(z_2)| \, . \label{411_0} 
\end{align}

We will now relate $\mu(z)$ to the ${\rm OTOC}_n$  by noting that  $\sR(\zeta)$ can be expanded in terms of OTOCs in two different ways depending on whether $\zeta$ lies inside or outside $\mathbb{T}$:\footnote{We thank Douglas Stanford for suggesting this method.}
\begin{align}
&\sR(\zeta) =  \frac{1}{\zeta} \sum_{n=0}^{\infty} \frac{1}{\zeta^n} \frac{1}{d}\Tr[(A(t)B)^n] =\frac{1}{\zeta} \sum_{n=0}^{\infty} \frac{1}{\zeta^n} \otoc_n\, , &  |\zeta|>1 \label{411} \\
&\sR(\zeta) =
-\sum_{n=0}^{\infty} \zeta^n \frac{1}{d}\Tr[(A(t)B)^{-n-1}]=-\sum_{n=0}^{\infty} \zeta^n \otoc_{n+1} \ , & |\zeta|<1 \label{412}
\end{align}
where we have used $d^{-1}\Tr[(A(t)B)^{-n-1}] =d^{-1}\Tr[ (A(t)B)^{n+1}]=\otoc_{n+1}$ in the second equality. Note that we have dropped the $A(t):B$ argument of $\otoc_n$ to simplify notation. Putting \eqref{411} and \eqref{412} into \eqref{49}, we get 
\be 
\mu(z) = \frac{1}{2\pi i}\le(\sum_{n=0}^{\infty} {\rm OTOC}_n(z^{-n-1} + z^{n-1}) + \frac{1}{z}\ri)\ ,
\ee
and further 
\be 
\mu(z) \dd z|_{z=e^{i\theta}} = \dd\theta \le(\frac{1}{\pi} \sum_{n=0}^{\infty} {\rm OTOC}_n \cos (n\theta) + \frac{1}{2\pi}\ri)\ .
\ee
Putting this expression for $\mu(z)\dd z$ into \eqref{411_0}, we find 
\be
\begin{aligned}
    \sX=&-\frac{1}{\pi^2}\sum_{n=0}^\infty\sum_{m=0}^\infty\int_0^{2\pi}\int_0^{2\pi}\dd\theta\,\dd\phi\log|\cos\theta-\cos\phi|\otoc_n\otoc_m\cos(n\theta)\cos(m\phi)\\\
    &+\frac{2}{\pi^2}\sum_{n=0}^\infty\int_0^{2\pi}\int_0^{2\pi}\dd\theta\,\dd\phi \log|\cos\theta-\cos\phi|\otoc_n\cos(n\theta)- \log 2\ .
\end{aligned}
\ee
The above expression can be further simplified using the following identity:
\begin{equation}
    \log |\cos\theta - \cos\phi| = \log 2+ \log|\sin\frac{\theta+\phi}{2}|+ \log|\sin\frac{\theta-\phi}{2}|\ .
\end{equation}
Using the standard Fourier series for $\log|\sin \omega|=-\log 2-\sum_{n=1}^\infty\cos(2n\omega)/n$, we obtain a Fourier series expansion for the logarithmic potential,
\begin{equation}
    \log |\cos\theta - \cos\phi| = -\log 2 - \sum_{p=1}^\infty\frac2p\cos(p\theta)\cos(p\phi)\ ,
\end{equation}
which is valid for the whole domain $\theta,\phi\in[0,2\pi]$. Using the Fourier series expansion and the orthogonality relation
\begin{equation}
\int_0^{2\pi}\dd\theta\cos(p\theta)\cos(n\theta)\,=\pi\delta_{pn}\ ,
\end{equation}
we obtain
\be 
\sX = \log 2 + \sum_{n=1}^\infty \frac2n\otoc_n^2\ . \label{420}
\ee
Returning to \eqref{xy}-\eqref{46}, recall that $\sX$ contains a divergent contribution from cases where ${\rm Re}(z_i) = {\rm Re}(z_j)$, and $\sY$ subtracts these divergences. How can we see the divergent contribution in \eqref{420}? It is useful to note that in terms of the full set of eigenvalues $\{e^{i\theta}\}_{i=1}^d$ of $A(t)B$,  
\begin{equation}
\otoc^2_n=\frac1{d^2}\sum_{i,j=1}^d e^{in(\theta_i+\theta_j)}=\frac1{d^2}\sum_{i,j=1}^d e^{in(\theta_i-\theta_j)}= \frac1d+\frac{1}{d^2}\sum_{1\leq i\neq j\leq d} e^{in(\theta_i-\theta_j)}\ . \label{421}
\end{equation}

Since the $1/d$ term is independent of $n$, it gives a divergent contribution to \eqref{420}, while the remaining term should generically give a convergent contribution as its sign varies with $n$. This is true  unless  there are degeneracies among the $\theta_i$, which would lead to additional divergent contributions, consistent with \eqref{fmiformula}.

We therefore expect the $\sY$ term to cancel the above divergence. By similar steps to \eqref{411} to \eqref{420}, on expanding $\sY$ in terms of the OTOCs, we find 
\be 
\sC(A(t):B) = \sum_{n=1}^{\infty} \frac2n \le(\otoc_n^2-\frac{1}{d}\ri)  - \frac{2\log 2}{d} - \frac{1}{d}\sum_{n=1}^{\infty} \frac2n \otoc_{2n} \ . \label{full_otoc}
\ee
The details leading to \eqref{full_otoc} are discussed in Appendix~\ref{app:yterm}. Note that \eqref{full_otoc} holds as an exact identity for the Coulomb gas formula $\sC$, even in the regime where it cannot be identified as~$I_{\rm free}$. 

This concludes the proof of Theorem~\ref{theorem:otoc_1}. Now we discuss Remark~\ref{rem:otoc}.

In \eqref{full_otoc}, the divergent contribution to $\sX$ expected from \eqref{421} thus precisely gets cancelled on including the contribution from $\sY$. The $\sY$ term also gives rise to two additional contributions. We expect that that the series in the last term in \eqref{full_otoc} is likely to be convergent quite generally as it involves both positive and negative terms with an overall suppression in magnitude of $O(1/n)$.

Suppose we are in a regime where $\sC(A(t):B)$
is not proportional to some power of $1/d$, so that it can be identified with $I_{\rm free}$ according to Theorem~\ref{theorem_coulomb}. The $O(1/d)$ contributions to $\sC$ are not relevant for $I_{\rm free}$, so we can ignore the second and third terms of \eqref{full_otoc}. In principle, we should keep the $1/d$ term in the parenthesis first term to control a potential divergence in the infinite sum: 
\be 
I_{\rm free}(A(t):B) = \sum_{n=1}^{\infty} \frac2n\le({\rm OTOC}_n(t)^2 - \frac{1}{d}\ri) \label{423}
\ee
up to finite corrections of $O(1/d)$. In cases where $I_{\rm free}(A(t):B)$ has a finite value, the sum must converge and we should get a good approximation to $I_{\rm free}$ by keeping some $O(1)$ number of terms $n_{\rm max}$,
\begin{align}
I_{\rm free}(A(t):B) &\approx  \sum_{n=1}^{n_{\rm max}} \frac2n\le({\rm OTOC}_n(t)^2 - \frac{1}{d}\ri) \nn 
&\approx \sum_{n=1}^{n_{\rm max}} \frac2n {\rm OTOC}_n(t)^2 \, .  \label{nmax} 
\end{align}
In the last line we have assumed that we are considering systems at times such that ${\rm OTOC}_n(t)$ is $O(1)$ for $n$ of $O(1)$. If this were not the case, then the value of $I_{\rm free}$ itself would be $O(1/d)$, and the Coulomb gas formula would not be valid. 

We compare the partial sums of both \eqref{full_otoc} and   \eqref{nmax} with the Coulomb gas formula in one physical model in Appendix~\ref{app:gue_checks_otoc}, and confirm that the difference between \eqref{full_otoc} and   \eqref{nmax} is only important at late times when $\sC$ is an inverse power of $d$.

%Cumulative partial sum of OTOCs such as the RHS of \eqref{424}, with various choices of $n$-dependent weights, were recently studied in~\cite{jahnke2025free} to analyze the relationship between operator statistics and higher-point OTOCs. Here, our result shows that the cumulative OTOCs is particularly meaningful with the harmonic weight, and it is equal to an information-theoretic quantity that measures \SV{how much $B$ knows about $A(t)$ after evolution time $t$}.

\subsection{OTOC sum formula for ensemble averages}
\label{sec:otoc_ensemble}

Let us now consider the setup from Sec.~\ref{sec:ensemble}, where the time evolution operator $U(t)$ is drawn from some ensemble $\nu$. Recall that if we can approximate $\overline{\rho(x)\rho(y)} \approx \bar \rho(x) \bar \rho(y)$  and the measure $\nu$  obeys certain smoothness conditions, the ensemble average of $I_{\rm free}$ can be approximated as $\sI(A(t):B)$ in~\eqref{eq:integralFMI}. Equivalently in terms of $z_1, z_2 \in \mathbb{T}$, 
\be 
\sI(A(t):B) = - \oint dz_1 \oint dz_2 \bar \mu(z_1) \bar \mu (z_2) \log|\Re (z_1)- \Re(z_2)|
\ee
where $\bar\mu$ is the ensemble average of $\mu$ defined in \eqref{mudef}. The analysis of $\sI(A(t):B)$ is exactly parallel to that of $\sX$ in the previous section, with the replacement 
\be
\mu(z) \to \bar \mu(z), \quad  \frac1d\Tr[(A(t)B)^n] \to \oint_{\mathbb{T}}\dd z\, \bar \mu(z) z^n \,,
\ee
so that we find 
\be 
\sI(A(t):B) = \sum_{n=1}^{\infty} \frac{2}{n} \, \overline{\otoc_n(t)}^2, 
\ee
where $\overline{\otoc_n}$ is the ensemble-averaged OTOC: 
\be 
\overline{\otoc_n(t)} \equiv \oint_{\mathbb T} dz \bar \mu(z) z^n  = \frac{1}{d}\overline{\Tr[(A(t) B)^n]}\, .
\ee
There is no analog of the $\sY$ term of the previous section in this case. This concludes the proof of Theorem~\ref{theorem:otoc_2}.

\subsection{Physical implications}\label{sec:implications}

Let us now note two physical implications of  \eqref{424} and \eqref{425}: 
\begin{enumerate}
\item These results immediately imply the following inequality for any (up to the approximation in \eqref{424}, which requires neglecting $O(1/d)$ contributions):
\begin{equation}
    I_{\rm free}(A(t):B)\ge \frac2n\otoc_n(t)^2\ . \label{ifree_bound}
\end{equation}
 We can also put any partial sum over some subset of OTOCs as a stronger lower bound. The equality~\eqref{424} is clearly a much stronger statement, but the bounds can be useful if we do not have access to higher-point OTOCs beyond some $n$. Moreover, the bound has a similar structure to  
the following well-known bound for the quantum mutual information $I(X:Y)_\rho$ between two regions $X$ and $Y$~\cite{Wolf_2008} in some state $\rho$,
\begin{equation}\label{QMI>2pt}
    I(X:Y)_\rho \ge \frac{1}{2}\langle O_X O_Y\rangle_{\rho, c}^2\ .
\end{equation}
where $O_X$ and $O_Y$ are any operators in $X$ and $Y$ with unit operator norm, and $\langle O_X O_Y\rangle_{\rho, c}$ is their connected two-point function in $\rho$. The $n=1$ case of \eqref{ifree_bound} is most directly analogous to \eqref{QMI>2pt} as it involves the two-point function $\frac{1}{d}\Tr[AB(t)]$. 
This analogy suggests that $I_{\rm free}$ is a natural candidate for an information-theoretic measure of correlations over time. For various timelike mutual information measures proposed in the literature from other perspectives, see~\cite{Hosur_2016,Glorioso_2024,wu2025quantum,milekhin2025observable,Doi_2023,Nakata_2021}. 

\item Recall from Sec.~\ref{sec:ensemble} that under a condition of sufficient smoothness of the ensemble that is satisfied by many examples of interest (including the GUE and random unitary circuit ensembles discussed later in this paper), $\bar \rho(x)$ contains no delta functions after some characteristic time scale, and as a result $\sI$ is finite. In such cases, the infinite sum over $\overline{\otoc_n}^2$ in \eqref{425} must converge. 

Note that since the series \eqref{425} already involves a harmonic weight, it is right on the border of divergence. This implies that even a mild decay of the $\otoc_n$ with $n$, such as $\sim 1/n^{\delta}$ for small $\delta>0$, may be sufficient for a finite FMI. This leaves room for a variety of different decay behaviours of $\otoc_n$ with $n$. We shall examine the form of $\otoc_n$ decay and its associated timescales in a variety of physical models in the next two sections.
\end{enumerate}

%but none is fully satisfactory: they either suffer from a limited domain of applicability or a more complicated relation with temporal correlators. We believe this approach suffers from the intrinsic difficulty in defining a density matrix over time~\cite{Fitzsimons_2015,Horsman_2017,Fullwood_2022}, without which quantum entropies cannot be defined. Our approach evades this quandary, as we make use of entropy measure $I_{\rm free}$ that is not based on density matrices but rather directly on the observables and their distributions.

%\subsection{Remarks on OTOC series convergence and spectral degeneracies}\label{sec:smoothness}

\section{Free mutual information in chaotic systems} 
\label{sec:physical_systems}

Now that we have derived general formulas to study the free mutual information in  physical systems, let us turn to some concrete examples. In this section, we will consider three increasingly realistic models of chaotic time-evolution. In each of these models, our main goal is to evaluate $I_{\rm free}(A(t):B)$  and understand its time-dependence. Another important goal is to understand the role played by different $\otoc_n$ in the partial sums of \eqref{424} or \eqref{425}. In the course of trying to  understand these partial sums, we will also be able to extract certain universal features of the $n$-dependence of $\otoc_n$ across these different models. The techniques used and the results we are able to derive in each model are somewhat different, so we summarize the main results below. In each case, we take $A$ and $B$ to be initially commuting, so that $I_{\rm free}(A(t):B)=\infty$ at $t=0$.

\begin{enumerate}
\item {\it Random Hamiltonians from the GUE ensemble.} In Sec.~\ref{sec:gue}, we use the non-local GUE model as a first  example. In this model, we  numerically compute both $I_{\rm free}$ and $\otoc_n$ for  single realizations of the ensemble, and derive analytic formulas for the ensemble average of $\overline{\otoc_n}$. This allows for detailed comparison between ensemble averages and individual realizations, and between the the two sides of \eqref{424} for various $n_{\rm max}$. We find numerically that $I_{\rm free}(t)$ becomes finite for any $t>0$ and decays monotonically with time. All $\otoc_n$
start to decay for $t>0$, and higher $n$ show faster initial decay with $t$.  Due to this behavior of $\otoc_n$, at sufficiently late times we get a good approximation to $I_{\rm free}(t)$ even with $n_{\rm max}=2$ in \eqref{424}. 

At intermediate times, $\otoc_n(t)$ show oscillations as functions of both $t$ and $n$, which do not lead to oscillations in $I_{\rm free}(t)$ or the partial sums of \eqref{424}, and do not prevent the sums from converging.
\item {\it Random unitary circuits.} In Sec.~\ref{sec:random_circuit}, we consider brickwork Haar-random unitary circuits, which were   introduced as a model for chaotic dynamics in~\cite{operator_spreading_adam, operator_spreading_tibor}. This model is closer to realistic chaotic systems than the GUE model due to its local structure, although it lacks energy conservation.  Due to the local structure, the decay of $\otoc_n$ in this model for two operators initially separated by distance $x$ begins at a time $t^{\ast}$ proportional to $x$.  %$t>t^{\ast}= x/v^{\ast}$ for a characteristic velocity $v^{\ast}$.
We analytically study the ensemble average of $\otoc_n$ in this model, and conjecture a result for $t> t^{\ast}$ that is similar in structure to $\overline{\otoc_n}(t)$ in the random GUE model. In particular, all $\overline{\otoc_n}(t)$ start to decay simultaneously at $t^{\ast}$, and the result is consistent with a faster  initial decay for higher $n$.  

Based on this GUE-like structure, it is likely that the OTOC partial sums of \eqref{424} and $\overline{I_{\rm free}}$ for $t>t^{\ast}$ show a similar behavior to the GUE case, although we do not have direct access to these quantities from our analytic calculation. %We are not able to explicitly confirm this as we have less analytical and numerical control over this model.   

\item {\it Chaotic spin chain.} To take into account the effects of energy conservation as well as locality, we next consider the Ising model with both transverse and longitudinal fields. This model is widely studied in the literature as a representative example of chaotic spin chains. We numerically evaluate $I_{\rm free}$ and $\otoc_n$ in this model. We find that the qualitative features of both $I_{\rm free}(t)$ and $\otoc_n(t)$ at times  $t> t^{\ast}=x/v^{\ast}$ (for some characteristic $v^{\ast}$ for this model) are again similar to those of the random GUE model. 
\end{enumerate}

\subsection{Random GUE Hamiltonian}
\label{sec:gue}

As our first model for a highly chaotic time-evolution operator, let us consider $U_{\rm GUE}(t) = e^{-iHt}$, where $H$ is a random $d \times d$ Hermitian matrix from the probability distribution $p_{\rm GUE}(H)$  of GUE Hermitian matrices. More explicitly,  
\be
p_{\rm GUE}(H) = \frac{1}{\sN} e^{-\frac{d}{2}\Tr[H^2]} \, . \label{pgue}
\ee
The prefactor $d/2$ correspons to the choice that  the off-diagonal elements of $H$ are independent complex random variables with zero mean  and variance $\overline{|H_{ij}|^{2}}= \frac{1}{d}$, and the diagonal components are independent real complex variables with zero mean and variance $ \overline{H_{ii}^{2}}= \frac{1}{d}$. With this scaling of the variance, the average or typical  density of eigenvalues of $H$ lies between $-2$ and $2$ independent of $d$. Note that  the ensemble is invariant under conjugation by unitaries,  
$p_{\rm GUE}(H) = p_{\rm GUE}(UHU^{\dagger})$. 
This unitary invariance implies in particular that the eigenvectors of $H$ are Haar-random with respect to any fixed reference basis. As a result,  a typical realization of the ensemble does not obey any notion of locality with respect to some product basis. Nevertheless, unlike the Haar-random time-evolution operator discussed in Sec.~\ref{sec:haar},  $\sU_{\rm GUE}(t)$ does lead to a non-trivial time-evolution of $A(t)$, so that we can ask about the time-dependence of $I_{\rm free}$ and $\otoc_n$. 

We will first discuss  numerical observations from single realizations of this ensemble in Sec.~\ref{sec:numerical_section}, and then discuss the ensemble average in Sec.~\ref{sec:average_section}.

\begin{figure}[!h]
    \centering
    \begin{subfigure}{0.49\textwidth}
    \centering
    \includegraphics[width=\linewidth]{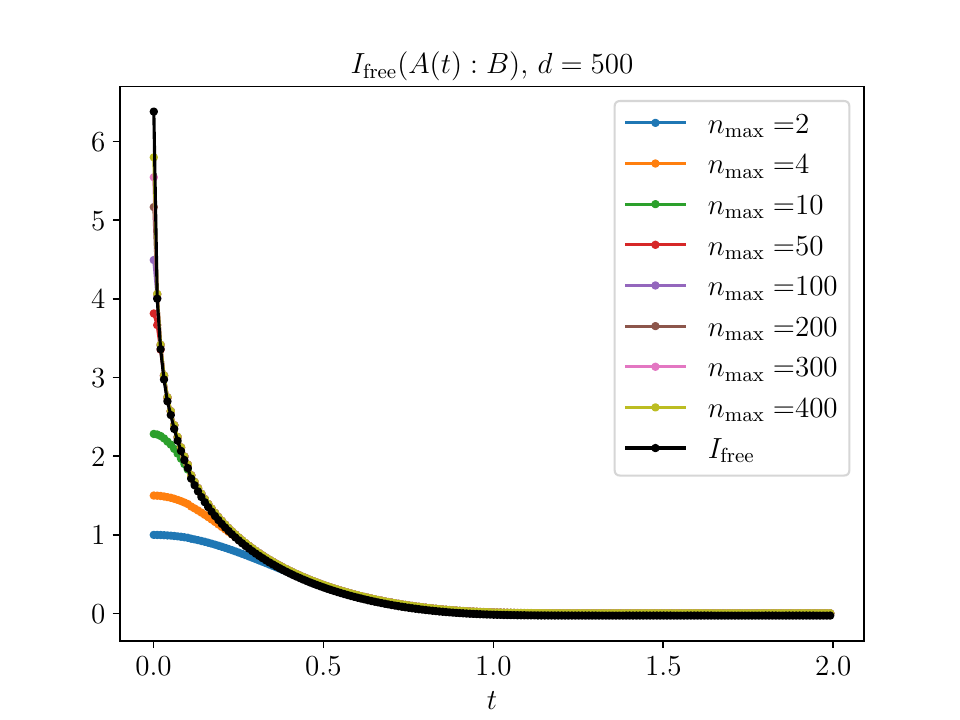}
    \end{subfigure}
    \begin{subfigure}{0.49\textwidth}      
    \centering
    \includegraphics[width=\linewidth]{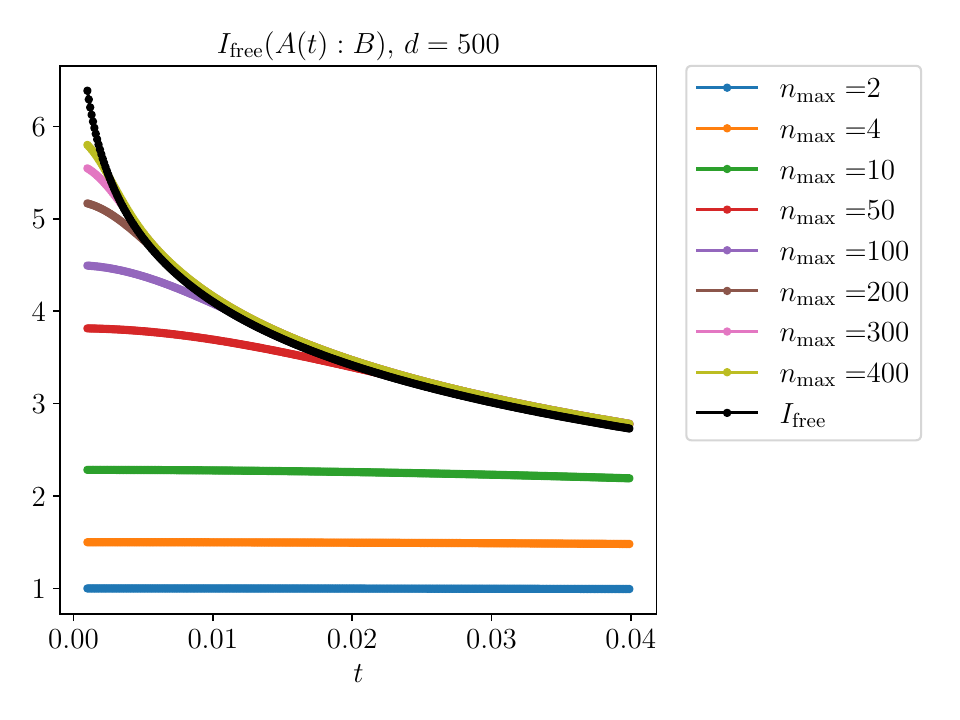}
    \end{subfigure}
    \caption{On the left, we show the behaviour of $I_{\rm free}$ for a range of times from $t=0.001$ to $t=2$ in a single instance of a random GUE Hamiltonian  for $d=500$. We compare the Coulomb gas formula in the black curve to the OTOC partial sums up to some $n_{\rm max}$. On the right, we show a zoomed-in version of the same plot at early times. The behaviour after times around $t=1$ is subtle, and a zoomed-in version is shown in Fig.~\ref{fig:d_dependence} and \ref{fig:otoc_negative} in Appendix~\ref{app:gue_checks}. While we explicitly show the case $A \neq B$, the plots are similar for the case $A=B$.}
    \label{fig:ifree_gue}
\end{figure}

\subsubsection{Single realization: numerical results}
\label{sec:numerical_section}

Let us first numerically probe the behaviour of $I_{\rm free}$, $\OTOC_n$, and the partial sums in \eqref{424} for a single realization for a random GUE matrix, with $d=500$. We observe that: 
\begin{enumerate}
\item $I_{\rm free}$ from the Coulomb gas formula~\eqref{fmiformula}, shown in the black curves in Fig.~\ref{fig:ifree_gue}, is finite for any finite time, and decays monotonically with time. 

We have explicitly shown the case where $A \neq B$, $\Tr[AB]=0$, and $(AB)^2=\mathbf{1}$. Other cases, such as $A=B$, or $AB= i\mathbf{1}$, show qualitatively similar behaviour of $I_{\rm free}$.

\item The partial sums of $\otoc_n$ in \eqref{424}, also shown in Fig.~\ref{fig:ifree_gue}, converge rapidly with $n_{\rm max}$.  The left plot of Fig.~\ref{fig:ifree_gue} shows that at sufficiently late times, just the leading $n=2$ term is sufficient to capture the full behaviour of $I_{\rm free}(A(t):B)$. The right plot shows that at earlier times, we need  increasingly large $n_{\rm max}$.  Each of the partial sums also shows a monotonic decay with $t$.  

%, due to the fact that $\OTOC_n(t)$ decays faster with time for larger $n$. 

\item The behavior of individual $\OTOC_n$ as a function of time is shown in Fig.~\ref{fig:otoc_gue_decay}. For $A=B$, $\otoc_n$ for all $n$ are non-zero, while for $A$ orthogonal to $B$, only even $\otoc_n$ are non-zero. We note that: 
\begin{enumerate}
\item[(i)] The initial decay of $\otoc_n$ is monotonically faster for higher $n$.
\item[(ii)] At later times, $\otoc_n(t)$ oscillates with $t$, leading to non-monotonic behavior as a function of both $t$ and $n$. The oscillations begin at an earlier time scale for larger $n$.
\end{enumerate}
We will observe similar  patterns in the local chaotic systems that we will study in the next two subsections. Due to point (ii), the convergence of the OTOC partial sums and their monotonic behaviour as a function of $t$, which we noted in the previous point, is not  obvious simply from observing the behavior of individual $\otoc_n$.

%It is worth noting that the oscillatory behaviour of the higher OTOCs with time, which can be seen from Fig.~\ref{fig:otoc_gue_decay}, is not reflected in the partial sums of \eqref{424} shown in Fig.~\ref{fig:ifree_gue}. 

 %In Appendix~\ref{app:gue_checks_otoc}, we compare the partial sums in~\eqref{424} to the partial sums of the exact expression \eqref{full_otoc}, and confirm that the difference between them can be ignored from early to intermediate times, where the Coulomb gas formula can be identified with~$I_{\rm free}$. 

\item The late-time saturation value of the Coulomb gas formula is $\sim d^{-0.84}$ and {\it negative}. See Fig.~\ref{fig:d_dependence} in Appendix~\ref{app:gue_checks}. By definition, the value of $I_{\rm free}$ cannot be negative; but recall from the discussion at the end of Sec.~\ref{sec:formula} that the Coulomb gas formula and $I_{\rm free}$ can be identified with each other only up to corrections of $O(1/d^z)$ for $z>0$. Hence, the only conclusion we can draw is that the late-time value of $I_{\rm free}$ is $O(1/d^z)$ for some $z>0$.

The condition~\eqref{327} for $\Delta_{\rm min}$ is satisfied for the full range of times shown in the plot, so that the Coulomb gas formula can be identified with $I_{\rm free}$ from early to intermediate times. See Appendix~\ref{app:gue_checks} for details.

\begin{figure}[!h]
    \centering
    \begin{subfigure}{0.49\textwidth}
    \centering
    \includegraphics[width=\linewidth]{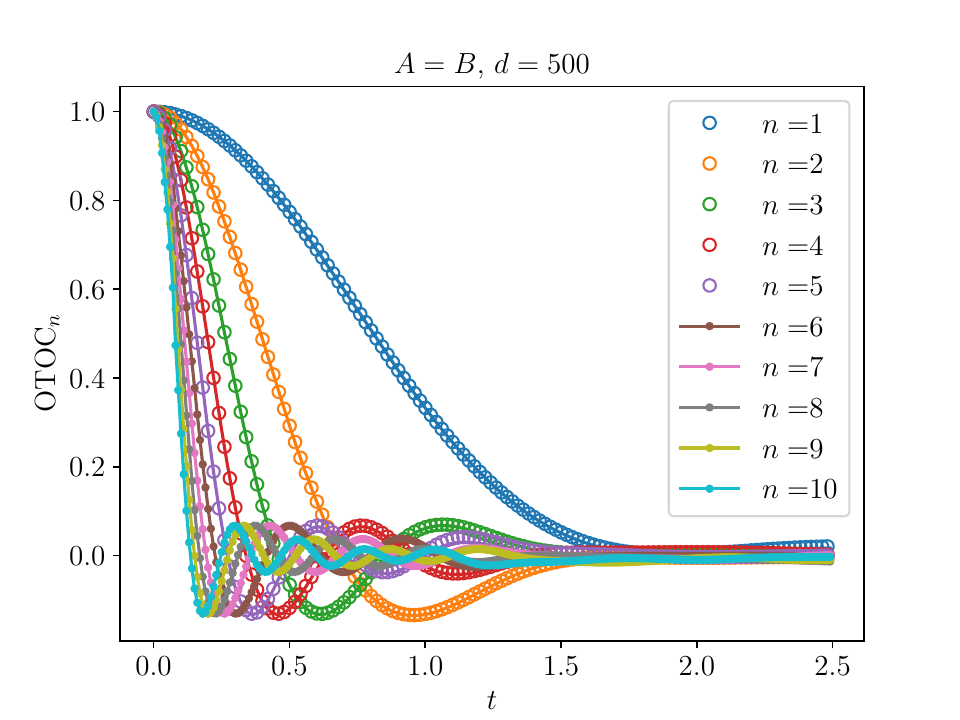}
    \end{subfigure}
    \begin{subfigure}{0.49\textwidth}      
    \centering
    \includegraphics[width=\linewidth]{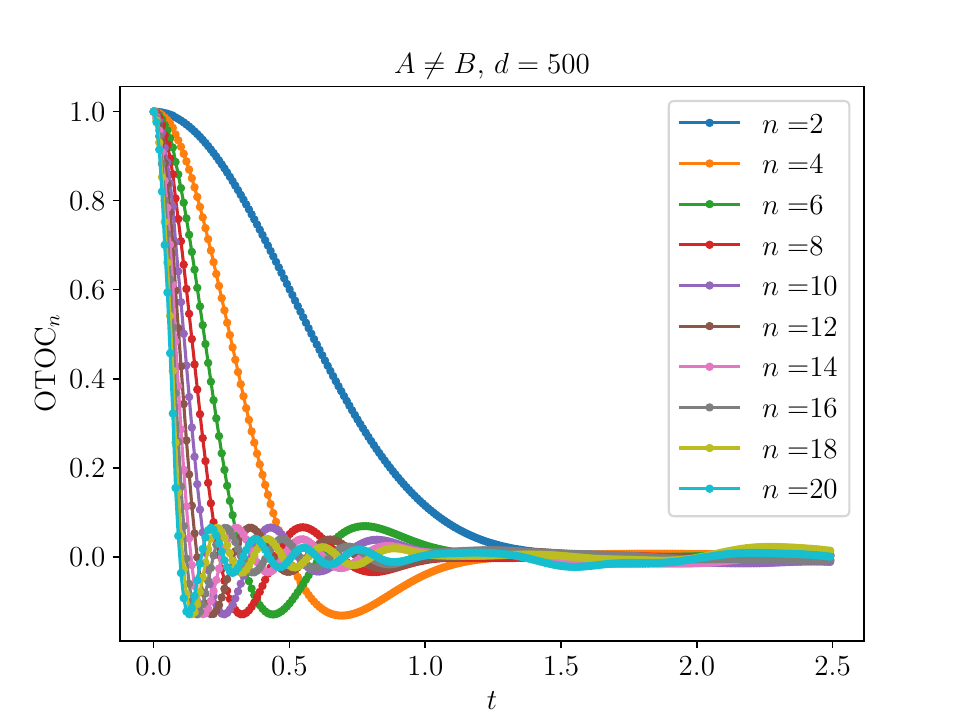}
    \end{subfigure}
    \caption{We show the behaviour of ${\OTOC_n}(t)$ for the case $A=B$ on the left for a single realization of the GUE ensemble, and for the case $A \neq B$, $\Tr[AB]=0$, $(AB)^2 = \mathbf{1}$ on the right. The data in the right plot provides the input for the partial sums shown in Fig.~\ref{fig:ifree_gue}. Note that we have not shown the odd $n$ OTOCs in this $A\neq B$ case, as they show small fluctuations around 0 for all times and give a negligible contribution to $I_{\rm free}$. In the left plot, for $n=1$ to $n=5$, the continuous lines are not interpolations of the data points, but instead correspond to the analytic expressions from the large $d$ limit of the ensemble average obtained in Table~\ref{gue_table}.}
    \label{fig:otoc_gue_decay}
\end{figure}

\end{enumerate}

\subsubsection{Ensemble average} \label{sec:average_section}

To go beyond the numerical analysis of the previous section, we need to consider ensemble averages of $\otoc_n$ and $I_{\rm free}$ over the GUE ensemble. First note that from our argument in Sec.~\ref{sec:ensemble} and Appendix~\ref{sec:smoothness}, due to the smoothness of the measure~\eqref{pgue},  for any $t>0$, the averaged spectral density $\bar \rho(x)$ of the real part of $A(t)B$ contains no  $\delta$ functions. As a result, the averaged free mutual information $\sI$ is finite for any $t>0$ (details on the smoothness condition and how it is satisfied for the ensemble \eqref{pgue} are discussed in Appendix~\ref{sec:smoothness}). 

While directly finding the averaged spectral $\bar \rho(x)$ as a  function of $t$ to compute the ensemble FMI $\sI$ from the formula \eqref{eq:integralFMI} is challenging, the average $\overline{\otoc_n(A(t):B)}$ is more tractable. 
In this section, we will use a technique from free probability to obtain a general analytic formula for $\overline{\otoc_n}$ under GUE evolution.~\footnote{Note that~\cite{cotler_hunterjones} previously proposed a much simpler formula for $\frac{1}{d}\Tr[A_1B_1(t)...A_nB_n(t)]$ in this model by identifying a leading contribution in the average over $(U \otimes U^{\ast})^{2n}$ for the Haar-random $U$ appearing in \eqref{udef}. However, their result applies only in the case where all $A_i, B_i$ are distinct.  In the present case where all $A_i=A$ and $B_i=B$, the analysis from Weingarten functions in the random unitary average becomes intractable for general $n$, with many non-trivial leading contributions coming from terms proportional to $\rm Wg(\sigma, \tau)$
for $\sigma\neq \tau$. One goal of this subsection is to illustrate how free probability techniques can give a practical tool to simplify such calculations.} While the formula turns out to be somewhat complicated, we write it in an explicit form up to $n=5$, and find excellent agreement with the numerical result from $d=500$ from the previous section. See the caption of Fig.~\ref{fig:otoc_gue_decay}.~\footnote{The fact that the variance of $\otoc_n$ is small for large $d$  confirms that we can approximate $\bar \rho(x) \bar \rho(y) \approx \overline{\rho(x)\rho(y)}$  for large $d$ in this model.} In principle, these formulas for $\overline{\otoc_n}$ can be used to obtain an analytic form 
for $\sI$ through \eqref{425}. We are not able to explicitly perform this sum due to certain difficulties mentioned below, and leave this task to future work.  

It is easiest to illustrate the technique for the case $A=B$, where our task is to compute 
\be 
\overline{\OTOC_n(t)} = \Tr[A U_{\rm GUE}(t)^{\dagger}A U_{\rm GUE}(t)] \, . 
\ee
 The key observation is that since 
 \be
 U_{\rm GUE}(t) = U e^{-iH^0_{\rm GUE}t} U^{\dagger} \label{udef}
 \ee
 for a Haar-random unitary $U$ (where $H^0_{\rm GUE}$ is some fixed matrix),  $A$ and  $U_{\rm GUE}(t)$ approach freely independent non-commuting variables $a$ and $u$ in the large $d$ limit. As explained in more detail in Appendix~\ref{app:gue}, this leads to the following formula~\cite{nica2006lectures}:
\begin{equation}
\lim_{d\to \infty}\overline{\OTOC_n(t)} = \sum_{\pi\in {\rm NC}(2n)} \prod_{V\in\pi}\kappa_{|V|}(a)\prod_{W\in \pi^c}m_{|W|}(u)\ , \label{otoc_gen}
\end{equation}
where ${\rm NC}(2n)$ denotes the non-crossing (NC) partitions of $2n$ elements, $\pi^c$ is the Kreweras complement of $\pi$ in ${\rm NC}(n)$, and $V$ denotes a block in $\pi$. See Fig.~\ref{fig:NC} for explanations and illustrations of NC partition and Kreweras complements.  $m_n(u)$ refers to the $n$-th moment of $u$,
  \be 
m_n(u) \equiv \lim_{d\to\infty} \frac{1}{d}\overline{\Tr[e^{-inH_{\rm GUE}t}]} = \frac{J_1(2nt)}{nt}
  \ee
 where $J_1$ is the Bessel function of the first kind.  This expression comes from the fact that the averaged  density of eigenvalues of $H_{\rm GUE}$ is a semicircle distribution. For subsequent expressions, it will be convenient to introduce the notation 
  \be 
\eta(t)\equiv  \frac{J_1(2t)}{t} \, . 
\ee
Note that $|\eta(t)|\leq 1$ for all $t$. 
$\kappa_n(a)$ refers to the ``free cumulants'' of $a$, which in this case are given by
  \be 
\kappa_n(a) =\begin{cases}
(-1)^{\frac{n}2-1}C_{\frac{n}2-1} \ ,\quad n\ \text{is even}\\
0\ ,\ \ \quad\quad\quad\quad\quad\quad n\ \text{is odd}
\end{cases}
  \ee
where $C_k=\frac{1}{k+1}\binom{2k}{k}$ denotes the Catalan number. See Appendix~\ref{app:gue} for more details including the definition of free cumulants.  

\begin{figure}[t]
\centering
% ---------- LEFT: NC vs crossing ----------
\begin{subfigure}[t]{0.48\textwidth}
\centering
\newcommand{\gap}{1.25em} 
\newcommand{\sep}{1.25em} 

\begin{tikzpicture}[baseline=-2pt]
  % ---- (12)(34) non-crossing (left) ----
  \begin{scope}[xshift=0em]
    \node[inner sep=0.5pt] (a1) at (0,0) {};
    \node[inner sep=0.5pt] (a2) at (\gap,0) {};
    \node[inner sep=0.5pt] (a3) at (2*\gap,0) {};
    \node[inner sep=0.5pt] (a4) at (3*\gap,0) {};
    \draw (a1) --++(0,-1.2em) -| (a2);
    \draw (a3) --++(0,-1.2em) -| (a4);
    % digits (no extra text)
    \node[fill=white, inner sep=0.6pt] at (a1) {$1$};
    \node[fill=white, inner sep=0.6pt] at (a2) {$2$};
    \node[fill=white, inner sep=0.6pt] at (a3) {$3$};
    \node[fill=white, inner sep=0.6pt] at (a4) {$4$};
  \end{scope}

\hspace{1cm}
  % ---- (13)(24) crossing (right) ----
  \begin{scope}[xshift={3*\gap + \sep}]
    \node[inner sep=0.5pt] (b1) at (0,0) {};
    \node[inner sep=0.5pt] (b2) at (\gap,0) {};
    \node[inner sep=0.5pt] (b3) at (2*\gap,0) {};
    \node[inner sep=0.5pt] (b4) at (3*\gap,0) {};
    \draw (b1) --++(0,-1.2em) -| (b3);
    \draw (b2) --++(0,-2.0em) -| (b4);
    % digits (no extra text)
    \node[fill=white, inner sep=0.6pt] at (b1) {$1$};
    \node[fill=white, inner sep=0.6pt] at (b2) {$2$};
    \node[fill=white, inner sep=0.6pt] at (b3) {$3$};
    \node[fill=white, inner sep=0.6pt] at (b4) {$4$};
  \end{scope}

  % tighten bounding box
  \useasboundingbox (-0.25em,-2.4em) rectangle (3*\gap + \sep + 3*\gap + 0.5em, 1.0em);
\end{tikzpicture}

\subcaption{Non-crossing vs. crossing partitions.}
\end{subfigure}
\hfill
% ---------- RIGHT: Kreweras complement diagram ----------
\begin{subfigure}[t]{0.48\textwidth}
\centering
\begin{tikzpicture}[baseline=-2pt]
  \node[inner sep=1pt] (n1) at (0em,0) {};
  \node[inner sep=1pt] (n2) at (2em,0) {};
  \node[inner sep=1pt] (n3) at (4em,0) {};
  \node[inner sep=1pt] (n4) at (6em,0) {};
  \node[inner sep=1pt] (n5) at (8em,0) {};
  \node[inner sep=1pt] (n6) at (10em,0) {};
  \node[inner sep=1pt] (n7) at (12em,0) {};
  \node[inner sep=1pt] (n8) at (14em,0) {};
  \node[inner sep=1pt] (m1) at (1em,0) {};
  \node[inner sep=1pt] (m2) at (3em,0) {};
  \node[inner sep=1pt] (m3) at (5em,0) {};
  \node[inner sep=1pt] (m4) at (7em,0) {};
  \node[inner sep=1pt] (m5) at (9em,0) {};
  \node[inner sep=1pt] (m6) at (11em,0) {};
  \node[inner sep=1pt] (m7) at (13em,0) {};
  \node[inner sep=1pt] (m8) at (15em,0) {};
  \draw[densely dashed] (n1) --++(0,-1.5em) -| (n2);
  \draw (m1) --++(0,-1em);
  \draw (m4) --++(0,-1em);
  \draw (m7) --++(0,-1em);
  \draw[densely dashed] (n4) --++(0,-1.5em) -| (n5);      
  \draw[densely dashed] (n3) --++(0,-2.5em) -| (n6);
  \draw (m3) --++(0,-2em) -| (m5);      
  \draw[densely dashed] (n7) --++(0,-1.5em) -| (n8);
  \draw (m2) --++(0,-3em) -| (m6);
  \draw ($(m2)+(0,-3em)$)  -| (m8);
  \node [fill=white, right, inner sep=1pt] at ($(n1)+(-0.375em,0em)$) {$1\hspace{0.5em}\overline{1}\hspace{0.5em}2\hspace{0.5em}\overline{2}\hspace{0.5em}3\hspace{0.5em}\overline{3}\hspace{0.475em}4\hspace{0.5em}\overline{4}\hspace{0.5em}5\hspace{0.5em}\overline{5}\hspace{0.5em}6\hspace{0.5em}\overline{6}\hspace{0.5em}7\hspace{0.5em}\overline{7}\hspace{0.5em}8\hspace{0.5em}\overline{8}$};  
  \useasboundingbox (-0.25em,-3.125em) rectangle (15.25em, 0.75em);
\end{tikzpicture}
\subcaption{Kreweras complement.}
\end{subfigure}
\caption{\textbf{NC partition and Kreweras complements.} A noncrossing partition of a finite set is a partition in which no two blocks cross each other. The left panel illustrates non-crossing vs crossing partitions on a set of four elements: $(12)(34)$ is NC and $(13)(24)$ is not.  The Kreweras complement of $\pi$ in ${\rm NC}(n)$ is defined as the maximal NC partition of a set of $n$ elements interlaced with the original set of $n$ elements on which $\pi$ is defined, such that $\pi\cup\pi^c$ is an NC partition of $2n$ elements. Here, maximality is defined with respect to a partial order: $\pi<\sigma$ if each block of $\pi$ is completely contained in one of the blocks of $\sigma$.  The right panel illustrates the Kreweras complement for $\pi=(12)(36)(45)(78)$ (taken from~\cite{speicher2019lecture}), $\pi^c=(1)(268)(35)(4)(7)$. }
\label{fig:NC}
\end{figure}
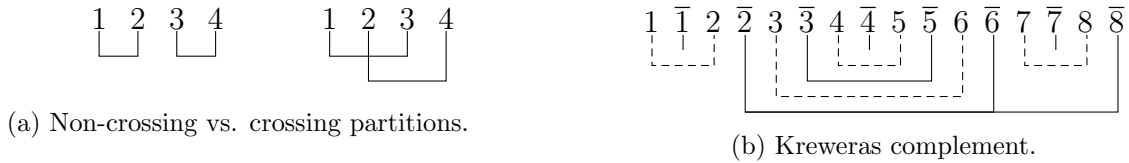
Putting the expressions for $m_n$ and $\kappa_n$ into \eqref{otoc_gen}, we obtain the following formula:  
\begin{equation}\label{eq:otocGUE}
    \overline{\OTOC_n(t)} =  \sum_{\substack{\pi\in {\rm NC}(2n),\\ |V|\,\textrm{even},\, \forall V\in\pi}} \prod_{V\in\pi}(-1)^{\frac{|V|}{2}-1}C_{\frac{|V|}{2}-1}\prod_{W\in \pi^c}\eta(|W|t)
\end{equation}
where the sum is over those NC partitions of $2n$ whose blocks all have even lengths, and we shall simply call them ``even NC partitions.'' For instance, we have
\begin{align}
&\overline{\OTOC_1(t)}=\eta(t)^2\\
&\overline{\OTOC_2(t)}=2\eta(t)^2\eta(2t)-\eta(t)^4 \, .
\end{align}
We list the resulting formulas up to $n=5$ in Table~\ref{gue_table} in Appendix~\ref{app:gue}, and compare them to the numerical data from a single sample for $d=500$ in Fig.~\ref{fig:otoc_gue_decay}.  

The expressions become increasingly complicated for larger $n$, but one general pattern is worth noting: the smallest number of total factors of the function $\eta$ that appear in any term in $\otoc_n$ is always $\geq n+1$.  This is a consequence of the combinatorial lemma \ref{lemma_noncross_1}. We will see a similar combinatorial pattern in the random circuit calculation of the next section.  Moreover, all terms with $n+1$ factors of $\eta$ come from cases where $\pi$ consists  only of 2-cycles (or is a ``non-crossing perfect matching''), and as a result the sum of the coefficients of these terms is the Catalan number $C_{n}$. Since $|\eta(t)|\leq 1$ and its envelope decays with time,
\begin{equation}
    \eta(t) = t^{-3/2}\pi^{-\frac12}\left(\cos(2t-3\pi/4)+O(t^{-1})\right), \quad t\gtrsim 1 \label{eta_approx}
\end{equation}
 the set of terms with only $n+1$ powers of $\eta$ should dominate in the full expression for $\overline{\otoc_n}$ at late enough times (which we also find in practice to be $t \gtrsim 1$), giving an upper bound of the form 
\be 
\otoc_n \leq C_{n} t^{-3 (n+1)/2} \pi^{-(n+1)/2} \label{511}
\ee
up to the cosine factors. In principle, \eqref{511} could be used to argue for convergence of the OTOC sum formula at late enough times (as the Catlan numbers grow exponentially rather than factorially).  However, at the times when the approximations  in \eqref{eta_approx} and \eqref{511} hold, we are already in the regime where the Coulomb gas formula becomes negative and its precise $O(1/d^z)$ value does not have a good physical interpretation. To analytically explain the numerically observed convergence of the OTOC sum formula in Fig.~\ref{fig:ifree_gue} at physically relevant time scales, we would  therefore need to take into account the full set of terms like those  in Table~\ref{gue_table}. For now, we will treat the numerical observation as sufficient evidence of convergence.

\subsection{Random unitary circuits}
\label{sec:random_circuit}

Let us now consider random unitary circuits with the brickwork pattern shown in Fig.~\ref{fig:randomcircuit_diagram}, which provide a useful  toy model for quantum chaotic systems with local interactions. Each site has a finite local Hilbert space dimension $q$. Each unitary in the circuit such as $U_{x', x'+1}^{(t')}$ is independently drawn from the ensemble of Haar-random matrices $\mathbf{U}(q^2)$.  We take two operators $A$ and $B$ in the circuit with initial separation $x$, and consider $\otoc_n$ and $I_{\rm free}$ as functions of $t$ and $v=x/t$.

Due to the sharp light-cone of this circuit model, for $v>1$
all $\otoc_n(t,v)$ are equal to 1 for even $n$ and zero for odd $n$. The decay of $\otoc_2(t,v)$ with increase in $t$ for fixed $v$, or with decrease in $v$ for fixed $t$,  captures the growth of operator size in the space of degrees of freedom. Previously, $\otoc_2$ in this model has been studied in detail in~\cite{operator_spreading_adam, operator_spreading_tibor}, where in particular it was found that the initial decay is of the form 
\be 
\overline{\otoc_2}(t, v) \approx   e^{- \frac{(v -v_B)^2}{2D} t} \, ,  
\ee
indicating a ballistic spreading of the operator front in space at some characteristic velocity $v_B$, as well as a broadening of the front with some diffusion constant $D$. 

\begin{figure}[!h]
    \centering
    \includegraphics[width=0.6\linewidth]{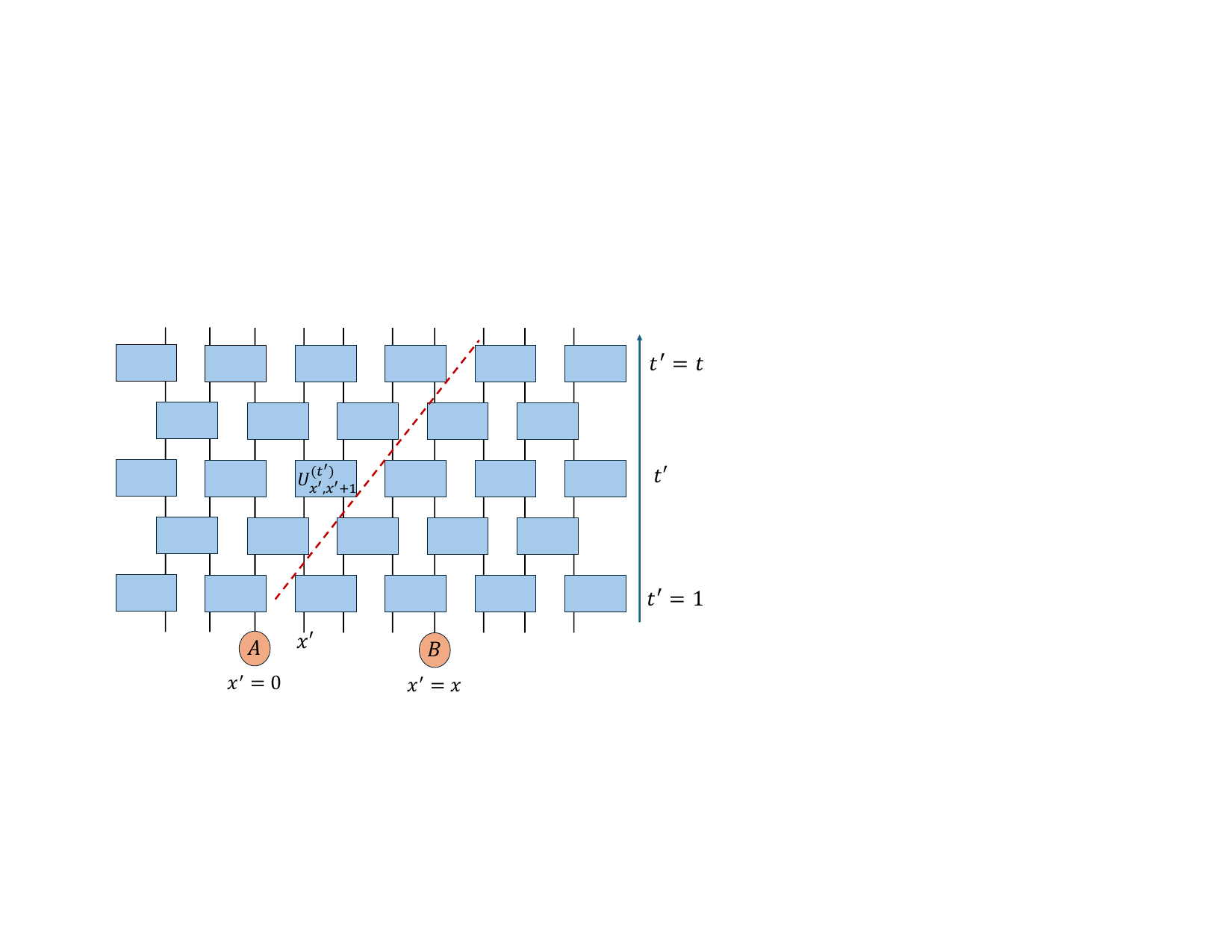}
    \caption{Brickwork random unitary circuit model.}
    \label{fig:randomcircuit_diagram}
\end{figure}

We would now like to understand our more fine-grained notion of spreading in operator space in this model, quantified by $I_{\rm free}$ and  $\otoc_n$.  Like in the ensemble average of the GUE model, the average of $\overline{\otoc_n}$
 is more tractable than that of $I_{\rm free}$. $\overline{\otoc_n}$ in this model can be evaluated by mapping to a partition function calculation on a triangular lattice with a Hilbert space of dimension $n!$ at each site, similar to previous calculations for R\'enyi entropies in~\cite{zhou_nahum}. We  discuss this calculation  in a mostly self-contained way in   Appendix~\ref{app:random_circuits}.  
 
As explained in Appendix~\ref{app:random_circuits}, it is relatively  straightforward to see that
\be 
\overline{\OTOC_n}(t,v) = 0 \,, \quad n \text{ odd} \,  
\ee
for all $v$ and $t$. 
For even $n$,
 the $n>2$ case is significantly more technically challenging than  the $n=2$ case that has previously been studied, and requires the use of a number of assumptions. In particular, we assume that the dynamics of domain walls between permutations in the partition function for $\overline{\otoc_n}$ can be captured by the ``membrane picture'' for random unitary circuits, developed in the context of R\'enyi entropies in~\cite{zhou_nahum}. A major complication in the calculation of 
 $\overline{\otoc_n}$ compared to that of the R\'enyi entropies is that there are now many different types of domain walls, associated with all possible permutations in $\sS_n$, than can potentially contribute to the partition function at leading order. In Appendix~\ref{app:random_circuits}, we explain how to combine inputs from the membrane picture with certain  combinatorial facts in order to rule out contributions from a subset of permutations. Based on this structure, we conjecture a result of the following form: 
\be 
\overline{\otoc_n}(t, v) = \begin{cases}
1 & v > v_B\\
\sum_{k=n/2}^{n-1} c^{(n)}_k q^{-k(\sE_{k+1}(v)-v)t}  & v < v_B \label{otoc_formula_randomcircuit_mt}
\end{cases}, \quad n \text{ even }\,. 
\ee
In both lines, we have ignored exponentially suppressed corrections in $t$ (except in the case $v>1$ in the first line, where the result is exact for kinematic reasons). 
 Here $\sE_n(v)$ is a function called the \emph{membrane tension function}, whose relevant properties are summarized in Appendix~\ref{sec:membrane_assumptions}, and $c^{(n)}_k$ are certain coefficients with either sign such that $\sum_{k=n/2}^{n-1} c^{(n)}_k=1$. The argument for~\eqref{otoc_formula_randomcircuit_mt} is presented in detail in Appendix~\ref{app:random_circuits}. \eqref{otoc_formula_randomcircuit_mt} is a conjecture due to several assumptions stated in the appendix, which we expect are correct but are not able to prove rigorously. The case $n=4$ was recently studied in \cite{google_paper} with a combination of numerical and analytical techniques. While their final result is obtained numerically and is not phrased in terms of the membrane tension like in \eqref{otoc_formula_randomcircuit_mt}, the combinatorial structure that they find is consistent with the $n=4$ case of our formula.

As explained in more detail in the appendix around \eqref{otoc_final_result}-\eqref{final_rc}, for $v\lesssim v_B$, we can approximate 
\be 
\overline{\otoc_n}(t,v) \approx
\sum_{k=n/2}^{n-1} c^{(n)}_k e^{-k \frac{(v -v_B)^2}{2D} t} \,, \quad n \text{ even} \label{518}
\ee
In particular, suppose $t$ is large enough that we can approximate the sum with the first term, 
\be 
\overline{\otoc_n}(t,v) \approx c^{(n)}_{n/2} \,  e^{-n \frac{(v -v_B)^2}{2D} t}\ , \quad n \text{ even} . \label{very_near_front}
\ee
Then if $c^{(n)}_{n/2}$ is independent of $n$, from summing the series \eqref{425} we get the following expression for the averaged FMI: 
\be 
\sI(t,v) \approx c\sum_{n=1}^{\infty}\frac{1}{n}\le(e^{-\frac{(v-v_B)^2}{D}t}\ri)^{2n}  = -c\log (1 - e^{-2\frac{(v-v_B)^2}{D}t})\, , \quad v< v_B \,  . \label{toy_model}
\ee
where $c = (c_{n/2}^{(n)})^2$. 
This expression is unlikely to be quantitatively accurate
as there is no particular reason to expect that $c_{n/2}^{(n)}$ is independent of $n$; however, it can be seen as a toy model that captures the qualitative behavior of $I_{\rm free}$ under local chaotic dynamics in the regime $v \lesssim v_B$ close to the operator front at late times. Note that the crucial ingredient that makes the convergence of $I_{\rm free}$ plausible for $t>x/v_B$ is that the smallest $k$ appearing in the exponent in \eqref{otoc_formula_randomcircuit_mt} grows with $n$. This growth of the smallest $k$ with $n$ is a consequence of the arguments based on the membrane picture and the combinatorial structure discussed in Appendix~\ref{app:random_circuits}. Note also that the structure of the sum~\eqref{otoc_formula_randomcircuit_mt} is similar to \eqref{eq:otocGUE} in the GUE case, where the smallest number of powers of $\eta$ appearing in \eqref{eq:otocGUE} grows with $n$. Away from the front of the operator, the mechanism for the convergence of the OTOC sum formula is likely more complicated, with all terms in \eqref{otoc_formula_randomcircuit_mt} playing an important role.  

Considering the implications of the arguments of Appendix~\ref{sec:smoothness} for the convergence of $\sI$ in this case reveals a further subtlety. Due to the smoothness of the ensemble of random unitary circuits according to the criterion discussed in Appendix~\ref{sec:smoothness}, we conclude from this general argument that $\bar \rho(x)$ has no delta functions for any $v<1$. This implies that $\sI$ must be finite for any $v<1$, and hence the OTOC sum must converge in this regime. For $v<v_B$, this is consistent with~\eqref{otoc_formula_randomcircuit_mt} as discussed above. For $v_B<v<1$, this statement appears to be in contradiction with \eqref{otoc_formula_randomcircuit_mt} if all $\otoc_n(t,v)$ can be approximated as 1. Note that the first line of \eqref{otoc_formula_randomcircuit_mt} has corrections of the form $O(e^{-t})$, with $n$ dependent factors in both the exponent and the prefactor for $v_B<v<1$. If the $n$-dependent prefactors in these corrections grow rapidly for large $n$, they may lead to convergence even for late times. Further  scrutiny of this regime is beyond the scope of our current analysis, but presents an interesting puzzle for future work.

%This is likely because the decaying tail of the OTOC series is beyond the scope of our analysis. It is plausible that when $n$ is exponentially large in system size, various combinatorial factors become too significant to be ignored and consequently $\otoc_n$ is no longer approximated by~\eqref{otoc_formula_randomcircuit_mt}. 

\subsection{Chaotic spin chain} \label{sec:chaotic_spin}

Next, let us numerically probe the behaviour of $\otoc_n$ and $I_{\rm free}$ in the chaotic mixed field Ising model, 
\be 
H = \sum_{i=1}^L( Z_i Z_{i+1} + g X_i + h Z_i\,) . \label{mixed_field} 
\ee
with $g= -1.05$, $h=0.5$, and periodic boundary conditions. We take the system size $L=10$, $A = Z_{x=0}$, and $B = Z_{x=5}$. The result is shown in Fig.~\ref{fig:chaotic_spin}. Other choices of Pauli operators $A$ and $B$ lead to similar results. 

\begin{figure}[!h]
    \centering
    \begin{subfigure}{0.49\textwidth}
        \centering
    \includegraphics[height=6cm]{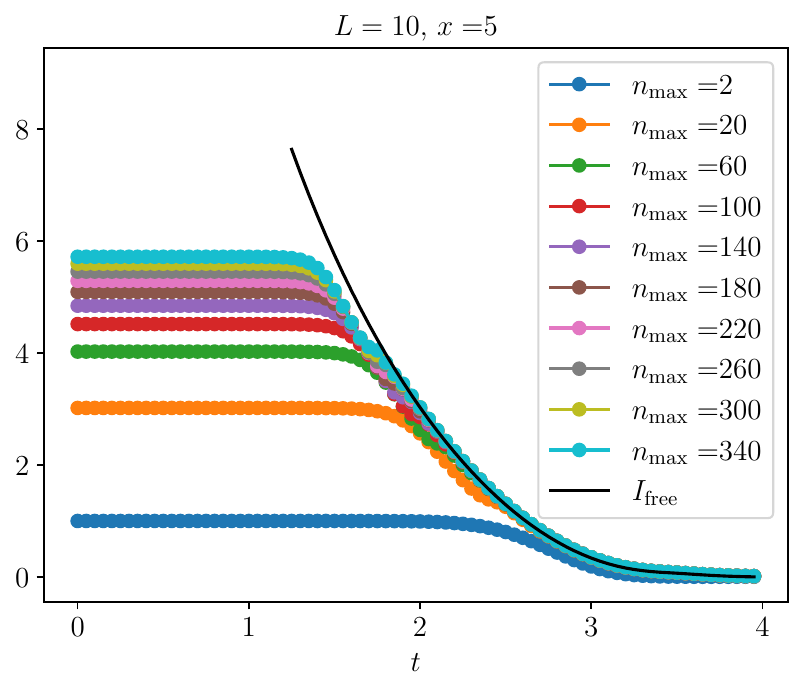}
    \end{subfigure}
    \begin{subfigure}{0.49\textwidth}
    \centering
    \includegraphics[height=6cm]{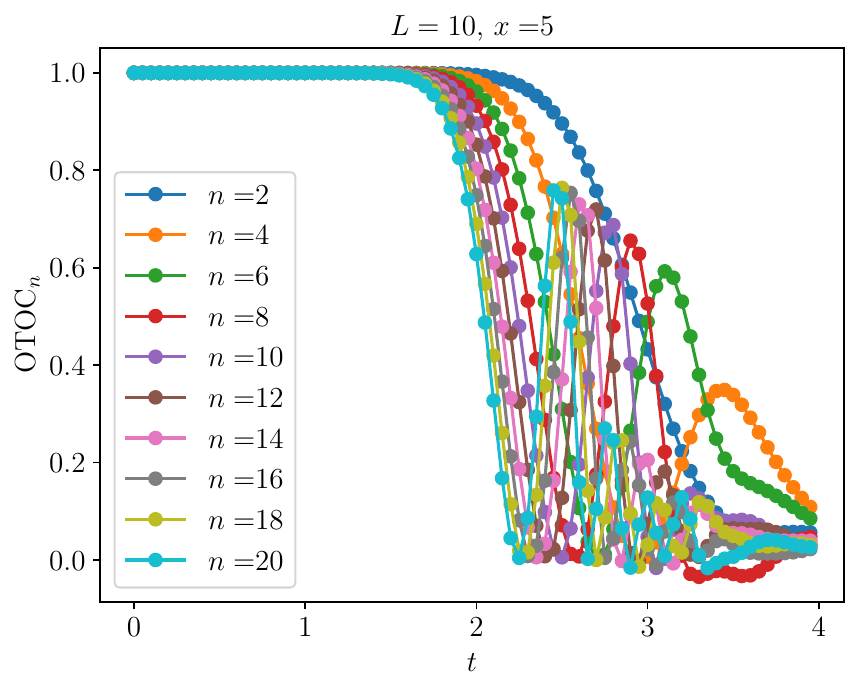}
    \end{subfigure}
        
    %\begin{subfigure}{0.49\textwidth}
    %\centering
    %\includegraphics[height=6cm]{otoc_n_dep_pbc_5.pdf}
    %\end{subfigure}
        
    \caption{On the left, we show $I_{\rm free}$ from the Coulomb gas formula (black curve) and the partial sums of \eqref{424} in the chaotic spin chain model. On the right, 
     we show the time-dependence of $\otoc_n$ for even $n$. Odd $n$ give a negligible contribution for all times, as expected from the random circuit calculation.}
    %We show $\Delta_{\rm min}$ as a function of time in the left bottom figure. The value for $t\lesssim 1.5$ is $\sim 10^{-16}$, leading to a divergent value of $I_{\rm free}$ at the numerical precision we are working with.} %The bottom right figure shows the $n$-dependence of $\otoc_n$ for a few times close to the operator front.} 
    \label{fig:chaotic_spin}
\end{figure}

We observe that:   
\begin{enumerate}
\item The value of $I_{\rm free}$ from the Coulomb gas formula, shown on the left in Fig.~\ref{fig:chaotic_spin}, starts to decay at a time scale $t^{\ast}$ close to $x/v_B$. Note that $v_B \approx 2$ for this model~\cite{mezei_stanford,jonay}, but its precise value in Hamiltonian systems depends on a choice of convention about how to define the operator front. Relatedly, due to the lack of a sharp lightcone, the precise time at which $I_{\rm free}(t)$ becomes finite is ill-defined, and depends on the numerical precision with which we evaluate the eigenvalues of $A(t)B$. For the case shown in the figure, we work at a precision of $10^{-16}$. $\Delta_{\rm min}$ becomes non-zero at this precision for $t > t^{\ast} \approx 1.25$, leading to a finite value of the FMI after such times. $I_{\rm free}(t)$ shows monotonic decay with time. 
\item The left panel of Fig.~\ref{fig:chaotic_spin} also shows the pattern of convergence of the OTOC partial sums for various $n_{\rm max}$ to $I_{\rm free}$, which is similar to that in the GUE model. Each of the OTOC partial sums shows a monotonic decay with time. 
\item The behavior of individual $\otoc_n$, shown on the right in Fig.~\ref{fig:chaotic_spin}, shows an initial decay  that is monotonically faster for higher $n$. In particular, the $n$-dependence for fixed $t$ shown in Fig.~\ref{fig:n_dependence} suggests that  if we first take the $t\to t^{\ast}$ limit and then $n_{\rm max}\to \infty$, the $\otoc_n$ decay monotonically with $n$.
This is the regime where a formula such as \eqref{very_near_front} from the random circuit analysis could apply in the chaotic spin chain for large enough $x$.

For any $n$, we see oscillations of $\otoc_n$ with time at intermediate times, which start earlier and have larger magnitude for higher $n$.  Similar oscillations could arise in the random circuit calculation from summing over all terms with different signs in \eqref{otoc_formula_randomcircuit_mt}. This also entails oscillations of $\otoc_n(t)$ for a fixed $t$ as a function of $n$, as we see in Fig.~\ref{fig:n_dependence}. In this intermediate-time regime, the convergence of the OTOC sum formula is explained by the decaying envelope of $\otoc_n$ instead of a strict monotonic decay with $n$.

\end{enumerate} 

\begin{figure}[!h]
\centering
\includegraphics[width=0.6\textwidth]{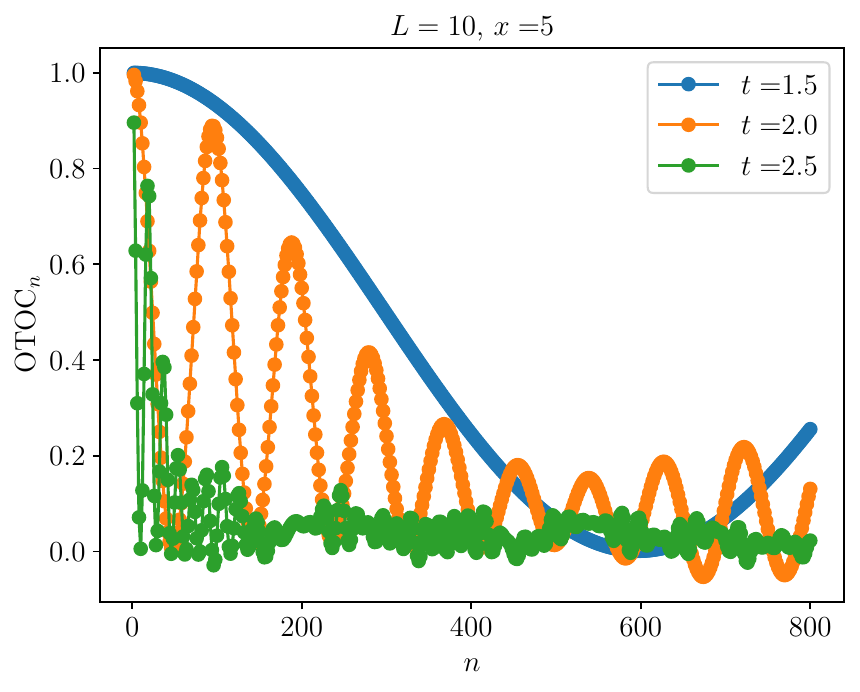}
\caption{$n$-dependence of $\otoc_n$ in the chaotic spin chain model at three different times. As we approach $t^{\ast}\approx1.25$, $\otoc_n$ decay monotonically with $n$ for an increasing range of $n$.}
\label{fig:n_dependence}
\end{figure}

\iffalse 
We therefore see that the initial decay of $\otoc_n$ and $I_{\rm free}$ (i.e., the regime where $B$ is close to the front of $A(t)$) appears to be qualitatively similar between chaotic spin chain systems and random unitary circuits. The differences between $\otoc_2$ in cases with and without conservation laws such as energy conservation become significant in cases where $B$ deep inside the lightcone of $A(t)$~\cite{vedika_conservation, tibor_conservation} and in the $L$-dependence of the late-time saturation value~\cite{brandao}. We expect similar effects of energy conservation on $\otoc_n$. In particular, by an extension of the argument of~\cite{brandao}, the late-time value of all $\otoc_n$ in cases with conservation laws decays polynomially rather than exponentially with $L$ \SV{(check)}.\JW{\href{https://chatgpt.com/share/68c35f1a-265c-800e-8133-e0b59556788a}{ChatGPT} suggested that the general result should be like $L^{-n}$. We can decide if we want to add an Appendix for it.} This implies that the saturation value of $I_{\rm free}$ also decays polynomially with $L$. \JW{ What's the late time value of the OTOCs? Do we observe a plateau in Fig.7?} \SV{With my existing code we can't reliably extract the plateau because the Krylov dimension is too small. I will run larger Krylov dimension over the weekend as it's more time-consuming.}\JW{Ok.}
\fi

\section{Free mutual information in non-chaotic systems} \label{sec:nonchaotic}

In this section, we discuss the behaviour of the Coulomb gas formula and the free mutual information in a few examples  which cannot be seen as generic chaotic models. In three of these cases -- Clifford unitaries, PFC unitaries, and free fermions-- we find degeneracies in the spectrum of $A(t)B$ even at late times, so that the FMI is infinite. In each of these models, the behavior of $\otoc_2$ reflects delocalization in physical space, even though the divergence of the FMI indicates a lack of spreading in operator space.  The precise mechanism underlying the degeneracy of the spectrum of $A(t)B$ is different in each case. In the interacting integrable Heisenberg model, for operators that do not commute with the symmetries of the system, we find behavior similar to the generic chaotic spin chains of Sec.~\ref{sec:chaotic_spin}.  

%Free mutual information is also useful in quantifying the lack of chaos in non-chaotic systems. We study two classes of non-chaotic systems: Integrable spin chains and computation-theoretic constructions of unitary designs. In these models, we find large free mutual information values that persist until late time, which can either plateau at some finite value or remain infinite. Free mutual information quantifies the lack of chaos by how far operators $A(t)$ and $B$ are from being fully scrambled.

\subsection{Unitary designs}
\label{sec:designs}

Recall from the discussion of Sec.~\ref{sec:haar} that under a single step of evolution by a Haar-random unitary, the free mutual information becomes zero in the limit of large Hilbert space dimension. It is then natural to ask whether ensembles known as unitary designs, which replicate certain features of Haar-random unitaries, also cause the free mutual information to vanish or become small after a single step. See for instance~\cite{Metger_2024} for the precise definition of unitary $n$-designs. 

We consider two examples of unitary designs below: 
\begin{enumerate}
\item {\it Random Clifford unitaries}, which form exact 3-designs~\cite{Webb_2016}. This implies in particular that the value of $\otoc_2$ agrees on average between random Cliffords and Haar-random unitaries.  Random Cliffords thus lead to delocalization of operators in physical space, and the related information-theoretic phenomenon of decoupling~\cite{dupuis2010decoupling, Hayden_2007, Hayden_2008,  Lashkari_2013,Brown_2015,Onorati_2017,swingle_tutorial}. This ensemble fails to form an exact or approximate 4-design~\cite{Webb_2016,zhu2016clifford}.   
\item {\it Random PFC ensembles}, which were recently introduced in~\cite{Metger_2024} and shown to form approximate unitary $n$-designs for any $n$ that does not grow with the Hilbert space dimension $d$. As one component of a PFC unitary is a random Clifford, this ensemble also forms an exact 3-design.  
\end{enumerate}

From our OTOC sum formula for the free mutual information, it is clear that if all $\otoc_n$ are equal to the Haar-random values for some unitary dynamics, then the free mutual information must be small. However, under both random Clifford unitaries and random PFC ensembles, for all $n$ that are multiples of 4, $\otoc_n(UAU^{\dagger}:B)=1$. For random Cliffords, this is true for any Pauli operators $A$ and $B$.~\footnote{Note that  we refer to any tensor product operator involving one of the four Pauli matrices $I$, $X$, $Y$, $Z$ at each site as a Pauli.} For the random PFC ensemble, it holds for the case where $A$ is any Pauli and $B$ consists only of Pauli $Z$ and $I$ at each site. This is true for each individual realization from both ensembles, and hence also for the ensemble average. For Cliffords, this behavior is consistent with the fact that they do not form exact  4-designs; for the PFC ensemble, it is consistent because the {\it approximate} $n$-design property, unlike the {\it exact} $n$-design property, does not imply that $\otoc_n$ are close to the Haar values (see~\cite{schuster2025random, pap_new}).

For both random Cliffords and PFC unitaries, the fact that $\otoc_n=1$ for all $n$ that are multiples of 4 comes from the fact that each of the eigenvalues of $A(t)B$ takes only one out of four possible values: +1, $-1$, $i$, or $-i$. This also immediately implies that the spectrum has a large amount of degeneracy, and the Coulomb gas formula diverges. Let us see how this feature of the spectrum comes about in both cases.  

\subsubsection{Cliffords}

The result in the Clifford case immediately follows from the fact that  $UAU^{\dagger}$ is a Pauli matrix if $A$ is a Pauli matrix. Then since $B$ is also a Pauli, there are three cases for the spectrum of $UAU^{\dagger}B$: (i) If $U A U^{\dagger}=B$, then $U A U^{\dagger}B=I$ and all eigenvalues of $U A U^{\dagger}B$ are 1. (ii) If $U A U^{\dagger}\neq B$ but commutes with $B$, then $U A U^{\dagger}B$ is $\pm P$ for some nontrivial Pauli $P$, so that $d/2$ eigenvalues are +1 and $d/2$ are $-1$. (iii) If $U A U^{\dagger}$ anticommutes with $B$, then $U A U^{\dagger}B$ is $\pm i P$ for some Pauli $P$. Half of the eigenvalues are $+i$, and half are $-i$.

%This result also highlights the distinction between scrambling characterized by the information spreading in physical space against the spreading in operator space. Quantum information scrambling was first used by Hayden-Preskill in understanding how an old black hole returns information in the Hawking radiation~\cite{Hayden_2007}. The black hole evolution needs to be sufficiently scrambling to delocalize information in the physical space over the whole system.  The technical core of their reasoning is the quantum decoupling theorem~\cite{dupuis2010decoupling}, which also underpins many of the optimal codes used in quantum information theory~\cite{Hayden_2008}. This motivates further studies of fast scrambling in terms of delocalizing quantum information~\cite{Sekino_2008,Lashkari_2013,Brown_2015,Onorati_2017}. This notion of scrambling can be closely related to the standard $\otoc_2$~\cite{Hosur_2016,Xu_2024}. However, this relation does not hold for higher OTOCs. We just saw that a random Clifford unitary, which is a 3-design, is sufficient for decoupling and rendering $\otoc_2$ small. Nevertheless, the higher-point OTOCs remain large, and thus Clifford unitaries fail badly in scrambling a Pauli observable in the operator space.

%footnote{See also~\cite{Harrow_2021} on the distinction between OTOC scrambling and entanglement scrambling.} 

Since random Clifford unitaries send Pauli matrices to other Paulis, the lack of chaotic behavior in this ensemble can be detected by a number of other measures, including operator entanglement~\cite{prosen_operator, jonay} and the stabilizer R\'enyi entropy~\cite{Leone_2022}. It is not too surprising, then, that the free mutual information shows non-generic behavior in this case. Next, let us turn to the more non-trivial example of the PFC ensemble~\cite{Metger_2024}, which does not send a Pauli matrix to another Pauli matrix. 

\subsubsection{PFC unitaries}

A PFC unitary augments a random Clifford (C) by further multiplying each computational state by a random binary phase (F) and applying a random permutation (P) to the computational basis. More precisely, a unitary in the PFC ensemble is given by the product 
\begin{equation}
     U=P \,  F\,  C \, , \quad \quad   C\in\mathrm{Clifford}, \quad F:\ket{x}\mapsto(-1)^{f(x)}\ket{x}, \quad  P:\ket{x}\mapsto\ket{\pi(x)}
\end{equation}
where $f(x)$ is a binary function and $\pi\in S_d$ is a permutation among computational basis states. The probability measure on this ensemble is induced from a uniformly random Clifford~$C$, a uniformly random binary function~$f(x)$, and a uniformly random permutation~$\pi$.

\iffalse 
Metger el al proved that $(\mathcal E_{\rm PFC},\nu_{\rm PFC})$ forms an approximate unitary $t$-design with an error $O(t/\sqrt{d})$~\cite{Metger_2024}:
\begin{equation}
    ||\mathbb{E}_{U\sim\nu_{\rm PFC}}[U^{\otimes t}\ \cdot \ {U^\dagger}^{\otimes t}] - \mathbb{E}_{U\sim\nu_{\rm Haar}}[U^{\otimes t}\ \cdot \ {U^\dagger}^{\otimes t}]||_\diamond\leq O(t/\sqrt{d})\ .
\end{equation}

We claim that the PFC ensemble, despite being a good $t$-design, fails to scramble some arrangements of Paulis observables. Now we analyze the OTOCs through the empirical spectrum density of $UAU^\dagger B$ with $U\sim\nu_{\rm PFC}$. We let $A$ to be any Pauli string and let $B$ consist of $Z$'s only. We show that, in these instances, their higher-point OTOCs never decay with the order index, and the free mutual information is divergent. 

\fi 

Consider some initial Pauli $A$, and its ``time evolution'' under a single step of the PFC unitary, 
\be 
A' \equiv PFCAC^\dagger F^\dagger P^\dagger.
\ee
We first argue that $A'$ always maps a computational basis state to another computational basis state (up to a phase), i.e., it never maps $\ket{x}$ to a superposition of computational basis states. This is because each factor in $A'=P\cdot F\cdot (CAC^\dagger)\cdot F^\dagger\cdot P^\dagger$ does so. This is obvious for $P$ and $F$ and their Hermitian conjugates. The middle factor $CAC^\dagger$ is some Pauli operator, and a Pauli acting on a computational basis yields a phase flip, a bit flip, or both, so that  again $CAC^\dagger\ket{x}$ is a computational basis state. 

Given this fact, let us denote the permutation action of $A'$ on the computational basis states by $\tau$. Since $A'^2=I$, $A'$ decomposes the computational basis into disjoint orbits that are either 1-cycles (fixed points) $\tau(x)=x$, or 2-cycles of form $\{x,\tau(x)\}$ with $\tau(\tau(x))=x$. On the subspace spanned by the states that belong to the 1-cycles, $A'$ acts as an identity. For a 2-cycle, we have
\begin{equation}
    A'\ket{x}=a_x\ket{\tau(x)},\quad A'\ket{\tau(x)}=a_{\tau(x)}\ket{x},\quad a_xa_{\tau(x)}=1\ .
\end{equation}
Since $B$ only consists of $Z$'s, we have 
\be
B\ket{x}=b_x\ket{x},\,b_x\in\{\pm1\} \, .
\ee
Hence, the restriction of $A'B$ to the span of an arbitrary 2-cycle $\{\ket{x},\ket{\tau(x)}\}$ has the matrix form
\begin{equation}
A'B\Big|_{\rm{span}\{\ket{x},\ket{\tau(x)}\}}=\begin{pmatrix}
    0  &a_xb_{\tau(x)}\\
    a_{\tau(x)}b_x &0
\end{pmatrix}\ .
\end{equation}
Its eigenvalues satisfy 
\be
\lambda_1+\lambda_2=0,\, \quad \lambda_1\lambda_2=-a_{\tau(x)}b_xa_xb_{\tau(x)}=-b_xb_{\tau(x)}\in\{\pm 1\} \, .
\ee
Hence, the only possible eigenvalues of each such $2\times 2$ block, and hence of  of $A'B$, are $\{\pm 1,\pm i\}$. %It immediately follows that the eigenvalues of $A'$ on the full space consisting of various subspaces of 2-cycles could only be $\{\pm 1,\pm i\}$.

We therefore see that even with the addition of the phase flips and permutations, the PFC ensemble does not lead to spreading in phase space and decay of higher-point OTOCs. 
It is nevertheless possible that other variants of the PFC ensemble, such as CPFC~\cite{ma2024construct}, PFCFP, or stacking up multiple layers of PFCs~\cite{chen2024incompressibility} could show behaviors more similar to the generic chaotic systems of Sec.~\ref{sec:physical_systems}.

%\subsection{Integrable spin chains}\label{sec:integrable}
%An integrable system has sufficiently many conserved charges, such that its motion is confined to a submanifold of much smaller dimensionality than that of its phase space. In terms of the operator space spreading quantified by the volume of the joint-moment subspace~\eqref{all_moments}, $\sS_{A(t)|B,\delta,N}$, we expect it to be much smaller than a chaotic counterpart. Hence, the free mutual information should be large.

\subsection{Free fermion integrable systems}\label{sec:tfim}

As a next example, let us consider the transverse-field Ising model (TFIM) on $L$ spins, which corresponds to a special case of \eqref{mixed_field} with $h=0$, i.e., 
\begin{equation}
    H= \sum_{i=1}^L  (Z_iZ_{i+1} +  g X_i)\ . \label{tfim}
\end{equation}
Through a Jordan-Wigner transformation, the Hamiltonian describes $2L$ free Majorana fermions (cf. for instance~\cite{sachdev1999quantum}). 

We take $A_{0}$ and $B_{x}$ to be various choices of initial Pauli matrices in this model. The evolution of $e^{-iHt}Ae^{iHt}$ has some non-generic features compared to chaotic systems, but is not as simple as the Clifford case of the previous section. In particular, both the operator entanglement~\cite{prosen_operator} and the Stabilizer R\'enyi entropy~\cite{leone_tfim} grow under time-evolution in this model, although the growth is slower or the late-time value is smaller than in chaotic systems. Similarly, $\otoc_2$ initially grows in this model, but later oscillates and decays (see for instance~\cite{swingle_xu}). 

The FMI shows a dramatically different behavior in this model compared to the chaotic spin chain for all times.  We can use the mapping of this model to free fermions to see that the real parts of the eigenvalues of $e^{-iHt} A e^{iHt}B$ always take exactly two distinct values, $\lambda_1$ and $\lambda_2$. We explain the details in Appendix~\ref{app:tfim}. We observe numerically that $\lambda_1$ and $\lambda_2$ both increasingly move inwards on the interval $[-1, 1]$ from their initial values of $\pm 1$. Hence, while the precise form of the spectrum of $A(t)B$ is somewhat different from the Clifford case, where the the real parts of the  eigenvalues are simply $\pm 1$ or 0, the consequence for $I_{\rm free}(A(t):B)$ is the same: it remains infinite for all times.  

We show the behavior of $\otoc_n$ and the OTOC partial sums of \eqref{424} for this model in Fig.~\ref{fig:fermion}. While the oscillations of the $\otoc_n$ are somewhat greater in magnitude than those of the chaotic case of Fig.~\ref{fig:chaotic_spin}, it is not entirely obvious from observing the individual $\otoc_n$ in each model whether there is a sharp qualitative difference in the dynamics. On the other hand, comparing the non-monotonic behaviour of the OTOC partial sums in Fig.~\ref{fig:fermion}  as a function of time to the monotonic behavior in Fig.~\ref{fig:chaotic_spin} provides a much sharper diagnostic of the difference between the two models.

%To see this divergence from the OTOC sum formula, we show $\otoc_n$ \SV{and their partial sums} for this model in Fig.~\ref{fig:fermion}. The large oscillations of $\otoc_2$ in this case were previously noted in \cite{swingle_tutorial}  \SV{[recall other refs]}; the oscillations of $\otoc_n$ for higher $n$ are even more dramatic, leading to a divergence of the OTOC sum formula, consistent with the divergence of the Coulomb gas formula. 

\begin{figure}
\centering\includegraphics[height=6cm]{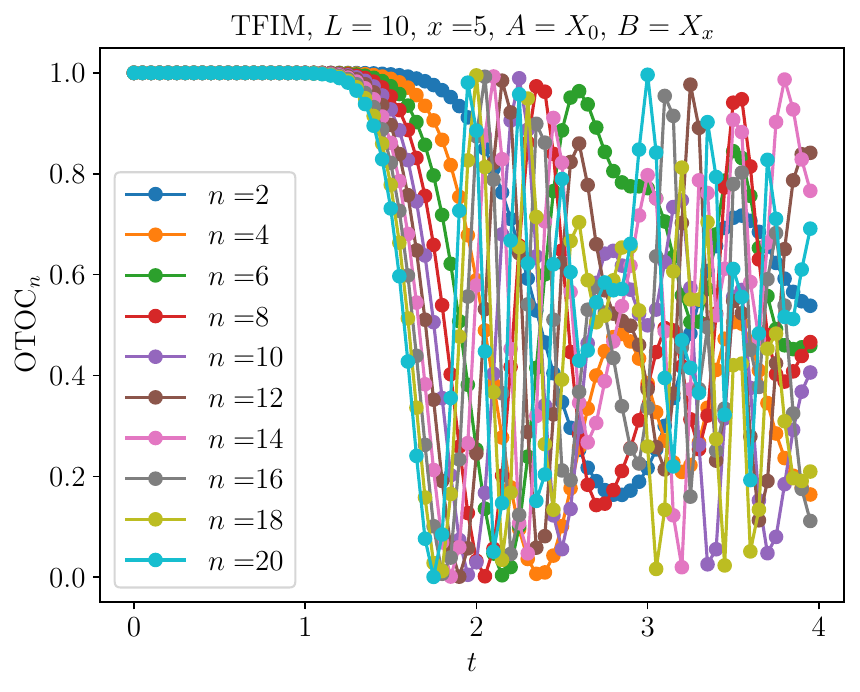}~~~~~~~\includegraphics[height=6cm]{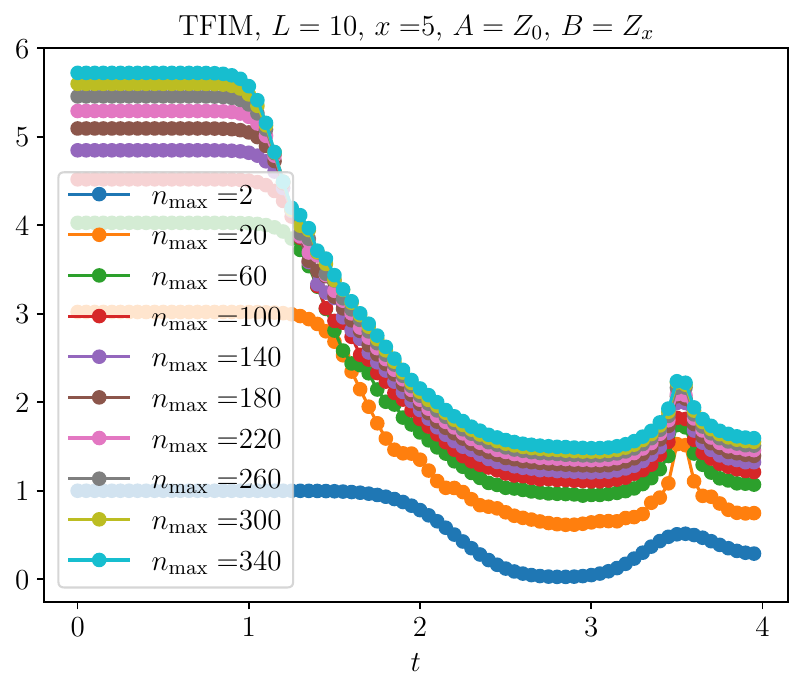}
\caption{We show $\otoc_n$ for the TFIM on the left, and the partial sums of \eqref{424} from this data on the right.} 
    \label{fig:fermion}
\end{figure}

\subsection{Interacting integrable systems}\label{sec:heisenberg}

\begin{figure}[!h]
    \centering
    \begin{subfigure}{7cm}
        \centering
    \includegraphics[width=\linewidth]{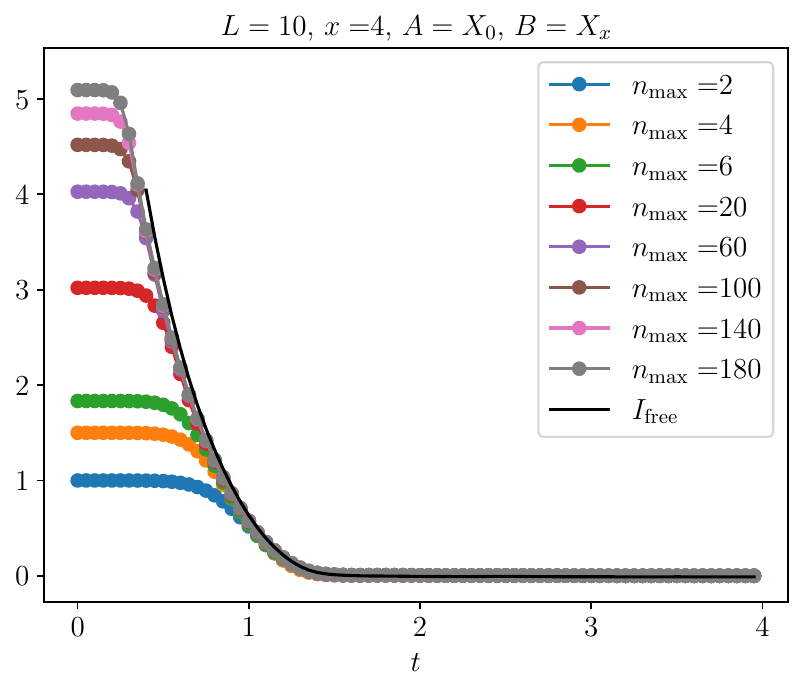}
    \end{subfigure}
      \begin{subfigure}{7cm}
    \centering
    \includegraphics[width=\linewidth]{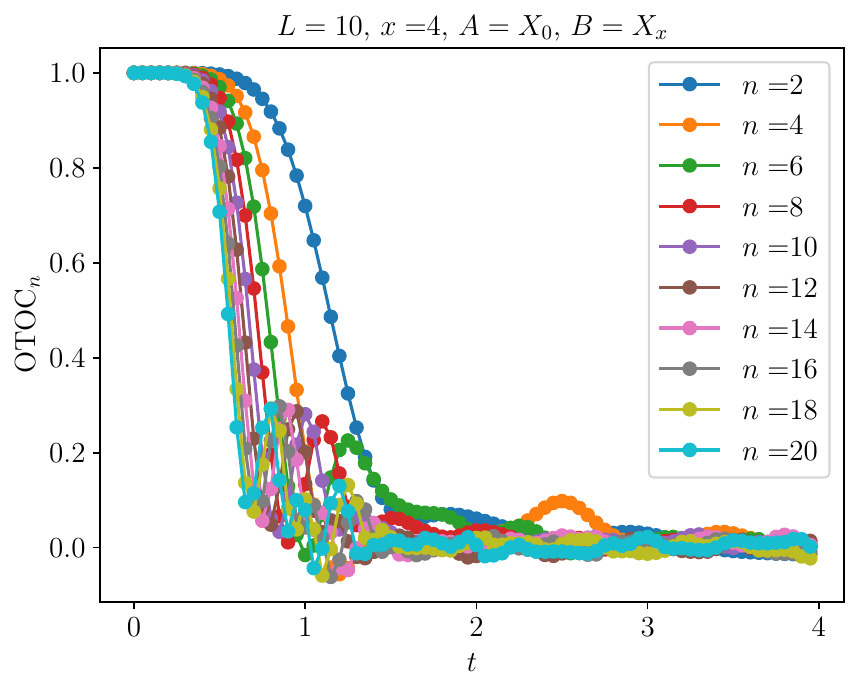}
    \end{subfigure}
    \begin{subfigure}{7cm}
        \centering
    \includegraphics[width=\linewidth]{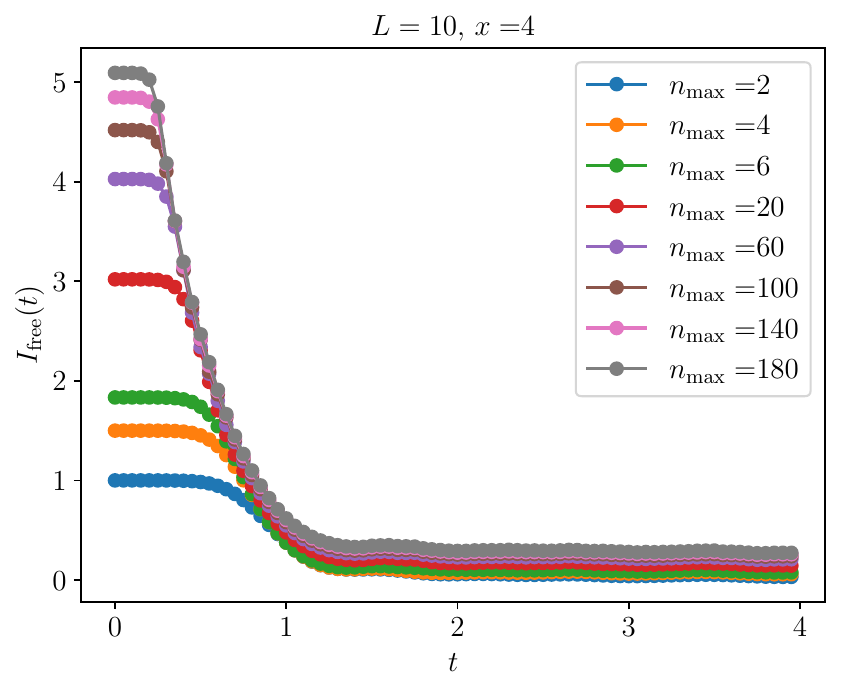}
    \end{subfigure}
    \begin{subfigure}{7cm}
    \centering
    \includegraphics[width=\linewidth]{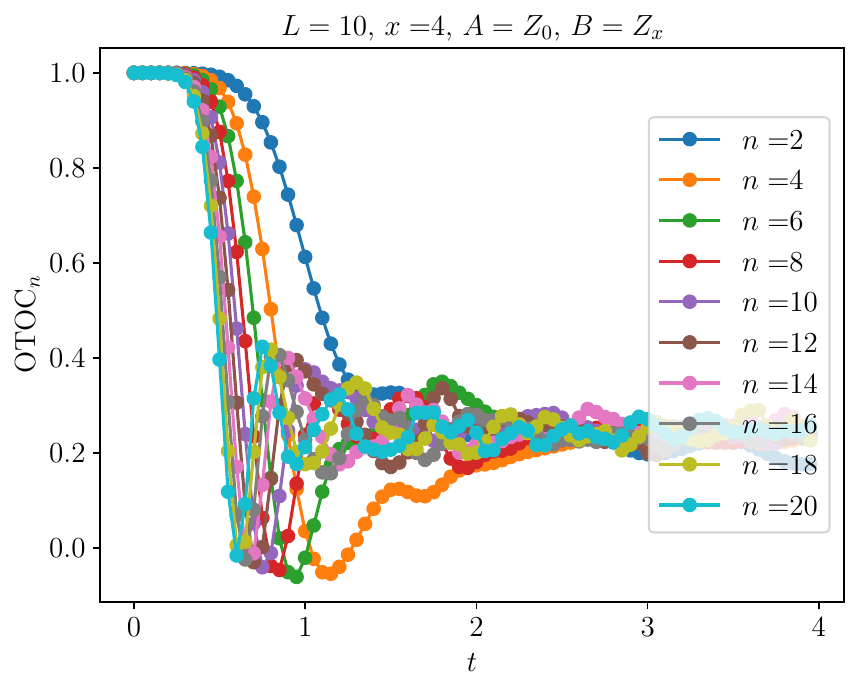}
    \end{subfigure} 
    \caption{Behaviour of $\otoc_n$, OTOC partial sums, free mutual information, and $\Delta_{\rm min}$ in the Heisenberg XXZ model. The top panel shows the case $A=X_0, B=X_x$, and the right column shows the case $A=Z_0, B=Z_x$.}
    \label{fig:Heisenberg}
\end{figure}

As a final example, let us consider an interacting integrable system: the Heisenberg XXZ model, which has the Hamiltonian 
\begin{equation}
    H=\sum_{i=1}^{L}\left(X_iX_{i+1} +Y_iY_{i+1}+ \Delta Z_iZ_{i+1}\right)
\end{equation}
We set $\Delta=0.5$, and consider $L=10$ and periodic boundary conditions. 
Note that in addition to being integrable, the model has a $U(1)$ symmetry: the total $Z$ spin $Z_{\rm tot}\equiv\sum_{i=1}^L Z_i$ is conserved. 

In Fig.~\ref{fig:Heisenberg}, we show the behaviour of $\otoc_n$, $I_{\rm free}$, and $\Delta_{\rm min}$ in this model for two cases:  $A=X_0, B = X_x$, which do not commute with the conserved charge, and $A=Z_0, B = Z_x$, which do commute with the conserved charge.
In the former case, we see behavior similar to the chaotic case in Fig.~\ref{fig:chaotic_spin}. In the latter case, we see non-generic behavior, where $I_{\rm free}$ diverges even at late times due to degeneracies in the spectrum, and the partial sums do not converge. This  behavior should be understood as a consequence of the symmetry, rather than of the integrability of the model: we checked that on adding an integrability breaking next-nearest neighbor term $\sum_{i=1}^L Z_i Z_{i+2}$ which still commutes with $Z_{\rm tot}$, we find qualitatively the same behavior in all cases as in Fig.~\ref{fig:Heisenberg}. Hence, the integrable Heisenberg model does not appear to be distinguishable from chaotic models by $I_{\rm free}$. The  indistinguishability of this model from chaotic models using $\otoc_2$ was previously discussed in~\cite{interacting_integrable}. This example does show the need to formulate a different version of the free mutual information in the presence of symmetries, which we comment on further in the discussion.

\section{Discussion and outlook}
\label{sec:conclusions}

In this paper, we identified a new universal property of the Heisenberg evolution of operators in a chaotic quantum many-body system. We found that a certain coarse-grained volume of the time-evolved operator generically spreads out to fill the abstract space of all operators, which is doubly exponential in the number of degrees of freedom.  This notion of ergodicity in operator space goes beyond the more familiar notion of operator growth in the space of degrees of freedom. We introduced a quantitative measure of spreading in operator space called the free mutual information (FMI). We derived an explicit formula for the FMI for a broad class of operators, and showed a precise and general relation between the FMI and a sum over all higher-point out-of-time-ordered correlators ($\OTOC_n$). This relation provides an understanding of the precise role played by higher-point OTOCs in the approach to maximal scrambling and asymptotic freeness, which had previously been lacking. 

We then studied the behaviour of the FMI and the higher-point OTOCs in a variety of representative examples of  chaotic as well as integrable systems. In all examples of chaotic systems, we found that  while higher-point OTOCs  show large oscillations with time, the FMI decays monotonically, showing  that it captures a more physically meaningful measure of chaos than individual $\otoc_n$. Although the FMI starts to decay at a time scale close to that associated with decay of $\otoc_2$ in {\it chaotic} systems, we confirmed by looking at a variety of integrable models that it is a much finer probe of chaos than $\otoc_2$. In each of these integrable models, we found cases where $\otoc_2$ decays, indicating spreading in physical space, while $I_{\rm free}$ remains infinite, indicating localization in operator space.

While our study of various examples of chaotic quantum many-body systems has allowed us to extract a number of general patterns, various  results in these models  should be better understood in future work. In particular, it is important to address the challenge of performing the OTOC sum in \eqref{425} or directly evaluating the Coulomb gas formula analytically in some concrete physical example, to obtain an analytic form of $I_{\rm free}$ as a function of time. While we proposed a toy model for this functional form in local chaotic models in \eqref{toy_model}, the true result should be of the form
\be 
I_{\rm free}(t) = -c\log (1-f(t))  
\ee
where $f(t)$ is a more non-trivial function which is equal to 1 for $t<t^{\ast}$ and decays for $t>t^{\ast}$. We leave a study of its precise functional form to more detailed studies in future work. It is also important to better understand the general physical reasons why $\otoc_n$ show faster initial  decay for higher $n$, and why later oscillations in $\otoc_n(t)$ do not lead to corresponding oscillations in $I_{\rm free}(t)$. Further, it would be interesting to find an example of a system which one ordinarily thinks of as chaotic from the perspective of $\otoc_2$, which turns out not to be chaotic when probed through $I_{\rm free}$, or where $I_{\rm free}$ starts to decay at a later time scale than $\otoc_2$. 

We comment on a number of broader generalizations and future directions below: 

\vspace{-0.2cm}

\begin{itemize}
    \item {\bf FMI with more reference observables. } In this paper, we have discussed the simplest way to coarse-grain $A(t)$ with a single reference observable $B$. In principle, we could generalize the definition to a collection of multiple reference observables $\{B_i\}_{i=1}^k$, which would give a more refined FMI $I_{\rm free}(A(t):B_1\ldots B_k)$. It is clear that adding more reference observables imposes more constraints, and hence leads to a smaller volume fraction of the coarse-grained set of $\tilde A$ and a larger mutual information: $I_{\rm free}(A(t):B_1\ldots B_k)\ge I_{\rm free}(A(t):B_1)$. This is the monotonicity of free mutual information, which can also be seen as the strong subadditivity of free entropy. It would be  interesting to study when this  inequality is saturated in physical contexts, and to understand the relation of $I_{\rm free}(A(t):B_1\ldots B_k)$ to more general higher-point OTOCs. 
    \item {\bf FMI at finite temperature.} In this paper, we have always considered the case where the reference state $\rho$ in \eqref{all_moments} is the maximally mixed state. In continuum systems such as quantum field theories, infinite temperature correlation functions are not well-defined, and it is necessary to consider a finite temperature state, i.e. take $\rho$ to be $e^{-\beta H}$ for some $\beta>0$. This choice of $\rho$ is also more natural in the presence of energy constraints; indeed, the most interesting behaviour of the OTOC in models like SYK is seen at low temperature~\cite{kitaev2014fundamental, Maldacena_2016_syk}, and interesting physical constraints on OTOCs such as the chaos bound apply only for finite temperature~\cite{Maldacena_2016}.

    Mathematically, the generalization to finite temperature  poses the challenge that we are no longer dealing with tracial moments, so that our techniques in Sec.~\ref{sec:explicit_formula} need a revision. Nevertheless, we believe that it may still be possible to find an explicit formula in this case. If we take $A$ and $B$ to be traceless involutions (Paulis or rotated Paulis), the set of all joint moments in this case can be organized as $m_n\equiv d^{-1}\Tr\rho [A(t)B]^n$, $m_n^*\equiv d^{-1}\Tr\rho [BA(t)]^n$, $l_n\equiv d^{-1}\Tr\rho B[A(t)B]^n$ and $l_n^*\equiv d^{-1}\Tr\rho A(t)[BA(t)]^n$. %These four sets can be organised into two pairs of conjugate moments $(m_n,m_n^*)$ and $(l_n,l_n^*)$.
    The resulting free mutual information might not admit a simple Coulomb gas formula that depends on a single eigenvalue density $\rho(x)$ determined by one set of moments, but should somehow combine two densities determined by $(m_n,m_n^*)$ and $(l_n,l_n^*)$ respectively.

    \item {\bf Symmetry-constrained free mutual information.}
    Our example in Sec.~\ref{sec:heisenberg} indicates the need to adapt the FMI for systems with symmetries. In the case where $A$ and $B$ both commute with some symmetry $Q$ of the time-evolution, the divergence in the FMI can be avoided by restricting the unitaries appearing in the definition \eqref{sadef} of $\sS_A$ to ones that also commute with the symmetry $Q$. %, and considering the pushforward of the volume measure to the supporting charge sectors.
    We leave a systematic study of the FMI in systems with symmetries to future works.
  
    \item {\bf Better understanding the OTOC sum  formula.} There are a few subtle features of our main formula~\eqref{otoc_intro} that intrigue us. If we were to guess an explicit formula for the FMI, which is implicitly defined in terms of the OTOCs, a sum-of-squares form is the most natural candidate given that the FMI is nonnegative. It is striking that the actual answer turns out to be precisely a sum of squares up to the harmonic weight. As we have remarked, the harmonic weight is subtle because it allows any power law decay of $\otoc_n$ with $n$ to sum to a finite FMI, while not allowing $\otoc_n$ that do not decay with $n$ to sum to a finite FMI. It would be interesting to better understand the physical interpretation of this weight. We expect that  this harmonic weight is likely the underlying reason why $I_{\rm free}$ decays monotonically despite the oscillations at intermediate times in individual $\otoc_n(t)$ in the spin chain and random GUE models.
\iffalse 
    \SV{[}In principle, some other weight could have appeared that was not right at the edge of convergence.\footnote{ Different ways to organize OTOCs into a cumulative quantity were recently studied in~\cite{jahnke2025free}. } Does it mean that we could actually find unitary dynamics whose higher-point OTOCs are also very important in determining the FMI? \SV{] I don't quite understand the precise point here. Do you mean that if we had a weight of say $1/n^2$, then the higher-point $\OTOC_n$ would automatically be irrelevant even if they did not decay with $n$, but in this case we actually need them to decay with $n$?} \JW{(I don't think any weight smaller than $1/n$ would make sense a priori, as we know FMI is divergent initially and all OTOCs=1.) I'm thinking what if the weights have a larger power, say $n^{-\frac12}$, $n^0$ or even $n^1$, then a finite FMI imposes more stringent decay with $n$. $1/n$ leaves some room for potential mild decay behaviours. I'd be curious if one can come up with such a model with a mild decay with $n$.}
  \fi 
  
    The harmonic weight is likely due to the fact that the FMI is an entropic quantity whose definition comes with a logarithm, and its Taylor expansion is exactly the RHS of~\eqref{otoc_intro}. Explicitly, we could rewrite the formula as\footnote{We thank Zhenbin Yang for pointing this out to us. Also, cf.~\cite{Trunin_2023,Trunin_2023_2} for a similar (but not identical) quantity called the ``logarithmic OTOC.''}
    \begin{equation}
        I_{\rm free}(A(t):B) = \sum_{n=1}^{\infty} \frac{2}{n} \otoc_n(A(t):B)^2  =-2\Tr\log [I-(A(t)B)\otimes (A(t)B)]
    \end{equation}
    where we could also replace $\Tr\log$ by $\log\det$. We do not yet have a good physical interpretation of the operator $I-(A(t)B)\otimes (A(t)B)$, which could shed further light on the interpretation of the FMI.

    \item {\bf Relation to other measures of chaos and complexity.} %Our approach of using operator-spreading to probe quantum chaos is conceptually reminiscent of
    Recall that we started with the motivation of coming up with a quantum many-body analog of ergodicity  in classical phase space. In classical chaos, the natural entropy associated with this ergodicity is  
    the Kolmogorov-Sinai (KS) entropy, which measures the rate at which one needs to refine the phase space resolution to keep track of the  trajectory of the system. A quantum version of KS entropy has previously been defined in~\cite{connes1987dynamical,alicki1994defining}. This entropy quantifies how the information gained from a measurement on the system evolves over time, and is not obviously related to our discussion in this paper of the quantum trajectory of an initial operator in operator space. However, in the mathematical literature there is some discussion of the relation between the quantum KS entropy and the free entropy~\cite{voiculescu1994alternative,voiculescu1995dynamical}. It is worth studying the  potential connection between these quantities in the physical context of quantum chaos. 
    
    %\SV{[}The physical Hilbert space is exponentially large in system size, whereas the operator space is doubly exponentially large. Hence, free mutual information and free entropy are intuitively more akin to complexities than entanglement entropies. In the same vein as how the free mutual information is defined via counting, complexity is often estimated via counting arguments~\cite{Brown_2018,oszmaniec2021epsilon,Brand_o_2021,Haferkamp_2022,chen2024incompressibility}.\SV{]}  \SV{
    Another natural question is about how the free mutual information may be related to various notions of the complexity of the time-evolved operator $A(t)=U(t)^{\dagger} AU(t)$, or the complexity of the time-evolution operator $U(t)$. One relation to complexity comes from the fact that the  free entropy captures a version of quantum Kolmogorov (descriptive) complexity of  $A(t)$~\cite{jw_talk}.  There may also be a relation or bound between some variant of the FMI and circuit complexity, as   estimates of circuit complexity often make use of the volume occupied by an ensemble of time-evolution operators such as a random unitary circuit in the space of all unitaries (see for instance~\cite{harrow, hunter_jones, jeongwan}). The FMI studied in this paper, which involves two operators, saturates at much earlier time scales than the ones associated with the circuit complexity, but it is possible that a less coarse-grained, multi-operator version could saturate at much later time scales. 
    
    %\JW{A natural candidate for comparison to the circuit complexity of the time-evolution operator $U(t)$ could be the quantity $ I_{\rm free}(\mathcal{A}(t):\mathcal{A})$ over the algebra $\mathcal A$ generated by all traceless involution operators, where the FMI is the generalized one between $\mathcal{A}$ at time $t=0$ and its time evolutions. Then the approximate freeness, indicated by an exponentially small FMI, could only be realized via a typical Haar random unitary $U(t)$ that takes exponential time to reach, just like its circuit complexity. }
    
    \item {\bf Gravity, holographic theories, and the SYK model.} 
    Out of time-ordered correlators were first used as diagnostics of chaos and scrambling in the recent literature in models such as holographic CFTs and the SYK model at low temperature~\cite{firewalls, butterfly, stringy, kitaev2014fundamental, Maldacena_2016_syk}. Recently, higher-point OTOCs have also been studied in these models~\cite{haehl1, Chandrasekaran_2023,penington2025}. The  time-dependence of the type of higher-point OTOCs that we are interested in here, $\Tr[\rho (A(t)B)^n]$, appears not to have been studied in the existing literature on holographic theories.
    In particular, while~\cite{Chandrasekaran_2023, penington2025} show that all higher-point OTOCs in the $t\to \infty$ limit are consistent with asymptotic freeness, we would like to understand the time-scale on which the decay of $\otoc_n$ first starts in these models in order to understand whether and when the FMI converges. \cite{haehl1} does consider the detailed time-dependence in $AdS_2$ and SYK, but for a  different set of ``maximally braided'' higher-point OTOCs. It would be interesting to check whether the time scale on which $I_{\rm free}$ starts to converge is identical to the scrambling time in these large $N$ models.  One case where the formulas we have  already derived  in this paper can be directly applied is the SYK model at infinite temperature.

    \item {\bf Free mutual information in mathematics.} The definition of free mutual information we have used is adapted from the orbital free entropy~\cite{HIAI_2009} in the mathematics literature. As mentioned in Footnote.~\ref{ft:FMI}, there is an alternative, and perhaps more standard, definition of free mutual information that mathematicians use, and it is based on the free Fisher information under a free liberation process~\cite{biane1997free} instead of on the counting of matrices~\cite{voiculescu1999analogues,Voiculescu_2002}. These two definitions are expected to coincide~\cite{Voiculescu_2002,biane2003large,Collins_2014}, but it is not known in full generality whether they coincide. In the companion paper~\cite{vw2025}, we will show that for the case of two traceless involutions, these two definitions indeed coincide, and they have a similar formula to the ensemble FMI $\sI$ in Sec.~\ref{sec:ensemble}. We shall prove that both formulas~\eqref{eq:integralFMI} and~\eqref{425} for $\mathcal I(a:b)$ are valid for any two traceless involutions $a$ and $b$ living in any tracial non-commutative probability space,  complementing the finite-dimensional setting we study in this paper.  In short, we have been using a consistent adaptation of free mutual information to finite dimensions. 
    
    \item {\bf Free R\'enyi mutual information.} The partial sum expression in~\eqref{424}, truncated at $n=k$, $\sum_{n=1}^k\frac2k\OTOC_k^2$, can be thought of as defining a ``R\'enyi free mutual information'' of order $k$. These R\'enyi's are practically useful, as in practical settings such as experiments only a finite number of OTOCs may be  available. It is obvious from the sum-of-squares formula that they are ordered with respect to the R\'enyi index, similar to the  classical and quantum R\'enyi entropies (except in a reverse order). A principled approach would be to define R\'enyi free mutual information by modifying the definition~\eqref{all_moments} so that it only depends on the first $k$ moments. In the companion paper~\cite{vw2025}, we shall prove that this proper definition exactly matches the expected formula $\sum_{n=1}^k\frac2k\OTOC_k^2$.

\end{itemize}

\noindent \textbf{Acknowledgments.}
We would like to thank Douglas Stanford for suggesting the method in Section~\ref{sec:otocs} among many other valuable comments. We also thank Roland Farrell, Jonah Kudler-Flam, Jeongwan Haah, Patrick Hayden, Nima Lashkari, John Preskill, Tommy Schuster, and Zhenbin Yang for helpful discussions. JW acknowledges the support form DOE Q-NEXT and GeoFlow. SV acknowledges funding provided by the DOE QuantISED program (DE SC0018407) and the Institute for Quantum Information and Matter, an NSF Physics Frontiers Center (NSF Grant PHY- 2317110). 

\appendix

\section{Details on derivation of  Coulomb gas formula}

\subsection{Proof of Lemma~\ref{lem:lemma_0}}
\label{app:prob}

\lemmazero*

\begin{proof}

Recall that $\vec{y}(U A U^{\dagger}B)$ is defined by 
\be 
y_i = \cos \phi_i, \quad e^{\pm i \phi_i} \text{ are the eigenvalues of } UAU^{\dagger} B   \label{alpha_i_def}
\ee
$A$ and $B$ both have $d/2$ eigenvalues equal to $+1$ and $d/2$ eigenvalues $-1$. In a basis where $B$ is diagonal, we can write 
\be 
B = \mathbf{1} - 2 P , \quad UAU^{\dagger} = \mathbf{1} - 2Q, \quad Q =  U V P V^{\dagger} U^{\dagger} \label{pqdef}
\ee
where $P$ is the rank-$d/2$ projector $\text{diag}(1, ...,1, 0, ...,0)$, and $V$ is a unitary such that $V B V^{\dagger}=A$. 
Now for any $U$, out of the various choices of basis that diagonalize $B$ or $P$, there exists at least one choice of basis where we can write   
\begin{equation}\label{22blocks}
    P=\bigoplus_{i=1}^{d/2}\begin{pmatrix}
       1 & 0 \\
0 & 0 
    \end{pmatrix},\quad Q=\bigoplus_{i=1}^{d/2}\begin{pmatrix}
      \cos^2\theta_i & \cos\theta_i\sin\theta_i \\
\cos\theta_i\sin\theta_i & \sin^2\theta_i
    \end{pmatrix},
\end{equation}
where $\{\theta_i\}_{i=1}^{d/2}\in[0,\pi/2]$ are the principle angles between the subspaces where $P$ and $Q$ are supported. 
From this block structure, the eigenvalues of $PQP$ are given by
\begin{equation}
  x_i=\cos^2\theta_i \, , \quad i = 1, ..., d/2 \, . 
\end{equation}
The discussion so far holds for any fixed $U$. Now if $U$ is a Haar-random unitary, the probability density for the eigenvalues  $x_i$ is given by~\cite{collins2005product}
\begin{equation}
   p(x_1,\ldots,x_{d/2})\prod_{i=1}^{d/2}\dd{x_i} =\sM \prod_{1\leq i<j\leq N/2}(x_i-x_j)^2 \prod_{i=1}^{N/2}\mathbf{1}_{[0,1]}(x_i)\dd x_i \label{qdef}
\end{equation}
for some normalization constant $\sM$. For any set $S$, the indicator function $\mathbf{1}_S(x)$ is equal to 1 if $x\in S$ and zero otherwise.    

We can now use \eqref{pqdef} and \eqref{qdef} to deduce the probability density of the $y_i$ in \eqref{alpha_i_def}. From \eqref{22blocks}, the eigenvalues of $AB$ are 
\begin{equation}
    z_i= e^{\pm i2\theta_i}, \quad i = 1, ..., d/2
\end{equation}
so we simply need to make the identification $\phi_i= 2 \theta_i$. We can first change variables in \eqref{qdef} from $x_i=\cos^2\theta_i$ to $\theta_i$ (we will ignore the overall normalization constant in these intermediate steps and find it at the end):
\begin{equation}
   p(x_1,\ldots,x_{d/2})\prod_{i=1}^{d/2}\dd{x_i} \propto\prod_{1\leq i<j\leq d/2}(\cos 2\theta_i-\cos 2\theta_j)^2 \prod_{i=1}^{d/2}\mathbf{1}_{[0,\pi/2]}(\theta_i)\sin2\theta_i\dd \theta_i
\end{equation}
Then since $\phi_i = 2\theta_i$, 
\be
p(x_1,\ldots,x_{d/2})\prod_{i=1}^{d/2}\dd{x_i} \propto\prod_{1\leq i<j\leq d/2}(\cos \phi_i-\cos \phi_j)^2 \prod_{i=1}^{d/2}\mathbf{1}_{[0,\pi]}(\phi_i)\sin\phi_i\, \dd \phi_i 
\ee
and finally changing variables to $y_i = \cos \phi_i$, we have 
\be
p(x_1,\ldots,x_{d/2})\prod_{i=1}^{d/2}\dd{x_i} \propto \prod_{1\leq i<j\leq d/2}(y_i - y_j)^2 \prod_{i=1}^{d/2}\mathbf{1}_{[-1,1]}(y_i)\,\dd y_i\, . 
\ee
The constant $\sN$ in \eqref{prob} can be found by requiring the probability density to be normalized. We note in passing that this spectral distribution is also shared by the Jacobi Unitary Ensemble (JUE$_{d/2,\alpha,\beta}$) in dimension $d/2$ with $\alpha=\beta=0$. 

\end{proof}
\subsection{Proofs of Lemmas \ref{lem:lemma_1} and \ref{lem:lemma_2} }
\label{app:volume_lemmas}

The condition \eqref{pauli_condition}, which defines $\tilde A$ in the set $\sS_{A(t)|B, \delta, N}$, can be expressed in terms of $\vec{y}(\tilde AB)$ and $\vec{x}$ as follows: 
\be 
\bigg|~\overline{T_n(\vec{y}(\tilde AB))} - \overline{T_n(\vec{x})}~\bigg| < \delta \text{ for all } 0 < n \leq N \label{33}
\ee
where 
 $T_n$ is the $n$-th Chebyshev polynomial of the first kind, and for any function $f:[-1, 1]\to \mathbb{R}$, we define
\be 
\overline{f(\vec{\gamma})} \equiv \frac{2}{d}  \sum_{i=1}^{d/2}f(\gamma_i) \, .
\ee

%We would like to show that the  value of  $\bigg|~\overline{T_n(\vec{y})} - \overline{T_n(\vec{x})}~\bigg|$  implies both upper and lower bounds on $||\alpha - \beta||_{1}$ in terms of $\delta$. This will in turn allow us to show both upper and lower bounds on the volume of $\sS_{A(t)|B, \delta, N}$ in terms of a closed formula involving the eigenvalues of $A(t) B$. To derive the relevant bounds, note that:

\L*
\iffalse 
\begin{lemma}
%\label{lemma_1}
Define $\sB_1\subset [-1, 1]^\frac{d}{2}$ as an $\epsilon$-box around $\vec{x}$ of radius 
\be 
\epsilon_1 = \frac{26}{\pi} d\,(- \delta\log\delta) \, .  
\ee
 If $N$ and $\delta$ satisfy the condition  
\be 
N > \frac{9 \pi}{\delta} \, ,   \label{310_a}
\ee
then 
for all $\tilde A \in  \sS_{A(t)|B, \delta, N}$, the corresponding vectors $\vec{y}$ are contained within $\sB_1$. 

\end{lemma}
\fi 

\begin{proof}

Consider the $L_1$ norm distance between the vectors $\vec{y}$ and $\vec{x}$: 
\be 
||\vec{y} - \vec{x}||_{1} \equiv \sum_{i=1}^{d/2} |\alpha_i - \beta_i |\, . 
\ee
From~\cite{sublinear} (Lemma 3.1), we have the following upper bound on the $L_1$ norm in terms of the moments $\bar T_n$:  for any positive integer $m$,  
\be 
||\vec{y}-\vec{x}||_{1} \leq \frac{d}{2} \le(\frac{36}{m} + \frac{4}{\pi}\sum_{k=1}^m \frac{|\overline{T_k(\vec{y})}-\overline{T_k(\vec{x})}|}{k}  \ri)\, . 
\ee
For $\bar {T_n}$ satisfying  \eqref{33}, this inequality implies that  
\be 
||\vec{y}-\vec{x}||_{1} \leq  \frac{d}{2} \le( \frac{36}{m} + \frac{4}{\pi}(\log m +1)\delta  \ri) \label{diff1}
\ee
For a given $\delta$, the RHS is minimized at $m$ closest to $9\pi/\delta$. If $N$ and $\delta$ satisfy \eqref{310}, the minimum value can be achieved and we have 
\be 
||\vec{y}- \vec{x}||_{1} \leq c \, d (-\delta \log \delta)  \label{in1}
\ee
for any $O(1)$ constant $c \geq \frac{26}{\pi}$. 
\end{proof}

\secondlemma*
\iffalse
\begin{lemma}
%\label{lemma_2}
Define $\sB_2\subset[-1,1]^\frac{d}{2}$ as an $\epsilon$-box around $\vec{x}$ of radius 
\be
\epsilon_2=\delta/N^2\, . 
\ee
$\sB_2$ is contained within the set of $\vec{y}(\tilde AB)$ corresponding to $\tilde A$ in $\sS_{A(t)|B, \delta, N}$. 
\end{lemma}
\fi 

\begin{proof}
From the Kantorovich–Rubinstein inequality, for any $\kappa$-Lipschitz function $f:[-1, 1] \to \mathbb{R}$, 
\be 
\bigg|\overline{f(\vec{y})}- \overline{f(\vec{x})}\bigg| \leq \frac{\kappa}{d}||\vec{y} -\vec{x}||_1 \ .
\ee
The Lipschitz constant of $T_n(x)$ on $[-1,1]$ is $n^2$.~\footnote{To see this, recall that $\kappa$ for any $f$ is ${\rm max}_{[-1,1]}|f'(x)|$, and note that $T'_n(x) = n U_{n-1}(x)$, where $U_n(x)$ is the $n$-th Chebyshev polynomial of the second kind. Since $\text{max}_{[-1,1]}|U_{n-1}(x)|= n$, $\kappa=n^2$.}
Hence, 
\be 
|\overline{T_n(\vec{y})}-\overline{T_n(\vec{x})}| \leq \frac{n^2}{d} ||\vec{y}-\vec{x}||_1 \ .  \label{b11}
\ee
Now for any $\alpha \in \sB_2$, we have  
\be 
||\vec{y}-\vec{x}||_1  < d\delta/N^2 
\label{nmax_condition}
\ee
which from \eqref{b11} implies 
\be |\overline{T_n(\vec{y})}-\overline{T_n(\vec{x})}|< \delta \text{ for all } n\leq N\, . 
\ee
Hence, $\sB_2$ is contained within the set of $\vec{y}(\tilde AB)$ corresponding to $\tilde A$ in $\sS_{A(t)|B, \delta, N}$. 

\end{proof}

%\SV{Note that from the earlier constraint \eqref{310}, 
%\be 
%\epsilon_2 \leq \frac{\delta^3}{(9\pi)^2} \, . 
%\ee
%} 

 %\SV{[Note that if we make this restriction $n< \sqrt{d}$ throughout, then \eqref{in1} still holds, as long as we satisfy the additional condition $\delta > \frac{9\pi}{\sqrt{d}}$. But then we would have the issue that the radius of $\sB_1$ is not small, which is a problem for the later argument around \eqref{319}... It would help if we could improve the $n^2$ to $n$ in \eqref{in2}.]}\JW{I don't see a way to improve the either bound, but now I don't think we need to impose $n < \sqrt{d}$.}

\subsection{Derivation of \eqref{eq:vol_constant}}
In this appendix, we simplify the normalization factor $\sN$ to obtain \eqref{eq:vol_constant}. 

\label{app:Stirling}

\begin{align}
\log \sN + \frac{d^2}{4}\log 2 &= \sum_{k=0}^{d/2-1} \log\frac{(d/2+k)!}{(k+1)!(k!)^2} \nn 
&= \sum_{k=0}^{d/2-1}  [\log\,(d/2+k)! -\log\,(k+1)!-2\log k!] \label{a20}
\end{align} 
We use Stirling's approximation to approximate each term.
\begin{equation}
    \log n!=n\log n - n + O(\log n)\ .
\end{equation}
Let us start with the last term of \eqref{a20}.
\begin{equation}
    \sum_{k=0}^{d/2-1}2\log k! =\sum_{k=0}^{d/2-1}2k\log k - (d/2-1)d/2 + O(d\log d) 
\end{equation}
The first term $\sum_{k=0}^{d/2-1} k \log k$ is the logarithm of Hyperfactorial $H(d/2-1)\equiv\prod_{n=0}^{d/2-1}n^n$. Using Glaisher's approximation,
\begin{equation}
    H(n) = An^{(6n^2+6n+1)/12}e^{-n^2/4}(1+O(n^{-2}))\ ,
\end{equation}
where $A\approx 1.28243$ is the Glaisher–Kinkelin constant, we have
\begin{equation}
    \sum_{k=0}^{d/2-1}2k\log k= 2\log H(d/2-1) = \frac{d^2}4\log d-\frac{d^2}{4}\log 2 - \frac{d^2}{8} + O(d\log d)\ .
\end{equation}
\begin{equation}
    \sum_{0}^{d/2-1}2\log k! =\frac{d^2}4\log d - \frac{3d^2}{8}  + O(d\log d) 
\end{equation}
Similarly, 

\begin{equation}
    \sum_{k=0}^{d/2-1}\log (k+1)! = \sum_{k=1}^{d/2}\log k! =\frac{d^2}8\log d-\frac{d^2}8\log 2 - \frac{3d^2}{16}  + O(d\log d)  
\end{equation}

Next, consider the first term of \eqref{a20}:
\begin{equation}
     \sum_{k=0}^{d/2-1}  \log\,(d/2+k)! = \sum_{k=0}^{d/2-1}(d/2+k)\log (d/2+k) - (d-1+d/2)d/4 + O(d\log d) \label{a27}
\end{equation}
We can rewrite the first term of \eqref{a27} as
\begin{equation}
    \sum_{k=0}^{d/2-1}(d/2+k)\log (d/2+k) = \sum_{k=0}^{d-1}k\log k - \sum_{k=0}^{d/2-1}k\log k=\log H(d-1) - \log H(d/2-1) 
\end{equation}
Using Glaisher's approximation, we obtain
\begin{equation}
    \sum_{k=0}^{d/2-1}(d/2+k)\log (d/2+k) = \frac{3d^2}8\log d +\frac{d^2}{8}\log 2- \frac{3d^2}{16}+ O(d\log d) 
\end{equation}
and 
\begin{equation}
     \sum_{k=0}^{d/2-1}  \log\,(d/2+k)! = \frac{3d^2}8\log d - \frac{9d^2}{16} + O(d\log d)
\end{equation}
Putting all contributions together, we obtain
\begin{align}
        \sum_{k=0}^{d/2-1} \log\frac{(d/2+k)!}{(k+1)!(k!)^2}  &= \frac{3d^2}8\log d +\frac{d^2}{8}\log 2- \frac{9d^2}{16} -\left(\frac{d^2}8\log d-\frac{d^2}8\log 2 - \frac{3d^2}{16}\right)\nn
        &-\left(\frac{d^2}4\log d -\frac{d^2}{4}\log 2- \frac{3d^2}{8}\right) + O(d\log d) \nn 
        &= \frac{d^2}2\log 2+ O(d\log d) \ .
\end{align}
which implies 
\be 
\log \sN = \frac{d^2}{4}\log 2\, + O(d\log d) . 
\ee

\begin{figure}[t]
    \centering
    \begin{overpic}[width=0.45\linewidth]{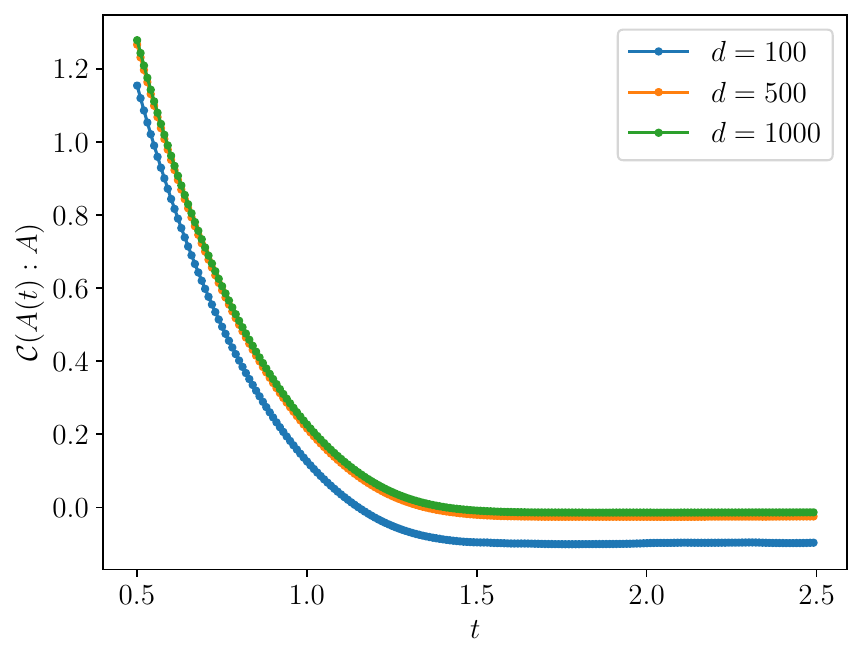}
        \put(7,80){\footnotesize(a)}
    \end{overpic}%
    \begin{overpic}[width=0.45\linewidth]{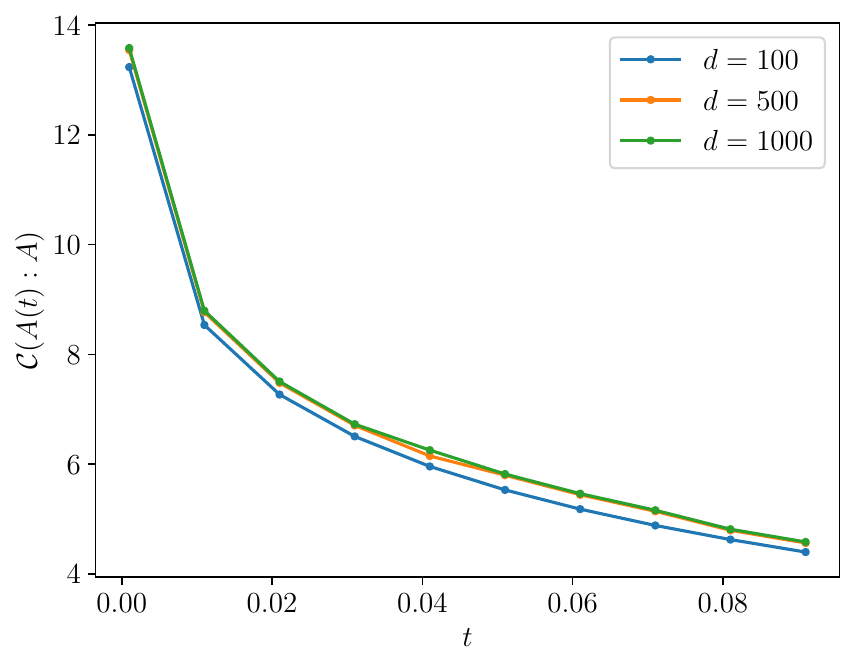}
        \put(7,80){\footnotesize(b)}
    \end{overpic}

    \begin{overpic}[width=0.45\linewidth]{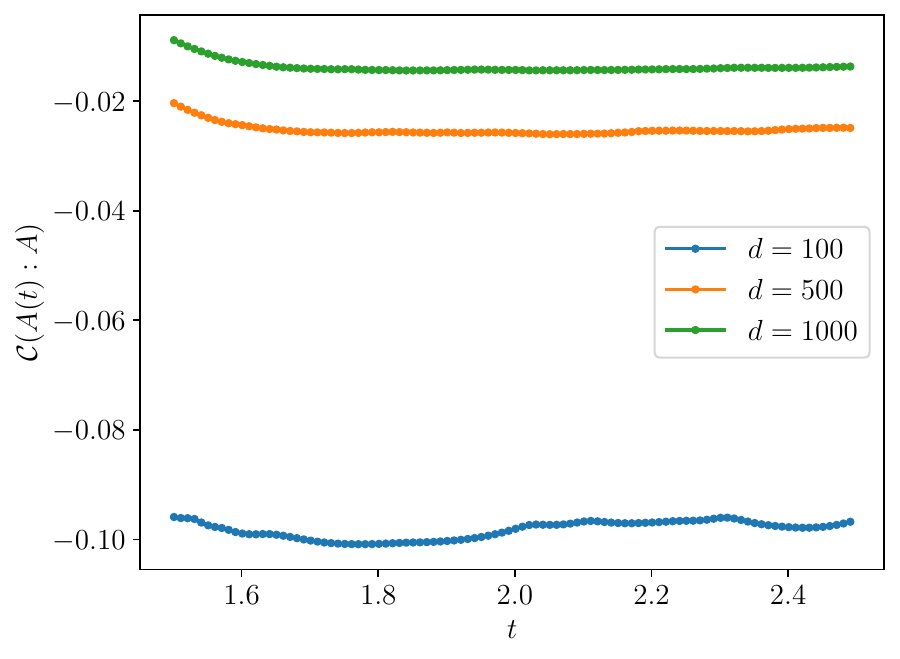}
        \put(7,70){\footnotesize(c)}
    \end{overpic}%
    \begin{overpic}[width=0.45\linewidth]{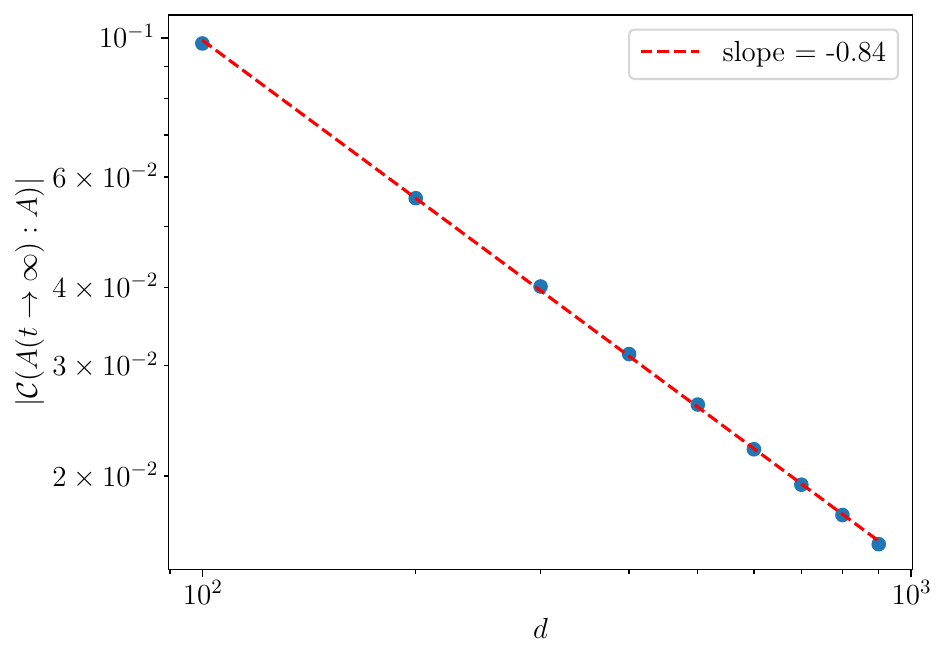}
        \put(15,70){\footnotesize (d)}
    \end{overpic}

    \caption{Dependence of the  Coulomb gas formula $\sC(A(t):B)$ on the Hilbert space dimension in the random GUE model. In (a), (b) and (c), we zoom in respectively on intermediate, early, and late times. In the late-time regime where the Coulomb gas formula is negative, its absolute value shows a $d^{-0.84}$ dependence. We show this in (d) with a log-log plot of the absolute value of the saturation value of $\sC$ for 10 values of $d$ from 100 to 1000. In this regime, we formally set $I_{\rm free}$ to zero according to point 2 at the end of Sec.~\ref{sec:formula}, with the understanding that the actual value is positive and proportional to some power of $1/d$.}
    \label{fig:d_dependence}
\end{figure}

\section{$d$-dependence of $\Delta_{\rm min}$ and $I_{\rm free}$ in physical examples}\label{app:gue_checks}

In this Appendix, we justify the assumption about $\Delta_{\rm min}$   used in the derivation of the Coulomb gas formula, and clarify the regime of validity of this formula, in explicit example of a chaotic time-evolution. 

Let us start with the GUE Hamiltonian model introduced in Sec.~\ref{sec:gue}.  
First note that  the Hamiltonian is normalized such that irrespective of the Hilbert space dimension $d$, the Coulomb gas formula decays from infinity to a small value on the same $O(1)$ time scale: this can be seen in  Fig.~\ref{fig:d_dependence} (a). Hence, it makes sense to compare $\Delta_{\rm min}$ for different $d$ over the same range of times. 

To check the assumption in \eqref{327}, we zoom in on early times of order $10^{-2}$ for $d=100, 500$, and $1000$. 
We show the case $A\neq B$ in Fig.~\ref{fig:deltamin_ab}; this case $A = B$ is similar. Note that for each $d$, we consider a single realization of the  random GUE ensemble at that dimension. From the orders of magnitude for $\Delta_{\rm min}$ that we observe in these plots, it seems likely that the dependence of $\Delta_{\rm min}$ on $d$ is consistent with the assumption \eqref{327}. Note that it does not quite make sense to directly plot the dependence of $\Delta_{\rm min}$ for a given time as a function of $d$ in this case, as we are considering single realizations of the GUE matrices, and $\Delta_{\rm min}$ is not a self-averaging quantity. Note also from the early-time regime in Fig.~\ref{fig:d_dependence} (b) that the Coulomb gas formula is independent of $d$ for large enough $d$ at these early times. %This confirms that $\Delta_{\rm min}$ does not decay so rapidly with $d$ that it dominates in the Coulomb gas formula at such times (although this is a weaker statement than the assumption~\eqref{327}).  

We further check the order of magnitude of $\Delta_{\rm min}$ in the mixed-field Ising model and the Heisenberg model in Fig.~\ref{fig:deltamin_spin}, and find a reasonably large magnitude for most of the range of times we are interested in. 

As a separate point, let us comment on the late-time regime of Fig.~\ref{fig:d_dependence} (c) and (d). We see that the Coulomb gas formula has a negative saturation value at such times, whose magnitude decays as $\sim d^{-0.84}$. Since $I_{\rm free}$ is necessarily non-negative, the Coulomb gas formula cannot be equal to $I_{\rm free}$ in these cases. This is consistent with the point discussed at the end of Sec.~\ref{sec:formula} that when the value of the Coulomb gas formula is proportional to some power of $1/d$, there are competing contributions that no longer allow us to identify it with $I_{\rm free}$. 

\iffalse
\begin{figure}[!h]
    \centering
    \includegraphics[width=0.45\linewidth]{compare_all_new.pdf}  \includegraphics[width=0.45\linewidth]{compare_all_early_new.pdf} \includegraphics[width=0.45\linewidth]{compare_all_late_new.pdf} 
    \includegraphics[width=0.45\linewidth]{c_d_dependence.pdf} 
    \caption{Dimension-dependence of the time-evolution of the Coulomb gas formula for $I_{\rm free}$ in the random GUE model. In the three plots, we zoom in respectively on intermediate, early, and late times. We label the $y$-axis $I_{\rm free}$, but it is clear from the last plot that in the late-time regime where the Coulomb gas formula is negative, it does not correspond to $I_{\rm free}$. In this regime, we formally set $I_{\rm free}$ to zero, with the understanding that the actual value is positive and proportional to some power of $1/d$, but finding the precise power is beyond the scope of our analysis.}
    \label{fig:d_dependence}
\end{figure}
\fi 

\begin{figure}[!h]
    \centering
  \includegraphics[width=0.45\linewidth]{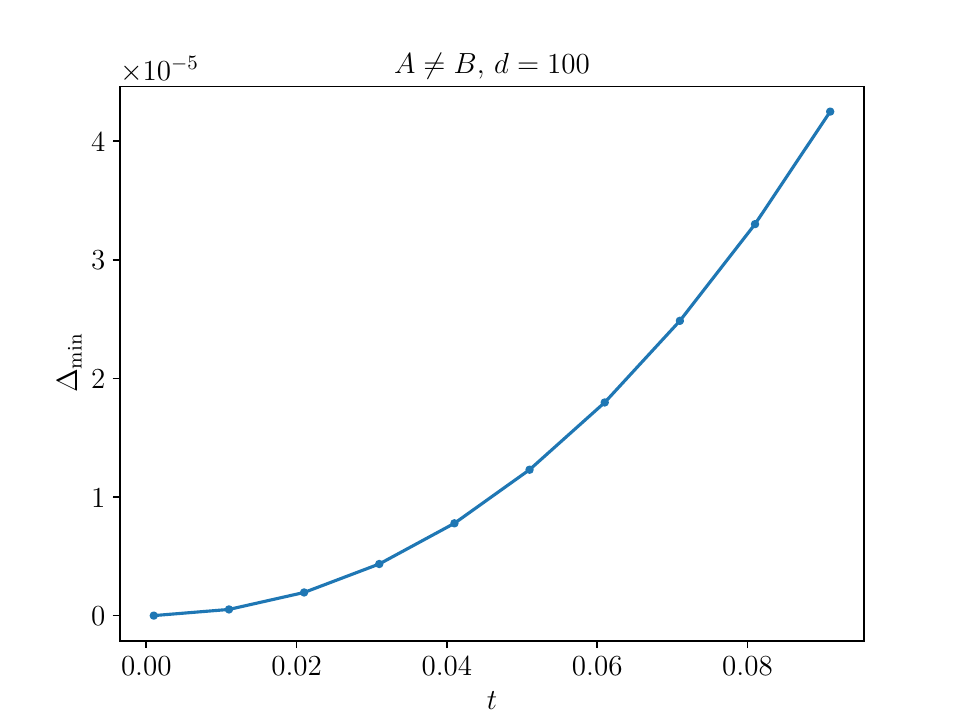} \includegraphics[width=0.45\linewidth]{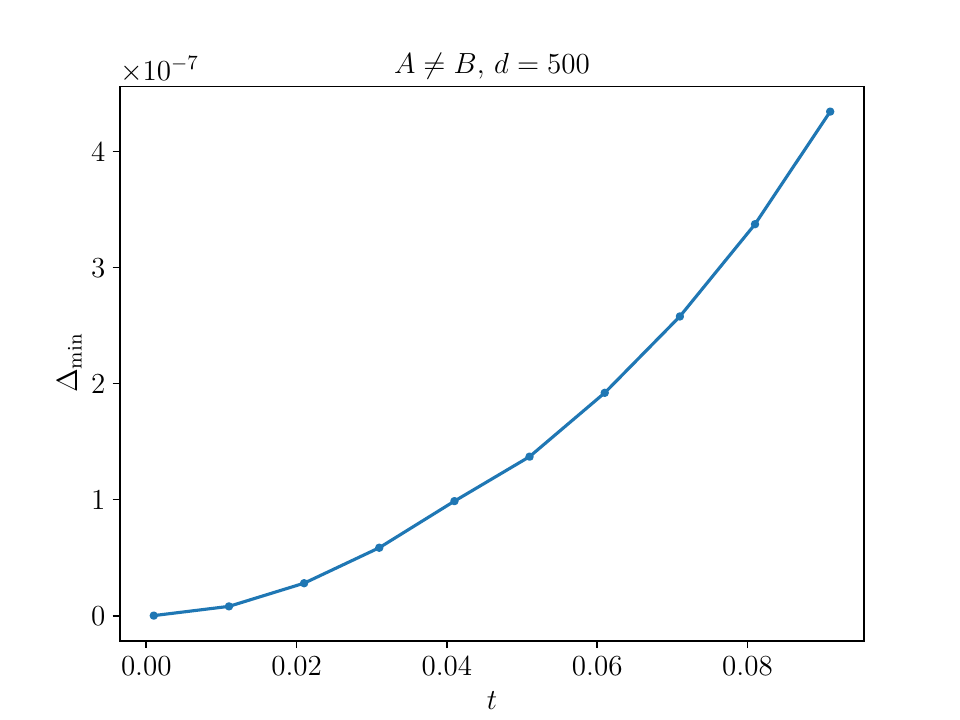} \includegraphics[width=0.45\linewidth]{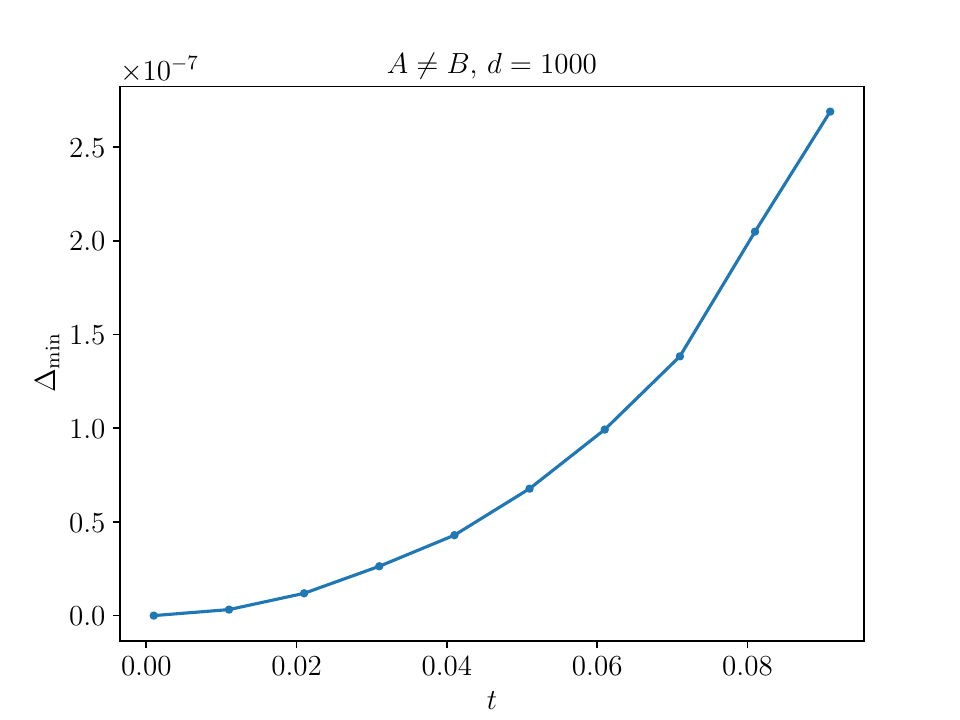}
    \caption{Dimension and time dependence of $\Delta_{\rm min}$ in the random GUE model for the case $A\neq B$.} 
    \label{fig:deltamin_ab}
\end{figure}

\begin{figure}
    \centering
       \begin{subfigure}{0.49\textwidth}
        \centering
\includegraphics[width=\linewidth]{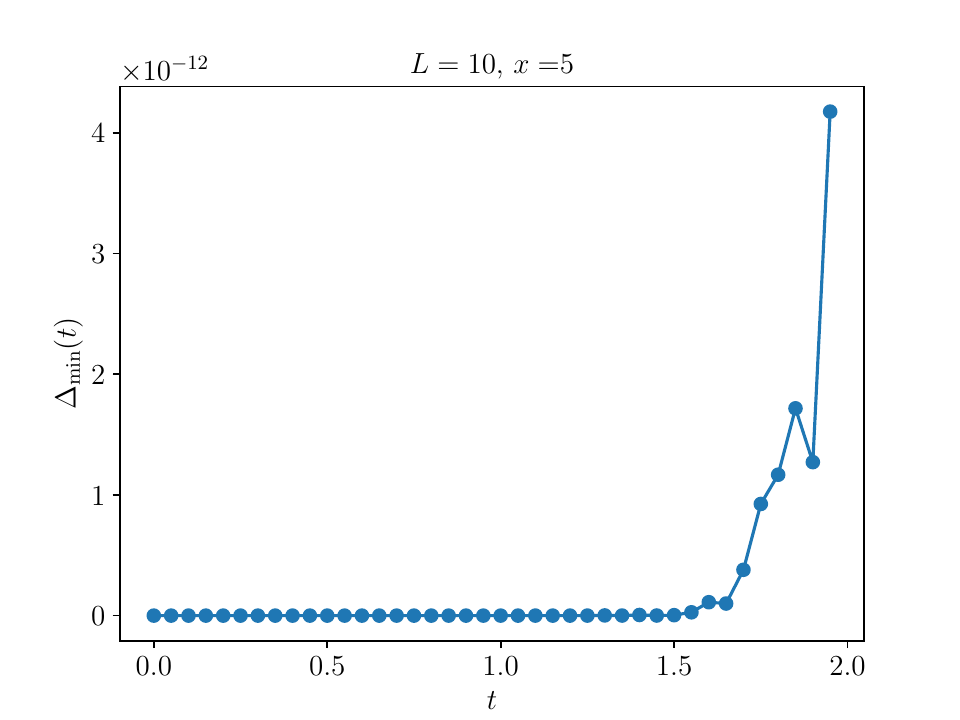}
    \end{subfigure}
    \begin{subfigure}{0.49\textwidth}
\centering
\includegraphics[width=\linewidth]{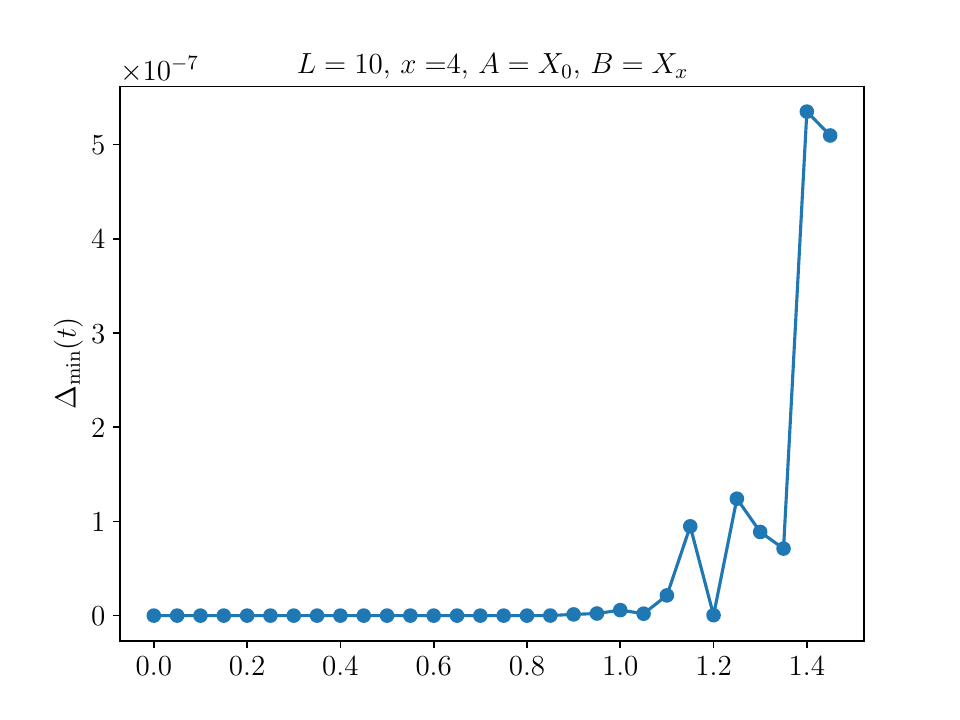}
    \end{subfigure}
    \caption{$\Delta_{\rm min}$ as a function of time for $L=10$ $(d=2^{10})$ in the chaotic spin chain model (left) and in the Heisenberg model for the case $A=X_0, B=X_x$ (right).}
    \label{fig:deltamin_spin}
\end{figure}

\section{No delta functions in $\bar \rho(x)$ for smooth ensembles}
\label{sec:smoothness}

In this Appendix, we show that under a sufficient smoothness condition for the ensemble $\nu$, the ensemble-averaged spectral density $\bar{\mu}(z)$ of $U(t)AU(t)^{\dagger}B$ for $U(t)\sim \nu$ has the following property:   
\begin{equation}\label{eq:no_atoms}
   \oint_{\mathbb{T}}\dd z'\mu(z')\mathbf{1}_{z'=z}=0, \quad\forall z\in\mathbb T\ .
\end{equation}
where $\lambda_i(UAU^\dagger B)$ denotes any eigenvalue of $UAU^\dagger B$. Recall that $\mu(z)$ was defined in \eqref{48}. In the terms of the averaged spectral density $\bar \rho(z)$ of the real part of $z$, \eqref{eq:no_atoms} implies \eqref{eq:no_atoms_mt}.  

 The smoothness condition we need is that the measure $\nu$ is \emph{absolutely continuous} with respect to the Haar measure, i.e., it has a delta function-free probability density $p(U)\equiv\dd\nu(U)/\dd U$ with respect to the Haar measure $\dd U$ on $\mathbf{U}(d)$.

To see that this smoothness condition implies \eqref{eq:no_atoms}, note first that we can interpret the LHS of \eqref{eq:no_atoms} as a probability under $U\sim \nu$ as follows: 
\be 
\oint_{\mathbb{T}}\dd z'\mu(z')\mathbf{1}_{z'=z} =  \mathrm{Pr}_{U\sim\nu}[\lambda_i(UAU^\dagger B)=z] %= \frac{\int_{\mathbf{U}(d)} dU \frac{d\nu(U)}{dU} }{ \int_{\mathbf{U}(d)} dU \frac{d\nu(U)}{dU}}
\ee
where $\lambda_i(UAU^\dagger B)$ is any eigenvalue of $UAU^\dagger B$. We then have 
\be 
\oint_{\mathbb{T}}\dd z'\mu(z')\mathbf{1}_{z'=z} =  \mathrm{Pr}_{U\sim\nu}[F(U, z)=0]  
\ee
where $F(U, z)$
is  the characteristic polynomial 
\begin{equation}
   F(U, z)\equiv\det(UAU^\dagger B-zI)\ .
\end{equation}
As we will explain below, at any $z$, $F(U)\equiv F(U,z)$ is a real-analytic function of $U$.~\footnote{A function $f:\mathbf U(d)\to\mathbb C\simeq\mathbb R^2$ being real-analytic means that for any $U\in \mathbf U(d)$ there exists a chart $\phi:\mathcal U\to\mathbb R^{d^2}$ where $\mathcal U$ is a neighbourhood of $U$, such that $f\circ\phi^{-1}:\phi(\mathcal U)\subset\mathbb R^{d^2}\to\mathbb R^2$ has a convergent Taylor series expansion in the Euclidean variables in $\mathcal U$.} According to a general theorem from~\cite{Mityagin_2020,krantz2002primer}, any non-trivial (non-zero) real-analytic function on $\mathbf{U}(d)$ can only vanish on a set of measure zero for a measure that is absolutely continuous with respect to the Haar measure on $\mathbf{U}(d)$. The statement \eqref{eq:no_atoms} then follows. 

 To see that $F(U)$ is real-analytic (considering some fixed $z$ and suppressing the $z$ argument), we first extend the domain to all complex matrices $X\in M_d(\mathbb C)$, and consider $\tilde F(X)=\det(XAX^\dagger B-zI)$. We can define a chart $\phi$ that embeds $X$ into $\mathbb R^{2d^2}$ via matrix entries $U_{ij}=x_{ij}+iy_{ij}$. Under the chart $\phi$, it is clear that $F\circ\phi^{-1}(\{x_{ij},y_{ij}\}_{i,j})$ is a polynomial in the matrix entries so $\tilde F(X)$ is real-analytic. Then we can use the fact that $F(U)=\tilde F(X)|_{\mathbf U(d)}$ is also real-analytic on the restricted domain of a smooth real-analytic submanifold $\mathbf U(d)$~\cite{krantz2002primer}.\footnote{For our purposes, we can define a real-analytic submanifold of $M_d(\mathbb C)$ as given by the matrices satisfying real-analytic constraint equations $G(X)=0$. For $\mathbf U(d)$, we have the constraints $G_1(X)=XX^\dagger-I$ and $G_2(X)=X^\dagger X-I$, which are polynomial functions in matrix components and hence real-analytic. Hence, $\mathbf U(d)$ is a smooth real-analytic submanifold.}

We now check the absolute continuity (delta function-freeness) condition of $\nu$ for two models studied in this paper: the GUE Hamiltonian and random brickwork quantum circuits.
\begin{itemize}
    \item \emph{GUE Hamiltonian.} The GUE ensemble has a continuous Gaussian density on the space of Hermitian matrices. The exponentiation map is analytic. Pushing forward a smooth density through an analytic map always yields a continuous density with respect to the Haar measure. Hence, the spectral density of the exponentiated GUE unitary ensemble contains no delta functions. It is evident that $F(U)$ is a non-trivial (non-zero) function at any $z$ as long as $t>0$, so the general argument above applies.
   \item \emph{Random quantum circuits.} Let the unitary evolution be $U_t(G)$, where $G$ is the shorthand for the collection of gates $\{G_i\}_{i=1}^M$ used to build the random circuit. The probability distribution of the gates $\{G_i\}_{i=1}^M$ is a product of Haar measures on SU$(q^2)^{\times M}$, and hence has a continuous delta-free probability density on the product group manifold. 
   
   Before the light-cone of $A(t)$ touches $B$, the function $F(U_t)$ is trivial (zero) when $z=\pm 1$, so the argument above does not apply and the spectral density clearly has two delta peaks $\delta_{\pm 1}$. 
   
   After the light-cone of $A(t)$ touches $B$, the function becomes non-trivial at any $z$. To see this, we simply need to find an instance of a RQC that is non-trivial. Let all gates be instantiated as some Clifford unitaries. The operator $A(t)$ after a Clifford evolution remains a Pauli string, so we have either $A(t)$ and $B$ commute or anti-commute with each other, and the spectrum of $A(t)B$ is either $\{\pm 1\}$ or $\{\pm i\}$. If $z=\pm 1$, we can choose an instance of Clifford gates such that $A(t)$ and $B$ anti-commute. If $z\neq \pm 1$, we can choose an instance of Clifford gates such that $A(t)$ and $B$ commute. $F(G)$ is thus a non-trivial function at any $z$. Then our general argument above applies.
\end{itemize}

%\footnote{Formally, we need to view $F(G)$ as a real-analytic function on the gate parameters, embedded in $\mathbb R^{2q^4M}$. The Haar measure is absolutely continuous with respect to the Lebesgue measure in any real‐coordinate embedding.}. Hence, for any $z\in\mathbb T$,
%Following the argument in Sec.~\ref{sec:smoothness}, it remains to show that $F(G):=\det(U_t(G)AU_t(G)^\dagger B-zI)$ is not a zero function for any $z$. 

\section{Details of derivation of the OTOC sum formula}
\label{app:yterm}

In this Appendix, we explain how the $\sY$ term in \eqref{46} is expanded in terms of the OTOCs to derive \eqref{full_otoc}.

Due to the Sokhotski–Plemelj formula, the function $D: \mathbb{T} \to \mathbb{C}$ defined by 
\be 
D(z-\alpha) \equiv  \frac{1}{2\pi i} \le[\lim_{\zeta \to z^+} \frac{1}{\zeta- \alpha} - \lim_{\zeta \to z^-} \frac{1}{\zeta- \alpha}\ri] 
\ee
behaves as an analog of the delta function $\delta(z-\alpha)$ for complex numbers $z$ integrated over the unit circle, i.e. for any function $f: \mathbb{T} \to \mathbb{C}$, 
\be 
\oint_\mathbb{T} \dd z\,  D(z-\alpha)\,  f(z) = f(\alpha)\, . 
\ee

We can therefore express $\sY$ as follows: 
\begin{align} 
\sY &= \frac2{d^2} \sum_{i=1}^{d} \log |{\rm Re}\, z_i - {\rm Re} \, z_i| \nn 
&= \frac2{d^2}  \oint_{\mathbb{T}}\dd z \oint_{\mathbb{T}}\dd \tilde z      \log | {\rm Re} \,z - {\rm Re} \, \tilde z| \le( \sum_{i=1}^{d}  D(z - z_i) D(\tilde z - z_i) \ri)\ . \label{232}
\end{align}
Let us now express  $\sum_{i=1}^{d} D(z-z_i) D(\tilde z- z_i)$ in terms of traces involving $A(t)B$:
\begin{align}
&\sum_{i=1}^{d} D(z-z_i) D(\tilde z- z_i) = \frac{1}{(2\pi i)^2} \sum_{i=1}^d \le(\frac{1}{z^+-z_i}-  \frac{1}{z^--z_i}\ri) \le(\frac{1}{{\tilde z}^+-z_i}-  \frac{1}{{\tilde z}^--z_i}\ri) \nn 
& = \frac{1}{(2\pi i)^2}  \bigg( \Tr\le[\frac{1}{(z^+- AB)({\tilde z}^+- AB)}\ri] +  \Tr\le[\frac{1}{(z^-- AB)({\tilde z}^-- AB)}\ri] \nn 
& \quad - \Tr\le[\frac{1}{(z^+- AB)({\tilde z}^-- AB)}\ri] -\Tr\le[\frac{1}{(z^-- AB)({\tilde z}^+- AB)}\ri]  \bigg)\ . \label{233}
\end{align}
Here for instance $z^+$ is shorthand for the limit of a complex variable $\zeta$ approaching $z \in \mathbb{T}$ from outside $\mathbb{T}$. 
Now depending on whether we take the limit from outside or inside $\mathbb{T}$, we can expand each of the above terms in terms of OTOCs: 
\begin{align}
&\frac{1}{d}\Tr\le[\frac{1}{(z^+- AB)({\tilde z}^+- AB)}\ri] = \frac{1}{z^+ {\tilde z}^+} \sum_{m, n=0}^{\infty} \frac{1}{({z^+})^m {({\tilde z}^+})^n} {\rm OTOC}_{m+n}\, . \\
& \frac{1}{d}\Tr\le[\frac{1}{(z^-- AB)({\tilde z}^-- AB)}\ri] = \sum_{m,n=0}^{\infty} ({z^-})^n({{\tilde z}^-})^m \OTOC_{m+n+2}\, .\\
& \frac{1}{d}\Tr\le[\frac{1}{(z^+- AB)({\tilde z}^-- AB)}\ri] = - \sum_{n, m=0}^{\infty}\frac{({\tilde z}^-)^n}{( z^+)^{m+1}} \OTOC_{m-n-1} \, .\\
&\frac{1}{d}\Tr\le[\frac{1}{(z^-- AB)({\tilde z}^+- AB)}\ri] = - \sum_{m,n=0}^{\infty} ({\tilde z}^+)^{-n-1} ({z}^{-})^m \OTOC_{n-m-1}\, .
\end{align}
Now putting these expressions back into \eqref{232}, we find that $\sY$ can be written as a sum of four terms $\sY =\sum_i \sY_i$ corresponding to the four terms in \eqref{233} (we set $z = e^{i \theta}$, $\tilde z = e^{i\phi}$, and now ignore the $\pm$ superscripts in the above expressions): 
\be
\begin{aligned}
\sY_1 &=  \frac{1}{(2\pi)^2} \frac2d \sum_{m,n=0}^{\infty} \OTOC_{m+n} \int_0^{2\pi}\dd \theta \int_0^{2\pi}\dd \phi\, e^{-im \theta} e^{-in\phi} \log|\cos\theta - \cos \phi|  \\
& = \frac{1}{(2\pi)^2} \frac2d\sum_{m,n=0}^{\infty} \OTOC_{m+n} \int_0^{2\pi}\dd \theta \int_0^{2\pi}\dd \phi\, e^{-im \theta} e^{-in\phi} \le[-\log 2 - \sum_{p=1}^{\infty} \frac{2}{p} \cos(p\theta) \cos(p \phi) \ri] \\
& = -\frac{2\log 2}{d} - \frac2d\sum_{p=1}^{\infty} \frac{1}{2p} \OTOC_{2p}\ .
\end{aligned}
\ee
Similarly, 
\be
\begin{aligned}
\sY_2 &= \frac{1}{(2\pi)^2} \frac2d \sum_{m, n=0}^{\infty} \OTOC_{n+m+2} \int_0^{2\pi}\dd \theta \int_0^{2\pi}\dd \phi\, e^{i(n+1)\theta} e^{i(m+1)\phi} \left(-\log 2 -\sum_{p=1}^{\infty} \frac{2}{p} \cos(p\theta) \cos(p \phi)\right)  \\
& = - \frac2d \sum_{p=1}^{\infty} \frac{1}{2p} \OTOC_{2p} \ .
\end{aligned} 
\ee
\be
\begin{aligned}
\sY_3 &= \frac{1}{(2\pi)^2} \frac2d \sum_{m, n=0}^{\infty} \OTOC_{m-n-1} \int_0^{2\pi}\dd \theta \int_0^{2\pi}\dd \phi \,e^{i(n+1)\phi}e^{-im\theta} \le(-\log 2 - \sum_{p=1}^{\infty} \frac{2}{p}\cos(p\theta)\cos(p\phi) \ri) \\
& = -\frac2d\sum_{p=1}^{\infty} \frac{1}{2p} \ .
\end{aligned}
\ee
and 
\be
\begin{aligned}
\sY_4 &= \frac2d\frac{1}{(2\pi)^2}\sum_{m, n=0}^{\infty} \OTOC_{n-m-1} \int_0^{2\pi}\dd \theta \int_0^{2\pi}\dd \phi\, e^{i (m+1) \theta} e^{-i n \phi} \le(-\log 2 - \sum_{p=1}^{\infty} \frac{2}{p}\cos(p\theta)\cos(p\phi) \ri) \\
& = - \frac2d  \sum_{p=1}^{\infty} \frac{1}{2p} \ .
\end{aligned}
\ee
Combining all terms, we have 
\be 
\sY =  - \frac{2\log 2}{d} - \frac{1}{d}\sum_{n=1}^{\infty} \frac2n \OTOC_{2n}-\frac{1}{d}\sum_{n=1}^{\infty} \frac2n\ \, . 
\ee

\section{Justification of assumptions about the OTOC sum formula in the GUE model}\label{app:gue_checks_otoc} 

In Section~\ref{sec:otocs}, we discussed two different partial sums for $\otoc_n$ in \eqref{full_otoc} and \eqref{nmax}. Let us more carefully compare these two partial sums to each other and to the Coulomb gas formula, using the GUE model of Sec.~\ref{sec:gue} as a concrete example. In Fig.~\ref{fig:ifree_gue} of the main text, we showed the partial sums in \eqref{nmax} over the full range of times and for early times. Here, we show partial sums of both \eqref{nmax} and \eqref{full_otoc} from intermediate to late times. We see that the partial sums of \eqref{full_otoc} are closer to $I_{\rm free}$ at such times, which is expected from the fact that the infinite sum \eqref{full_otoc} is exact. There is a small but visible difference between the converged value of the partial sums of \eqref{nmax} and the Coulomb gas formula at such times, which we expect is $O(1/d^z)$ for some $z>0$. 

\begin{figure}[!h]
    \centering
    \includegraphics[width=0.45\linewidth]{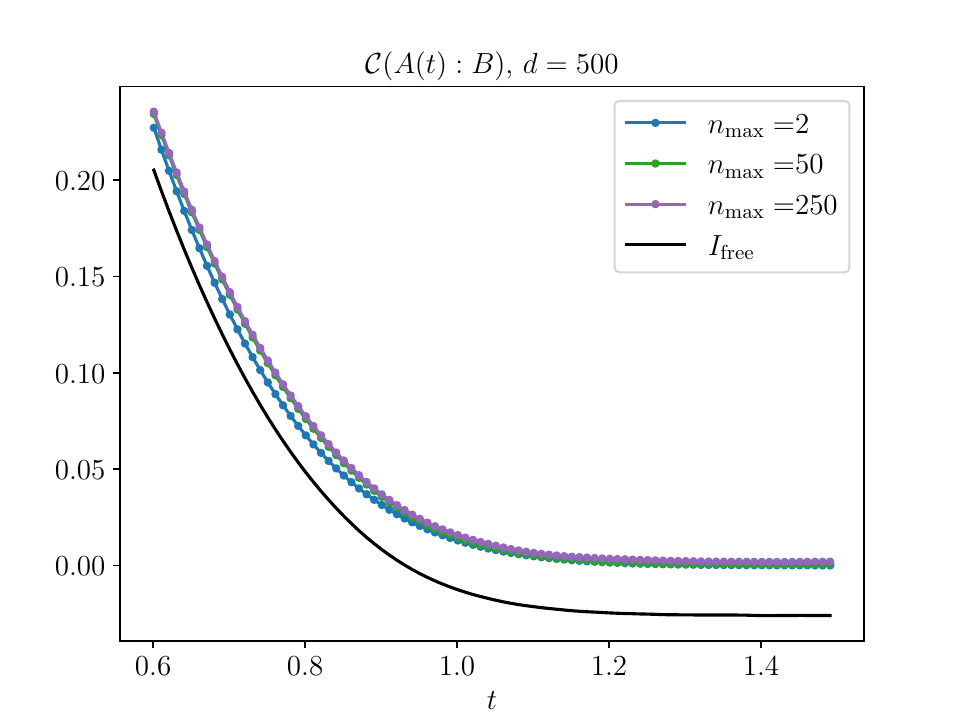}
    \includegraphics[width=0.45\linewidth]{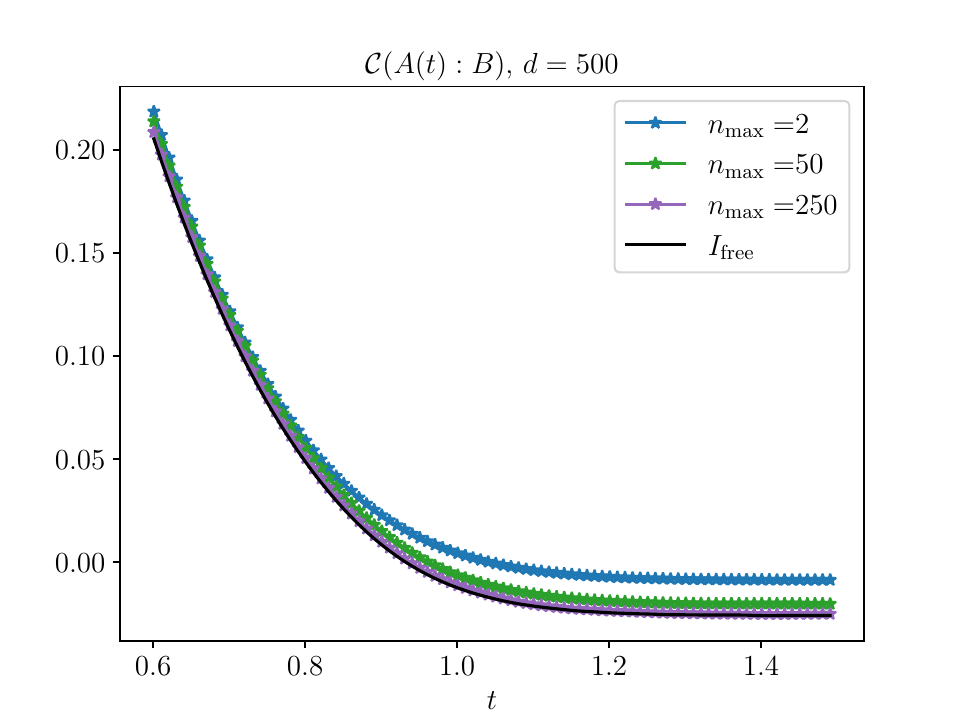}
    
    \caption{We show the partial sums of the approximate series \eqref{nmax} on the left, and those of the exact version \eqref{full_otoc} on the right. In both cases, the black curve is the Coulomb gas formula.}
    \label{fig:otoc_negative}
\end{figure}

\section{Higher-point OTOCs in the GUE model}
\label{app:gue}

In this Appendix, we explain the free probability techniques used in the derivation of the general formula \eqref{otoc_gen}  for $\otoc_n(A(t):A)$ in the random GUE model. 
As discussed in the main text, $A$ and $U(t) = e^{-iH_{\rm GUE(t)}}$ approach two freely independent variables $a$ and $u$ in the large $d$ limit. 
Let us define  
\be 
\tau(\cdot) \equiv \lim_{d\to \infty} \overline{\Tr[\cdot]} \, . 
\ee
where the overline indicates an average over the GUE ensemble. We will refer to $\tau(x^n)$  as the ``moments'' of $x$, and to $\tau(x_1...x_n)$ as a ``joint moment'' of $x_1, ..., x_n$. 

The task of evaluating 
$\lim_{d\to\infty}\otoc_n(A(t):A)$
is therefore equivalent to computing the moments of the polynomial $uau^{\dagger}a$ involving the free random variables $u$ and $a$: 
\be 
\lim_{d\to \infty} \otoc_n(A(t):A) = \lim_{d\to \infty}\frac{1}{d}\Tr[(U(t)AU(t)^{\dagger}A)]  = \tau((u a u^{\dagger} a)^n) \, . 
\ee

Now for {\it any} set of 
noncommuting random variables $\{a_i\}_{i=1}^m$ (which are not necessarily free with respect to each other), the joint moments of $x_1,\ldots,x_n\in \{a_i\}_{i=1}^m$ can be expressed as follows:~\cite{nica2006lectures}  
  \be 
  \tau(x_1\cdots x_n)=\sum_{\pi\in {\rm NC}(n)}\prod_{B\in\pi}\kappa_{|B|}(\{x_i\}_{i\in B})\, . \label{cumudef}
  \ee 
  The quantities $\kappa_n(x_1, ..., x_n)$ appearing on the RHS are called the ``free cumulants'' of $x_1, ..., x_n$. The full set of equations \eqref{cumudef} should together be seen as an implicit {\it definition} of the free cumulants, so at this stage the above equation is not yet useful if our goal is to calculate the moments $\tau$. 

  The equation \eqref{cumudef} becomes a calculation tool in the case where some of the $a_i$ are free with respect to each other, as in this case any {\it mixed cumulants}, defined as $\kappa_n(x_1, ..., x_n)$ where any pair of $x_i$ appearing in the argument are free with respect to each other, vanish. For example, in our calculation of $\tau((uau^{\dagger}a)^n)$ cumulants such as $\kappa_n(u_1, ..., u_n)$ where each $u_i$ is either $u$ or $u^{\dagger}$ are non-zero, but cumulants such as $\kappa_n(u, a, u^{\dagger}, u, ..., u)$ are zero. While this gives a significant simplification, cumulants involving both $u$ and $u^{\dagger}$, which can in principle appear in $\tau((uau^{\dagger}a)^n)$, are still somewhat complicated to evaluate explicitly. A further simplification comes from the following observation: 
  because $\kappa_n(a)=0$ for odd $n$, cf.~\eqref{cumulantsof_a}, the moments of $uau^\dagger a$ turn out to be equal to those of  $uaua$: 
\be 
\tau((u a u^{\dagger} a)^n)=  \tau((u a u a)^n)\, . \label{f4}
\ee
Let us illustrate this in the case $n=1$, where the task is to calculate $\tau(u a u^{\dagger}a)$.
\begin{align}
    \tau(ua&u^{\dagger}a)=\raisebox{-16pt}{\includegraphics[width=0.44\textwidth]{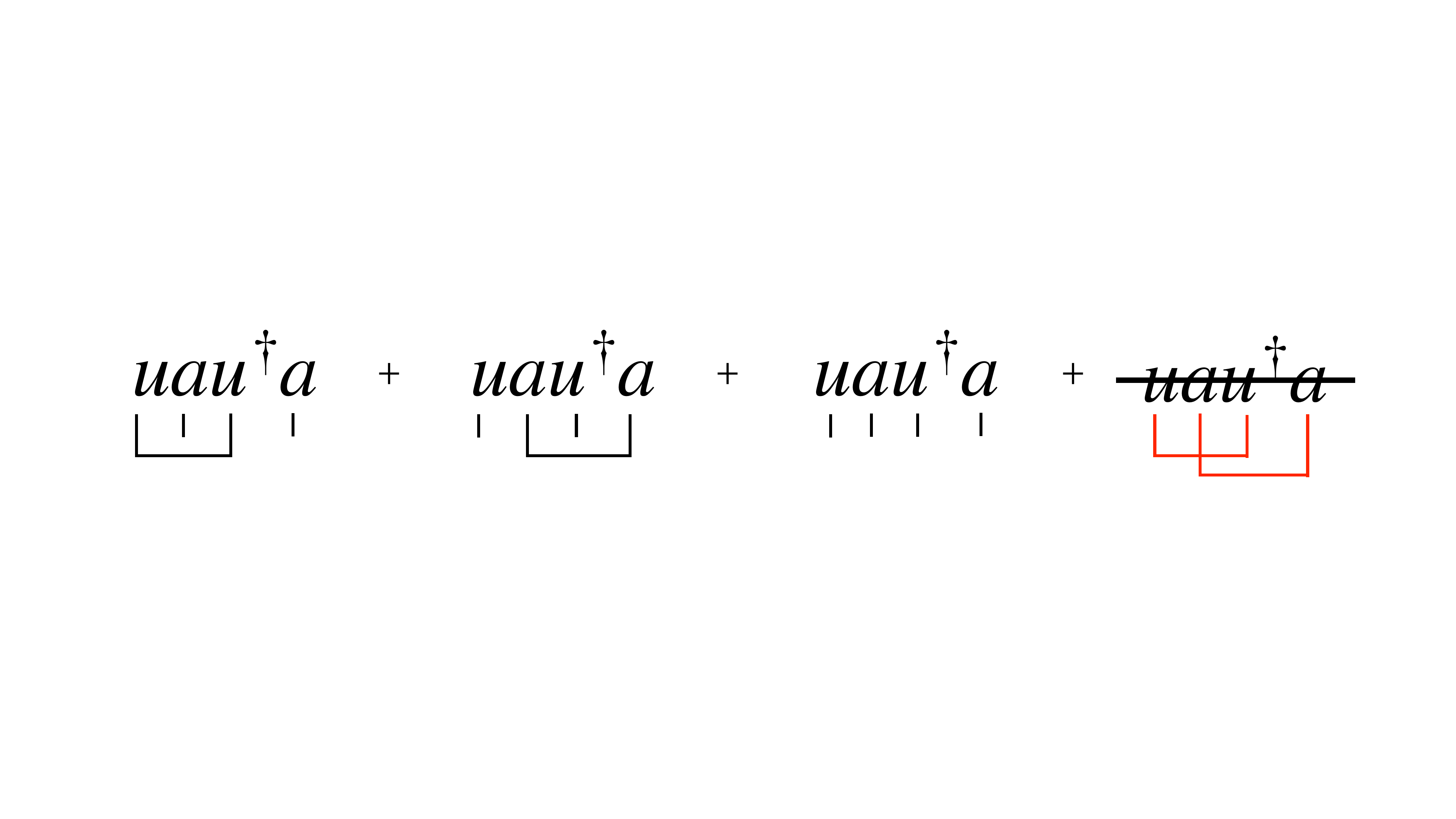}}\ , \label{all}\\ =& \kappa_2(u,u^\dagger)\kappa(a)^2+\kappa(a)^2\kappa(u)\kappa(u^\dagger)+\kappa_2(a,a)\kappa(u)\kappa(u^\dagger).\label{uu}
\end{align}
Note that the last term of \eqref{all} does not contribute because of the restriction to non-crossing partitions in the sum \eqref{cumudef}, and cases where $u$ or $u^{\dagger}$ is contracted with $a$ do not contribute due to the freeness of $u$ and $a$. Now the first two terms in \eqref{uu} vanish as $\kappa(a)=0$, so that we get 
\begin{equation}
\begin{aligned}
    \tau(uau^\dagger a) &= \kappa_2(a,a)\kappa(u)\kappa(u^\dagger)=\eta(t)^2\ ,
\end{aligned}
\end{equation}
which does not involve any mixed cumulants of $u$ and $u^{\dagger}$. It can readily be checked that we get the same result for $\tau(uaua)$. In general, for higher $n>1$, any mixed cumulants of $u$ and $u^\dagger$ are accompanied by some odd cumulant of $a$ that vanishes, so that we have \eqref{f4}. 
This is because all terms with contractions of $u$ with $u^{\dagger}$ also entail a factor of an odd free cumulant of $a$.

Our task has now been simplified to the much simpler one of calculating $\tau((ua)^{2n})$.  We can now use the free multiplicative convolution of $a$ and $u$: 
\begin{equation}
\tau((ua)^{2n}) = \sum_{\pi\in {\rm NC}(2n)} \prod_{V\in\pi}\kappa_{|V|}(a)\prod_{W\in \pi^c}m_{|W|}(u)\ , %\label{otoc_gen}
\end{equation}
where 
\be 
\kappa_{n}(a) \equiv \kappa_n(\, \underbrace{a, ...,a}_{n \text{times}}\,), \quad m_{n}(u) \equiv \tau(u^n) \, .
\ee
This formula follows from starting with the moment-cumulant relation \eqref{cumudef} for the case of two free random variables, and then repackaging the free cumulants of $u$ back into moments. The derivation can be found in~\cite{nica2006lectures}.

\begin{table}[!h]
  \centering
  % vertical bars added: ||Y|Y|
  %\begin{tabularx}{\linewidth}%{|M{1.2cm}|C|C|}
\begin{tabular}{|M{1.1cm}|M{0.5\linewidth}|M{0.4\linewidth}|}
    \hline
    \textbf{$n$} & \bf{Even NC Partitions $\pi$} & \bf{$\overline{\OTOC_n(t)}$} \\
    \hline
   1 &  $(1,2)$, $c=1$, $|\pi^c|=1^2$ & \underline{ $\eta(t)^2$}  \\[20pt]
    \hline
    2 & $(12)(34)$, $c=2$, $|\pi^c|=1^22^1$ \quad \quad \quad \quad  \quad  \quad 
    $(4321)$, $c=1$, $|\pi^c|=1^4$
    &  $\underline{2\eta(t)^2\eta(2t)}-\eta(t)^4$ \\[10pt]
    \hline 
   3 & $(12)(34)(56)$, $c=2$, $|\pi^c|=1^33^1$\quad \quad \quad \quad 
   $(12)(36)(45)$, $c=3$, $|\pi^c|=1^22^2$ \quad \quad \quad \quad $(12)(3456)$ , $c=6$, $|\pi^c|=1^42^1$\quad \quad \quad \quad   $(123456)$, $c=1$, $|\pi^c|=1^6$ & $\underline{2\eta(t)^3\eta(3t)}+\underline{3\eta(t)^2\eta(2t)^2}-6\eta(t)^4\eta(2t)+2\eta(t)^6$ \\[20pt]
    \hline
       4 & $(12)(34)(56)(78)$, $c=2$, $|\pi^c|=1^44^1$\quad \quad \quad \quad $(12)(34)(58)(67)$, $c=8$, $|\pi^c|=1^32^13^1$\quad \quad \quad \quad $(14)(23)(58)(67)$, $c=4$, $|\pi^c|=1^22^3$\quad \quad \quad \quad $(1234)(56)(78)$, $c=4$, $|\pi^c|=1^53^1$\quad \quad \quad \quad $(1234)(58)(67)$, $c=4$, $|\pi^c|=1^42^2$\quad \quad \quad \quad $(1256)(34)(78)$, $c=8$, $|\pi^c|=1^42^2$\quad \quad \quad \quad
   $(1234)(5678)$, $c=4$, $|\pi^c|=1^62^1$ \quad \quad \quad \quad $(12)(345678)$ , $c=8$, $|\pi^c|=1^62^1$\quad \quad \quad \quad   $(12345678)$, $c=1$, $|\pi^c|=1^8$ & $\underline{2\eta(t)^4\eta(4t)}+\underline{8\eta(t)^3\eta(2t)\eta(3t)}+\underline{4\eta(t)^2\eta(2t)^3}-8\eta(t)^5\eta(3t)-20\eta(t)^4\eta(2t)^2+20\eta(t)^6\eta(2t)-5\eta(t)^8$ \\[20pt]
    \hline
    5 & $(12)(34)(56)(78)(910)$, $c=2$, $|\pi^c|=1^55^1$\quad \quad \quad \quad $(12)(34)(56)(710)(89)$, $c=10$, $|\pi^c|=1^42^14^1$\quad \quad \quad \quad $(12)(34)(510)(67)(89)$, $c=5$, $|\pi^c|=1^43^2$\quad \quad \quad \quad $(12)(34)(510)(69)(78)$, $c=20$, $|\pi^c|=1^32^23^1$\quad \quad \quad \quad $(12)(310)(49)(58)(67)$, $c=5$, $|\pi^c|=1^22^4$\quad \quad \quad \quad $(1234)(510)(69)(78)$, $c=50$, $|\pi^c|=1^42^3$\quad \quad \quad \quad $(1234)(56)(710)(89)$, $c=60$, $|\pi^c|=1^52^13^1$\quad \quad \quad \quad $(1234)(56)(78)(910)$, $c=10$, $|\pi^c|=1^64^1$\quad \quad \quad \quad $(1234)(56710)(89)$, $c=35$, $|\pi^c|=1^62^2$\quad \quad \quad \quad  $(1234)(5678)(910)$, $c=10$, $|\pi^c|=1^73^1$\quad \quad \quad \quad $(123456)(710)(89)$, $c=35$, $|\pi^c|=1^62^2$\quad \quad \quad \quad $(123456)(78)(910)$, $c=10$, $|\pi^c|=1^73^1$\quad \quad \quad \quad
   $(1234)(5678910)$, $c=10$, $|\pi^c|=1^82^1$ \quad \quad \quad \quad $(12)(345678910)$ , $c=10$, $|\pi^c|=1^82^1$\quad \quad \quad \quad   $(12345678910)$, $c=1$, $|\pi^c|=1^{10}$ & $\underline{2\eta(t)^5\eta(5t)}+\underline{5\eta(t)^2\eta(2t)^4}+\underline{5\eta(t)^4\eta(3t)^2}+\underline{10\eta(t)^4\eta(2t)\eta(4t)}+\underline{20\eta(t)^3\eta(2t)^2\eta(3t)}+14\eta(t)^{10}-70\eta(t)^8\eta(2t)+105\eta(t)^6\eta(2t)^2-50\eta(t)^4\eta(2t)^3+30\eta(t)^7\eta(3t)-60\eta(t)^5\eta(2t)\eta(3t)-10\eta(t)^6\eta(4t)$ \\[20pt]
    \hline
  \end{tabular}
  \caption{For $n$ from 1 to $5$, we list the non-crossing  partitions of $2n$ elements contributing to \eqref{eq:otocGUE}. We treat partitions that are related by a cyclic shift as equivalent and only write down one element of each such equivalence class explicitly, indicating the multiplicity by $c$. $|\pi^c|$ indicates the block structure of the Kreweras complement of $\pi$:  for example, $|\pi^c|=|W_1|^2 |W_2|$ indicates that there are 2 blocks of size $|W_1|$ and 1 block of size $|W_2|$. The resulting full expressions for  $\mathrm{OTOC}_n(A(t), A)$ are shown on the right. We have underlined the terms with the smallest number of total factors of $\eta$ which come from noncrossing perfect matchings ($\pi$ with only $2$-cycles).}
  \label{gue_table}
\end{table}

Now $m_n(u)$ can be found using the GUE average: 
\begin{equation}\label{eq:momentsGUE}
    m_n(u)=  \lim_{d\to\infty}\frac{1}{d}\overline{\Tr[e^{-inHt}]}= \frac{J_1(2nt)}{nt}\ .
\end{equation}
The random variable $a$ is a traceless involution, $a^2=1$, so its moments are simple:
\begin{equation}\label{eq:momentsinvolution}
    m_{2n}(a)=1,\quad m_{2n+1}(a)=0\,
\end{equation}
Its free cumulants can be found by inverting the relation 
  \be 
  m_n(a)=\sum_{\pi\in {\rm NC}(n)}\prod_{B\in\pi}\kappa_{|B|}(a)\, . 
  \ee  
  This is a relatively straightforward special case of \eqref{cumudef}, and  it can be seen using standard techniques from~\cite{nica2006lectures} that
  \be \label{cumulantsof_a}
\kappa_n(a) =\begin{cases}
(-1)^{\frac{n}2-1}C_{\frac{n}2-1} \ ,\quad n\ \text{is even}\\
0\ ,\ \ \quad\quad\quad\quad\quad\quad n\ \text{is odd}
\end{cases}
  \ee
Putting everything together, we end up with the formula \eqref{eq:otocGUE} in the main text. We explicitly write down the formula up to $n=5$ in Table~\ref{gue_table}.

\section{Higher-point OTOCs in random unitary circuits}
\label{app:random_circuits}

In this section, we will explain  the calculation of $\otoc_n$ in random unitary circuits, and provide an argument for the conjectured result in~\eqref{otoc_formula_randomcircuit_mt}. We use the partition function method developed for studying the $n$-th R\'enyi entropy $S_n$ in random unitary circuits in~\cite{zhou_nahum}. For readers familiar with the methods used in~\cite{zhou_nahum}, we note that the partition function for the calculation of $\otoc_n$ has the same ``bulk'' rules as the partition function for $S_n$, but the boundary conditions are different. 
This allows us to make use of the ``membrane picture'' developed  for $S_n$ while calculating $\otoc_n$. The difference in the boundary conditions makes the calculation much more non-trivial than that of $S_n$, or that of $\otoc_2$ in \cite{operator_spreading_adam, operator_spreading_tibor}, and we need to take into account certain effects that are not captured by the membrane picture. %In particular, for certain configurations in the partition function, we find that assuming an intuitive formula based on the membrane picture is not self-consistent while calculating $\otoc_n$ (in a sense we will make explicit below). 

We will present the partition function calculation for $\otoc_n$ and review the relevant assumptions of the membrane picture in a mostly self-contained way in this appendix for clarity. The basic setup follows directly from that in \cite{zhou_nahum}. Our main innovation is the argument in  sections~\ref{sec:no_dw} and~\ref{sec:two_dw} showing that the membrane picture does not apply in  certain cases due to subtle cancellations among various configurations.  This argument is the underlying reason why only $k \geq n/2$ appear in the sum in \eqref{otoc_formula_randomcircuit_mt}. Recall that the appearance of only $k \geq n/2$ in this sum makes it plausible that  $\otoc_n$ decay with $n$ for $t\geq v_B$, and hence that $I_{\rm free}$ becomes finite for $t\geq v_B$.

\subsection{Definitions and properties of permutation states} \label{sec:background}

To set up the partition function calculation, it will be useful to first introduce certain states on $2n$ copies of the system associated with permutations in $\sS_n$. In this subsection, we define these states and discuss their relevant properties. We also  summarize certain lemmas about the permutation group which will later be useful.

For $\sigma \in \sS_n$ and any operator $O$ acting on one site, define  the states $\ket{O, \sigma}$ on $2n$ copies of one site: 
\be 
\braket{i_1 i_1'... i_n i_n'|O, \sigma} = \frac{1}{q^{n/2}}O_{i_1 i'_{\sigma(1)}}...O_{i_n i'_{\sigma(n)}}
\ee 
Recall that $q$ is the Hilbert space dimension of one site. In particular, we can label the states corresponding to $O = \mathbf{1}$ as follows: 
\be 
\braket{i_1 i_1'... i_n i_n'|\sigma} = \frac{1}{q^{n/2}}\delta_{i_1 i'_{\sigma(1)}}...\delta_{i_n i'_{\sigma(n)}} \, . \label{sigmadef}
\ee 
Two permutations which we will denote with a special notation are the identity permutation, denoted by $e$, and the cyclic permutation $(n \, n-1\, n-2 \, ... \, 1)$, denoted by $\eta$.

The overlaps among the permutation states are given by 
\begin{align}
\braket{\sigma|\tau} = q^{-d(\sigma, \tau)}\label{perm_overlap}
\end{align}
where $d(\sigma, \tau)$ is the Cayley distance between the permutations $\sigma$ and $\tau$, defined as the minimum number of transpositions with which one needs to multiply $\sigma$ to obtain $\tau$ (or vice versa). More explicitly, 
\be 
d(\sigma, \tau) = n - k(\sigma \tau^{-1}) \label{cayley} 
\ee
where $k(\sigma)$ denotes the total number of cycles in the permutation  $\sigma$. Another useful overlap which will play a role in the discussion below is 
\be  \braket{\sigma|O, \tau} = \frac{1}{q^n} \prod_{i=1}^{k(\sigma \tau^{-1})}\Tr[O^{|c_i|}] \label{frule}
\ee
where $c_i \in \sigma \tau^{-1}$ refers to the cycles in  $\sigma \tau^{-1}$, and $|c_i|$ is the number of elements in the $i$-th cycle. In the discussion below, $O$ will be either $A$ or $B$, both of which have the properties that 
\be 
O^2 =\mathbf{1}, \quad  \Tr[O]= 0 \, . 
\ee
For such $O$, we always have $\braket{\sigma|O, \tau}=0$ for odd $n$, as at least one of the $|c_i|$ in \eqref{frule} is odd. For even $n$, 
\be 
\braket{\sigma|O, \tau} = \begin{cases}
q^{-d(\sigma, \tau)} & \sigma \tau^{-1} \in P_{\rm even}\\ 
0 & \sigma \tau^{-1} \notin P_{\rm even}
\end{cases} \label{op_overlap}
\ee
where we have defined a subset $P_{\rm even}\subset \sS_n$ as follows: 
\be 
P_{\rm even} \equiv \{\tau \in \sS_n|~ |c_i| \text{ is even for each } c_i \in \tau \} \label{xdef} \, . 
\ee
For example, for $\sS_6$, $(12)(36)(45) \in P_{\rm even}$ while $\sS_6$, $(12)(36)(4)(5) \notin P_{\rm even}$.

Another subset of permutations that will play an important role below is 
\be 
P_{\rm NC} \equiv \{\tau \in \sS_n|\,  k(\tau) + k(\eta \tau^{-1}) = n+1\,\} .  \label{ncdef}
\ee 
Such $\tau$ are in one-to-one correspondence with the non-crossing partitions of $n$ elements, and we will sometimes refer to them as ``non-crossing permutations.'' Note  that the following inequality is obeyed by any permutation in $\tau \in \sS_n$:
\be 
k(\tau) + k(\eta \tau^{-1}) \leq n+1 \, . \label{nc_ineq} 
\ee
This is equivalent to the triangle inequality 
\be 
d(\tau, e) + d(\tau, \eta) \geq d(e, \eta) \, . 
\ee
The non-crossing permutations saturate this inequality. 

Let us now state two lemmas about  permutations in these subsets of  $\sS_n$ which we will use in the later discussion. These lemmas can be referred to when they appear in the later subsections. The proofs of both lemmas are provided in Appendix~\ref{app:perm_proofs}. To state Lemma~\ref{lemma_noncross_1}, in addition to the sets $P_{\rm even}$ and $P_{\rm NC}$ defined above, let us define a third set 
\be 
Q_{\rm singleton} \equiv  \{\tau \in \sS_n|~ |c_i| =1 \text{ for at least one }  c_i \in \eta\tau^{-1} \} \, .  \label{wdef}
\ee
By definition, its complement is 
\be 
\bar Q_{\rm singleton} \equiv  \{\tau \in \sS_n|~ |c_i| >1 \text{ for all  }  c_i \in \eta\tau^{-1} \} \, .  \label{wdefq}
\ee

\begin{restatable}{lemma}{lemmagone}\label{lemma_noncross_1} Let $\tau$ be an NC permutation of $n$ elements where $n$ is even.
 (a) For any $\tau \in P_{\rm even}$, we have $k(\eta\tau^{-1})\geq \frac{n}{2} +1$, and $\eta \tau^{-1}$ has at least one cycle of length 1. (b) For any $\tau \in \bar Q_{\rm singleton}$, we have  $k(\tau)\geq \frac{n}{2} +1$ and $ \tau \notin P_{\rm even}$.

\end{restatable}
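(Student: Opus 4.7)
The plan is to reduce both claims to a short averaging argument built on top of the defining equation
\begin{equation}
k(\tau) + k(\eta\tau^{-1}) = n+1
\end{equation}
of a non-crossing permutation $\tau \in P_{\rm NC}$ (cf.\ \eqref{ncdef}). Throughout, $n$ is even.

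For part (a), the strategy is as follows. First I would observe that the hypothesis $\tau \in P_{\rm even}$ directly constrains $k(\tau)$: since every cycle of $\tau$ has length at least $2$ and the cycle lengths sum to $n$, we immediately get $k(\tau) \leq n/2$. Plugging this into the NC identity gives
\begin{equation}
k(\eta\tau^{-1}) \;=\; n+1 - k(\tau) \;\geq\; \tfrac{n}{2} + 1,
\end{equation}
which is the first statement of (a). For the second statement, I would use an averaging/pigeonhole argument on $\eta\tau^{-1}$: its cycles partition the $n$ elements, so the sum of cycle lengths equals $n$, while the number of cycles is at least $n/2 + 1$. Hence the average cycle length is strictly less than $2$, which forces at least one cycle of length exactly $1$.

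For part (b), the argument is the mirror image. The hypothesis $\tau \in \bar Q_{\rm singleton}$ says every cycle of $\eta\tau^{-1}$ has length $\geq 2$, so by the same length-sum bound $k(\eta\tau^{-1}) \leq n/2$. Combining with the NC identity yields $k(\tau) \geq n/2 + 1$, as desired. For the final claim $\tau \notin P_{\rm even}$, I would simply note that the implication derived in part (a), namely $\tau \in P_{\rm even} \Rightarrow k(\tau) \leq n/2$, together with the bound $k(\tau) \geq n/2 + 1$ just established, gives an immediate contradiction.

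There is no serious obstacle here: both parts follow from the NC identity, the trivial ``sum of cycle lengths equals $n$'' identity, and the parity/length constraints defining $P_{\rm even}$ and $\bar Q_{\rm singleton}$. The only point that requires a small amount of care is making sure that the averaging argument in part (a) is stated as a strict inequality (average cycle length $< 2$), so that the conclusion is a cycle of length exactly $1$ rather than merely ``at most $2$''; this is why the bound $k(\eta\tau^{-1}) \geq n/2 + 1$ (strict, thanks to $n$ being even and the $+1$) is used rather than just $k(\eta\tau^{-1}) \geq n/2$.
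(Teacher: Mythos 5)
Your proof is correct and follows essentially the same route as the paper: in both, the key moves are the bound ``all cycle lengths $\geq 2 \Rightarrow k \leq n/2$'' combined with the NC identity $k(\tau)+k(\eta\tau^{-1})=n+1$, and a counting argument for the singleton. The only cosmetic difference is that you phrase the singleton step as an averaging (pigeonhole) argument while the paper phrases it as a proof by contradiction, but these are the same observation in two guises.
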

%\noindent %Note that by definition, the complement of $Q_{\rm even}$ is 
\begin{comment}
\begin{equation}
    \bar Q_{\rm singleton} \equiv  \{\tau \in \sS_n|~ |c_i|>1 \text{ for all } \in \eta\tau^{-1} \} \ . \label{wcdef}
\end{equation}
\end{comment}
Part (a)  of Lemma~\ref{lemma_noncross_1} implies in particular that 
\be 
P_{\rm even}\cap P_{\rm NC}\subseteq  Q_{\rm singleton} \cap P_{\rm NC}\, .
\ee
Another useful statement about $ Q_{\rm singleton}$  is the following lemma: 
\begin{restatable}{lemma}{lemmagtwo}\label{lemma_noncross_2}
For $\tau \in  Q_{\rm singleton} \cap P_{\rm NC}$, any $\sigma$ that satisfies  
\be \label{eq:sigma_on_geodesic}
d(\eta, \tau) = d(\eta, \sigma) + d(\sigma, \tau) 
\ee
(i.e., any $\sigma$ that lies on the geodesic between $\eta$ and $\tau$)
is also in $ Q_{\rm singleton} \cap P_{\rm NC}$. 
\end{restatable}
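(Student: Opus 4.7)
The lemma packages two assertions about $\sigma$: that $\sigma \in P_{\rm NC}$, and that $\eta\sigma^{-1}$ has at least one fixed point. My plan is to handle them independently, both by unpacking the geodesic hypothesis $d(\eta,\sigma) + d(\sigma,\tau) = d(\eta,\tau)$ via the identity $d(\mu,\nu) = n - k(\mu\nu^{-1})$ into the additive cycle-count relation
\begin{equation}\label{eq:addit_cycle}
k(\eta\sigma^{-1}) + k(\sigma\tau^{-1}) = n + k(\eta\tau^{-1}),
\end{equation}
which says precisely that $\eta\tau^{-1} = (\eta\sigma^{-1})(\sigma\tau^{-1})$ is a minimal (geodesic) factorization in the Cayley graph of $\sS_n$.

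For the non-crossing claim I would use a triangle-inequality sandwich. The assumption $\tau \in P_{\rm NC}$ gives $d(e,\tau) + d(\tau,\eta) = d(e,\eta) = n-1$, so combining with the geodesic from $\eta$ to $\tau$ yields
\begin{equation*}
d(e,\sigma) + d(\sigma,\eta) \;\le\; d(e,\tau) + d(\tau,\sigma) + d(\sigma,\eta) \;=\; d(e,\tau) + d(\tau,\eta) \;=\; n-1,
\end{equation*}
while the matching lower bound $d(e,\sigma) + d(\sigma,\eta) \geq d(e,\eta) = n-1$ is just the triangle inequality for $(e,\sigma,\eta)$. Equality throughout is exactly the condition $\sigma \in P_{\rm NC}$.

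For the singleton claim, my plan is to invoke the classical structural property of minimal factorizations in the symmetric group: whenever \eqref{eq:addit_cycle} holds for a product $\pi_1\pi_2$, each cycle of $\pi_1$ and each cycle of $\pi_2$ is entirely contained in a single cycle of $\pi_1\pi_2$. Granting this cycle-refinement statement, any fixed point $i$ of $\eta\tau^{-1}$ (guaranteed to exist because $\tau\in Q_{\rm singleton}$) constitutes a singleton cycle of the product, so the cycles of $\eta\sigma^{-1}$ and $\sigma\tau^{-1}$ passing through $i$ are both forced to equal $\{i\}$. In particular $\eta\sigma^{-1}$ fixes $i$, so $\sigma \in Q_{\rm singleton}$; the same argument in fact shows that $\eta\sigma^{-1}$ inherits every singleton cycle of $\eta\tau^{-1}$.

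The main obstacle in writing this out is the cycle-refinement property itself. My plan to make the argument self-contained is a short induction on $d(\pi_1,e)$, reducing to the base case in which $\pi_1$ is a single transposition $(a\,b)$. Left-multiplying a permutation by $(a\,b)$ either splits one cycle of $\pi_2$ into two (if $a$ and $b$ lie in a common cycle) or merges two cycles into one (otherwise); the additivity \eqref{eq:addit_cycle} selects the merging regime, in which the two pre-merge cycles together form the merged cycle, so that the cycles of $\pi_2$ refine those of $(a\,b)\pi_2$. Iterating one transposition at a time along any geodesic factorization of $\pi_1$ delivers the general refinement statement and closes the proof.
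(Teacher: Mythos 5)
Your proposal is correct and essentially matches the paper's own proof: the $P_{\rm NC}$ part uses the identical triangle-inequality sandwich, and the $Q_{\rm singleton}$ part rests on the same observation that each transposition step along a geodesic factorization can only merge cycles (the paper tracks the non-increasing singleton count directly along the walk from $\eta$ to $\tau$; you package the same mechanism as a cycle-refinement property of minimal factorizations and deduce fixed-point preservation from it, which proves a slightly stronger statement but via the same core idea).
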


\subsection{Partition function for $\OTOC_n$}\label{sec:partition_function}

We want to calculate $\otoc_n(A(t):B)$ for the setup shown in Fig.~\ref{fig:randomcircuit_diagram}. Recall that each two-site unitary $U^{(t)}_{x, x+1}$ in the circuit is drawn independently from the ensemble of Haar-random unitaries from $\mathbf{U}(q^2)$. The full time-evolution operator $U(t)$ up to time $t$ is constructed by taking products of these two-site unitaries. Below, $L$ denotes the total number of sites. We will assume that $L$ is large enough that we can ignore all edge effects. 
We will always assume that $A$ lies at location $x'=0$ and $B$ at $x'=x$, and we will often write $\otoc_n$ as a function of $t$ and the velocity $v=x/t$. 

In terms of the permutation states introduced in the previous subsection, we can rewrite $\otoc_n = \frac{1}{q^L}\Tr[\le(U(t)AU(t)^{\dagger}B\ri)^n]$ as a transition amplitude on $2n$ copies of the system as follows: 
\begin{align} 
{\rm OTOC}_n\le(t, v\ri) =  &\bra{\eta} \bra{\eta} ... \bra{\eta} \bra{B, \eta}_{x'=vt}\bra{\eta} ... \bra{\eta} \nn 
&(U(t) \otimes  U(t)^{\ast})^{\otimes n} \nn 
&\ket{e}\ket{e} ... \ket{e} \ket{A, e}_{x'=0} \ket{e} ... \ket{e} \times q^{(n-1)L} \,  . \label{g36} 
\end{align}
The above expression is simply a rewriting of $\otoc_n$ by manipulating various indices, and is true for any $U(t)$. This rewriting proves useful in random unitary circuits because the average of $(U\otimes U)^{\otimes 2n}$ for each $U=U_{x, x+1}^{(t)}$ can also be expressed in terms of permutation states: 
\be 
\overline{\le(U^{(t)}_{x, x+1} \otimes {U^{(t)}_{x, x+1}}^{\ast}\ri)^{\otimes n}} =\sum_{\sigma, \tau \in \sS_n}  g^{\sigma \tau}  \ket{\sigma}_x\ket{\sigma}_{x+1} \bra{\tau}_x\bra{\tau}_{x+1} \label{ux}
\ee
where $g^{\sigma \tau}$ is the inverse of the matrix 
\be 
g_{\sigma \tau} = (q^2)^{-d(\sigma, \tau)} 
\ee
 (In terms of the more standard notation of the Weingarten function, $g^{\sigma \tau} = (q^2)^n \,  \text{Wg}(\sigma \tau^{-1}, q^2)$.)

By putting \eqref{ux} into \eqref{g36} for each two-site random unitary appearing in $U(t)$, we get a partition function on a two-dimensional lattice, with the top and bottom boundary conditions given by the top and bottom lines of \eqref{g36}, and the bulk interactions determined by \eqref{ux}. In figure Fig.~\ref{fig:pf_derivation},  we further explain the bulk interactions, and how some of the sites can be integrated out to get a triangular lattice. Each $s_i$ corresponds to some permutation in $\sS_n$, so there are $n!$ states on each site of the lattice. 

\begin{figure}[!h]
    \centering
    \includegraphics[width=0.9\linewidth]{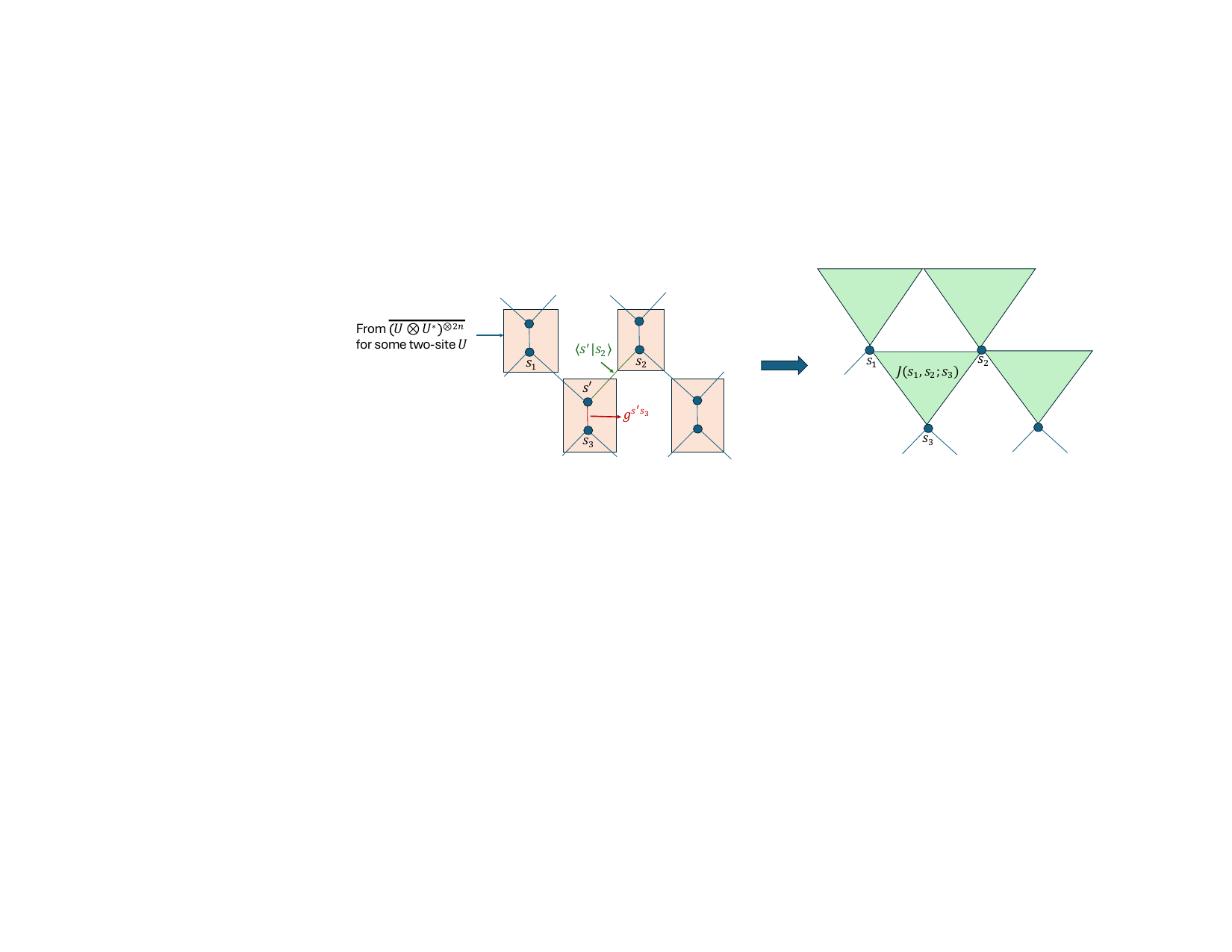}
\caption{We show a segment in space-time of the average of $(U(t)\otimes U(t)^*)^{\otimes n}$ over random unitaries, obtained using \eqref{ux} and the structure of the circuit in Fig.~\ref{fig:randomcircuit_diagram}.   The vertical lines come from \eqref{ux} and the diagonal lines correspond to taking inner products between permutation states according to the rule \eqref{perm_overlap}. By integrating out each of top states such as $s'$, we get a triangular lattice as shown on the right.} \label{fig:pf_derivation}
\end{figure}

Explicitly, the interactions $J(s_1, s_2; s_3)$  are given by 
\be 
J(s_1, s_2; s_3) = \sum_{\mu} g^{\mu \, s_3} \braket{s_1| \mu} \braket{s_2| \mu} \label{jdef}
\ee
These interactions can be either positive or negative. They take a simple form for general $n$ only in two special cases: 
\begin{enumerate}
\item[(i)] If 
$s_1=s_2$, we have $J(s_1,s_1;s_3) = \delta_{s_1 s_3}$. 
\item[(ii)] If  $d(s_1, s_2)=1$, then 
\be 
J(s_1, s_2; s_3) = \frac{q}{q^2+1} (\delta_{s_1 s_3} + \delta_{s_2 s_3}) \, . \label{elementary}
\ee
\end{enumerate}
We expect in general that 
\be
|J(s_1, s_2; s_3)|\leq 1  \label{j_cost}
\ee 
for all $q$ and $n$ and all $s_i$. We will prove this statement for the case $q=2$ as Lemma~\ref{lemma:inequality} in Appendix~\ref{app:perm_proofs}. The inequality is saturated only in case (i) listed above, i.e., when $s_1=s_2=s_3$.

By further taking into account the contributions from the top and bottom boundary conditions, 
the full  lattice and interaction rules are summarized in Fig.~\ref{fig:lattice_general}.

\begin{figure}[!h]
    \centering
    \includegraphics[width=0.8\linewidth]{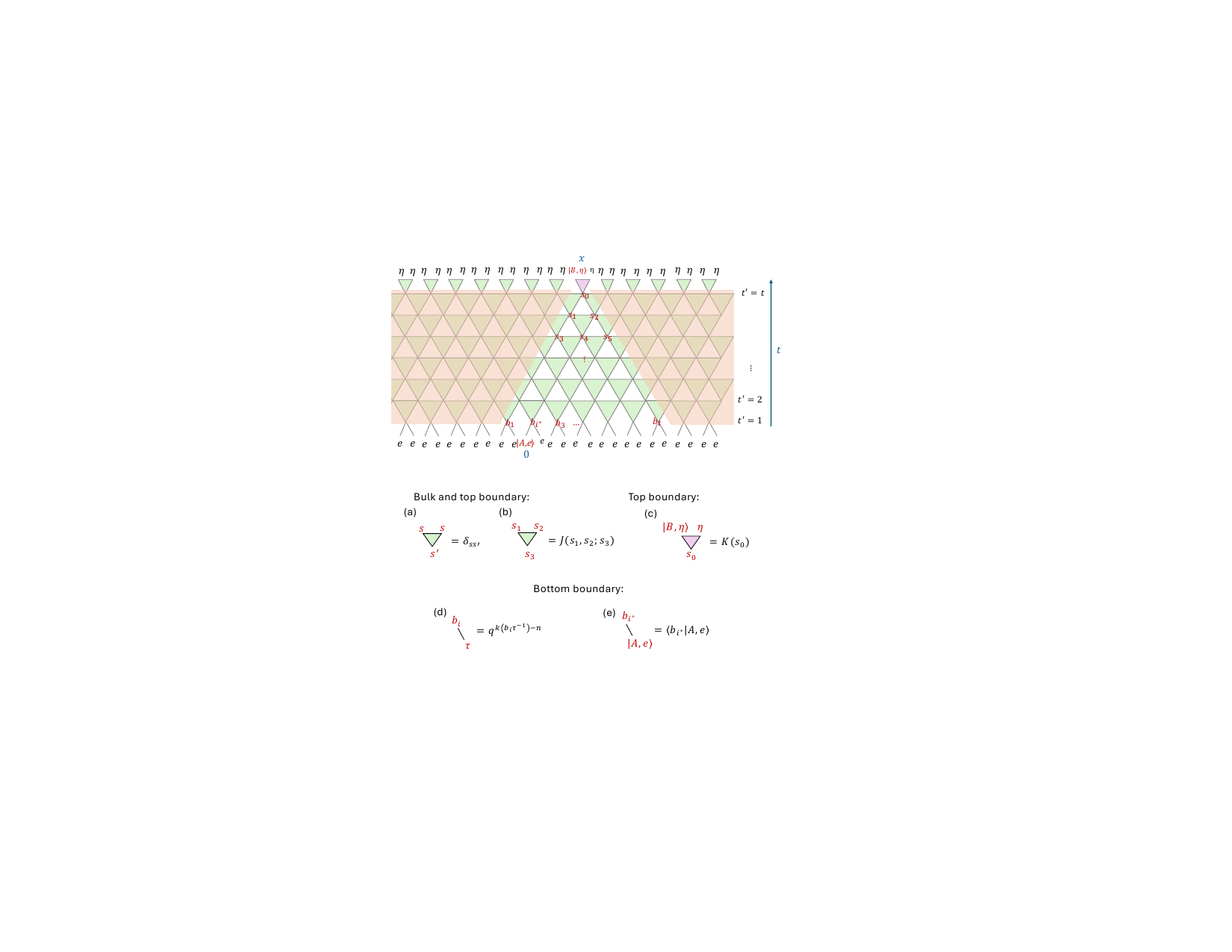}
    \caption{Lattice, interaction rules, and boundary conditions for the partition function for $\otoc_n$.}
    \label{fig:lattice_general}
\end{figure}
Rules (a) and (b) follow from the above statements about $J(s_1, s_2; s_3)$. Rule (c) can be deduced in a similar way to rule (b) using \eqref{frule} and \eqref{perm_overlap}, with \be 
K(s_0) = \sum_{ \tau} g^{\tau s_0} \braket{B, \eta|\tau} \braket{\eta|\tau} \, .   
\ee
$K(s_0)$ is generally non-zero for all $\sigma \in \sS_n$ and can take both positive and negative signs. In rules (d) and (e), we have used a special label $b_i$ for the states appearing in the lowermost label of the bulk, and used \eqref{frule} and \eqref{perm_overlap} to explain their contraction with the lower boundary conditions coming from the last line of \eqref{g36}. While only the $\tau= e$ case in rule (d) is needed so far, in the later discussion other quantities besides $\otoc_n$ will appear where $\tau$ will take other values. In the case $v< 1$, we use a distinguished label $b_{i^{\ast}}$  for the spin connected to $\ket{A, e}_x$ (for example, in the figure, $i^{\ast}=2$).

We consider all possible configurations of $\vec{s}$ and $\vec{b}$ on the triangular lattice of \eqref{fig:lattice_general}. For each configuration, we  multiply the contributions from each colored triangle, as well as the overlaps with the bottom boundary and the overall $q^{(n-1)L}$ factor in \eqref{g36}. We then sum over contributions from all $\vec{s}$, $\vec{b}$.

Let now note two  simplifications that follow immediately from rules (a)-(d): 
\begin{enumerate}
\item 
Rule (a) forces all spins in the shaded red region of the lattice to be $\eta$. All triangles in this shaded regions contribute a factor of 1. For this reason, only the undetermined $\vec s, \vec b$ spins are explicitly labelled in Fig.~\ref{fig:lattice_general}. 
\item 
${\rm OTOC}_n=0$ for all times when $n$ is odd, as in this case the overlap $\braket{b_i^{\ast}|A,e}=0$ for any choice of $b_i^{\ast}$, as discussed above \eqref{op_overlap}. 
\end{enumerate}

% We have used the label $b_i$, $i=1, ..., t$ for the undetermined spins on the bottom boundary, and $s_i$ for all remaining spins.  

We are then left with the task of summing over all possible configurations of $\vec s$, $\vec b$ for even $n$. Below, we will refer  to a triangle with $s_1\neq s_2$ ($s_1= s_2$) as a triangle with (without) a domain wall. Then the top layer at $t'=t$ has two adjacent domain walls, between $\eta$ and $s_0$ and between $s_0$ and $\eta$ respectively (unless $s_0=\eta$, in which case all other states are set to $\eta$ by rule (a)). The $n=2$ case studied in~\cite{operator_spreading_adam, operator_spreading_tibor} is simple, because there is only one possible type of domain wall, which moves downwards through the lattice according to the simple rule (ii) above. For general $n$,  $s_0$ can now be any of the $n!$ permutations in $\sS_n$. 

If $d(s_0, \eta)>1$, then unlike in the simple case \eqref{elementary}, $J(s_0, \eta; s_3)$
is generally non-zero for all $s_3\in \sS_n$ (in particular, $s_3$ is not forced to be either $s_1$ or $s_2$). This implies that the two domain walls starting at the top boundary can  in principle change into any number of further domain walls of various types as we go towards the bottom boundary. In fact, configurations where all three $s_i$ on each triangle are distinct from each other (i.e., where the number of triangles with domain walls is extensive in the volume of the unshaded region) will generically give a non-zero contribution. However, note that the general inequality \eqref{j_cost} makes it intuitive that each triangle with a domain wall comes with some cost, so that cases with an extensive number of domain walls are unlikely to dominate. We will state this intuition more formally in the next subsection. 

%However, based on studies of R\'enyi entropies in random unitary circuits~\cite{zhou_nahum}, it is understood that the dominant contributions  can often still be organized in terms of certain ``composite domain walls,'' as we will review in Sec.~\ref{sec:membrane_assumptions}. 

%\SV{It would be nice if we could say $J<1$ here to make it more intuitive that triangles with domain walls don't proliferate, but okay if we cannot.} \JW{no idea yet.}

Let us introduce some notation to better organize the sum over $\vec{s}$, $\vec{b}$ in $\otoc_n$: 
\be 
{\rm OTOC}_n = \sum_{\vec{s}, \vec{b}} \sC_{\text{top, bulk}}(\vec{s}, \vec{b})~  \sB(\vec{b})  
\ee
where 
\be 
\sC_{\text{top, bulk}}(\vec{s}, \vec{b}) = K(s_0)\prod_{\text{bulk triangles}} J(s_i, s_j ; s_k) 
\ee
and 
\be 
\sB(\vec b) = \begin{cases} q^{(n-1)2t} 
\braket{b_{i_{\ast}}|A, e} \prod_{i \neq i^{\ast}}\braket{b_i|e} & \text{if } v<1 \\
q^{(n-1)2t} \prod_{i }\braket{b_i|e} & \text{if }v>1
\end{cases}  \label{b_explicit}
\ee
 Note that $\sB$ comes from a combination of the overall factor of $q^{(n-1)L}$ in \eqref{g36} with the factors coming from overlaps on the lower boundary. (For sites in the shaded red region, the overlap $\braket{\eta|e} = q^{-(n-1)}$ cancels with $q^{(n-1)}$ from this factor.) It is useful to further organize the contributions according to $\vec{b}$: 
\be 
\otoc_n = \sum_{\vec b} \sC(\vec b) \, \sB(\vec b) , \quad \sC(\vec b) = \sum_{\vec s} \sC_{\text{top, bulk}}(\vec{s}, \vec{b})\, .  \label{cbdef}
\ee

Let us classify the configurations $\vec{b}$ near the lower boundary by the number and type of domain walls in them. Say there are $m$ domain walls in $\vec{b}$, whose locations are labelled $x_1, ..., x_m$ from left to right. Due to rule (a), we must have $x_1\geq x-t$ and $x_m \leq x+t$, with the first domain wall between $\eta$ and some $\tau_1$, and the last domain wall between some $\tau_{m-1}$ and $\eta$. It is convenient to replace the $x_i$ labels with the ``velocities'' $v_i \equiv \frac{x_i-x}{t}$.

\begin{figure}[!h]
    \centering
    \includegraphics[width=0.6\linewidth]{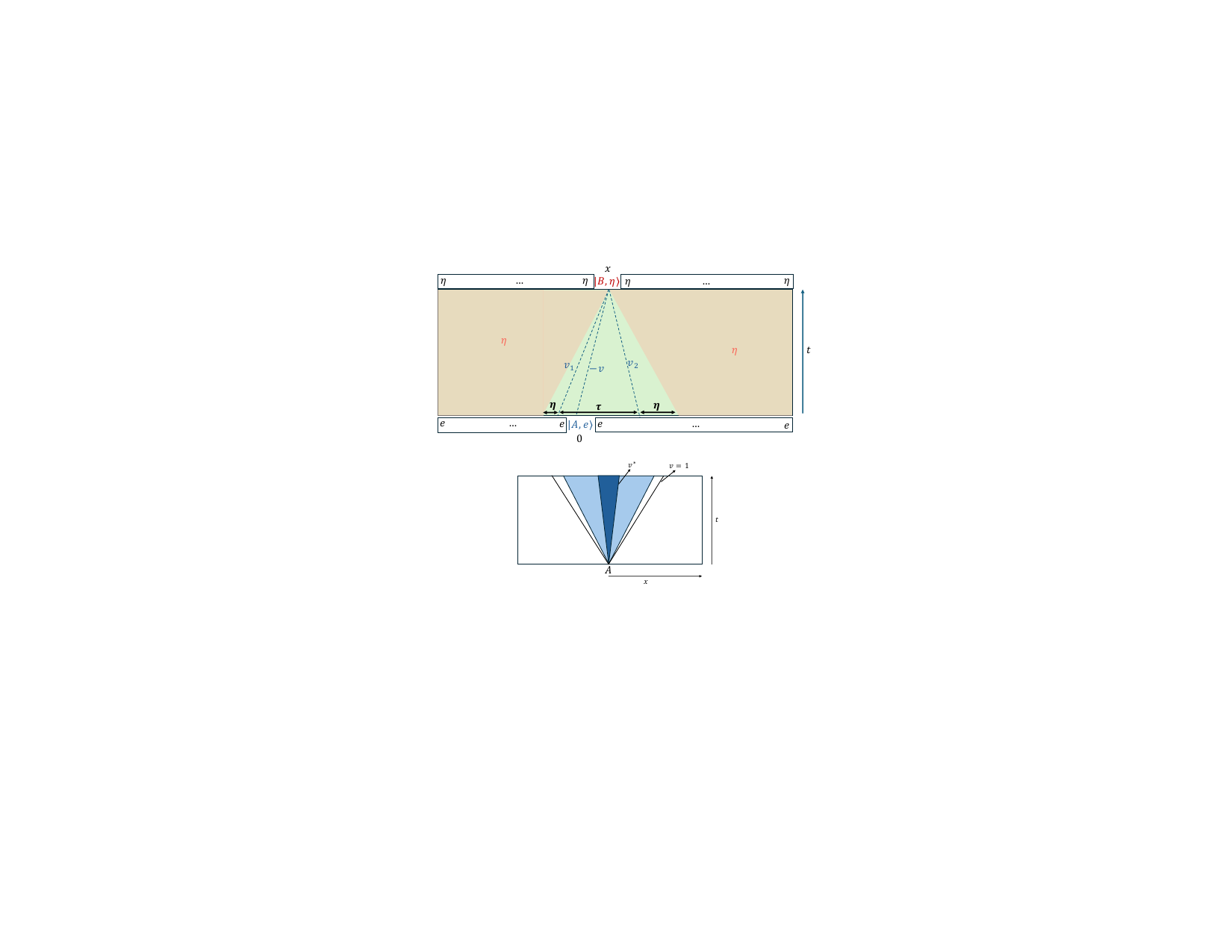}
    \caption{We schematically show the types of configurations corresponding to the second line of \eqref{exact}. We only fix the spins $b_i$ in the lowermost layer, and allow possible configurations in the green bulk region.}
    \label{fig:two_dw} 
\end{figure}

One possibility is that we have $m=0$, corresponding to no domain wall in $\vec{b}$. See Fig.~\ref{fig:no_dw} for examples of such configurations. 
There are no cases with just one domain wall, as we must have $\eta$ to the left and to the right of all domain walls. The next possibility is that we have a domain wall between $\eta$ and $\tau$ at some $v_1$ (for some arbitrary permutation $\tau \in \sS_n$), and then from $\tau$ to $\eta$ at $v_2$. We will use the shorthand $\vec b = (v_1, \tau, v_2)$ to denote this case. 
Similarly,  a configuration on the lower boundary with three domain walls can be labelled by $\vec{b}=(v_1, \tau_1, v_2, \tau_2, v_3)$. 
So the exact expression for $\otoc_n$ can be written as 
\begin{align}\label{exact}
\otoc_n(v, t) =~ &  \sC(b = \eta_i )\nn
+ & \sum_{v_1, \tau, v_2>v_1} \sC(\vec b = (v_1, \tau, v_2)) q^{(k(\tau)-1)t(v_2-v_1)} \braket{b_{i_{\ast}}|A, e}\nn
+& \sum_{v_1, \tau_1, v_2>v_1, \tau_2, v_3>v_2} \sC(\vec b = (v_1, \tau_1, v_2, \tau_2, v_3)) q^{(k(\tau_1)-1)t(v_2-v_1)} q^{(k(\tau_2)-1)t(v_3-v_2)} \braket{b_{i_{\ast}}|A, e} \nn 
 +& ...   
\end{align}
where we have used the explicit form of the $\braket{b_i|e}$ overlaps in the second and third lines. The $...$ indicates further contributions with more than three domain walls in $\vec{b}$. See  Fig.~\ref{fig:two_dw} for a schematic representation of the contributions to the second line of \eqref{exact}.

\subsection{Different regimes of $\OTOC_n$} \label{sec:regimes}

\begin{figure}[!h]
    \centering
    \includegraphics[width=0.5\linewidth]{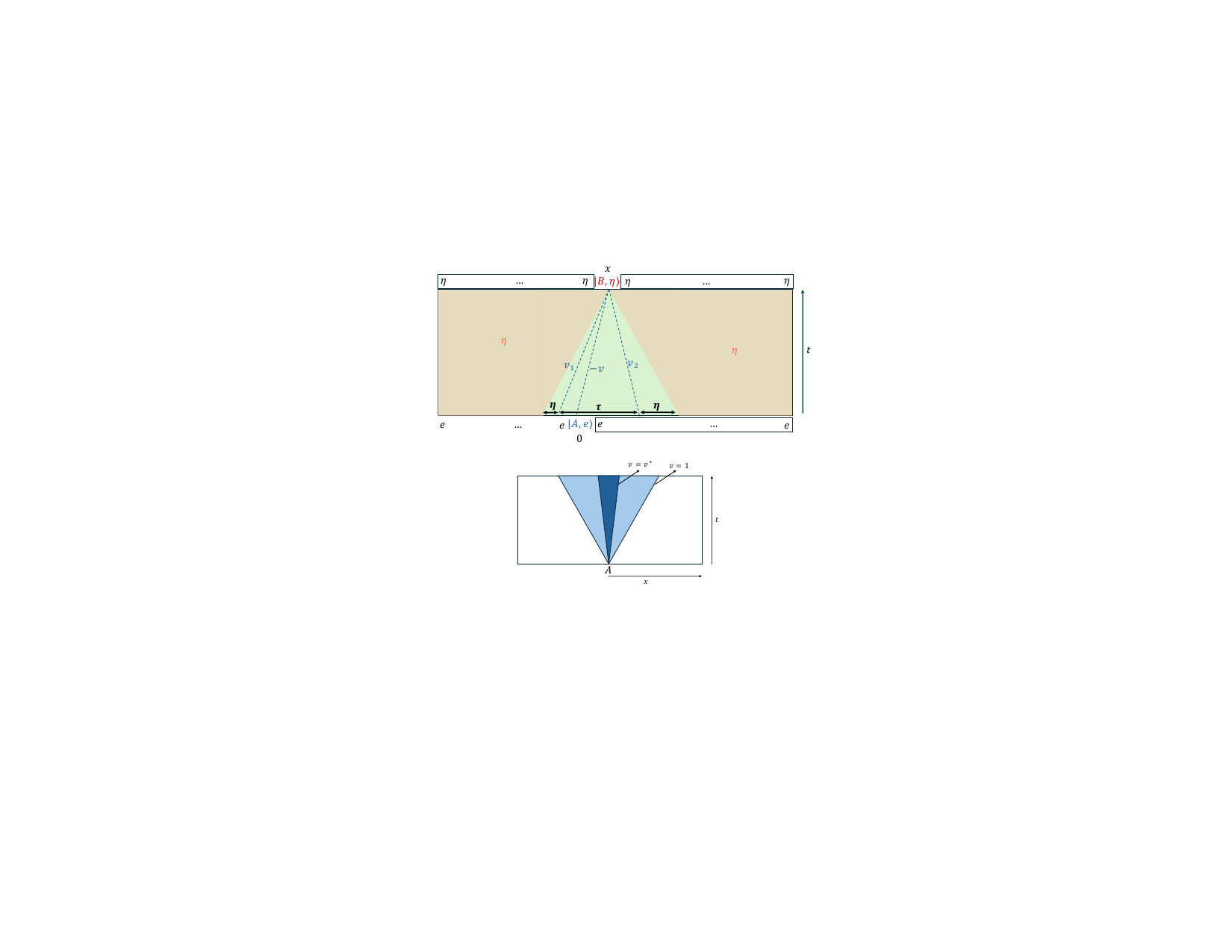}
    \caption{The unshaded, light blue, and dark blue regions respectively correspond to the first, second, and third lines of \eqref{otocn_tv}. We will focus on the leading contributions in the light blue region, where $\otoc_n(t,v)$ decreases for fixed $t$ as $v$ decreases.}%Different regimes of $\otoc_n(t,v)$. In the unshaded region, where $v>1, \otoc_n=1$. In the light blue region, $\otoc_n$ is a function of both $v$ and $t$. This is the  regime we focus on in our calculation below. $v^{\ast}$ indicates a critical velocity below which $\otoc_n$ becomes a function of only $t$ independent of $v$.  The value of $v^{\ast}$ (including whether it is non-zero), and the value of $\otoc_n$ in the dark blue region for $v<v^{\ast}$, is beyond the scope of our calculation. }
    \label{fig:otoc_regimes}
\end{figure}

Before analyzing the terms in \eqref{exact} in detail, let us make some general comments on the $v$- and $t$- dependence of $\otoc_n$ and the regime of interest. For $|v|>1$, from the sharp lightcone of the circuit, we have exactly 
\be 
\otoc_n(t, v) = 1, \quad |v| > 1\, . 
\ee
For $|v|<1$, we will see two kinds of terms in $\otoc_n$ below: terms proportional to $e^{-f(v) t}$, where $f(v)$ is some non-trivial non-negative function of $v$, and terms  proportional to $e^{-\Delta t}$, where $\Delta>0$ is independent of $v$. The functions  $f(v)$ are such that $f(v=1)=0$, and these functions grow as $v$ is decreased from 1 to 0. Hence, for any fixed large $t$ and $v$ sufficiently close to 1, the terms of the form $e^{-f(v)t}$ dominate over terms of the form $e^{-\Delta t}$. It is possible that deep within the lightcone (i.e. for $v$ small enough), some $\Delta$ becomes smaller than each of the $f(v)$, in which case $\otoc_n(t,v)$ saturates when seen as a function of $v$ for fixed $t$. So we have the following structure (see Fig.~\ref{fig:otoc_regimes}): 
\be \label{otocn_tv}
\otoc_n(t, v)= \begin{cases} 1 & |v|>1 \\
F(t, v) & v^{\ast} < |v|< 1 \\
e^{-\Delta t} & |v|< v^{\ast} 
\end{cases} 
\ee
Here $F(t,v)$ is a function with a non-trivial dependence on both $t$ and $v$ that comes from terms of the form $e^{-f(v)t}$. 
In our calculation below, we will focus on the $e^{-f(v)t}$ terms, which corresponds to being in the regime $v^{\ast}<|v|<1$. Analyzing terms of the form $e^{-\Delta t}$ (and in particular, seeing whether $v^{\ast}>0$, or whether terms of the form $e^{-f(v)t}$ dominate for all $v$) will be beyond the scope of our calculation. In all cases where we write $O(e^{-\Delta t})$ below, it is implicit that we mean $\Delta$ is independent of $v$.

\subsection{Assumptions about the structure of $\sC(\vec{b})$ for $\vec{b}$ with domain walls } \label{sec:membrane_assumptions}

So far, the full sum in \eqref{exact} is exact, and we have  organized it in a particular way according to the numbers and types of domain walls in $\vec{b}$. In order to go further, we need to identify the dominant configurations in the bulk factors $C(\vec b)$ and their total contribution. At this point, we will assume that the dominant contributions to $C(\vec b= (v_1, \tau, v_2))$ in a scaling limit where $t, x\gg 1$ come from bulk contributions with approximately localized domain walls, such as the ones shown in Fig.~\ref{fig:just_two_dw}. 
\begin{figure}[!h]
    \centering
\includegraphics[width=0.6\linewidth]{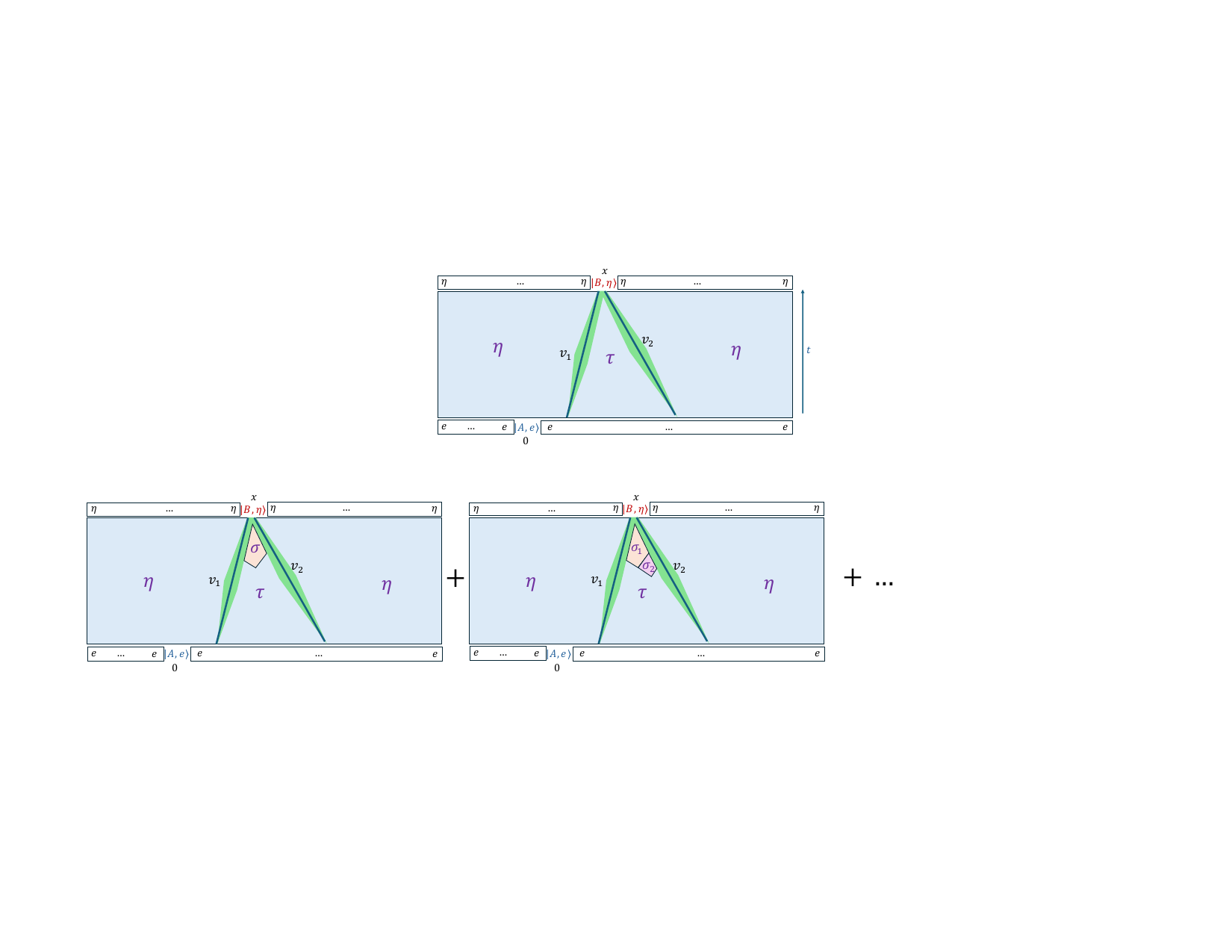}
    \caption{We assume that the dominant contribution to $\sC(\vec b = (v_1, \tau,v_2))$ comes from bulk configurations like the one shown above, where the domain wall is localized in the bulk up to some $O(1)$ width. On the sites of the lattice lying within the green regions of $O(1)$ width, we sum over all possible permutations in $\sS_n$. Outside these regions of $O(1)$ width, we have  uniform regions with only $\tau$ or  only $\eta$ states.}
    \label{fig:just_two_dw}
\end{figure}

Moreover, we assume that the sum over the many microscopic contributions in the $O(1)$ region around the domain wall leads to a simple net contribution. For example, the left domain wall in Fig.~\ref{fig:just_two_dw} contributes a factor of~\footnote{Note that this expression is not specific to $\eta$, and we expect a similar expression to apply for a domain wall between any $\sigma,\tau\in \sS_n$.} 
\be 
\sC_{\rm left}(v_1) = q^{-t \, d(\eta, \tau) \, \sE_{d(\eta, \tau)+1}(v_1) } \label{cleft} 
\ee
Here $d(\eta, \tau)$ is the Cayley distance between $\tau$ and $\eta$, defined in \eqref{cayley}. In this and all expressions below, we are ignoring some overall $O(1)$ prefactor. We have introduced a new ingredient, the membrane tension $\sE_{n}(v)$~\cite{zhou_nahum, jonay}, which is a symmetric,  positive, convex $O(1)$ function of the velocity with the general properties 
\be 
\sE_n(v) \geq |v|, \quad \sE_n'(v_B) =1, \quad \sE_n(v_B) =v_B\, .  \label{membrane_constraints}
\ee
where $0<v_B<1$ is an $n$-independent velocity called the butterfly velocity. See Fig.~\ref{fig:membrane_example}. Note in particular that $\sE_n(v)$ is always minimized at $v=0$ and grows monotonically as a function of $|v|$. The precise functional form for $\sE_n(v)$ is known for random unitary circuits only for the case $n=2$, where it can be checked to obey \eqref{membrane_constraints}. More generally, the constraints \eqref{membrane_constraints} are expected to hold based on expected physical properties for the evolution of the $n$-th R\'enyi entropies for general $n$ in random unitary circuits, as argued in~\cite{jonay}.

\begin{figure}[!h]
    \centering
    \includegraphics[width=0.3\linewidth]{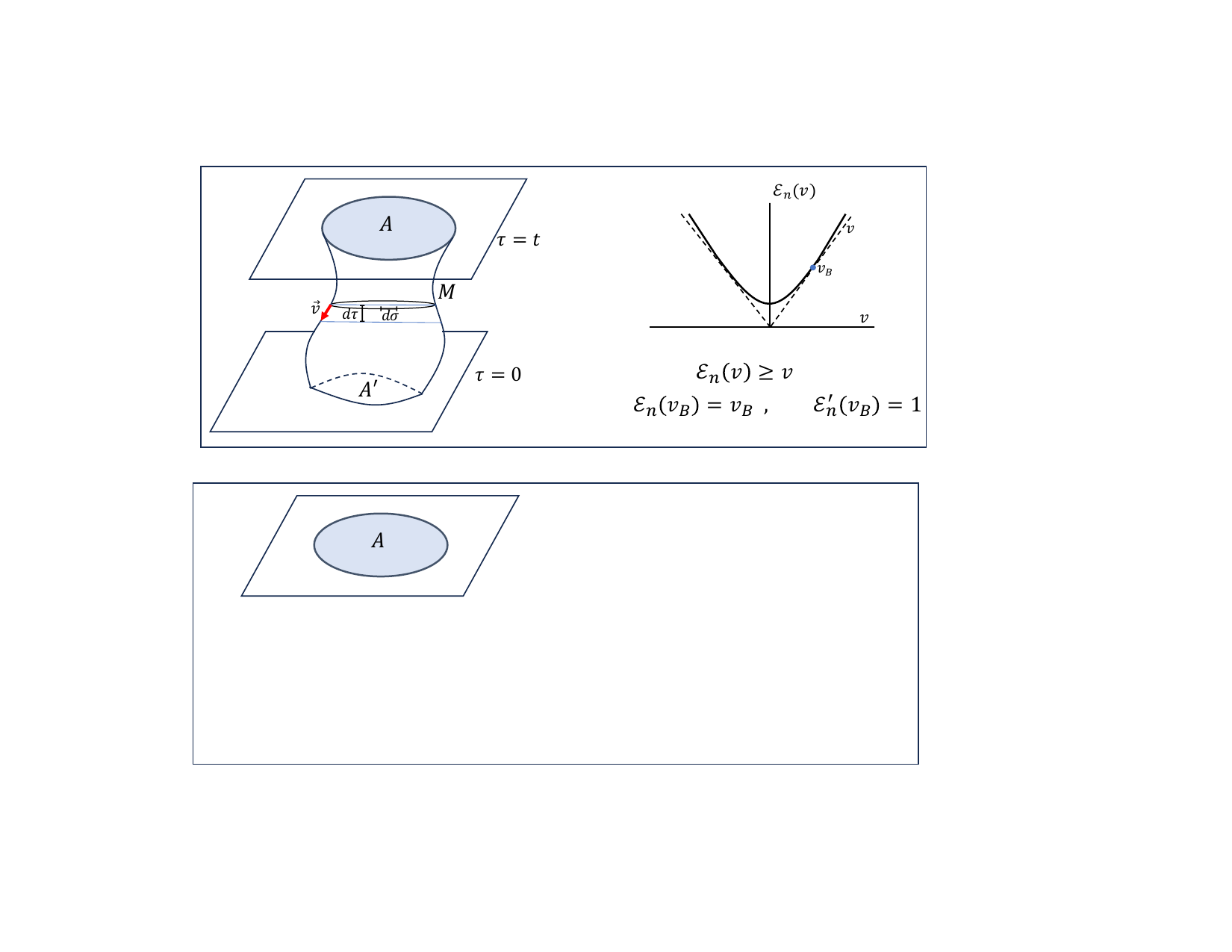}
    \caption{Example of the functional form of $\sE_n(v)$}
    \label{fig:membrane_example}
\end{figure}

  While we are not using the large $q$ limit in our calculation, the factor of $d(\eta, \tau)$ appearing in the exponent in  \eqref{cleft} is easiest to understand in this limit. In this case, the dominant contribution comes from the $\eta$-$\tau$ domain wall of minimal length from $x$ to $x+v_1t$, which passes through $t$ triangles. Each triangle with a domain wall contributes a simple weight in this limit: 
\be 
J(\eta, \tau; \eta) = J(\eta, \tau; \tau) = \frac{1}{q^{d(\eta, \tau)}}, \quad q \gg 1\, \, , 
\ee  
 so that 
 we simply get
 \be 
\sC_{\rm left}(v_1) =q^{-d(\eta, \tau) t}, \quad q \gg 1 \, . 
\ee
These expressions suggest that a domain wall between $\eta$ and $\tau$ can be seen as a product of $d(\eta , \tau)$ ``elementary'' domain walls associated with transpositions. Away from the large $q$ limit, we get a non-trivial combinatorial factor from counting the number of  paths through the lattice that can start at $x$ and end at $x-v_1t$, which leads to a non-trivial function $\sE(v_1)$ appearing next to the factor $d(\eta, \tau)$. The dependence of the function $\sE(v)$ itself on $d(\sigma, \tau)$ can be seen as a result of certain ``interactions'' among the elementary domain walls (see \cite{zhou_nahum} for more details).~\footnote{Here we have made the simplifying assumption that the the membrane tension $\sE$ for a domain wall between $\sigma$ and $\tau$ depends on $\sigma \tau^{-1}$ only through  $d(\sigma, \tau)$. It is expected that the function can depend on the more detailed information about $\sigma \tau^{-1}$, such whether the different transpositions appearing in its decomposition are commuting or non-commuting~\cite{zhou_nahum}. Such details will not significantly affect the final form of the answer that we want to emphasize, so we will ignore them to simplify the presentation.}

\begin{comment} 
$\sE_n(v)$ for any $v$ and $n$ can be set to 1 \SV{[add footnote for why]}, so \eqref{cleft} simply states that the contribution of the domain wall between $\eta$ and $\tau$ is $C_{\rm left}=(q^{-d(\eta, \tau)})^t$. This can be understood from the weights 
\be 
J(\eta, \tau; \eta) = J(\eta, \tau; \tau) = \frac{1}{q^{d(\eta, \tau)}}, \quad q \gg 1\, 
\ee
and the fact that the largest contribution to $C_{\rm left}$ in the large $q$ limit involves $t$ triangles with $\eta-\tau$ domain walls. Comparing to the large $q$ limit of \eqref{elementary}, which gives $1/q$, we see  As discussed in \cite{zhou_nahum}, this interpretation continues to hold away from the large $q$ limit, with the non-trivial interactions among the elementary domain walls captured by the function $\sE_{d(\eta, \tau)+1}(v)$. 
\end{comment}

Combining the factors from the left and right domain walls, it is natural to assume that the dominant contribution to $\sC(\vec b=(v_1, \tau, v_2))$ is given by  
\be 
C(\vec b=(v_1, \tau, v_2)) = q^{-t \, d(\eta, \tau) \, \sE_{d(\eta, \tau)+1}(v_1) } ~ q^{-t \, d(\eta, \tau) \, \sE_{d(\eta, \tau)+1}(v_2) } \, .  \label{c_simple}
\ee
One subtlety is that this approximation does not take into account a variety of additional contributions with more domain walls in the bulk, such as the ones shown in Fig.~\ref{fig:additional_contributions}.   Intuitively, these contributions can be absorbed into some $O(1)$ factor in \eqref{c_simple}. However, because these configurations can come with both positive and negative signs, there is a possibility that one can have a subtle cancellation, such that the leading contribution ends up being further exponentially suppressed in $t$ compared to   \eqref{c_simple}. We will argue below that this subtle cancellation must take place for certain choices of $\tau$. Hence, more generally, we expect to have an inequality 
\be 
|C(\vec b=(v_1, \tau, v_2))| \leq q^{-t \, d(\eta, \tau) \, \sE_{d(\eta, \tau)+1}(v_1) } ~ q^{-t \, d(\eta, \tau) \, \sE_{d(\eta, \tau)+1}(v_2) } \, .  \label{g53}
\ee
We are assuming that the dominant contribution to  $C(\vec b = (v_1, \tau, v_2))$
can never become {\it larger} than \eqref{c_simple} due to these additional configurations (apart from possibly being enhanced by some $O(1)$ prefactor). This assumption is intuitive based on the additional cost associated with each domain wall from \eqref{j_cost}, but we will not attempt the analysis of the additional combinatorial factors that is needed to rigorously show it. 

\begin{figure}
    \centering
    \includegraphics[width=\linewidth]{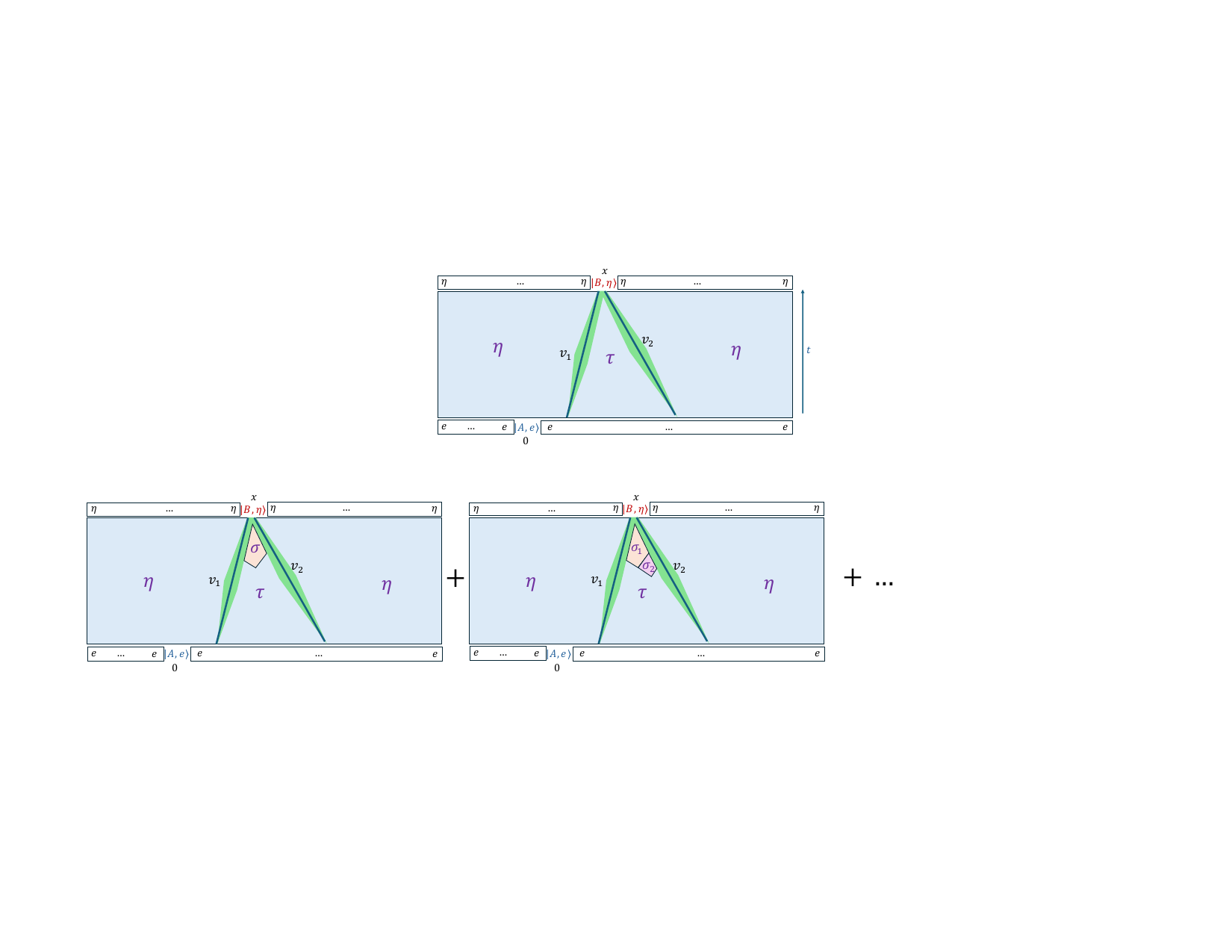}
    \caption{A series of additional contributions to $C(\vec b = (v_1, \tau, v_2))$ which are not accounted for in the formula \eqref{c_simple}.}
    \label{fig:additional_contributions}
\end{figure}

With additional domain walls, we have suppression by additional factors similar to \eqref{cleft}. For example, we will assume that 
\be 
|C(\vec b=(v_1, \tau_1, v_2, \tau_2, v_3))| \leq q^{-t \, d(\eta, \tau_1) \, \sE_{d(\eta, \tau_1)+1}(v_1) } ~ q^{-t \, d(\tau_1, \tau_2) \, \sE_{d(\tau_1, \tau_2)+1}(v_2) }~  q^{-t \, d(\tau_2, \eta) \, \sE_{d(\tau_2, \eta)+1}(v_3) } \, .  \label{three_dw}
\ee

\subsection{Analysis of the no-domain wall term in $\otoc_n$}
\label{sec:no_dw}

Let us now turn to the first term in \eqref{exact}, which has no domain walls on the lower boundary. Some of the contributions to this term are shown in Fig.~\ref{fig:no_dw}. Note that each such configuration can potentially come with either sign depending on the specific permutations that appear,  and the lengths of various domain walls.  
Since this term includes configurations with $O(1)$ domain wall length in the bulk (and even no domain walls anywhere in the case where we set $s_0=\eta$), it appears that it can be $O(1)$. However, we will show below that on summing over all contributions, this term is of the form $O(e^{-\Delta t})$ for $\Delta>0$ independent of $v$. By the general discussion of Sec.~\ref{sec:regimes}, this  term can therefore be ignored for the range of $v$ we are interested in. 

\begin{figure}
    \centering
    \includegraphics[width=\linewidth]{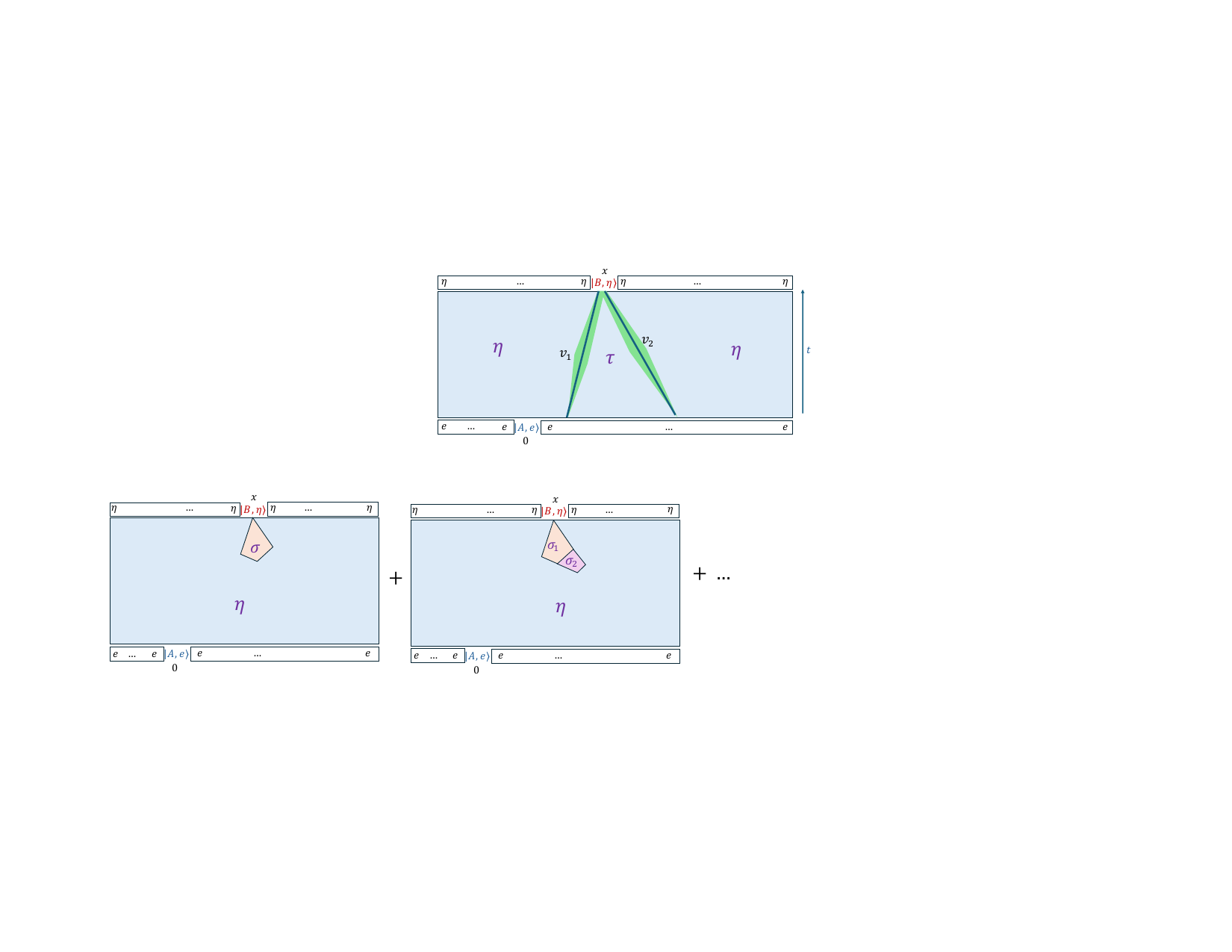}
    \caption{Contributions to the first term of \eqref{exact}, with no domain walls on the lower boundary. }
    \label{fig:no_dw}
\end{figure}

We will show this by first considering a different quantity instead of $\otoc_n$: 
\be 
M_{\eta}\equiv \bra{\eta} \bra{\eta} ... \bra{\eta} \bra{B, \eta} \bra{\eta} ... \bra{\eta} \overline{(U(t) \otimes U(t)^{\dagger})^{\otimes n}}\ket{\eta} \ket{\eta}... \ket{\eta}  \label{bn}
\ee
$M_{\eta}$ can be expressed as a partition function similar to the one in Fig.~\ref{fig:lattice_general} for $\otoc_n$, with the same top boundary conditions and bulk interactions. The boundary conditions on the lower boundary are different from $\otoc_n$, with $\ket{\eta}$ at every site instead of $\ket{e}$ or $\ket{A, e}$, and \eqref{bn} has no overall factor of $q^{(n-1)L}$. 

Using the definitions of the permutation states, we can also see that 
\be 
M_{\eta}= \overline{\Tr[U(t) B U(t)^{\dagger}]^n} = (\Tr B )^n=0
\ee
Now, using the same interaction rules as in Fig.~\ref{fig:lattice_general} with altered lower boundary conditions, and considering contributions with various numbers of domain walls in $\vec{b}$ to \eqref{bn}, we find that we can expand $M_{\eta}$ as follows:
\begin{align}
0 =  M_{\eta}  &= ~\sC(b_i = \eta) \nn 
& + \sum_{v_1, \tau,  v_2>v_1}\sC(\vec b = (v_1, \tau, v_2)) \frac{1}{q^{d(\eta, \tau)t(v_2-v_1)}} \nn 
& + \sum_{v_1, \tau_1,  v_2>v_1, \tau_2, v_3>v_2} \sC(\vec b = (v_1, \tau_1, v_2,\tau_2, v_3 )) \frac{1}{q^{d(\eta, \tau)t(v_2-v_1)}} \frac{1}{q^{d(\eta, \tau)t(v_3-v_2)}}   + ... \label{g51}
\end{align}
In particular, note from the first line that configurations with no domain walls in $\vec b$ give the same contribution in the new quantity $M_{\eta}$ as they do in $\otoc_n$: this comes from a combination of the fact that $M_{\eta}$ has $\eta$ at all sites in the lower boundary conditions, and the fact that it does not have the overall factor of $q^{(n-1)L}$ in $\otoc_n$. 
Now using \eqref{g53} and \eqref{three_dw} (and the analog of \eqref{three_dw} for more domain walls), we see that each line of \eqref{g51} starting from the second line is $O(e^{-\Delta t})$. Let us upper-bound the terms appearing in the second line more explicitly: since $v_2-v_1>0$,  
\begin{align} 
|\sC(\vec b = (v_1, \tau, v_2)) \frac{1}{q^{d(\eta, \tau)t(v_2-v_1)}}| &\leq |\sC(\vec b = (v_1, \tau, v_2))| \nn 
&\leq q^{-t \, d(\eta, \tau) \, \sE_{d(\eta, \tau)+1}(v_1) } ~ q^{-t \, d(\eta, \tau) \, \sE_{d(\eta, \tau)+1}(v_2) }\nn 
&\leq q^{-2t \, d(\eta, \tau) \, \sE_{d(\eta, \tau)+1}(0) }  = O(e^{-\Delta t})
\end{align}
where we have used the fact that $\sE_n(v)$ has a non-zero minimum value at $v=0$. The third line onwards of \eqref{g51} are further suppressed. 

Since the LHS of \eqref{g51} is zero, this implies that 
\be 
C(b_i=\eta) = O(e^{-\Delta t}) \label{ceta_small}
\ee
for $\Delta>0$ independent of $v$. 
We note again that this $O(e^{-\Delta t})$ scaling of $C(b_i=\eta)$ is counterintuitive from the diagrams in Fig.~\ref{fig:no_dw}, which have domain walls of $O(1)$ length. The exponentially small value in \eqref{ceta_small} is a consequence of subtle cancellations among configurations with different signs.

\subsection{Terms with two domain walls in $\otoc_n$}
\label{sec:two_dw}
Next, let us understand the second line of \eqref{exact}, which can be separated into a $v$-independent term and a $v$-dependent term as follows: 
\begin{align} \label{2dwdef}
\otoc_{n, \text{ 2 DW}} =&\sum_{v_1, \tau, v_2} \sC(\vec b = (v_1, \tau, v_2)) q^{(k(\tau)-1)t(v_2-v_1)} \braket{b_{i_{\ast}}|A, e} \nn 
=& \, \sZ_{\rm even}(t) + \sZ_{\rm odd}(t, v) 
\end{align} 
where 
\begin{align} 
& \sZ_{\rm even}(t) = \sum_{\tau \in P_{\rm even}, v_1, v_2>v_1} \sC(\vec b = (v_1, \tau, v_2)) q^{(k(\tau)-1)t(v_2-v_1)} \label{g58} \\
& \sZ_{\rm odd}(t, v) = \sum_{\tau \in \bar P_{\rm even},\, v_1>-v, \, v_2>v_1} \sC(\vec b = (v_1, \tau, v_2)) q^{(k(\tau)-1)t(v_2-v_1)} \label{g59}
\end{align}
In the above expressions, $P_{\rm even}$ is the subset of $\sS_n$ defined in \eqref{xdef}, $\bar P_{\rm even}$ is its complement in $\sS_n$, and we have used \eqref{perm_overlap} and \eqref{op_overlap}. In particular, the condition $v_1>-v$ in \eqref{g59} comes from the fact that $\braket{b_{i_{\ast}}|A, e}=0$ if    $b_{i_{\ast}}\in \bar P_{\rm even}$. 

\subsubsection{Restriction to non-crossing permutations}

For the dominant contributions to $\sZ_{\rm even}$ and $\sZ_{\rm odd}$, we can restrict to $\tau$ that lie respectively in the intersection of $P_{\rm even}\cap P_{\rm NC}$, and the intersection of $\bar P_{\rm even}\cap P_{\rm NC}$, where $P_{\rm NC}$ is the set of non-crossing permutations defined in \eqref{ncdef}.   

To see this, note that from \eqref{g53},  
\be
|\sC(\vec b = (v_1, \tau, v_2))| q^{(k(\tau)-1)t(v_2-v_1)} \leq f_1 f_2
\ee
where 
\begin{align} 
&f_1 = q^{-t\le((n-k(\eta \tau^{-1})) \sE_{d(\eta, \tau)}(v_1) +  (k(\tau)-1)v_1\ri)}\\
&f_2 = q^{-t\le((n-k(\eta \tau^{-1})) \sE_{d(\eta, \tau)}(v_2) -  (k(\tau)-1)v_2\ri)}\\
\end{align}
Now the first inequality of \eqref{membrane_constraints}, together with the inequality \eqref{nc_ineq}, implies that 
\be 
f_1 \leq q^{-t (k(\tau) -1)(\sE_{k(\tau)}(v_1)-|v_1|)}, \quad f_2 \leq q^{-t (k(\tau) -1)(\sE_{k(\tau)}(v_2)-|v_2|)}\ .
\ee
The inequalities are saturated in the case where $v_1 <0$, $v_2> 0$, and $\tau \in  P_{\rm NC}$. We can therefore approximate \eqref{g58} and \eqref{g59} as follows: 
\begin{align}
& \sZ_{\rm even}(t) = \sum_{\substack{\tau \in P_{\rm even}\cap P_{\rm NC}, \\ v_1,\, v_2>0}} \sC(\vec b = (v_1>0, \tau, v_2)) q^{(k(\tau)-1)t(v_2+v_1)} \label{zeven} \\
& \sZ_{\rm odd}(t, v) = \sum_{\substack{\tau \in \bar P_{\rm even} \cap P_{\rm NC} ,\\  0<v_1<v,\, v_2>0}} \sC(\vec b = (v_1, \tau, v_2)) q^{(k(\tau)-1)t(v_2+v_1)} \label{zodd}
\end{align}
Here we have made a convenient change of variables $v_1 \to -v_1$
so that the  values of both $v_1$ and $v_2$ appearing in the final expression  are positive. 

Note that if the inequality in \eqref{g53} is saturated, then for $\tau \in P_{\rm NC}$, 
\be 
\sC(\vec b = (v_1, \tau, v_2)) q^{(k(\tau)-1)t(v_2+v_1)} = q^{-(k(\tau)-1)t(\sE(v_1)-v_1 + \sE(v_2)-v_2)}\, 
\ee
If the above equality holds, then the exponent is 0 for the case $v_1 =v_2 =v_B$. However, we will show in the next section that for certain $\tau$, we always have 
\be 
\sC(\vec b = (v_1, \tau, v_2)) q^{(k(\tau)-1)t(v_2+v_1)} =  O(e^{-\Delta t})\, 
\ee
for $\Delta>0$ independent of $v_1$, $v_2$, so that \eqref{g53} cannot be saturated for such permutations. Like in the earlier discussion of $C(b_i=\eta)$, this is a consequence of cancellations among contributions with different signs.  

\vspace{0.2cm}

\subsubsection{Sub-dominance of contributions from $\tau \in  Q_{\rm singleton} \cap P_{\rm NC}$} \label{sec:membrane_failure}

In this section, we will argue that for any $\tau \in  Q_{\rm singleton} \cap P_{\rm NC}$, for $ Q_{\rm singleton}$ defined in \eqref{wdef}, 
\be
\sC(\vec b = (v_1, \tau, v_2)) q^{(k(\tau)-1)t(v_2+v_1)} =O( e^{-\Delta t})\, . 
\ee
 This argument can be seen as a more complicated version of the argument for $C(b_i=\eta)$ in Sec.~\ref{sec:no_dw}. This statement will allow us to further restrict the permutations that contribute to the leading part of $\otoc_n$, and we will discuss its consequences in the next section.

For some given $v_1$, $v_2$ (recall that we have redefined $v_1$ so that it is positive for left-moving domain walls), let us define the quantity 
\begin{align}
M_{ \tau, v_1, v_2} = &  \bra{\eta} \bra{\eta} ... \bra{\eta}\bra{B, \eta}\bra{\eta}... \bra{\eta} \overline{(U(t) \otimes U(t)^{\ast})^{\otimes n}}\nn 
&\ket{\eta}...\ket{\eta}_{y-v_1t-1} \ket{\tau}_{y-v_1t} ...\ket{\tau}_{y+v_2t} \ket{\eta}_{y+v_2t+1} ... \ket{\eta} \, \times q^{(k(\tau)-1)t(v_1+v_2)}.
 \label{tau_eta_1}
\end{align}
Again, the top boundary conditions and bulk rules for this quantity are the same as those for $\otoc_n$, but its lower boundary conditions are different. See Fig.~\ref{fig:all_cases}. Now let us refer to the region from $y-v_1t$ to $y+v_2t$
as $R$ and its complement as $\bar R$. 
By an abuse of notation, let us define states $\ket{O, \sigma}$ on $2n$ copies of the {\it full system} (as opposed to a single site) for some operator $O$ acting on the full system, as 
\be 
\braket{i_1 i'_1...i_ni'_n|O, \sigma} \equiv O_{i_1i'_{\sigma(1)}}... O_{i_1i'_{\sigma(1)}}
\ee
where each $\ket{i}$ labels a basis for the full system, 
and similarly for subsystem $R$, define 
\be 
\braket{i_1 i'_1... i_n i'_n|\sigma}_R \equiv \delta_{i_1 i'_{\sigma(1)}}...\delta_{i_n i'_{\sigma(n)}}
\ee
where each $\ket{i}$ labels a basis for subsystem $R$. 
Then we can write
\begin{align}
M_{\tau, v_1, v_2}&\propto 
 \bra{B, \eta} (U(t) \otimes U(t)^{\ast})^{\otimes n} \ket{\tau}_R \otimes  \ket{\eta}_{\bar R}\nn &=\bra{U(t)BU(t)^{\dagger}, \eta} \ket{\tau}_R \otimes  \ket{\eta}_{\bar R}  \nn 
& = \bra{U(t)BU(t)^{\dagger}, e}\ket{(\tau\eta^{-1})}_R\ket{e}_{\bar R} \label{gnew}
\end{align}
Now note that for any operator $O$ and any permutation $\sigma$, 
\be 
\bra{O, e} \ket{\sigma}_R \ket{e}_{\bar R} = \sum_{i_1,...,i_n=1}^{d_R}\sum_{j_1,...,j_n=1}^{d_{\bar R}} O_{i_1j_1, i_{\sigma(1)}j_1} O_{i_2j_2, i_{\sigma(2)}j_2}...O_{i_nj_n, i_{\sigma(n)}j_n}
\ee
where each $i$ labels a basis for $R$ and each $j$ labels a basis for $\bar R$. Hence, if $\sigma$ has any cycle of length 1, i.e. $\sigma(i)=i$ for any $i$, then the above expression is proportional to $\Tr[O]$. Returning to \eqref{gnew}, where $\sigma= \tau \eta^{-1}$
 and $O= U(t) B U(t)^{\dagger}$ which has trace zero, we find that  
\be 
M_{\tau, v_1, v_2}=0, \quad \tau \in Q_{\rm singleton}\cap P_{\rm NC} \, . 
\ee

\begin{figure}
    \centering
    \includegraphics[width=\linewidth]{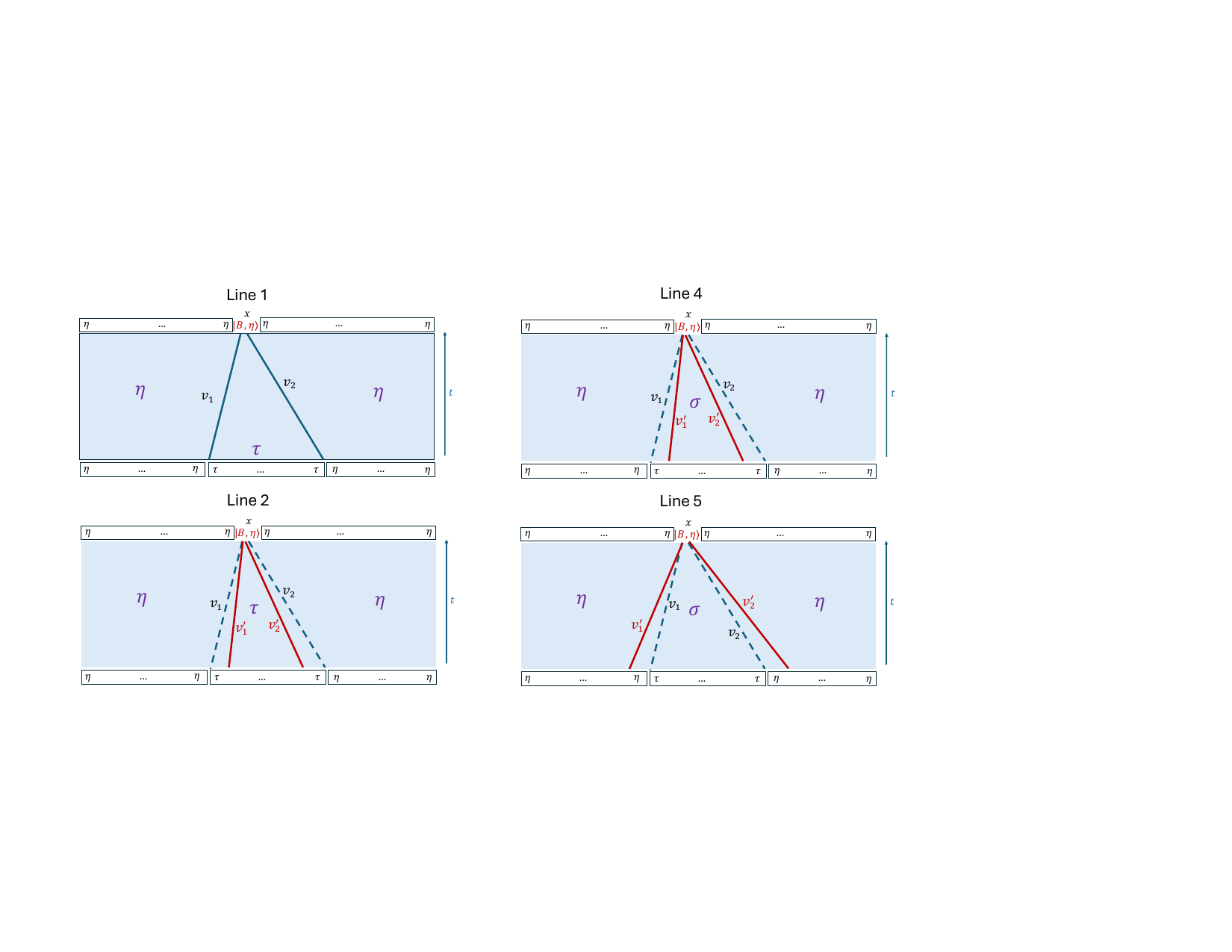}
    \caption{Domain wall configurations for various terms in \eqref{comp_zero}}
    \label{fig:all_cases}
\end{figure}

On the other hand, from the partition function interpretation of $M_{\tau, v_1, v_2}$, we have 
\begin{align}
M_{\tau, v_1,v_2} &= ~q^{(k(\tau)-1)t(v_2+v_1)}  \sC(\vec b = (v_1, \tau, v_2))   \nn 
&+  \sum_{v_1'\neq v_1, v_2'\neq v_2}q^{(k(\tau)-1)t(v_2+v_1)}  \sC(\vec b = (v_1', \tau, v_2')) \frac{1}{q^{d(\eta, \tau)t(|v_1'-v_1| + |v_2'-v_2| )}} \nn 
& + q^{(k(\tau)-1)t(v_2+v_1)}  \sC(b_i = \eta) \frac{1}{q^{d(\eta, \tau)t(v_1 +v_2)}} \nn 
& + \sum_{\sigma \in \sS_n, \sigma \neq \tau, \eta; v_1'<v_1, v_2'<v_2} q^{(k(\tau)-1)t(v_2+v_1)}  \sC(\vec b = (v_1', \sigma, v_2')) \frac{1}{q^{d(\sigma, \tau)t(v_1'+v_2')}} \frac{1}{q^{d(\tau, \eta)t(v_1-v_1'+v_2-v_2')}}  \nn
& + \sum_{\sigma \in \sS_n, \sigma \neq \tau, \eta; v_1'>v_1, v_2'>v_2} q^{(k(\tau)-1)t(v_2+v_1)}  \sC(\vec b = (v_1', \sigma, v_2')) \frac{1}{q^{d(\sigma, \tau)t(v_1+v_2)}} \frac{1}{q^{d(\sigma, \eta)t(v_1'-v_1+v_2'-v_2)}} \label{comp_zero}
\end{align} 
This expression accounts for contributions with up to two domain walls in $\vec b$. The corresponding domain wall configurations are shown in Fig.~\ref{fig:all_cases}. In the first line, the domain walls in $\vec b$ are aligned with those in the bottom boundary condition, so that we get 1 from their overlaps. In the second line, the last factor comes from non-trivial overlaps between $\vec b$ and the lower boundary conditions. Using the the fact that $\tau \in P_{\rm NC}$, the second line can be rewritten as  
\be 
 \text{Line 2} = \sum_{v_1'\neq v_1, v_2'\neq v_2}q^{(k(\tau)-1)t(v_2-|v_2'-v_2|+v_1-|v_1'-v_1|)}  \sC(\vec b = (v_1', \tau, v_2')) 
\ee
and the third line, which comes from cases where there are no domain walls in $\vec b$ (not explicitly shown in the figure), can be written as $C(b_i=\eta)$, so that from the discussion of section~\ref{sec:no_dw}, 
\be 
\text{Line 3} = O(e^{-\Delta t}) \, . 
\ee

The fourth and fifth lines, where the domain wall in $\vec b$ involves a different permutation $\sigma \neq \tau$, require some more work. First, using the fact that $d(\tau, \eta)= k(\tau)-1$, 
\be \label{line4def} 
\text{Line 4} = \sum_{\sigma \in \sS_n, \sigma \neq \tau, \eta; v_1'<v_1, v_2'<v_2} q^{(d(\tau, \eta) - d(\sigma , \tau))t(v'_2+v'_1)}  \sC(\vec b = (v_1', \sigma, v_2')) 
\ee
Now using \eqref{g53}, we have   
\begin{align}
|\text{Line 4}| &\leq  \sum_{\sigma \in \sS_n, \sigma \neq \tau; \eta, v_1'<v_1, v_2'<v_2} q^{t\le((d(\tau, \eta) - d(\sigma , \tau))v'_1 -d(\sigma, \eta) \sE_{d(\sigma, \eta)+1}(v_1')\ri)}
q^{t\le((d(\tau, \eta) - d(\sigma , \tau))v'_2 -d(\sigma, \eta) \sE_{d(\sigma, \eta)+1}(v_2')\ri)}\nn 
&
\leq \sum_{\sigma \in \sS_n, \sigma \neq \tau; \eta, v_1'<v_1, v_2'<v_2} q^{t\le((d(\tau, \eta) - d(\sigma , \tau) - d(\sigma, \eta))\sE_{d(\sigma, \eta)+1}(v_1')\ri)} q^{t\le((d(\tau, \eta) - d(\sigma , \tau) - d(\sigma, \eta))\sE_{d(\sigma, \eta)+1}(v_2')\ri)}
\end{align}
where in the second step, we have used the first inequality of \eqref{membrane_constraints} and assumed that $d(\eta, \tau)> d(\sigma, \tau)$ (as the alternative gives an exponentially suppressed contribution). Both factors are the same up to the replacement $v_1' \leftrightarrow v_2'$, so we can focus on either of them. 
Now by the triangle inequality, for any $\sigma \in \sS_n$, 
\be 
d(\tau, \eta) - d(\sigma , \tau) - d(\sigma, \eta) \leq  0  
\ee
with equality only in the case where $\sigma$ lies on the geodesic between $\tau$ and $\eta$. Hence, for any $v_1', v_2'$, Line 4 is upper-bounded by the case with the smallest value of $\sE_n(v_i')$, which is always at $v_1'=v_2'=0$. So up to an $O(1)$ prefactor, 
\be 
\text{Line 4} \leq \sum_{\sigma \neq \tau, \eta} q^{-2  \sE_{d(\sigma, \eta)+1}(0) \, \le( d(\sigma , \tau) + d(\sigma , \eta) - d(\tau, \eta) \ri)\, t} \, .  \label{line4}
\ee
Now for the fifth line, 
\begin{align}
\text{Line 5} &=  \sum_{\sigma \in \sS_n, \sigma \neq \tau, \eta;  v_1'>v_1, v_2'>v_2} C(\vec b = (v_1', \sigma, v_2')) q^{-d(\eta, \sigma)t(v_1'+v_2')} q^{(v_1+v_2)t(d(\eta, \tau)+ d(\eta, \sigma) -d(\sigma, \tau))} \label{line5def} \\ 
&\leq  \sum_{\sigma \in \sS_n, \sigma \neq \tau, \eta; v_1'>v_1, v_2'>v_2} e^{-d(\eta, \sigma)t(v_1'+v_2'+\sE_{d(\sigma, \eta)+1}(v_1')+\sE_{d(\sigma, \eta)+1}(v_2'))} q^{(v_1+v_2)t(d(\eta, \tau)+ d(\eta, \sigma) -d(\sigma, \tau))}\nn 
&\leq  \sum_{\sigma \neq \tau, \eta} e^{-d(\eta, \sigma)t(v_1+v_2+\sE_{d(\sigma, \eta)+1}(v_1)+\sE_{d(\sigma, \eta)+1}(v_2))} q^{(v_1+v_2)t(d(\eta, \tau)+ d(\eta, \sigma) -d(\sigma, \tau))}\nn 
&\leq  \sum_{\sigma \neq \tau, \eta} e^{-d(\eta, \sigma)t(\sE_{d(\sigma, \eta)+1}(v_1)+\sE_{d(\sigma, \eta)+1}(v_2))} q^{(v_1+v_2)t(d(\eta, \tau) -d(\sigma, \tau))}\nn 
&
\leq \sum_{\sigma \neq \eta, \tau} q^{-2  \sE_{d(\sigma, \eta)+1}(0) \, \le( d(\sigma , \tau) + d(\sigma , \eta) - d(\tau, \eta) \ri)\, t} \, . \label{line5}
\end{align}
In going from the second to the third line, we have used the fact that for the range of $v_1',v_2'$ allowed in the sum, the maximum value of the first factor is at $v_1'=v_1, v_2'=v_2$ (and there is an implicit $O(1)$ factor in the third line from summing over velocities). In principle, we should have written two additional cases $v_1'>v_1, v_2'<v_2$ and $v_1'<v_1, v_2'>v_2$ in \eqref{comp_zero}, but it should now be clear that these cases obey the same upper bound as Lines 4 and 5.

Now, suppose we take $\tau_{\rm min}$ to be the permutation in $Q_{\rm singleton} \cap P_{\rm NC}$ that lies closest to $\eta$ in Cayley distance, so that there are no other $\sigma \in Q_{\rm singleton} \cap P_{\rm NC}$ that can lie on the geodesic between $\eta$ and $\tau$. In fact, Lemma \ref{lemma_noncross_2} from Sec.~\ref{sec:background} implies that this case, there exists {\it no}  $\sigma \in \sS_n$ that lies on the geodesic between $\eta$ and $\tau$. Hence, \eqref{line4} and \eqref{line5} are of the form $e^{-\Delta t}$ for $\Delta>0$ independent of $v$. Putting together our conclusions about the different lines (and assuming as before that contributions with more domain walls in $\vec b$ are further suppressed), we find that for $\tau=\tau_{\rm min}\in Q_{\rm singleton}\cap P_{\rm NC}$, for any $v_1$, $v_2$, 
\begin{align}
&q^{(k(\tau)-1)t(v_2+v_1)}  \sC(\vec b = (v_1, \tau, v_2)) + \sum_{v_1'\neq v_1, v_2'\neq v_2}q^{(k(\tau)-1)t(v_2-|v_2'-v_2|+v_1-|v_1'-v_1|)}  \sC(\vec b = (v_1', \tau, v_2'))\nn   &= O(e^{-\Delta t}) 
\end{align}
This can also be written as 
\be
\sum_{v_1'\leq  v_1, v_2'\leq v_2}q^{(k(\tau)-1)t(v'_2+v'_1)}  \sC(\vec b = (v'_1, \tau, v'_2)) + O(e^{-\Delta'} t) = O(e^{-\Delta t}) \label{g71}
\ee
On the LHS, we have used the fact that for $v_1'>v_1$  (similarly for $v_2'> v_2$), 
\be 
q^{(k(\tau)-1)t(v_2-|v_2'-v_2|+v_1-|v_1'-v_1|)}  \sC(\vec b = (v_1', \tau, v_2')) \leq e^{-|v_1'-v_1|t}
\ee
and we have taken $\Delta'$ to be some ``UV cutoff'' in a discretization of the sum over $v_1', v_2'$. 
Now using \eqref{g71} and iteratively considering increasing values of $v_1, v_2$, we can see that for each $v_1, v_2$, 
\be
q^{(k(\tau)-1)t(v_2+v_1)}  \sC(\vec b = (v_1, \tau, v_2))= O(e^{-\Delta t}) \, , \quad \tau = \tau_{\rm min}. \label{taumin} 
\ee

Then consider the next $\tau_{\rm next} \in Q_{\rm singleton}  \cap P_{\rm NC}$ such that the only permutation lying on the geodesic between $\tau_{\rm next}$ and $\eta$ is $\tau_{\rm min}$. For all $\sigma \neq \tau,\tau_{\rm min}$ in Line 4 or Line 5, the same analysis leading to \eqref{line4} and \eqref{line5} applies and leads to an $O(e^{-\Delta t})$ contribution. \eqref{taumin} implies that the contribution of $\sigma = \tau_{\rm min}$ to  Line 4 and Line 5 is also $O(e^{-\Delta t})$. For example, using the expression for Line 4 in \eqref{line4def}, the contribution of $\sigma = \tau_{\rm min}$ is 
\be 
\text{Line 4} \supset q^{t \, d(\eta, \tau_{\rm min})(v_1'+v_2')} C(\vec b=(v_1', \tau_{\rm min}, v_2'))
\ee
which is $O(e^{-\Delta t})$ from \eqref{taumin}. 
%we have  shown that \eqref{line4def} is $O(e^{-\Delta t})$ in \eqref{taumin}. For Line 5, the contribution of $\sigma = \tau_{\rm min}$ is (using \eqref{line5def})

We can then iteratively apply this analysis to other $\tau\in Q_{\rm singleton}\cap P_{\rm NC}$
at increasing Cayley distance from $\eta$, to find that 
\be 
q^{(k(\tau)-1)t(v_2+v_1)}  \sC(\vec b = (v_1, \tau, v_2))= O(e^{-\Delta t})  \label{all_assumption_corrected} \text{ for all } \tau \in Q_{\rm singleton}  \cap P_{\rm NC}\,  . 
\ee

We can think of this suppressed value of $q^{(k(\tau)-1)t(v_2+v_1)}  \sC(\vec b = (v_1, \tau, v_2))$ as coming from certain subtle cancellations among the kinds of configurations shown in Fig.~\ref{fig:additional_contributions}, similar to the discussion in Sec.~\ref{sec:no_dw}.

\subsection{Final result}

Let us now return to the full sum over two-domain wall contributions $\otoc_n$, \eqref{2dwdef}. First, we note that all $\tau \in P_{\rm even} \cap P_{\rm NC}$ also lie in $Q_{\rm singleton}\cap P_{\rm NC}$. Hence, the $v$-independent contribution coming from $\sZ_{\rm even}$ in $\otoc_n$ is $O(e^{-\Delta t})$ and can be ignored. Certain permutations in ${\bar P_{\rm even}\cap \rm NC}$ also lie in $Q_{\rm singleton}\cap P_{\rm NC}$, and these contributions can also be ignored. If we assume that the membrane formula \eqref{c_simple} can be applied to other permutations (i.e., there are no subtle cancellations in these cases), we end up with 
\be \label{final}
\otoc_{n, \, \text{2 DW}} = \sum_{\tau \in \bar Q_{\rm singleton} \cap P_{\rm NC}} f_{\tau} \le(\sum_{0<v_1<v}e^{-(k(\tau)-1)(\sE_{k(\tau)}(v_1)-v_1)t}\ri) \le(\sum_{v_2>0}e^{-(k(\tau)-1)(\sE_{k(\tau)}(v_2)-v_2)t}\ri)
\ee
Here we have restored the $O(1)$ factors $f_{\tau}$, which may be positive or negative. 
Now consider the sums over $v_1$, $v_2$ for each $\tau$. Recall from \eqref{membrane_constraints} that $\sE_n(v)-v$ for any $v$ is minimized at $v=v_B$, where its value is zero. Since the sum over $v_2$ always includes $v_B$, we can set the factor in the second parenthesis in \eqref{final} to 1. The sum over $v_1$ can be set equal to $v_B$ if $v>v_B$. Otherwise, it $\sE(v_1)-v_1$ is minimized at the endpoint value $v_1 =v$.  

Now recall from Lemma~\ref{lemma_noncross_1} that for $\tau \in \bar Q_{\rm singleton}\cap P_{\rm NC}$, $k(\tau)\geq n/2+1$. Collecting contributions from different $\tau$ with the same values of $k(\tau)$, we get 
\be 
\otoc_n(t, v) = \begin{cases}
\sum_{k=n/2}^{n-1} c_k & v > v_B\\
\sum_{k=n/2}^{n-1} c_k q^{-k(\sE_{k+1}(v)-v)t}  & v < v_B 
\end{cases}
\label{otoc_final_result}
\ee
From the first line, we see that the coefficients $c_k$ must add up to 1 for continuity at $v=1$. Since we get this non-trivial result from the two-domain wall contributions, and cases with additional domain walls are subleading from \eqref{three_dw}, \eqref{otoc_final_result} is our prediction for the leading behaviour of $\otoc_n$ in the regime $v^{\ast}<v<1$ defined in Sec.~\ref{sec:regimes}.

We can further expand the second line close to $v=v_B$. The second and third equations of \eqref{membrane_constraints} then imply that 
\be 
\otoc_n(t, v) = \sum_{k=n/2}^{n-1} c_k q^{-k \frac{\sE''_{k+1}(v_B)}{2}(v -v_B)^2 t} 
\ee
It is expected that close to $v=v_B$, interactions among various domain walls vanish, so that $\sE_n(v)$ becomes independent of $n$ in this regime. Based on this, setting $\sE''_{k+1}(v_B)/\log q$ to be some $k$-independent constant $1/D$, we get the expression
\be 
\otoc_n(t, v) = \sum_{k=n/2}^{n-1} c_k e^{-k \frac{(v -v_B)^2}{2D} t} \, . \label{final_rc}
\ee
which is explicit apart from the constants $c_k$ and $D$. $D$ can be viewed as a diffusion constant for the broadening of the operator front~\cite{operator_spreading_adam, operator_spreading_tibor}.

\section{Combinatorial lemmas}

\label{app:perm_proofs}

\lemmagone*

\begin{remark}\label{remark:noncross}
    The same claim holds for any NC partition $\tau$ of $n$ elements where $n$ is even: (a) If $\tau$ has all blocks of even sizes, and then its Kreweras complement $\tau^c$ has more than $n/2+1$ blocks with at least one singleton block; and (b) if $\tau^c$ has no singletons, then $\tau$ has $n/2+1$ blocks and $\tau$ has at least one odd-size block.

\end{remark}

\begin{proof} 
 Since $\tau \in P_{\rm even}$, $|c_i|\geq 2$ for each cycle $c_i \in \tau$. We have 
\be 
n = \sum_{c_i \in \tau} |c_i| \geq 2 k(\tau) \, . 
\ee
Since we also have $\tau \in {\rm NC}$, 
    \begin{equation}
        k(\eta\tau^{-1}) = n+1 - k( \tau)\ge n/2+1\ . \label{g73}
    \end{equation}
    Now suppose that $\eta\tau^{-1}$ has no singleton (cycle of length 1), then we have
    \begin{equation}
         n=\sum_{d_i\in\eta\tau^{-1}}{|d_i|}\ge 2k(\eta \tau^{-1})\implies k(\eta\tau^{-1})\leq n/2
    \end{equation}
    which is a contradiction with \eqref{g73}. Hence, $\eta\tau^{-1}$ must have at least one singleton. This proves claim (a).

From (a), we have that 
\be
P_{\rm even}\cap P_{\rm NC}\subseteq  Q_{\rm singleton} \cap P_{\rm NC}\, .
\ee
Hence, for any $\tau\in\bar Q_{\rm singleton}$, we have $\tau\notin P_{\rm even}$.
\iffalse
Furthermore, we have 
\be 
n = \sum_{c_i \in \eta\tau^{-1}} |c_i| \geq 2 k(\eta\tau^{-1}) \, . 
\ee
and then
\be
k(\tau) = n+1- k(\eta\tau^{-1}) \ge n/2 +1 \ .
\ee
This proves claim (b).
\fi

    The proofs of (b) and of the remark is essentially identical.
\end{proof}

\lemmagtwo*

\begin{proof}
We first prove $\sigma\in \rm{NC}$ and then $\sigma\in Q_{\rm singleton}$.

A permutation is NC if and only if it lies on the geodesic between $\eta$ and $e$. $\tau\in {\rm NC}$ implies that $\tau$ is on the geodesic between $\eta$ and $e$. 
\begin{equation}
    d(\eta,e) = d(\eta,\tau)+d(\tau,e)
\end{equation}
\eqref{eq:sigma_on_geodesic} implies that $\sigma$ is on the geodesic between $\eta$ and $\tau$. We substitute $d(\eta, \tau)$ using \eqref{eq:sigma_on_geodesic},
\begin{equation}
    d(\eta,e) =  d(\eta, \sigma) + d(\sigma, \tau) +d(\tau,e)\geq d(\eta, \sigma)+d(\sigma,e)
\end{equation}
where we applied the triangle inequality to eliminate $\tau$. Together with the triangle inequality 
\begin{equation}
     d(\eta,e)\leq d(\eta,\sigma)+d(\sigma,e)\ ,
\end{equation}
we conclude that
\begin{equation}
    d(\eta,e)= d(\eta,\sigma)+d(\sigma,e)\implies \sigma\in\rm{NC}\ .
\end{equation}

We further have
\begin{equation}
    d(\eta,\sigma)=n-k(\eta\sigma^{-1}) = k(\sigma)-1,\quad d(\eta,\tau)=n-k(\eta\tau^{-1}) =k(\tau)-1\ ,
\end{equation}
which together with \eqref{eq:sigma_on_geodesic} implies that
\begin{equation}\label{eq:distance_sigma_tau}
    d(\sigma,\tau) = k(\tau)-k(\sigma)=k(\eta\sigma^{-1})-k(\eta\tau^{-1})\ .
\end{equation}
%These equalities above show that every transposition step on the geodesic from $\eta$ to $\tau$ splits one cycle into two and increases the number of cycles $k$ by one. A single transposition either splits an odd cycle into an even cycle and an odd cycle; or it splits an even cycle into two even cycles or two odd cycles. Hence, the number of odd cycles never decreases along the geodesic from $\eta$ to $\tau$. Since we start at $\tau$ and end at $\tau$, neither of which has any odd cycle, we conclude that $\sigma$ in between must also have no odd cycle. We have $\sigma\in X\cap {\rm NC}$.

As we move $\sigma$ along the geodesic $\eta\to\tau$, the Cayley distance $d(\eta,\sigma)$ measures the number of transpositions taken to reach $\sigma$ and $d(\eta,\tau)$ is the number of transpositions it takes to reach $\tau$. \eqref{eq:distance_sigma_tau} implies that $k(\eta\sigma^{-1})$ is monotonically decreasing from $k(\eta\eta^{-1})=n$ to $k(\eta\tau^{-1})=n+1-k(\tau)$, and its value drops by $1$ at every transposition from $\sigma=\pi_i\to\pi_{i+1}$, where $\pi_{i+1}$ has one more transposition than $\pi_i$. 

Note that $\pi_{i+1}=t\pi_i$ for some transposition $t$ implies that $\eta\pi_{i+1}^{-1}=\eta\pi_i^{-1}t^{-1}$, which is one more transposition than $\eta\pi_i^{-1}$. Since a transposition can either merge two cycles into one or split one cycle into two, and $k(\eta\sigma^{-1})$ decreases by 1 per transposition step, $\eta\pi_{i+1}^{-1}$ has two cycles in $\eta\pi_i^{-1}$ merged into one.

It's obvious that the number of singletons does not increase under merging. Since we assume that there is at least one singleton in $\eta\tau^{-1}$, we conclude that the number of singletons in $\eta\sigma^{-1}$ along the geodesic is no fewer. We have $\sigma\in Q_{\rm singleton}$.

\end{proof}

\begin{comment}
However, since 
\be 
|J(\sigma, \tau; \nu)| <1 \text{ unless } \sigma = \tau = \nu \, ,  \label{j_ineq}
\ee
we expect that configurations with  an extensive number of  domain walls give subleading contributions to ${\rm OTOC}_n$ compared to those where the number of triangles with domain walls is $O(1)$ or $O(t)$. Two things we need to show here: 
\begin{enumerate}
    \item Proving \eqref{j_ineq}: we have
    \begin{align}
|J|< C_n \sum_{\mu} (q^2)^{-d(\mu, \nu)} (q)^{-d(\mu, \sigma)}  (q)^{-d(\mu, \tau)} \leq C_n n! (q^2)^{-2d(\mu, \nu)- d(\mu, \sigma)-d(\mu, \tau)} \leq \frac{C_n n!}{q^2}
    \end{align}
    which is $\leq 1$ as long as $q$ is large and $n$ does not scale with $q$, but I expect it should be true for all $n$ and $q$.  
    \item While configurations with many domain walls will have a higher cost per configuration if $|J|<1$ for an extensive number of triangles in such configurations, they could in principle end up having a higher entropic/combinatorial factor due to the larger number of possibilities for the precise locations of the domain walls and the permutations that appear between them. We should argue that the entropic factor does not win by considering some examples. 
\end{enumerate}
\end{comment} 
%background on 
\begin{lemma}
\label{lemma:inequality}
    For $J(\sigma, \tau; \nu)$ defined as in \eqref{jdef}, or more explicitly, 
    \be 
    J(\sigma, \tau; \nu) = \sum_{\mu} g^{\nu \mu} \sqrt{g_{\mu \sigma} g_{\mu \tau}}
    \ee
for $g_{\mu\nu} \equiv d^{n-k(\mu \nu^{-1})}$, $g^{\mu\nu}$ defined as the inverse of $g_{\mu\nu}$ (or in terms of the Weingarten functions, $g_{\mu\nu} = d^n{\rm Wg}(\mu \nu^{-1})$), and $d=q^2$, we have
\be 
J(\sigma, \tau; \nu)  \leq 1 
\ee
for $q=2$, and for all $n$ and all $\sigma, \tau, \nu \in \sS_n$. 
The above inequality is saturated only when $\sigma = \tau = \nu $. 
\end{lemma}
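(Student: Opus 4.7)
The first step is to reinterpret $J(s_1,s_2;s_3)$ as an inner product on the doubled two-site Hilbert space, using the explicit form of the Haar-averaged gate. Define
\[
\ket{v_{s_3}} := \sum_{\mu\in S_n} g^{\mu s_3}\,\ket{\mu}_x\ket{\mu}_{x+1},
\]
which lies in the image of the projector $P_{x,x+1} = \sum_{\sigma,\tau}g^{\sigma\tau}\ket{\sigma}_x\ket{\sigma}_{x+1}\bra{\tau}_x\bra{\tau}_{x+1}$. By the defining relation of the inverse Gram matrix, $\bra{\nu}_x\bra{\nu}_{x+1}\ket{v_{s_3}} = \delta_{\nu s_3}$; in particular $\bra{s_3}_x\bra{s_3}_{x+1}\ket{v_{s_3}} = 1$ and $\|v_{s_3}\|^2 = g^{s_3 s_3}$. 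A direct substitution then gives $J(s_1,s_2;s_3) = \bra{s_1}_x\bra{s_2}_{x+1}\ket{v_{s_3}}$, identifying $J$ as the inner product of the unit vector $\ket{s_1}_x\ket{s_2}_{x+1}$ with $\ket{v_{s_3}}$. The saturation $J(s,s;s) = 1$ follows immediately, so the lemma reduces to showing $|\bra{s_1}_x\bra{s_2}_{x+1}\ket{v_{s_3}}| \le 1$, strictly unless $s_1 = s_2 = s_3$.

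Next, I would reduce to the case $s_3 = e$ by right-translation (the Cayley distance and the Gram inverse are both right-invariant): setting $\alpha = s_1 s_3^{-1}$, $\beta = s_2 s_3^{-1}$ and $\pi = \mu s_3^{-1}$, the substitutions $\braket{s|\mu} = q^{-n}\mathrm{Tr}(T_{s^{-1}\mu})$ and $g^{\mu e} = d^n\,\mathrm{Wg}(\mu, d)$ with $d = q^2$ collapse the prefactor $q^{-2n}d^n = 1$ and give
\[
J(\alpha,\beta;e) = \sum_{\pi\in S_n} \mathrm{Wg}(\pi, q^2)\,\mathrm{Tr}(T_{\alpha^{-1}\pi})\,\mathrm{Tr}(T_{\beta^{-1}\pi}).
\]
By standard Weingarten calculus, this equals a Haar integral $\int_{\mathbf{U}(q^2)} f(U)\,dU$ for a bilinear function $f(U)$ built from traces of products of permutation operators conjugated by $U$. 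For $q = 2$, $f(U)$ can be recognized as a product of two normalized Hilbert--Schmidt overlaps between unit-norm vectorized permutation operators on $(\mathbb{C}^4)^{\otimes n}$, each bounded by $1$ in magnitude; this yields $|J| \le 1$ pointwise inside the integrand and hence after averaging.

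The main obstacle is executing the Haar-integral rewriting with the correct normalizations. The naive Cauchy--Schwarz on the inner-product formula gives only $|J| \le \sqrt{g^{s_3 s_3}}$, and $g^{s_3 s_3} > 1$ already for $n \ge 2$ at $d = 4$, so a genuine pointwise bound on the integrand (not just on its Hilbert-space norm) is essential; this is where $q = 2$ enters, as the explicit values of $\mathrm{Wg}(\cdot, 4)$ must conspire correctly with the trace factors for the pointwise bound to close. The equality analysis then follows quickly: saturation of $|J| = 1$ forces pointwise saturation of each factor in $f(U)$ for Haar-almost every $U$, and by the linear independence of $\{\ket{\sigma}\}_{\sigma \in S_n}$ for $q \ge 2$, this forces $\ket{s_1}_x\ket{s_2}_{x+1} \propto \ket{s_3}_x\ket{s_3}_{x+1}$, i.e.\ $s_1 = s_2 = s_3$.
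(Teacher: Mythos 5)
Your reinterpretation of $J$ as an inner product $\langle s_1|\langle s_2|v_{s_3}\rangle$ is correct, and you rightly observe that the naive Cauchy--Schwarz on this inner product gives only $|J|\le\sqrt{g^{s_3s_3}}>1$, which is useless. But the paragraph where you propose to repair this --- rewriting $J(\alpha,\beta;e)$ as a Haar integral $\int f(U)\,dU$ with $|f(U)|\le 1$ pointwise, by ``recognizing'' $f$ as a product of two normalized Hilbert--Schmidt overlaps --- is the entire content of the lemma, and you never construct $f$ or prove the pointwise bound. This is a genuine gap, not a routine detail. The formula $J(\alpha,\beta;e)=\sum_\pi\mathrm{Wg}(\pi,q^2)\,\mathrm{Tr}(T_{\alpha^{-1}\pi})\,\mathrm{Tr}(T_{\beta^{-1}\pi})$ does not arise directly as a Haar moment: the Weingarten function is the \emph{inverse} of the Gram matrix of permutation operators, not a moment, and the sign-alternation of $\mathrm{Wg}(\cdot,d)$ against the positive trace factors means there is no manifest pointwise bound. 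Nothing in your sketch explains why the claimed factorization of the integrand exists, why each factor is unimodular, or where $q=2$ enters beyond an appeal to values ``conspiring correctly.'' As written the saturation analysis inherits the same gap, since it relies on pointwise equality in an integrand you have not exhibited.

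The paper's route is different and worth contrasting. It applies Cauchy--Schwarz not in the two-site Hilbert space $\mathcal{H}_x\otimes\mathcal{H}_{x+1}$ but in $\ell^2(S_n)$: writing $J=\sum_\mu w(\mu\nu^{-1})s(\mu)$ with $w(\pi)=d^n\mathrm{Wg}(\pi,d)$ and $s(\mu)=\sqrt{g_{\sigma\mu}g_{\tau\mu}}$, it bounds $|J|\le\|w\|_2\|s\|_2$, then controls $\|w\|_2$ by the character expansion of the Weingarten function (a maximum over Young diagrams of $\kappa_\lambda$) and $\|s\|_2$ by H\"older, arriving at an explicit function $g(n)$ with $g(n)<1$ for $n\ge 61$, tightens the estimate for $20\le n\le 60$ by evaluating $\|w\|_2\|s\|_2$ directly, and closes the remaining $2\le n\le 19$ by brute-force computation. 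This is less elegant than a hypothetical pointwise argument but it actually lands; if you want to salvage your approach you would need to produce the integrand $f$ explicitly and prove $|f|\le 1$, which, given the signed Weingarten weights, is the hard part you have skipped.
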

\begin{proof} 
    When $\sigma=\tau$, we have
\begin{equation}
    J(\sigma, \tau; \nu) = g^{\mu\nu}g_{\sigma\mu}= \delta_{\nu\sigma}\leq 1
\end{equation}
and it is saturated when $\sigma = \tau = \nu $.

Now we discuss the case of  $\sigma\neq\tau$.
We define the following functions
\begin{equation}
    w(\pi):=d^n\mathrm{Wg}(\pi;d),\quad s(\mu):=\sqrt{g_{\sigma\mu}g_{\tau\mu}}=\sqrt{d^{k(\tau\mu^{-1})-n}d^{k(\sigma\mu^{-1})-n}}\ .
\end{equation}
We can write
\begin{equation}
    J(\sigma, \tau; \nu) =\sum_{\mu\in S_n}w(\mu\nu^{-1})s(\mu)
\end{equation}
The Cauchy-Schwarz inequality gives,
\begin{equation}
    |J(\sigma, \tau; \nu)| =|\sum_{\mu\in S_n}w(\mu\nu^{-1})s(\mu)|\leq\sqrt{\sum_{\mu\in S_n}|w(\mu\nu^{-1})|^2\sum_{\mu\in S_n}|s(\mu)|^2} \ .
\end{equation}
Since the norm is group translation-invariant, we have
\begin{equation}
   \sqrt{\sum_{\mu\in S_n}|w(\mu\nu^{-1})|^2\sum_{\mu\in S_n}|s(\mu)|^2} = \sqrt{\sum_{\pi\in S_n}|w(\pi)|^2\sum_{\mu\in S_n}|s(\mu)|^2} =||w||_2||s||_2\ .
\end{equation}
Therefore, we obtain
\begin{equation}
    |J(\sigma, \tau; \nu)|\leq ||w||_2||s||_2
\end{equation}

The Weingarten function can be written as an expansion over the irreps of $S_n$~\cite{collins2003moments}:
\begin{equation}
    \mathrm{Wg}(\pi;d) = \frac{1}{n!^2}\sum_{\lambda\vdash n:|\lambda|\leq d}\frac{\chi^\lambda(1)^2\chi^\lambda(\pi)^2}{s_{\lambda}(1)}
\end{equation}
where the sum is over partitions $\lambda$ of $n$ that has at most $d$ parts. A partition $\lambda$ ordered decreasingly can be identified with a Young diagram with $|\lambda|$ rows that label the irreps. $\chi^\lambda$ is the character of the $\lambda$-irrep of $S_n$ and $s_\lambda$ is the Schur polynomial. 

Using this formula and the orthogonal relations of the characters, one can obtain
\begin{equation}
    ||w||_2^2 = \frac{1}{n!}\sum_{\lambda\vdash n:|\lambda|\leq d} d_\lambda^2\kappa_\lambda^2,\quad \kappa_\lambda=\frac{d^n}{\prod_{(i,j)\in\lambda}(d+j-i)}
\end{equation}
where $d_\lambda$ is the dimension of the $\lambda$-irrep of $S_n$, and the product in the denominator of $\kappa_\lambda$ is over the box indices $(i,j)$ in the Young diagram corresponding to $\lambda$. 

We first proceed with the estimate:
\begin{equation}
    ||w||_2=\frac{1}{n!}\sum_{\lambda\vdash n:|\lambda|\leq d} d_\lambda^2\kappa_\lambda^2\leq\max_{\lambda\vdash n:|\lambda|\leq d}\kappa_\lambda\left(\frac{1}{n!}\sum_{\lambda\vdash n:|\lambda|\leq d} d_\lambda^2\right) \leq\max_{\lambda\vdash n:|\lambda|\leq d}\kappa_\lambda\ .
\end{equation}

To bound $\max_{\lambda\vdash n:|\lambda|\leq d}\kappa_\lambda$, we group the Young diagram column-wise, and denote $h_j$ the height of column $j$,
\begin{equation}
    \prod_{(i,j)\in\lambda}(d+j-i) = \prod_j \prod_i^{h_j} (d+j-i)\ge \prod_j j^{h_j}
\end{equation}
Let $n=pd+r$, with $r= n\mod d$, $\prod_j j^{h_j}$ is minimized by filling columns greedily up to height $d$, so we have $p$ columns of height $d$ and one column of height $r$. This gives
\begin{equation}
    \prod_j j^{h_j}\ge(p+1)^r\prod_{j=1}^p j^d=(p!)^d(p+1)^r
\end{equation}
and hence
\begin{equation}
    \max_{\lambda\vdash n:|\lambda|\leq d}\kappa_\lambda\leq\frac{d^n}{(p!)^d(p+1)^r},\quad n=pd+r\ .
\end{equation}

To bound $||s||_2$, we first rewrite $s$ as follows,
\begin{equation}
    s(\mu)=h_\sigma(\mu)h_\tau(\mu):=h(\sigma\mu^{-1})h(\tau\mu^{-1}),\quad h(\pi):=\sqrt{d^{k(\pi)-n}}
\end{equation}
Using H\"older's inequality and translational invariance of the norm, we have
\begin{equation}
    ||s||_2\leq ||h_\sigma||_4||h_\tau||_4 = ||h||_4^2\, .
\end{equation}
We have
\begin{equation}
    ||h||_4^4 = \sum_{\pi\in S_n} h(\pi)^4 =\sum_{\pi\in S_n} d^{2k(\pi)-2n}= d^{-2n}\prod_{i=0}^{n-1}(d^2+i)=n! d^{-2n} \binom{n+d^2-1}{n}
\end{equation}
where in the penultimate step, we used $\sum_\pi x^{k(\pi)}=\prod_{i=0}^{n-1}(x+i)$. 
\begin{figure}
    \centering
    \includegraphics[width=0.7\linewidth]{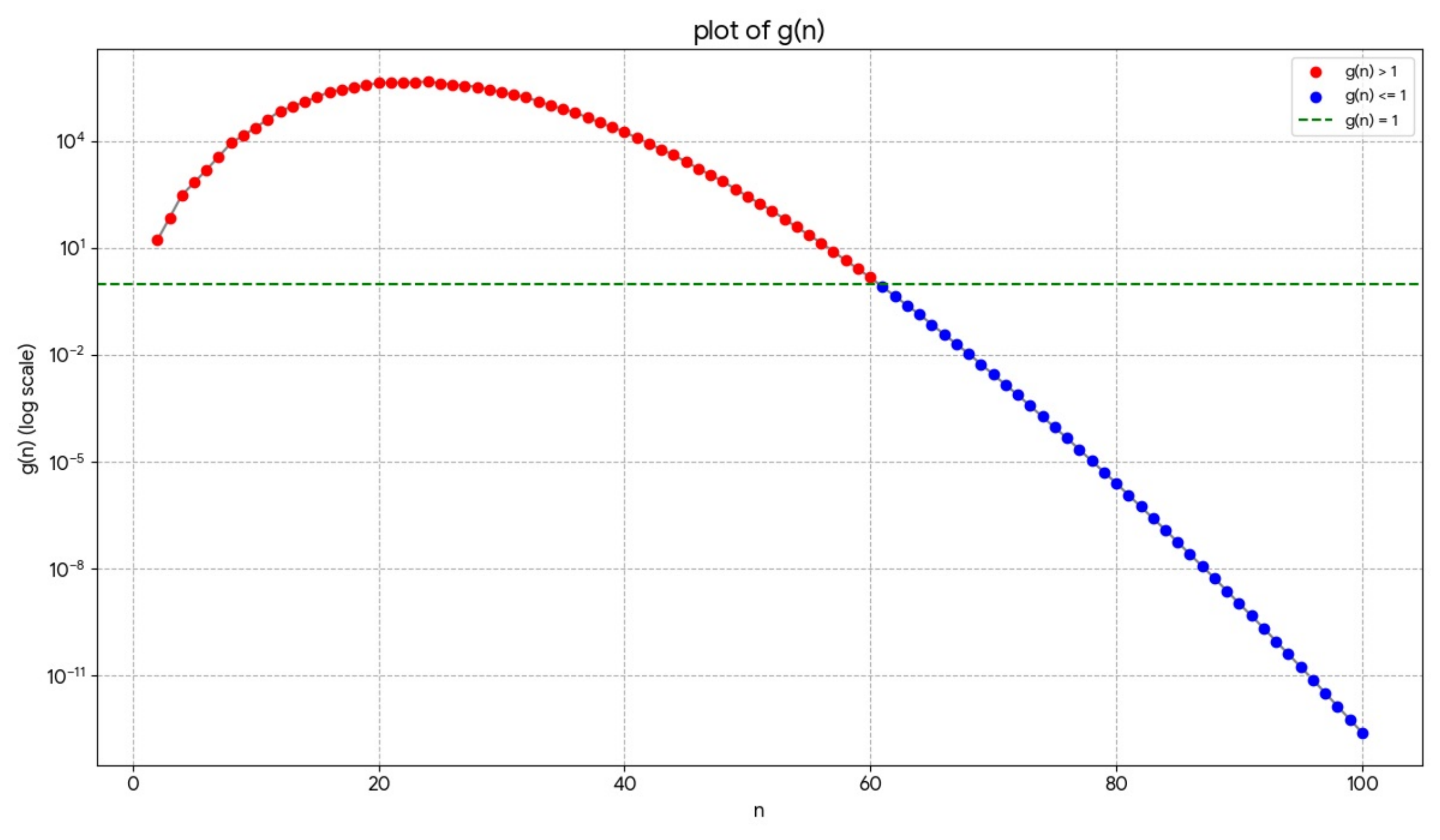}
    \caption{Values of the bound $g(n)$ on $|J|$ for $2\leq n\leq 100$.}
    \label{fig:bound_estimate}
\end{figure}

We have
\begin{equation}
    ||s||_2\leq ||h||_4^2 =\sqrt{n!} \,d^{-n} \binom{n+d^2-1}{n}^\frac12
\end{equation}

Altogether, we obtain
\begin{equation}
    |J(\sigma, \tau; \nu)|\leq||w||_2||s||_2 \leq \max_{\lambda\vdash n:|\lambda|\leq d}\kappa_\lambda\cdot ||s||_2\leq \frac{\sqrt{n!}\binom{n+d^2-1}{n}^\frac12}{(p!)^d(p+1)^r} =:g(n)
\end{equation}
This bound $g(n)$ is loose for small $n$, but it gets tighter for large $n$.

Let us focus on the case of $d=q^2=4$. A quick ratio test on the bound shows that once the bound $g(n_0)<1$ for some $n_0>26$, it remains $g(n)<1$ for all $n>n_0$. We can then evaluate the bound and explicitly verify that the bound is less than $1$ for all $n\ge 61$. This is shown in Fig.~\ref{fig:bound_estimate}.

Hence, for $d=4$, what remains is to verify that $|J|\leq 1$ for all $2\leq n\leq 60$. We can further tighten the gap without resorting to the looser bound.\footnote{We did so because $g(n)$ is elementary enough for the ratio test.}
\begin{equation}
    |J(\sigma, \tau; \nu)| \leq ||w||_2||s||_2
\end{equation}
Direct evaluation of the RHS for all $2\leq n\leq 60$ gives a less-than-one value for $20\leq n\leq 60$.

It remains to brute-force check $|J|\leq 1$ for all $2\leq n\leq 19$. We verified these cases in Python and this completes the proof of the Lemma for the case of $d=4$. 
\end{proof}

\section{Singular spectral density of $A(t)B$ in Transverse Field Ising Model}\label{app:tfim}

The physics of the transverse field Ising model (TFIM) is most transparent in terms of fermionic variables defined by the Jordan-Wigner transformation of \eqref{tfim}. We define the Majorana fermion operators as follows:
\begin{equation}
    \gamma_{2j-1}\equiv\le(\prod_{i<j} Z_i\ri)X_j,\quad\gamma_{2j} \equiv \le(\prod_{i<j} Z_i\ri)Y_j\ ,\quad j\in[L]
\end{equation}
The Majorana fermion operators are Hermitian involutions $\gamma_i^\dagger=\gamma_i,\ \gamma_i^2=I$.  The Majoranas satisfy the canonical anticommutation relation (CAR) $\{\gamma_i,\gamma_j\}=2\delta_{ij}$.

The TFIM Hamiltonian can be written as a free Majorana fermion Hamiltonian,
\begin{equation}
    H = \frac{i}4\sum_{i,j=1}^{2L}\gamma_i K_{ij}\gamma_j
\end{equation}
where $K$ is a real antisymmetric matrix. With open boundary condition, $K_{2j-1,2j}=2h$, $K_{2j,2j+1}=2J$. Heisenberg evolution is represented as a linear transformation,
\begin{equation}
    \gamma_i(t) = U(t)\gamma_i U(t)^\dagger = \sum_jO_H(t)_{ij}\gamma_j ,\quad O_H(t):=e^{Kt}\in SO(2L) 
\end{equation}
The evolved Majoranas $\gamma_i(t)$ still satisfy the CAR. 

%\SV{[ I think we shouldn't really emphasize this single-particle Majorana space, because the Pauli operators are not restricted to this. The key point below is that even if we consider all possible multi-particle states that involve just 4 majorana fermions, this is ${\sqrt{2}^4}=4$-dimensional? It looks like \text{span}... here is just the subspace spanned by single particle states of the majoranas here, in which case it isn't the right notation to use below?}\JW{Ok, all we need is this fact Cl$(\rm{span}(\gamma_1,\gamma_2,\gamma_3,\gamma_4))\simeq\mathcal B((\mathbb C^2)^{\otimes 2})$, so then we know the operators in it can be represented as operators on a four-dim Hilbert space. Usually, people introduce $\mathcal V$ to build the Fock space $\mathcal F(\mathcal V)$ and then the operator (Clifford) algebra on it, but that's a bit redundant there. I think we could omit introducing the single-particle space and directly state the fact we need. } 

The Majoranas $\{\gamma_i\}_{i=1}^{2L}$, subject to the CAR, generate the Clifford algebra Cl$_{2L}(\mathbb C)$. It is a mathematical fact that Cl$_{2L}(\mathbb C)\simeq \mathcal B((\mathbb C^2)^{\otimes L})$, which refers to the algebra of bounded operators on $(\mathbb C^2)^{\otimes L}$, the physical Hilbert space of the spin chain. This is a formal statement of the fact that two Majorana fermions can be associated with the Hilbert space of one qubit.

We want to calculate the spectrum of $A(t)B$ to calculate the FMI and OTOCs. Let us focus on the case of $A$ and $B$ being \emph{single-site} Paulis. Let's first study the case of $A=Z_i$ and $B=Z_j$. We can write $Z$ in terms of the Majoranas
\begin{equation}
    A=Z_i = -i\gamma_{2i-1}\gamma_{2i},\quad  B=Z_j = -i\gamma_{2j-1}\gamma_{2j}\ .
\end{equation}
We have
\begin{align}
    A(t)B &= - U(t)\gamma_{2i-1}\gamma_{2i}U(t)^\dagger\gamma_{2j-1}\gamma_{2j} = - U(t)\gamma_{2i-1}U(t)^\dagger U(t)\gamma_{2i}U(t)^\dagger\gamma_{2j-1}\gamma_{2j} 
    \nn &=-\gamma_{2i-1}(t)\gamma_{2i}(t)\gamma_{2j-1}\gamma_{2j}\ .
\end{align}

Therefore, $A(t)B$ belongs to the Clifford subalgebra Cl$_{4}(\mathbb C)$ generated by four Majoranas $\{\gamma_{2i-1}(t),\gamma_{2i}(t),\gamma_{2j-1},\gamma_{2j}\}$.
%Hence, we have Cl$(\{\gamma_{2i-1}(t),\gamma_{2i}(t),\gamma_{2j-1},\gamma_{2j}\})\subset$ Cl$(\mathcal V)$ and so it acts irreducibly only on a subspace of $\mathcal H$. %Explicitly, one can always find a Gaussian unitary transformation that maps an orthonormal basis of $\mathcal W$ to $\{\gamma_1,\gamma_2,\gamma_3,\gamma_4\}$. 
Then we use the mathematical fact that Cl$_{4}(\mathbb C)\simeq\mathcal B((\mathbb C^2)^{\otimes 2})$, and thus $A(t)B$ can be irreducibly represented as acting on a four-dimensional subspace. We know that generally the eigenvalues of $A(t)B$ always come in pairs $\{e^{i\pm \theta_i}\}_i$. It follows that there exist two angles $\theta_1$ and $\theta_2$ and the four eigenvalues are
\begin{equation}
    \{e^{i\pm \theta_1},e^{i\pm \theta_2}\}\ .
\end{equation}
Hence, the spectrum of $A(t)B$ represented on the full physical Hilbert space is largely degenerate. It follows that the FMI of $Z_i(t)$ and $Z_j$ remains divergent throughout the TFIM evolution.

The same analysis holds for the $A=X_i$ and $B=X_j$ case. All we need to do is to use a different convention for the Jordan-Wigner transformation:
\begin{equation}
    \tilde\gamma_{2j-1}:=\le(\prod_{i<j} X_i\ri)Z_j,\quad\tilde\gamma_{2j} = \le(\prod_{i<j} X_i\ri)Y_j\ ,\quad j\in[L]
\end{equation}
Then $X$ can be written as a product of two Majoranas: $X_j=-i\tilde\gamma_{2j-1}\tilde\gamma_{2j}$. Hence, $X_i(t)X_j$ has at most four distinct eigenvalues. The same reasoning applies to the $A=Y_i$ and $B=Y_j$ case.

%\nocite{apsrev41Control}
\bibliographystyle{JHEP.bst}
\bibliography{biblio.bib}
   \end{document}